\documentclass[11pt,a4paper,oneside]{article}
\pdfoutput=1

\usepackage[]{authblk}

\usepackage[a4paper,textwidth=16cm,textheight=24cm,centering]{geometry}

\usepackage[]{amsmath}
	\numberwithin{equation}{section}
\usepackage[]{amssymb}
\usepackage{amsfonts}
\usepackage{mathrsfs}

\DeclareMathOperator{\R}{\mathbb{R}}
\DeclareMathOperator{\C}{\mathbb{C}}
\DeclareMathOperator{\N}{\mathbb{N}}
\DeclareMathOperator{\Z}{\mathbb{Z}}

\DeclareMathOperator{\cs}{\mathbb{S}}
\DeclareMathOperator{\ct}{\mathbb{T}}
\newcommand{\Th}{\mathscr{Q}}
\newcommand{\scH}{\mathscr{H}}
\newcommand{\mz}{\mathcal{Z}}

\newcommand{\mN}{\mathcal{N}}
\newcommand{\mM}{\mathcal{M}}

\newcommand{\mI}{\mathcal{I}}

\newcommand{\mC}{\mathcal{C}}
\newcommand{\mO}{\mathcal{O}}

\newcommand{\mH}{\mathcal{H}}
\newcommand{\mA}{\mathcal{A}}
\newcommand{\scA}{\mathscr{A}}
\newcommand{\scD}{\mathscr{D}}

\newcommand{\sQ}{\mathsf{Q}}
\newcommand{\sF}{\mathsf{F}}

\newcommand{\sV}{\mathsf{V}}
\newcommand{\sU}{\mathsf{U}}

\newcommand{\tws}{\mathbf{s}}
\newcommand{\twt}{\mathbf{t}}

\newcommand{\dd}{\mathrm{d}}
\newcommand{\ii}{\mathtt{i}}

\newcommand{\tr}{\mathrm{Tr}}
\DeclareMathOperator{\sh}{sh}
\DeclareMathOperator{\ch}{ch}
\DeclareMathOperator{\hs}{HS}
\newcommand{\keff}{\kappa^{\mathrm{eff}}}
\newcommand{\kk}{\mathsf{k}}
\newcommand{\sQred}{\sQ^{\mathrm{red}}}
\newcommand{\pair}[2]{\langle #1,#2 \rangle_{\alpha}}
\newcommand{\gnode}[1]{\overset{#1}{\bigcirc}}
\newcommand{\fnode}[1]{\overset{#1}{\Box}}
\newcommand{\csnode}[2]{\overunderset{#1}{#2}{\bigcirc}}
\newcommand{\multe}[1]{\overset{#1}{\text{-----}}}

\makeatletter
\newcommand\xleftrightarrow[2][]{%
  \ext@arrow 9999{\longleftrightarrowfill@}{#1}{#2}}
\newcommand\longleftrightarrowfill@{%
  \arrowfill@\leftarrow\relbar\rightarrow}
\makeatother

\usepackage[utf8]{inputenc}
\usepackage[english]{babel}
\usepackage[pdftex]{graphicx}
\usepackage[dvipsnames]{xcolor}

\usepackage{amsthm}
	\newtheorem{conjnum}{Conjecture}

	\newtheorem{thm}{Theorem}[subsection]
	\newtheorem{lem}[thm]{Lemma}
	\newtheorem{prop}[thm]{Proposition}
	\newtheorem{cor}[thm]{Corollary}
	\newtheorem{conj}[thm]{Conjecture}
	
	\theoremstyle{definition}
	
	\newtheorem{defin}[thm]{Definition}

	\newtheorem{rmk}[thm]{Remark}
	\newtheorem{hypt}[thm]{Hypothesis}
	
	\newtheorem{hypnum}[conjnum]{Hypothesis}

\usepackage[linktocpage=true,colorlinks=true,citecolor=Violet,linkcolor=OliveGreen,urlcolor=Violet]{hyperref}

\usepackage[toc,page]{appendix}
\usepackage[nottoc]{tocbibind}
\usepackage{caption}
\usepackage{enumerate}

\usepackage{tikz}
		\usetikzlibrary{calc,er,positioning,arrows.meta}

\usepackage[numbers,compress,square,comma]{natbib}	
\newcommand\aafootnote[1]{%
  \begingroup
  \renewcommand\thefootnote{}\footnote{\texttt{#1}}%
  \addtocounter{footnote}{-1}%
  \endgroup
}

\newenvironment{tenumerate}{
\begin{enumerate}[(i)]
  \setlength{\itemsep}{0pt}
}{\end{enumerate}}

\begin{document}
\bibliographystyle{myJHEP}
\captionsetup[figure]{labelfont={sc,small},labelformat={default},labelsep=period,font=small}
\captionsetup[table]{labelfont={sc,small},labelformat={default},labelsep=period,font=small}
{
\pagenumbering{roman} 
	\title{\huge\textbf{Twisted traces and quantization of moduli stacks of 3d \texorpdfstring{$\mathcal{N}=4$}{N=4} Chern--Simons-matter theories}}
	\author{{\Large Leonardo Santilli}}
	\affil{\it Center for Mathematics and Interdisciplinary Sciences,\protect\\ Fudan University, Shanghai, 200433, China.}
	\affil{\it Shanghai Institute for Mathematics and Interdisciplinary Sciences (SIMIS),\protect\\ Shanghai, 200433, China.}
	\date{  }
	
	\maketitle
	\thispagestyle{empty}
	
\aafootnote{santilli at simis dot cn}

\begin{abstract}
	We conjecture, and show in a plethora of examples, that the sphere partition function of 3d $\mathcal{N}=4$ Chern--Simons-matter theories equals a sum of twisted traces on tensor products of Verma modules over the quantization of the moduli spaces of vacua. This extends a conjecture of Gaiotto--Okazaki to Chern--Simons-matter theories. We also show that the partition function of every Abelian gauge theory with higher charges has such twisted trace decomposition, and uncover new Abelian dualities between theories with and without Chern--Simons couplings.
\end{abstract}

\clearpage
{
\tableofcontents}
\thispagestyle{empty}
}

\clearpage
\pagenumbering{arabic}
\setcounter{page}{1}
	\renewcommand*{\thefootnote}{\fnsymbol{footnote}}
	\setcounter{footnote}{0}

\section{Introduction}
We initiate the study of the quantized moduli spaces of vacua of 3d $\mN=4$ Chern--Simons-matter theories, through the lens of the so-called sphere quantization \cite{Gaiotto:2023hda,Gaiotto:2023kuc}.\par

\subsubsection*{Background: 3d \texorpdfstring{$\mN=4$}{N=4} gauge theories and symplectic singularities}
Every 3d $\mN=4$ quiver gauge theory yields two distinguished rings, whose spectra are called Coulomb and Higgs branch. Both the Coulomb branch \cite{Weekes:2020rgb,Bellamy:2023lqf} and the Higgs branch \cite{Bellamy:2016uju} are symplectic singularities. 
A pivotal fact about 3d $\mN=4$ Coulomb and Higgs branches is that they admit a deformation quantization \cite{Yagi:2014toa,Chester:2014fya,Beem:2016cbd}. These quantized algebras will be denoted respectively 
\begin{equation}
\label{eq:ACAH}
	\mA_{\hbar}^{\mC}, \qquad \mA_{\hbar}^{\mH} .
\end{equation}\par
\medskip
Adding supersymmetric Chern--Simons couplings generically breaks the supersymmetry down to $\mN=3$; however, for suitable choices of Chern--Simons levels and matter content, $\mN=4$ supersymmetry is preserved \cite{Gaiotto:2008sd,Assel:2022row}. 
Every 3d $\mN=4$ Chern--Simons-matter theory also yields two distinguished moduli spaces, henceforth referred to as the A-branch $\mM_A$ and the B-branch $\mM_B$, which are expected to be symplectic singularities.\par
\medskip
This work aims at studying \emph{quantized} A- and B-branches of 3d $\mN=4$ Chern--Simons-matter theories. To achieve this, the description of $\mM_A$ and $\mM_B$ is refined to distinguish among inequivalent moduli stacks with the same coarse moduli space.

\subsubsection*{Background: Sphere quantization}
The sphere partition function is a powerful tool to study the quantized Coulomb and Higgs branch algebras \eqref{eq:ACAH} \cite{Dedushenko:2016jxl,Dedushenko:2017avn,Dedushenko:2018icp,Dedushenko:2019mzv,Fan:2019jii,Dedushenko:2025qxe}. Consider a 3d $\mN=4$ quiver gauge theory subject to: 
\begin{hypnum}\label{hyp:massive}
The Coulomb branch is fully resolved and has isolated fixed points under the action of a torus isometry.
\end{hypnum}
Let us clarify from the outset that the present work studies theories for which this \emph{fails}.\par
Nonetheless, under Hypothesis \ref{hyp:massive}, the protected sphere correlation functions \cite{Dedushenko:2016jxl} provide twisted traces (Definition \ref{def:twtrace}) on the quantized algebras \cite{Gaiotto:2023kuc}. Furthermore, the authors of \cite{Bullimore:2016nji} associate Verma modules over $\mA_{\hbar}^{\mC}$ and $\mA_{\hbar}^{\mH}$ to each torus fixed point. Gaiotto--Okazaki \cite{Gaiotto:2019mmf} conjecture that the sphere partition function encodes twisted traces on these Verma modules.\par
\begin{conjnum}[{\cite{Gaiotto:2019mmf}}]\label{c:GO}
	Let $\mz$ be the sphere partition function of a 3d $\mN=4$ theory satisfying Hypothesis \ref{hyp:massive}, and denote by $\zeta$ and $m$ the equivariant parameters of the torus action on the Coulomb and Higgs branch, respectively. In an open subset of the parameter space, 
	\begin{equation}
		\mz = \sum_{\alpha \ : \ \text{\rm fixed pt} } \ii^{- \tilde{\kappa} } e^{\ii 2 \pi \pair{m}{\zeta}} \hat{\chi}_{\alpha}^{\mC}  \left(e^{-2\pi \zeta}\right)  \hat{\chi}_{\alpha}^{\mH} \left( e^{-2\pi m} \right) ,
	\end{equation}
	where $\ii=\sqrt{-1}$, $\tilde{\kappa}\in \Z/4\Z$, $\pair{\cdot}{\cdot}$ is a bilinear pairing defined in \eqref{eq:pairmzeta}, and $\hat{\chi}_{\alpha}^{\mC}$ and $\hat{\chi}_{\alpha}^{\mH}$ are, respectively, twisted traces on the Verma modules over $\mA_{\hbar}^{\mC}$ and $ \mA_{\hbar}^{\mH}$ attached to the fixed point $\alpha$.
\end{conjnum}
In a nutshell, the sphere partition function is expressible as a sum of pairwise products of twisted traces of $\mA_{\hbar}^{\mC}$- and $ \mA_{\hbar}^{\mH}$-modules. The conjecture was further elucidated and supported in \cite{Bullimore:2020jdq}. It is shown for Abelian theories in \cite{Bullimore:2020jdq}, and argued in general \cite{Gaiotto:2023hda}, that the twist is $(-1)^{R}$, where $R$ is the R-charge, i.e. a $\Z$-grading induced by the contracting $\C^{\ast}$-action on the branch.\par
\medskip
These claims rely crucially on Hypothesis \ref{hyp:massive}, but the latter generically fails in 3d $\mN=4$ Chern--Simons-matter theories. The goal of the current work is to study how to adapt them in presence of Chern--Simons couplings.\par

\subsection{Main results}
The present work investigates whether a version of Conjecture \ref{c:GO} holds for 3d $\mN=4$ Chern--Simons-matter theories. These theories never satisfy Hypothesis \ref{hyp:massive}, unless the Chern--Simons levels are restricted to $\{0, \pm 1\}$.\par
The first main result is Conjecture \ref{myconj1}.
\begin{itemize}
	\item The first novelty is to consider the quantization $\mA^{A}_{\hbar},\mA^{B}_{\hbar}$ of $\mM_A,\mM_B$, and the Verma modules over them.
		\begin{itemize}
			\item[$\triangleright$] The distinction between the moduli stacks $\mM_A,\mM_B$ and their underlying algebraic varieties $M_A,M_B$ is emphasized.
			\item[$\triangleright$] The quantization of $\mM_A,\mM_B$ is studied in \S\ref{sec:VermaCS}. Pairs of Verma modules $\scH^{A}_{\alpha} ,\scH^{B}_{\alpha} $ over $\mA^{A}_{\hbar},\mA^{B}_{\hbar}$ are associated with the torus fixed points $p_{\alpha} \in \left(M_A^{\ct_A}\right)_{\mathrm{red}}$.
		\end{itemize}
	\item The sphere partition function $\mz$ is written as a sum of twisted traces:
		\begin{equation}
		\label{intro:ZCS}
			\mz = \sum_{p_{\alpha} \in \left(M_A^{\ct_A}\right)_{\mathrm{red}}}  S_{\alpha}~e^{\ii 2 \pi \pair{ \zeta}{ m} }\hat{\chi}_{\alpha} \left(e^{-2\pi \zeta} , e^{-2\pi m} \right) ,
		\end{equation}
		where $\hat{\chi}_{\alpha}$ is a twisted trace on $\scH^{A}_{\alpha} \otimes \scH^{B}_{\alpha} $. In general, there is no factorization, and the complete expression for $\hat{\chi}_{\alpha}$ is \eqref{eq:conj1Z}.
		\begin{itemize}
			\item[$\triangleright$] $S_{\alpha}$ in \eqref{intro:ZCS} is a product of supersymmetric pure Chern--Simons partition functions.
			\item[$\triangleright$] The twist in $\hat{\chi}_{\alpha}$ is
			\begin{equation}\label{intro:twist}
				\exp \left\{ \ii \frac{2 \pi}{\kappa_{\alpha}} \left( \frac{1}{2} R_A \otimes 1_B + \frac{1}{2} 1_A \otimes R_B +R_A \otimes R_B\right) \right\},  
			\end{equation}
			where $R_A$ and $R_B$ are, respectively, the $\Z$-grading operators on the A- and B-branch, i.e. the R-charges, and $\kappa_{\alpha}$ is a positive integer which depends on the Chern--Simons levels. For a theory with Chern--Simons levels in the set $\{ 0, \pm \kappa\}$, $\kappa_{\alpha} \in \{1, \kappa\} \ \forall \alpha$.\par
			The lack of factorization of each summand into a product of twisted traces, one on $\scH^{A}_{\alpha} $ and one on $ \scH^{B}_{\alpha} $, is due to the explicit dependence of the twist on the Chern--Simons levels through \eqref{intro:twist}.\par
			\item[$\triangleright$] In some instances, the twisted traces do factorize: 
				\begin{equation}
					\hat{\chi}_{\alpha} \left(e^{-2\pi \zeta} , e^{-2\pi m} \right) =  \hat{\chi}_{\alpha}^{A} (e^{-2\pi \zeta})  \hat{\chi}_{\alpha}^{B} \left( e^{-2\pi m} \right) ,
				\end{equation}
				where $\hat{\chi}_{\alpha}^{A}$ and $\hat{\chi}_{\alpha}^{B}$ are, respectively, twisted traces on the Verma modules $\scH^{A}_{\alpha}$ and $\scH^{B}_{\alpha} $, twisted by $(-1)^{R_A}$ and $(-1)^{R_B}$. 
		\end{itemize}
		\end{itemize}\par
To elucidate these statements in a simple example, consider $U(1)_{\kappa} \times U(1)_{-\kappa}$ gauge theory with one hypermultiplet in the bi-fundamental representation. For this model, analyzed in \S\ref{sec:ex2node}, $\mM_A = \left[ \C^2 / \Z_{\kappa} \right]$, and $\mM_B$ is a point with a $\Z_{\kappa}$-action. There are no mass parameters to resolve the du Val singularity $M_A=\C^2 /\Z_{\kappa}$, and its quantization $\mA^{A}_{\hbar}$ is given explicitly in \cite{Etingof:2020fls}. The category $\mA^{A}_{\hbar}\text{-mod}$ contains only one Verma module, unlike the quantization of a crepant resolution of the du Val singularity. The partition function of this theory equals a twisted trace on this Verma module, with twist $(-1)^{\frac{1}{\kappa} R_A}$ in agreement with \eqref{intro:twist}, multiplied by the three-sphere partition functions of $U(1)_{\kappa}$ and $U(1)_{-\kappa}$ supersymmetric Chern--Simons theories.\par
\medskip
The second main result is a proof of a formula generalizing Conjecture \ref{c:GO} to Abelian gauge theories \emph{without} Chern--Simons couplings, but with non-minimal charges. These theories belong to the familiar class of 3d $\mN=4$ quiver gauge theories, but they do not satisfy Hypothesis \ref{hyp:massive}. Therefore, their partition function is beyond the scope of Conjecture \ref{c:GO}.
\begin{itemize}
	\item Again, the distinction between Coulomb and Higgs stacks and the underlying algebraic varieties is emphasized; their quantization is discussed in Theorem \ref{thm:quantumhyper}.
	\item Corollary \ref{cor:Zhighertrace} evaluates the sphere partition function as a sum of twisted traces, in full generality for Abelian gauge theories. It is a sum of as many terms as the number of torus fixed points on the Coulomb branch.
		\begin{itemize}
			\item[$\triangleright$] In some orbifolds of theories that satisfy Hypothesis \ref{hyp:massive}, each summand factorizes as a product of two twisted traces (Corollary \ref{cor:hypertwtr}), attaining the form of Conjecture \ref{c:GO}.
			\item[$\triangleright$] In general, however, there is no factorization, and each piece is expressed as a single trace on the tensor product of two Verma modules, one over the Coulomb and one over the Higgs branch algebras. 
		\end{itemize}
\end{itemize}\par
\medskip
The third main result is that, given an Abelian Chern--Simons-matter A-type quiver, there exists an Abelian quiver gauge theory $\sQ^{\prime}$, without Chern--Simons couplings but with non-minimal charges, such that
\begin{equation}\label{eq:CStoQprime}
	\mM_A \cong \mC (\sQ^{\prime}) , \qquad \mM_B \cong \mH (\sQ^{\prime}) , \qquad \mz = \mz_{\sQ^{\prime}} ,
\end{equation}
where the first two are isomorphisms of \emph{stacks}, and the third is the equality of the sphere partition function $\mz$ of the Chern--Simons-matter quiver and that of $\sQ^{\prime}$. This fact is extensively tested in examples in \S\ref{sec:ex} and proposed in general (Conjecture \ref{myconjmag}).\par
What is significant about this result is that, given an Abelian Chern--Simons-matter theory, the algorithm of \S\ref{sec:magQprime} explicitly provides $\sQ^{\prime}$ by requiring that $\mM_A$ and $\mC (\sQ^{\prime})$ have the same coarse moduli space, and that the two theories have the same number of resolution parameters. Then \eqref{eq:CStoQprime} turns out to hold. In other words, given an A-branch or Coulomb branch as a hypertoric stack, the B-branch or Higgs branch appears to be uniquely determined as a stack, and so is the sphere partition function, possibly up to an overall normalization. Schematically:
\begin{equation}
	\mM_A \cong \mC (\sQ^{\prime}) \text{ as stacks} \quad \Longrightarrow \quad \mM_B \cong \mH (\sQ^{\prime})  \text{ as stacks, and } \  \mz = \mz_{\sQ^{\prime}} .
\end{equation}\par
In field-theoretic terms, requiring not only that the Coulomb branch of $\sQ^{\prime}$ matches $M_A$, but also that the two theories have the same number of mass parameters, appears to automatically determine the match of both branches, of the 1-form symmetry, of the sphere partition function, and of the superconformal index.\par

\subsection{Outline}
The rest of this work is organized as follows. \S\ref{sec:prelim} is an overview section, introducing the main objects of study and setting the notation. \S\ref{sec:review3dN4} briefly recalls the salient features of 3d $\mN=4$ quiver gauge theories, while Chern--Simons-matter theories are introduced in \S\ref{sec:review3dCS}.\par
\S\ref{sec:highercharge} is devoted to Abelian quiver gauge theories with representations of non-minimal highest weight. \S\ref{sec:higherquant} discusses the quantization of the Coulomb and Higgs moduli stacks when the greatest common divisor of the highest weights is not necessarily one. Then, the sphere partition function is used in \S\ref{sec:HigherChargeSphere} to produce twisted traces on Verma modules over the quantized Coulomb and Higgs branch. An example in \S\ref{sec:higersqed} highlights the differences from the familiar case.\par
The core section of this work is \S\ref{sec:CSquantized}. The quantization of the moduli spaces of vacua of 3d $\mN=4$ Chern--Simons-matter theories is discussed in \S\ref{sec:VermaCS}, which also studies Verma modules over the resulting algebras. A relation between twisted traces and the sphere partition function is analyzed in \S\ref{sec:sphereCS}, with Conjecture \ref{myconj1} being the main outcome. It is a substitute for the Gaiotto--Okazaki conjecture in Chern--Simons-matter theories.\par
An extensive set of examples, including Chern--Simons-matter quivers with non-Abelian gauge groups and quivers not of Dynkin type, is given in \S\ref{sec:ex}. Definitions of the partition function and lengthy computations are collected in the appendix \S\ref{app:SUSYZ}.

\subsubsection*{Acknowledgements}
I am indebted to Dongmin Gang, Marcus Sperling, and Yehao Zhou for sharing their insight. I also thank Antonio Amariti, Marc Besson, Alicia Lamarche, Satoshi Nawata, Tomoki Nosaka, and Zhenghao Zhong for discussions and comments.
This work was partly supported by the Natural Science Foundation of China under grant W2433005 during its early stages.

\section{Preliminaries: 3d \texorpdfstring{$\mN=4$}{N=4} theories}
\label{sec:prelim}

\subsection{Basics on 3d \texorpdfstring{$\mN=4$}{N=4} theories without Chern--Simons couplings}
\label{sec:review3dN4}
This section contains a brief summary of the salient properties of 3d $\mN=4$ quiver theories \emph{without} Chern--Simons couplings, setting the notation for the quantized Coulomb and Higgs branch algebras and their modules.\footnote{An explicit relation between brane setups and slices in the affine Grassmannian has been spelled out in \cite{Grimminger:2020dmg,Bourget:2021siw,Santilli:2021rlf} and subsequent works. Those techniques apply to the present discussion, but are not reviewed here.}

\subsubsection{Quiver gauge theories}
\label{sec:quiverdef}
Let $\sQ=(\sQ_0 \sqcup \sF_0, \sQ_1)$ be a doubled, framed quiver in the sense of Nakajima. $\sQ_0$ denotes the vertex set of the unframed quiver, $\sF_0$ the framing, and $\sQ_1$ includes all edges --- which replace pairwise-opposite arrows. An orientation of the edges is assumed for ease of exposition, and $\mathsf{h}(e)$ (respectively $\mathsf{t}(e)$) denotes the head (respectively tail) of the edge.\par
A representation of $\sQ$ specifies a 3d $\mN=4$ gauge theory: 
\begin{itemize}
	\item The vertices determine the gauge algebra 
	\begin{equation}\label{eq:UNgauge}
		\mathfrak{g}_{\text{\tiny gauge}} = \bigoplus_{v\in \sQ_0} \mathfrak{u}(N_v) ,
	\end{equation}
	whose Cartan subalgebra is denoted $\mathfrak{t}_{\text{\tiny gauge}} \cong \R^{\sum_{v \in \sQ_0}N_v}$.
	\item The edges specify hypermultiplets transforming in the bi-fundamental representation of 
	\begin{equation}
		\mathfrak{u}(N_{\mathsf{h}(e)}) \oplus \mathfrak{u}(N_{\mathsf{t}(e)}) , \qquad e\in \sQ_1 ,
	\end{equation}
	including edges connecting to a framing node. Loop edges correspond to hypermultiplets in the adjoint representation.
\end{itemize}
The drawing conventions are summarized in Table \ref{tab:quiver}. The following is assumed throughout:
\begin{hypt}\label{hyp:goodQ} For every $v \in \sQ_0$ with vanishing Chern--Simons level, 
\begin{equation}
		\sum_{ e \in \sQ_1 \ : \ v = \mathsf{h} (e) } N_{ \mathsf{t} (e)}  + \sum_{ e \in \sQ_1 \ : \ v= \mathsf{t} (e) } N_{ \mathsf{h} (e)} \ge 2 N_v -1  .
	\end{equation}
\end{hypt}

\begin{table}
\centering
\begin{tabular}{|c|c|}
	\hline
		symbol & meaning \\
		\hline
		$\gnode{N}$ & gauge $U(N)$ \\
		\hline
		$\csnode{N}{\kappa}$ & gauge $U(N)_{\kappa}$ \\
		\hline
		$\fnode{k}$ & flavor $\mathfrak{su}(k)$ \\
		\hline
		--- & bi-fundamental hyper \\
		\hline
	\end{tabular}
\caption{Notation for drawing quiver gauge theories.}
\label{tab:quiver}
\end{table}\par

\subsubsection{Quantized Coulomb and Higgs branch algebras}
\label{sec:QuantumCH}
Consider a 3d $\mN=4$ quiver gauge theory $\sQ$. Its Coulomb and Higgs branch are denoted, respectively, 
\begin{equation}
	\mC ( \sQ ) , \qquad \mH ( \sQ ) .
\end{equation}
There exist flat families over $\C[\hbar]$ of non-commutative algebras $\mA_{\hbar}^{\mC( \sQ )}$ and $\mA_{\hbar}^{\mH( \sQ )}$ yielding a deformation quantization of these moduli spaces \cite{Braverman:2016wma}, 
\begin{equation}
	\mA_{\hbar}^{\mC( \sQ )} / (\hbar) \cong \C \left[ \mC ( \sQ )  \right] , \qquad \mA_{\hbar}^{\mH( \sQ )} / (\hbar) \cong \C \left[ \mH ( \sQ )  \right] .
\end{equation}\par 

\subsubsection{Parameters}
Three-dimensional quantum field theories with $\mN=4$ supersymmetry have an $\mathfrak{su}(2)_{C} \oplus \mathfrak{su}(2)_{H}$ R-symmetry. Fixing a Cartan subalgebra of $\mathfrak{su}(2)_{C} $ selects a complex structure on $\mC ( \sQ ) $; the corresponding complex torus acts on $\mC ( \sQ ) $ by scaling, and induces a $\Z$-grading on $\C \left[ \mC ( \sQ )  \right]$. The analogous statements hold for $\mathfrak{su}(2)_{H} $ and $\mH ( \sQ ) $.\par 
Additionally, the complexification of the topological and flavor symmetries $\mathfrak{g}_C \oplus \mathfrak{g}_H $, with Cartan subalgebra $\mathfrak{t}_C \oplus \mathfrak{t}_H $, are isometries of the two branches. There are Hamiltonian actions $\ct_{C} \curvearrowright \mC ( \sQ ) $ and $ \ct_{H} \curvearrowright \mH ( \sQ ) $ of the maximal tori $\ct_{C} = (\C^{\ast})^{\dim \mathfrak{t}_C}$ and $\ct_{H} = (\C^{\ast})^{\dim \mathfrak{t}_H}$. More explicitly, with gauge algebra \eqref{eq:UNgauge}, 
\begin{subequations}
\begin{align}
	\ct_C &= \mathrm{Hom} \left( \pi_{1} \left( \prod_{v \in \sQ_0} GL_{N_v} (\C) \right) , \C^{\ast} \right) \cong \left( \C^{\ast} \right)^{\lvert \sQ_0\rvert} , \\
	\ct_H &= \prod_{e \in \sQ_1} \left( \C^{\ast} \right)^{N_{\mathsf{h} (e)} N_{\mathsf{t} (e)}} \big/  \prod_{v \in \sQ_0} \left( \C^{\ast} \right)^{N_v} .
\end{align}
\end{subequations}
From now on a Cartan subalgebra of $\mathfrak{su}(2)_{C} \oplus \mathfrak{su}(2)_{H}$ is fixed. The parameters of the theory split into real and complex component in the given complex structure.
\begin{itemize}
	\item The real mass parameters $m \in \mathfrak{t}_H$ are the equivariant parameters for the $\ct_H$-action on $\mH (\sQ)$; and are also identified with K\"ahler parameters of a smoothing 
	\begin{subequations}
	\begin{equation}
		\widetilde{\mC} (\sQ) \longrightarrow\mC (\sQ) .
	\end{equation}
	\item The real FI parameters $\zeta \in \mathfrak{t}_C$ are the equivariant parameters for the $\ct_C$-action on $\mC (\sQ)$; and are also identified with K\"ahler parameters of a smoothing 
	\begin{equation}
		\widetilde{\mH} (\sQ) \longrightarrow \mH (\sQ).
	\end{equation}
	\end{subequations}
\end{itemize}
Hypothesis \ref{hyp:massive} in the introduction therefore requires the gauge theory to have `enough' parameters to fully resolve $\mC (\sQ)$ and $\mH (\sQ)$. It is rephrased here for completeness.
\begin{hypt}\label{hyp:mava}
There exists an open subset of $\mathfrak{t}_C \times \mathfrak{t}_H$ such that, for $(\zeta, m)$ in that subset,
\begin{itemize}
\item $\widetilde{\mC} (\sQ)$ is smooth and the $\ct_C$-fixed locus $\widetilde{\mC} (\sQ)^{\ct_C}$ consists of isolated points; or  
\item $\widetilde{\mH} (\sQ)$ is smooth and the $\ct_H$-fixed locus $\widetilde{\mH} (\sQ)^{\ct_H}$ consists of isolated points.
\end{itemize}
\end{hypt}
The two conditions are equivalent, and state that 
\begin{equation}
	\widetilde{\mC} (\sQ)^{\ct_C} = \left( \widetilde{\mC} (\sQ)^{\ct_C}\right)_{\mathrm{red}} = \left( \widetilde{\mH} (\sQ)^{\ct_H}\right)_{\mathrm{red}}=\widetilde{\mH} (\sQ)^{\ct_H} ,
\end{equation}
all being the disjoint union of finitely many points. The fixed points (also known as massive vacua) will be denoted $p_{\alpha} \in \widetilde{\mC} (\sQ)^{\ct_C}$, and the label $\alpha$ is used throughout for the structures associated with a choice of fixed point.\par 
For each vacuum $p_{\alpha} \in \widetilde{\mC} (\sQ)^{\ct_C}$, there exists a pairing 
\begin{equation}\label{eq:pairmzeta}
\pair{\zeta}{m} := \sum_{i,v} \zeta^{v} {(\keff_{\alpha})_{v}}^{i} m_i
\end{equation}
where $\keff_{\alpha}$ is a matrix of integers, which depends on $\sQ$.\par

\subsubsection{Modules over quantized Coulomb and Higgs branch algebras}
Assume the 3d $\mN=4$ theory $\sQ$ satisfies Hypothesis \ref{hyp:mava}. Let $\widetilde{M}^{(\xi)}$ be either $\widetilde{\mC} (\sQ)$ or $\widetilde{\mH} (\sQ)$, where $\xi=m$ in the former case and $\xi=\zeta$ in the latter, so that $\widetilde{M}^{(0)}\cong M$ is a symplectic singularity. Moreover, let $\ct \curvearrowright \widetilde{M}^{(\xi)}$ be the torus with Hamiltonian action, and $\mA_{\hbar}^{(\xi)}$ the quantization of $\widetilde{M}^{(\xi)}$.
Verma modules over $\mA_{\hbar}^{(\xi)}$ have been studied extensively in \cite{Braden:2014iea,Webster:2014,Losev:2015}.\par

\begin{lem}\label{lem:Vermatofp}
\begin{enumerate}[(i)]
\item\label{Vermatofp1} For generic $\xi$, the Verma modules over $\mA_{\hbar}^{(\xi)}$ are in one-to-one correspondence with the fixed points $p_{\alpha} \in \left[ \widetilde{M}^{(\xi)} \right]^{\ct}$.
\item\label{Vermatofp2} At non-generic values of $\xi$ at which two fixed points $p_{\alpha}, p_{\beta}$ coalesce, the corresponding $\mA_{\hbar}^{(\xi)}$-modules become isomorphic.
\end{enumerate}
\end{lem}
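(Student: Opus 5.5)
The plan is to place the statement inside the framework of category $\mathcal{O}$ for quantizations of conical symplectic resolutions, as developed in \cite{Braden:2014iea,Losev:2015}. Fix a generic cocharacter $\nu\colon \C^{\ast}\to\ct$ and let $h\in\mA_{\hbar}^{(\xi)}$ be the quantum moment map image of $d\nu$. The weights of $h$ give a $\Z$-grading $\mA_{\hbar}^{(\xi)}=\bigoplus_k \mA_k$. Set $\mA^{+}=\bigoplus_{k\ge 0}\mA_k$, and let $C_\nu(\mA)=\mA_0/\sum_{k>0}\mA_{-k}\mA_k$ be the Cartan subquotient. A Verma module is then defined as the standard object $\Delta(\lambda)=\mA\otimes_{\mA^{+}}\C_\lambda$, where $\lambda$ is a one-dimensional module over $C_\nu(\mA)$ inflated to $\mA^{+}$. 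With this definition, the lemma reduces to understanding the set $\mathrm{Spec}\, C_\nu(\mA)$.

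For part (i), I will first use the compatibility of the Cartan subquotient with quantization, i.e.\ the $\hbar\to 0$ limit. Its associated graded is $\C[(\widetilde{M}^{(\xi)})^{\ct}]$, essentially because the $\nu$-fixed scheme equals the $\ct$-fixed scheme for generic $\nu$. By Hypothesis \ref{hyp:mava}, for generic $\xi$ the fixed locus is a finite reduced set of points. Hence $C_\nu(\mA)$ is a deformation of $\C^{\#\{p_\alpha\}}$ and is semisimple commutative, so its simple modules are labelled by the $p_\alpha$. Next I will show that distinct characters give non-isomorphic Vermas, using their distinct highest weights for $h$ and for the Cartan. I will also show that every Verma arises this way, by the universal property of induction. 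This yields the bijection $\lambda_\alpha\leftrightarrow p_\alpha$. Concretely, $\lambda_\alpha$ records the value of the moment map at $p_\alpha$, shifted by the $\hbar$-correction: half the sum of the negative tangent weights, in the spirit of \cite{Bullimore:2016nji}.

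For part (ii), I will follow the characters $\lambda_\alpha(\xi)$ as functions of $\xi$; they are polynomial (affine-linear) in $\xi$. When $p_\alpha$ and $p_\beta$ coalesce at a non-generic $\xi_0$, the fixed-point scheme becomes non-reduced at the merged point. The two characters of $C_\nu(\mA)$ then tend to the same character, so $\lambda_\alpha(\xi_0)=\lambda_\beta(\xi_0)$. Since $\Delta(\lambda)$ depends only on $\lambda$, the two induced modules coincide at $\xi_0$.

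The main obstacle is part (ii), because "becomes isomorphic" must be interpreted correctly. The limiting modules obtained from the flat families $\Delta_\alpha(\xi)$ are not a priori the same as the Verma for the merged character. At non-generic $\xi$ the Cartan subquotient is no longer semisimple, so one must check that the limits of both families are the standard module induced from the unique simple quotient. My first attempt would be to compare graded characters: both limits have the same $h$-character, computed from the tangent weights, which agree in the limit. Both are generated by a highest-weight vector of the same weight, so each is a quotient of $\Delta(\lambda(\xi_0))$ with equal character, and hence isomorphic to it.
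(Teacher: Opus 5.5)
\textbf{Comparison of routes.} Your route differs from the paper's. The paper gives no argument of its own: it invokes \cite[Sec.5]{Hilburn:2020aau}, with (ii) taken from \cite[Lem.5.10]{Hilburn:2020aau}, so it relies on the explicit description of Verma modules over quantized Coulomb branches given there. The Cartan-subquotient framework of \cite{Braden:2014iea,Losev:2015} is the right setting for (i), but your key step is not justified.

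\textbf{Gap in (i).} By right exactness, the $\hbar=0$ fibre of $C_\nu(\mA_\hbar^{(\xi)})$ is $C_\nu(\C[\widetilde{M}^{(\xi)}])=\C[(\widetilde{M}^{(\xi)})^{\nu}]$, provided $\widetilde{M}^{(\xi)}$ is read as the affine deformation. A genuine associated graded for the conical filtration behaves differently: it only receives a surjection from $C_\nu(\mathrm{gr}\,\mA)$, the ring of the fixed scheme of the cone, which is a fat point. In either reading, semicontinuity gives only $\dim C_\nu(\mA)\le\#\{p_\alpha\}$. Calling $C_\nu(\mA)$ ``a deformation of $\C^{\#\{p_\alpha\}}$'' presupposes that $C_\nu$ has no $\hbar$-torsion. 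That is exactly the claim that every fixed point carries a Verma module, which is the substantive half of (i). It needs a real input, such as the localization theorem and geometric category $\mathcal{O}$ of \cite{Braden:2014iea}, Losev's results on Cartan subquotients \cite{Losev:2015}, or an explicit construction of the modules as in \cite{Bullimore:2016nji,Hilburn:2020aau}.

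\textbf{Gap in (ii).} This gap is more serious. You infer $\lambda_\alpha(\xi_0)=\lambda_\beta(\xi_0)$ from the non-reducedness of the \emph{classical} fixed-point scheme. At $\hbar\neq0$, however, the characters carry exactly the fixed-point-dependent $\hbar$-shift you mention yourself, and classical coalescence does not force them to meet. Take $T^{\ast}\mathbb{P}^{2}$, the Higgs branch of SQED with $k=3$. Use $[y_i,x_j]=\hbar\delta_{ij}$, $h_i=x_iy_i$, the symmetric moment map $\sum_i\tfrac12(x_iy_i+y_ix_i)=\zeta$, and $m_1<m_2<m_3$. The relations $x_iy_jx_jy_i=h_i(h_j+\hbar)$ for $i<j$ give $C_\nu=\C[h_1,h_2,h_3]/\bigl(h_1(h_2+\hbar),h_1(h_3+\hbar),h_2(h_3+\hbar),\textstyle\sum_ih_i-c\bigr)$ with $c=\zeta-\tfrac32\hbar$. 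Its characters are $(0,0,c)$, $(0,c+\hbar,-\hbar)$ and $(c+2\hbar,-\hbar,-\hbar)$.
\begin{itemize}
\item At $\hbar=0$ these are the three classical fixed points $(0,0,\zeta)$, $(0,\zeta,0)$, $(\zeta,0,0)$, which all merge at $\zeta=0$.
\item For $\hbar\neq0$ the characters coincide only pairwise, at $\zeta=\pm\tfrac12\hbar$, and never all three. Renormalizing $\zeta$ only shifts these values.
\end{itemize}
So the inference fails, and (ii) cannot be obtained from classical geometry alone. One needs the specific form of the highest weights. An example is the $i$-independent shift in the characters $\varphi=m_i+\mathrm{const}\cdot\hbar$ of the quantized $\widetilde{\C^2/\Z_k}$, which is the situation the paper actually uses later and for which it relies on \cite{Hilburn:2020aau}.

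By contrast, the ``main obstacle'' you single out is harmless. If you induce over the parameter ring, specialization commutes with $\mA\otimes_{\mA^{+}}(-)$, so once the characters agree the specialized modules are literally equal.
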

\begin{proof}The statement follows from \cite[Sec.5]{Hilburn:2020aau}. In particular, (\ref{Vermatofp2}) is a consequence of \cite[Lem.5.10]{Hilburn:2020aau}. 
\end{proof}
Therefore, each $p_{\alpha} \in \left[ \widetilde{M}^{(\xi)} \right]^{\ct}$ provides a pair of graded vector spaces, 
\begin{equation}\label{eq:scH}
	\scH^{A}_{\alpha} , \qquad  \scH^{B}_{\alpha} ,
\end{equation}
endowed, respectively, with the structure of a Verma module over $\mA_{\hbar}^{\mC (\sQ)},\mA_{\hbar}^{\mH (\sQ)}$.\par
\begin{rmk}The gauge theory derivation of these Verma modules was put forward in \cite{Bullimore:2016nji}, and the mathematical counterpart of this construction is given in \cite{Hilburn:2020aau}. A review may be found in \cite[Sec.2]{Bullimore:2020jdq}. The notation for the Verma modules is as in \cite{Bullimore:2020jdq}.
\end{rmk}

\subsubsection{Twisted traces on Verma modules}
\label{sec:Verma}
It is possible to define twisted traces on the modules \eqref{eq:scH} \cite{Etingof:2019guc}. 
\begin{defin}[Twisted trace]
\label{def:twtrace}
A trace on an algebra $\mA$ \emph{twisted} by an automorphism $\tws \in \mathrm{Aut}(\mA)$ is a linear map $\tr^{(\tws)} : \mA \longrightarrow \C$ satisfying 
\begin{equation}
	\tr^{(\tws)} (ab) = \tr^{(\tws)} \left( \tws(b) a \right) , \qquad \forall a,b \in \mA .
\end{equation}
Besides, let $\scH$ be a Verma module over $\mA$, with grading operator $J$, and represent $\tws$ by $\tws(a)= \twt^{-1} a \twt$. A \emph{twisted trace} on this module is the map $ \mA \longrightarrow \C$ given by
\begin{equation}
	\tr_{\scH} \left( \twt^{-1} a x^J \right) , \qquad \forall a \in \mA .
\end{equation}
\end{defin}
The twisted traces on \eqref{eq:scH} are introduced presently according to \cite[Eq.(2.36)]{Bullimore:2020jdq}.
\begin{defin}Let $J_C, J_H$ denote the grading operators induced on, respectively, $\mA_{\hbar}^{\mC (\sQ)}$ and $\mA_{\hbar}^{\mH (\sQ)}$ by the Hamiltonian torus action on the underlying symplectic variety.
The untwisted traces of the identity on the Verma modules \eqref{eq:scH} are 
\begin{subequations}
\begin{align}
	\chi_{\alpha} ^{\mC(\sQ)} (x) &:= \tr_{\scH^{A}_{\alpha} } x^{J_C} , \label{eq:chiC}\\
	\chi_{\alpha} ^{\mH(\sQ)} (y) &:= \tr_{\scH^{B}_{\alpha} } y^{J_H} . \label{eq:chiH}
\end{align}
\end{subequations}
Besides, the \emph{twisted traces} of the identity on the Verma modules \eqref{eq:scH} are 
\begin{subequations}
\begin{align}
	\hat{\chi}_{\alpha} ^{\mC(\sQ)} (x) &:= \tr_{\scH^{A}_{\alpha}} (-1)^{R_A} x^{J_C} , \label{eq:tchiC} \\
	\hat{\chi}_{\alpha} ^{\mH(\sQ)} (y) &:= \tr_{\scH^{B}_{\alpha}} (-1)^{R_B} y^{J_H}, \label{eq:tchiH}
\end{align}
\end{subequations}
with twists $(-1)^{R_A}$ and $(-1)^{R_B}$, where $R_A, R_B$ are the $\Z$-grading operators induced by the contracting $\C^{\ast}$-actions on the respective branch.
\end{defin}

\subsection{Example review: SQED}
\label{sec:SQED}
An explicit example is worked out to illustrate how the twisted traces emerge from the partition function. Consider a $U(1)$ gauge theory with $k$ hypermultiplets, 
\begin{equation}\label{eq:SQED}
	\gnode{1} \text{---} \fnode{k} ,
\end{equation}
known as 3d $\mN=4$ SQED. In this example, $\mathfrak{t}_C=\mathfrak{u}(1)$ and $\mathfrak{g}_H=\mathfrak{su}(k)$, hence \eqref{eq:SQED} admits an FI parameter $\zeta$ and $k$ mass parameters, subject to $\sum_{i=1}^{k} m_i =0$. For comparison with \cite{Bullimore:2020jdq}, the parameters are taken in the chamber $\zeta>0$ and $m_1<m_2<\cdots <m_{k}$.\par
The Coulomb branch is 
\begin{equation}
	\mC \left( \gnode{1} \text{---} \fnode{k} \right) = \C^2 / \Z_k ,
\end{equation}
while the Higgs branch is the closure of the minimal nilpotent orbit of $\mathfrak{sl}_{k}$. There exist resolutions
\begin{equation}
	\widetilde{\C^2 / \Z_k} \longrightarrow \mC \left( \gnode{1} \text{---} \fnode{k} \right)  , \qquad T^{\ast} \mathbb{P}^{k-1} \longrightarrow \mH \left( \gnode{1} \text{---} \fnode{k} \right) .
\end{equation}
\begin{itemize}
	\item From the Coulomb branch perspective, the K\"ahler moduli of the $(k-1)$ $\mathbb{P}^1$s in $\widetilde{\C^2 / \Z_k}$ are $m_k, \dots, \sum_{i=2}^{k} m_i$. There is a $\C^{\ast}$-action on $\widetilde{\C^2 / \Z_k}$ that preserves the holomorphic symplectic structure, which admits $k$ isolated fixed points.
	\item From the Higgs branch perspective, the K\"ahler modulus of $T^{\ast}\mathbb{P}^{k-1}$ is $\zeta$. There is a $(\C^{\ast})^{k-1}$-action that preserves the holomorphic symplectic structure, which admits $k$ isolated fixed points.
\end{itemize}
Either way, \eqref{eq:SQED} satisfies Hypothesis \ref{hyp:mava}; there are $k$ fixed points labeled $\alpha=1, \dots, k$. The pairing \eqref{eq:pairmzeta} reads 
\begin{equation}
	\pair{\zeta}{m}^{\text{\eqref{eq:SQED}}} = \zeta m_i \delta_{\alpha}^{i},
\end{equation}
namely the $(1\times k)$-matrix $\keff_{\alpha}$ has entries $\delta_{\alpha}^{i}$ ($\alpha$ fixed).\par
\medskip
The sphere partition function is 
\begin{equation}
	\mz_{\gnode{1} \text{---} \fnode{k}} = \int_{-\infty}^{\infty} \dd \sigma \frac{e^{\ii 2 \pi \zeta \sigma}}{\prod_{i=1}^{k} \ch (\sigma - m_i)} .
\end{equation}
Closing the integration contour in the upper half-plane picks residues from $k$ towers of poles, located at 
\begin{equation}
	\left\{ \sigma = m_{i} + \ii \left( \frac{1}{2} + n \right ) , \ n\in \N \right\} , \qquad \forall i=1, \dots, k .
\end{equation}
Simplifying, the resulting expression is 
\begin{align}\label{eq:ZSQEDres}
	\mz_{\gnode{1} \text{---} \fnode{k}} & = \sum_{\alpha=1}^{k} e^{\ii 2 \pi \pair{\zeta}{m}^{\text{\eqref{eq:SQED}}} } \left( \prod_{\substack{i=1, \dots, k \\ i \ne \alpha }} \frac{(-\ii)}{\sh (m_{\alpha}-m_i)} \right) \sum_{n=0}^{\infty} (-1)^{kn} x^{n+\frac{1}{2}} 
\end{align}
where $x:=e^{-2\pi \zeta}$. As pointed out in \cite{Gaiotto:2019mmf,Bullimore:2020jdq}, one reads off the twisted trace 
\begin{equation}\label{eq:chiC2Zn}
	\hat{\chi} ^{\C^2/\Z_k} (x) = \sum_{n=0}^{\infty} (-1)^{kn} x^{n+\frac{1}{2}}  = \frac{x^{\frac{1}{2}}}{ 1- (-1)^k x}  
\end{equation}
on the Verma modules over the quantized Coulomb branch algebra. By analogous manipulations on the other contribution, one reads off the twisted trace on Verma modules of the quantized Higgs branch algebra \cite[Eq.(3.34)]{Bullimore:2020jdq}.\par 
From \eqref{eq:ZSQEDres}, each summand has a coefficient $(-\ii)^{k-1} (-1)^{k-\alpha}$. The sign comes from reordering the masses according to the chosen chamber before expanding the trace. This gives $\tilde{\kappa}=2\alpha- k-1 \mod 4$, in agreement with the first-principles derivation in \cite[Eq.(3.8)]{Bullimore:2020jdq}.

\subsection{Basics on 3d \texorpdfstring{$\mN\ge 3$}{N=3} Chern--Simons-matter theories}
\label{sec:review3dCS}

\subsubsection{3d \texorpdfstring{$\mN\ge 3$}{N=3} Chern--Simons-matter quivers}
\label{sec:CSdef}

Chern--Simons-matter theories are specified by a 3d $\mN=4$ quiver theory as in \S\ref{sec:quiverdef} together with additional data, namely the Chern--Simons levels $\{ \kappa_v \}_{v \in \sQ_0}$ for each gauge group.\footnote{Mixed gauge Chern--Simons levels are not considered in this work.}
\begin{defin}
	An $\mN =4$ Chern--Simons-matter theory is characterized by a pair $\Th_{\underline{\kappa}} =\left( \sQ , \underline{\kappa} \right)$, where $\sQ$ is a 3d $\mN=4$ quiver gauge theory, and $\underline{\kappa} \ : \ \sQ_0  \longrightarrow \Z$, subject to the condition in \cite[Eq.(3.56)]{Gaiotto:2008sd}.
\end{defin}
\begin{rmk}
More precisely, one should allow both \emph{hypermultiplets} and \emph{twisted hypermultiplets} \cite{Hosomichi:2008jd}. In this work, hypermultiplets and twisted hypermultiplets appear exactly as in \cite{Li:2023ffx} and \cite{Marino:2025uub}. The difference between the two will not play a significant role in what follows, thus the distinction is taken into account but not explicitly mentioned throughout. 
\end{rmk}
Several examples in \S\ref{sec:ex} are quiver Chern--Simons theories based on A-type Dynkin diagrams. In this case, a sufficient condition to have $\mN=4$ supersymmetry is that $\sF_0=\emptyset$, $\underline{\kappa} \ne 0$, and the allowed Chern--Simons levels are, schematically, 
\begin{equation}
	\underline{\kappa} = \pm (0, \dots, 0 , \kappa , 0, \dots, 0 , - \kappa , 0, \dots, 0 , \kappa, \cdots ) .
\end{equation}
The general results of \S\ref{sec:CSquantized} hold regardless of the A-type hypothesis, as showcased in \S\ref{sec:nonlin1}-\S\ref{sec:trinion}.\par
\medskip
Moduli spaces of vacua of 3d $\mN=4$ Chern--Simons-matter theories have been analyzed early on in \cite{Hosomichi:2008jd,Imamura:2008nn,Jafferis:2008em}, and using Type IIB string theory in \cite{Assel:2017eun} and more recently in \cite{Marino:2025uub,Marino:2025ihk}.\par
By the $\mN=4$ supersymmetry, $\Th_{\underline{\kappa}}$ carries two distinguished holomorphic-symplectic varieties, together with isometries acting on them. In this work, $\mM_A$ and $\mM_B$ are treated as moduli stacks, with coarse moduli spaces $M_A$ and $M_B$ respectively. The Hamiltonian actions of the maximal tori are denoted $\ct_A \curvearrowright \mM_A$ and $\ct_B \curvearrowright \mM_B$.\par
The Cartan subalgebra of the global symmetries is isomorphic to $\mathfrak{t}_A \oplus \mathfrak{t}_B$; the equivariant parameters for the torus action on the moduli spaces are $\mathbb{Q}$-linear combinations of masses and FI parameters of the gauge theory. Crucially, the Chern--Simons-matter theories $\Th_{\underline{\kappa}}$ generically \emph{do not} satisfy Hypothesis \ref{hyp:mava}: The parameters in the gauge theory are typically not enough to smoothen the singularities of the algebraic varieties $M_A$ and $M_B$.

\subsubsection{Chern--Simons-matter quivers from 5-branes}
In Type IIB string theory, it is possible to realize 3d Chern--Simons-matter theories on the world-volume of D3-branes suspended between $(p,q)$-branes \cite{Kitao:1998mf}. 
To guarantee $\mN=4$ supersymmetry, only configurations with D3-branes suspended between NS5-branes and $(1,\kappa)$-branes, for fixed $\kappa \in \Z$, are considered.
Alternating sequences of NS5-branes and ($1,\kappa$)-branes give rise to linear quivers with Chern--Simons levels
\begin{tenumerate}
	\item $0$ if the D3-brane segments are suspended between branes of the same type;
	\item $+\kappa$ if the D3-brane segments end on a NS5-brane on the left and on a ($1,\kappa$)-brane on the right;
	\item $-\kappa$ if the D3-brane segments end on a ($1,\kappa$)-brane on the left and on a NS5-brane on the right.
\end{tenumerate}
Two drawings in Figure \ref{fig:samplebrane} exemplify the conventions.\par

\begin{figure}[htb]
\centering
\begin{tikzpicture}
	\draw[red,thick] (-4,0) -- (-4,2);
	\draw[red,thick] (-2,0) -- (-2,2);
	\node at (-3,1.5) {$\begin{color}{blue}\underbrace{ \bullet \cdots \bullet}_{k \text{ D5}}\end{color}$};
	\draw (-4,0.95) -- (-2,0.95);
	\draw (-4,0.85) -- (-2,0.85);
	\draw (-4,0.75) -- (-2,0.75);
	\node[anchor=south] at (-4,2) {\scriptsize NS5};
	\node[anchor=south] at (-2,2) {\scriptsize NS5};
	\node[anchor=north] at (-3,0.75) {\scriptsize $N$ D3};

	\draw[red,thick] (2,0) -- (2,2);
	\draw[red,thick] (6,0) -- (6,2);
	\draw[Violet,dashed,thick] (3.5,1.5) -- (2.5,0.5);
	\draw[Violet,dashed,thick] (5.5,1.5) -- (4.5,0.5);
	\draw (2,1.1) -- (6,1.1);
	\draw (2,1) -- (6,1);
	\draw (2,0.9) -- (6,0.9);
	\node[anchor=south] at (2,2) {\scriptsize NS5};
	\node[anchor=south] at (6,2) {\scriptsize NS5};
	\node[anchor=north] at (4,0.9) {\scriptsize $N$ D3};
	\node[anchor=south] at (3.5,1.5) {\scriptsize ($1,\kappa$)};
	\node[anchor=south] at (5.5,1.5) {\scriptsize ($1,\kappa$)};
	
	\node (q1) at (-3,-0.5) {$\gnode{N}\text{---}\fnode{k}$};	
	\node (c1) at (4,-0.5) {$\csnode{N}{\kappa}$-----$\csnode{N}{0}$-----$\csnode{N}{-\kappa}$};
\end{tikzpicture}
\caption{Brane realization of 3d $\mN=4$ theories, and the corresponding quiver, without (left) and with (right) Chern--Simons couplings.}
\label{fig:samplebrane}
\end{figure}
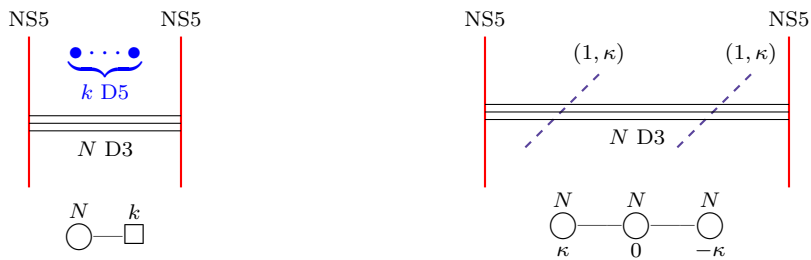

\subsubsection{Magnetic quivers}
\label{sec:QAQB}
Let $\Th_{\underline{\kappa}} = (\sQ, \underline{\kappa} )$ where $\sQ$ is an A-type Dynkin diagram, and denote by $M_A$ (respectively $M_B$) the coarse moduli space of $\mM_A$ (respectively $\mM_B$).\par
The recent work \cite{Marino:2025uub} proposes a way to characterize $M_A,M_B$ by comparison with ordinary Coulomb branches. The authors of \cite{Marino:2025uub} provide a constructive prescription to write down two `magnetic' quivers $\sQ_A,\sQ_B$ giving rise to 3d $\mN=4$ theories without Chern--Simons couplings, and conjecture the isomorphisms of singular varieties
\begin{equation}\label{eq:MAQAMBQB}
	M_A \cong \mC( \sQ_A ) , \qquad M_B \cong \mC( \sQ_B ) .
\end{equation}
The algorithm, inspired by brane constructions, is based on a sequence of local replacements along the quiver. The prescription is summarized in Table \ref{tab:MSAb} for the especially simple case of Abelian A-type quivers. More details can be found in \cite[Sec.3]{Marino:2025uub}.\par
All the results of this work are independent of those in \cite{Marino:2025uub}; the latter will be used as a cross-check throughout the examples.\par

\begin{table}
\centering
\begin{tabular}{|c|c|}
	\hline
		local quiver $\Th_{\underline{\kappa}}$ & local magnetic quiver $\sQ_A$ \\
		\hline
		$\cdots \underbrace{\csnode{1}{\kappa} \text{---}\csnode{1}{0} \text{---}\cdots  \text{---}\csnode{1}{0} \text{---} \csnode{1}{-\kappa}}_{k+1} \cdots $ & \begin{tikzpicture}[baseline=-1] \node (g) at (0,0) {$\cdots \gnode{1} \cdots $}; \node (fl) at (-1.25,-0.5) {$\fnode{\kappa}$};  \node (fc) at (0,-0.55) {$\cdots$}; \node (fr) at (1.25,-0.5) {$\fnode{\kappa}$}; \draw (-1.05,-0.5) -- (-0.25,-0.25);\draw (1.05,-0.5) -- (0.25,-0.25); \node[anchor=north] (u) at (0,-0.65) {$\underbrace{\hspace{2.75cm}}_{k}$};\end{tikzpicture}\\
		\hline
		$\csnode{1}{-\kappa} \text{---} \cdots $ & $\fnode{1} \text{---} \cdots$ \\
		\hline
		$ \cdots \text{---}\csnode{1}{\kappa} $ & $ \cdots \text{---} \fnode{1}$ \\
		\hline
	\end{tabular}
\caption{Recipe for the magnetic quivers $\sQ_A$ of Abelian A-type Chern--Simons-matter theories.}
\label{tab:MSAb}
\end{table}\par

\section{Quantized hypertoric stacks and twisted traces}
\label{sec:highercharge}

The present section is devoted to 3d $\mN=4$ quiver gauge theories \emph{without} Chern--Simons couplings, but carrying representations of non-minimal highest weight. This is a situation in which Hypothesis \ref{hyp:mava} is not satisfied, yet is more tractable than Chern--Simons-matter theories. Thus, this section is a first step toward the generalization of the sphere quantization prescription; it will be instrumental later but is also of independent interest.\par
The main results of this section are: 
\begin{itemize}
	\item Theorem \ref{thm:quantumhyper}: Quantization of the Coulomb and Higgs hypertoric stacks;
	\item Corollary \ref{cor:hypertwtr}: Twisted traces on modules over the quantized algebras are extracted from the sphere partition function.
\end{itemize}

\subsection{Quantized hypertoric stacks}
\label{sec:higherquant}

The focus of this section is on gauge theories whose Coulomb and Higgs branches are hypertoric varieties \cite{Braden:2008kny}. These are Abelian theories, $N_v =1 $ $\forall v\in \sQ_0$. For ease of exposition, framing nodes are also taken to have rank one, without loss of generality since an arbitrary number of framing nodes can be connected to each $v \in \sQ_0$.\par
\begin{defin}\label{def:Qprime}
An Abelian quiver gauge theory with arbitrary charges consists of the pair $\sQ^{\prime} =\left( \sQ, \kk \right)$, where $\sQ$ is as in \S\ref{sec:quiverdef}, and $\kk \in \mathrm{Hom} \left( U(1)^{\sQ_0}, U(1)^{\sQ_1}\right)$ is represented by a matrix $\kk= \left[ {\kk_{e}}^{v}\right]_{\substack{e \in \sQ_1 \\ v \in \sQ_0}}$, with 
\begin{equation}
	{\kk_{e}}^{v} \in \Z, \qquad {\kk_{e}}^{v} =0 \ \text{ if } \ v \notin \{ \mathsf{h}(e), \mathsf{t}(e) \}.
\end{equation}
$\sQ^{\prime}$ is said to have \emph{non-minimal charges} if some $\lvert {\kk_{e}}^{v}\rvert >1$. The notation 
\begin{equation}
	\kappa:=\lvert \mathrm{gcd} \left( \{ {\kk_{e}}^{v} \} \right)\rvert
\end{equation}
is adopted.
\end{defin}
Without loss of generality, it is henceforth assumed that $\sQ$ is connected and $\sF_0 \ne \emptyset$. 
\begin{lem}\label{lem:rankminorK} $\mathrm{rk} (\kk) =\lvert \sQ_0\rvert$.
\end{lem}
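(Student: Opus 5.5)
We need to show that the $\lvert\sQ_1\rvert\times\lvert\sQ_0\rvert$ integer matrix $\kk$ has trivial kernel over $\mathbb{Q}$, i.e. that its columns, indexed by gauge vertices, are linearly independent. Implicit in the setup, as for charge matrices of gauge theories, is that every ${\kk_{e}}^{v}$ is nonzero when $v\in\{\mathsf{h}(e),\mathsf{t}(e)\}$ and $v\in\sQ_0$: a hypermultiplet on an edge carries nonzero charge under each gauge group it connects to. I will use this, together with the standing assumptions that $\sQ$ is connected and $\sF_0\neq\emptyset$.

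Suppose $\kk\,c=0$ for some $c\in\mathbb{Q}^{\sQ_0}$. The row of an edge $e$ gives one equation.

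\textbf{Edge to a framing node.} If $e$ joins $v\in\sQ_0$ to a framing node, the row has a single nonzero entry, so ${\kk_e}^{v}c_v=0$ and hence $c_v=0$.

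\textbf{Edge between gauge nodes.} If $e$ joins distinct $v,w\in\sQ_0$, the equation is ${\kk_e}^{v}c_v+{\kk_e}^{w}c_w=0$. Since both coefficients are nonzero, $c_v=0$ if and only if $c_w=0$.

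\textbf{Loop edge.} A loop at $v$ gives ${\kk_e}^{v}c_v=0$. For an adjoint of $U(1)$ the charge is in fact $0$, so this imposes no condition and I will ignore loops.

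\textbf{Propagation.} Since $\sF_0\neq\emptyset$ and the quiver is connected, some $v_0\in\sQ_0$ is joined to a framing node, so $c_{v_0}=0$. Every other gauge vertex is reached from $v_0$ by a path through gauge vertices and non-loop edges. Induction along such a path, using the propagation rule at each edge, gives $c_v=0$ for all $v$. Hence $\ker\kk=0$ and $\mathrm{rk}(\kk)=\lvert\sQ_0\rvert$.

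The main obstacle is the connectivity step. Connectedness of the full framed quiver might route a path through a framing node, so I must check that the gauge subquiver together with its framing still forces every $c_v$ to vanish. This is fine: each connected component of the gauge subquiver must touch a framing node, since otherwise it would be disconnected from $\sF_0$ in the full quiver. Applying the argument component by component therefore kills every $c_v$.

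The other point to confirm is the nonvanishing of the charges on edges incident to gauge nodes. If the paper's definition allowed zero entries, I would instead treat the support graph (edges with nonzero entries) as the relevant quiver and assume it is connected and framed, which is physically necessary anyway: an edge with ${\kk_e}^{v}=0$ is simply not charged under $U(1)_v$.
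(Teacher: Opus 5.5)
Your argument is correct and is essentially the paper's proof, phrased through the kernel instead of the row space. The paper says the internal-edge rows have rank $\lvert\sQ_0\rvert-1$ by connectedness, and that a framing row, having a single nonzero entry, supplies the missing direction; this is exactly your propagation along gauge edges followed by $c_{v_0}=0$. Your version is slightly more robust on two counts. First, it only uses the spanning-tree lower bound: the paper's claimed equality can fail for general charges, e.g.\ two parallel edges with charges $(1,-1)$ and $(2,-1)$ already give full rank. Second, it also covers a gauge subquiver that is connected only through framing nodes.
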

\begin{proof}
With the above assumptions, $\lvert \sQ_1 \rvert \ge \lvert \sQ_0 \rvert$. The submatrix of $\kk$ including only rows corresponding to the edges with $\mathsf{h}(e), \mathsf{t}(e) \in \sQ_0$ has rank $\lvert \sQ_0\rvert -1$ by the connectedness hypothesis. Adding back the edges with $\mathsf{h}(e)\in\sF_0$ introduces the rows with ${\kk_{e}}^{v} \ne 0$ at $v=\mathsf{t}(e)$ and $0$ elsewhere; likewise, the edges with $ \mathsf{t}(e) \in \sF_0$ introduce rows with only one non-zero entry ${\kk_{e}}^{v} \ne 0$ at $v=\mathsf{h}(e)$. These are linearly independent of the previous rows and increase the rank by $1$.
\end{proof}
\begin{rmk}The $\lvert \sQ_0\rvert \times\lvert \sQ_0\rvert$ minors of $\kk$ with non-vanishing determinant are invertible in $GL_{\lvert \sQ_0\rvert} (\mathbb{Q})$, but not necessarily in $SL_{\lvert \sQ_0\rvert} (\Z)$.
\end{rmk}\par
If $\sQ^{\prime}$ has non-minimal charges with $\kappa >1$, it is possible to define an Abelian theory with reduced charges, $\sQred := \left( \sQ, \frac{1}{\kappa}\kk \right)$, in which the entires of $\kk$ have been rescaled.\par
\begin{itemize}
	\item It has long been understood that the appropriate language for Higgs branches of gauge theories with non-minimal charges is that of gerbes \cite{Pantev:2005zs}. 
	\begin{prop}\label{prop:HBgerbe}
		$\mH \left( \sQ^{\prime} \right)$ is a $\Z_{\kappa}$-gerbe over $\mH \left( \sQred \right)$. 
	\end{prop}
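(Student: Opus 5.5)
The Higgs branch will be realized as a hyperkähler (algebraic symplectic) quotient stack, and the statement reduces to comparing two quotients by the same torus acting through homomorphisms that differ by the $\kappa$-th power map. Write $G=(\C^{\ast})^{\sQ_0}$ and $V=\C^{\sQ_1}$. The charge matrix $\kk$ gives $\rho_{\kk}: G\to (\C^{\ast})^{\sQ_1}$, acting on $T^{\ast}V$ with moment map $\mu_{\kk}=\kk^{T}(x\circ y)$. Then
\begin{equation}
	\mH(\sQ^{\prime}) = \left[ \mu_{\kk}^{-1}(0) \big/ G \right], \qquad \mH(\sQred) = \left[ \mu_{\kk/\kappa}^{-1}(0) \big/ G \right],
\end{equation}
taken on the semistable locus if a generic FI parameter $\zeta$ is turned on, and as the corresponding quotient stack otherwise.

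\textbf{Step 1: factor the action.} Since $\kk=\kappa\cdot\frac{1}{\kappa}\kk$, we have $\rho_{\kk}=\rho_{\kk/\kappa}\circ [\kappa]$, where $[\kappa]:G\to G$, $g\mapsto g^{\kappa}$. This map is surjective with kernel $\mu_{\kappa}^{\sQ_0}\cong\Z_{\kappa}^{\sQ_0}$. The moment maps satisfy $\mu_{\kk}=\kappa\,\mu_{\kk/\kappa}$, so their zero loci coincide. With FI parameters, the stability conditions match after rescaling $\zeta\mapsto\kappa\zeta$, which is legitimate because the character $\chi_\zeta\circ[\kappa]=\chi_{\kappa\zeta}$. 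Write $Z$ for the common zero locus.

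\textbf{Step 2: identify the generic stabilizer.} Let $K=\ker\rho_{\kk}$. By Lemma~\ref{lem:rankminorK}, $\frac1\kappa\kk$ has full rank $\lvert\sQ_0\rvert$, so $\ker\rho_{\kk/\kappa}$ is finite. The entries of $\frac1\kappa\kk$ have gcd one, and I would use the assumptions ($\sF_0\neq\emptyset$, connectedness, and the framing edges having a single nonzero entry) to argue that $\rho_{\kk/\kappa}$ is injective. With this, $K=\ker[\kappa]$, and $K$ acts trivially on $T^{\ast}V$.

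I suspect the statement intends the diagonal $\Z_\kappa$ rather than the full $\Z_\kappa^{\sQ_0}$. For a single $U(1)$ these agree. In general, $\kappa$ is the gcd over all entries, so the natural subgroup acting trivially is $K'=\{g: g^{\kk}=1\}$ modulo $\ker\rho_{\kk/\kappa}$. Pinning down $K$ and reconciling it with "$\Z_\kappa$" is the main obstacle. The case to check first is $\lvert\sQ_0\rvert=1$; I would then handle the general case via a Smith normal form of $\kk$, using that its invariant factors are $\kappa$ times those of $\frac1\kappa\kk$.

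\textbf{Step 3: assemble the gerbe.} Granting Step 2, consider the exact sequence $1\to K\to G\xrightarrow{[\kappa]}G\to1$ together with the fact that $K$ acts trivially on $Z$. The general fact that a quotient $[Z/G]$ by a central extension whose kernel acts trivially is a $K$-gerbe over $[Z/(G/K)]$ then gives a $K$-gerbe $[Z/G]\to[Z/(G/K)]\cong[Z/G]_{\mathrm{red}}=\mH(\sQred)$. Finally, I would check that this is a banded gerbe, which holds because $K$ is central.
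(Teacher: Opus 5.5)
Your Steps 1 and 3 are the paper's argument made precise. The paper only observes that $\mH(\sQ^{\prime})$ is the toric hyperk\"ahler quotient of $\C^{2\lvert\sQ_1\rvert}$ by $(\C^{\ast})^{\lvert\sQ_0\rvert}$ with all weights divisible by $\kappa$. Your factorization $\rho_{\kk}=\rho_{\kk/\kappa}\circ[\kappa]$, combined with the fact that quotienting by a central subgroup acting trivially yields a gerbe, is what that sentence means. The problem is Step 2, which is both false and unnecessary.

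It is false because injectivity of $\rho_{\kk/\kappa}$ does not follow from connectedness, $\sF_0\neq\emptyset$ and the shape of the framing rows. Take two gauge nodes, a framing edge at node $1$ with charge $6$, and an edge joining nodes $1,2$ with charges $(3,-3)$. Then $\kappa=3$, and $\rho_{\kk/\kappa}(g_1,g_2)=(g_1^2,\,g_1g_2^{-1})$ has kernel $\{(g,g) : g^2=1\}\cong\Z_2$.

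It is unnecessary because the band of the \emph{relative} gerbe $\mH(\sQ^{\prime})\to\mH(\sQred)$ is the kernel of the homomorphism inducing that morphism. That kernel is $K:=\ker[\kappa]\cong\mu_\kappa^{\sQ_0}$, not the absolute generic stabilizer $\ker\rho_{\kk}$. Since $\ker[\kappa]\subseteq\ker\rho_{\kk}$, it acts trivially on $Z$; it is central; and $G/K\cong G$ acts through $\rho_{\kk/\kappa}$. So Step 3 goes through verbatim, whether or not $\mH(\sQred)$ has a generic stabilizer of its own. Indeed, $[\kappa]$ maps $\ker\rho_{\kk}$ onto $\ker\rho_{\kk/\kappa}$ with kernel $K$, so any stabilizer $\ker\rho_{\kk/\kappa}$ is simply inherited by both stacks. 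No Smith normal form is needed.

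For the same reason, your guess that the statement intends the diagonal $\Z_\kappa$ is wrong. The relative inertia of $\mH(\sQ^{\prime})\to\mH(\sQred)$ is all of $\mu_\kappa^{\sQ_0}$. Rigidifying only along the diagonal $\mu_\kappa$ lands on the intermediate stack $[Z/(G/\mu_\kappa)]$. That stack still carries a trivially acting $\mu_\kappa^{\sQ_0}/\mu_\kappa$, so it is not $\mH(\sQred)$ when $\lvert\sQ_0\rvert>1$. What your argument, and the paper's one-line sketch read literally, proves is that $\mH(\sQ^{\prime})$ is a $\Z_\kappa^{\lvert\sQ_0\rvert}$-gerbe over $\mH(\sQred)$. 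This matches the order $\kappa^{\lvert\sQ_0\rvert}$ of $\Gamma_\kappa$ in Proposition~\ref{prop:Cisquot}. It coincides with the stated $\Z_\kappa$-gerbe exactly in the one-node case, which is the situation of the examples where the proposition is applied (e.g.\ \eqref{eq:SQEDhigercharge}). So state the result with band $\mu_\kappa^{\sQ_0}$ rather than trying to reduce it to a single $\Z_\kappa$.
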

	\begin{proof}[Sketch of proof]
	$\mH \left( \sQ^{\prime} \right)$ is a toric symplectic quotient $\C^{2 \lvert \sQ_1 \rvert} /\!\!/\!\!/_{\zeta} (\C^{\ast})^{\lvert \sQ_0\rvert}$ where the affine coordinates have weight divisible by $\kappa$ under the toric action.
	\end{proof}
	\item For the Coulomb branch:
	\begin{prop}\label{prop:Cisquot}
		$\mC \left( \sQ^{\prime} \right)$ is a quotient stack with coarse moduli space $\mC \left( \sQred \right)/\Gamma_{\kappa}$, for a finite group $\Gamma_{\kappa}$ of order $\kappa^{\lvert \sQ_0 \rvert}$.
	\end{prop}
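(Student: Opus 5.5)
We realize $\mC(\sQ^{\prime})$ as a hypertoric quotient via Gale duality and compare it with $\mC(\sQred)$ through the inclusion of cocharacter lattices. The Coulomb branch of an Abelian theory with charge matrix $\kk$ is the hypertoric variety built from the dual exact sequence. Concretely, write the gauge cocharacter lattice $\Z^{\lvert\sQ_0\rvert}$ and the map $\kk:\Z^{\lvert\sQ_0\rvert}\to\Z^{\lvert\sQ_1\rvert}$. By Lemma \ref{lem:rankminorK} this map is injective, but its cokernel need not be torsion-free. The monopole operators of the Coulomb branch are labelled by cocharacters $n\in\Z^{\lvert\sQ_0\rvert}$, each carrying dressing by the $\kk n$-weighted hypermultiplet factors, as in the Braverman--Finkelberg--Nakajima abelian construction.

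Since $\kk=\kappa\,\kk_{\mathrm{red}}$, the dressing factor of $v_n$ in $\sQ^{\prime}$ equals that of the monopole $v_{\kappa n}$ in $\sQred$, computed with the $\kk_{\mathrm{red}}$-weights. This gives an identification
\begin{equation*}
	\C[\mC(\sQ^{\prime})]\ \longleftarrow\ \C[\mC(\sQred)]^{\Gamma_\kappa},\qquad \Gamma_\kappa:=\mathrm{Hom}\big(\Z^{\lvert\sQ_0\rvert}/\kappa\Z^{\lvert\sQ_0\rvert},\C^{\ast}\big)\cong \Z_\kappa^{\lvert\sQ_0\rvert}.
\end{equation*}
Here $\Gamma_\kappa$ acts on $\C[\mC(\sQred)]$ through the grading by topological charge, i.e.\ as the $\kappa$-torsion subgroup of $\ct_C\cong(\C^{\ast})^{\lvert\sQ_0\rvert}$. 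Its invariants are exactly the monopoles with magnetic charge in $\kappa\Z^{\lvert\sQ_0\rvert}$, together with the unchanged scalars $\sigma$. Its order is $\kappa^{\lvert\sQ_0\rvert}$, as required.

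The key steps are as follows.
\begin{enumerate}[(i)]
	\item Write the abelian BFN presentation of both rings: generators $v_n$ and $\sigma_v$, with the relation $v_n v_{n'}=v_{n+n'}\prod_e(\cdots)$.
	\item Check that the map $v_n\mapsto v_{\kappa n}$, $\sigma\mapsto\kappa^{-1}\sigma$ (or rescaled appropriately) is a ring isomorphism onto the $\Gamma_\kappa$-invariants. The dressing polynomials must match after substituting $\kk=\kappa\kk_{\mathrm{red}}$; the abelian multiplication rule is linear in the weights, so this should be direct.
	\item Identify the stack itself with $[\mC(\sQred)/\Gamma_\kappa]$, or with the appropriate root stack. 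Its coarse space is $\operatorname{Spec}$ of the invariant ring, i.e.\ $\mC(\sQred)/\Gamma_\kappa$.
\end{enumerate}

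To promote the statement from coarse spaces to the stack, we use the dual of Proposition \ref{prop:HBgerbe}. Mirror symmetry, or Gale duality, exchanges a $\Z_\kappa^{\lvert\sQ_0\rvert}$-gerbe structure with a global quotient by the Cartier dual group. The same dual-sequence argument as in the proof sketch of Proposition \ref{prop:HBgerbe} exhibits $\mC(\sQ^{\prime})$ as a hypertoric stack built from the non-saturated sublattice $\kk(\Z^{\lvert\sQ_0\rvert})\subset\kk_{\mathrm{red}}(\Z^{\lvert\sQ_0\rvert})$ of index $\kappa^{\lvert\sQ_0\rvert}$.

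The main obstacle is step (ii). One must verify that the quotient is by the full $\kappa$-torsion of $\ct_C$ and not by a smaller subgroup. Equivalently, no invariant must arise beyond the monopoles in $\kappa\Z^{\lvert\sQ_0\rvert}$, which requires that the $\Gamma_\kappa$-action be faithful. Faithfulness follows because $\kk_{\mathrm{red}}$ has full rank by Lemma \ref{lem:rankminorK}, so the topological grading separates all cocharacters. The remaining care is with the normalization of the $\sigma$'s.
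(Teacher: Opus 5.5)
Your proposal is correct and reaches the same identification as the paper: $\mC(\sQ^{\prime})$ is $\mC(\sQred)$ with the $\kappa$-torsion $\Gamma_\kappa\cong\Z_\kappa^{\lvert\sQ_0\rvert}\subset\ct_C$ gauged. Where you differ is in how the key step is justified.

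The paper argues from physics. Gauging a finite subgroup of the topological symmetry projects onto the invariant subalgebra, and the paper imports from \cite{Nawata:2023rdx} the fact that gauging $\prod_{v\in\sQ_0}\Z_\kappa$ in $\sQred$ multiplies every charge by $\kappa$. Its proof is then the one-line specialization $\sQ^{\prime\prime}=\sQred$, $\sV=\sQ_0$.

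You instead prove that equivalence directly in the abelian BFN presentation. Since $\sum_v{\kk_e}^{v}\lambda_v=\sum_v{(\kk_{\mathrm{red}})_e}^{v}\,\kappa\lambda_v$, the exponents $d_e$ in the relation for $r^\lambda r^\mu$ in $\sQ^{\prime}$ coincide with those for $r^{\kappa\lambda}r^{\kappa\mu}$ in $\sQred$. The leftover constants $\kappa^{d_e}$ from the linear factors are absorbed exactly by $\varphi\mapsto\kappa^{-1}\varphi$, as you anticipated. Hence $r^\lambda\mapsto r^{\kappa\lambda}$ sends a $\C[\mathfrak{t}_{\text{\tiny gauge}}]$-basis bijectively onto a basis of the $\kappa\Z^{\lvert\sQ_0\rvert}$-graded part, which is the $\Gamma_\kappa$-invariant subring.

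Your route gives a self-contained algebraic proof of the coarse-space claim and of the order $\kappa^{\lvert\sQ_0\rvert}$. The paper's route is shorter and supplies the physical reason, a gauged discrete symmetry, for regarding the answer as a stack.

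Two small corrections.
\begin{enumerate}[(i)]
\item The step you flag as the main obstacle is automatic. $\Gamma_\kappa$ acts through the $\Z^{\lvert\sQ_0\rvert}$-grading by topological charge, so its invariants are exactly the pieces of degree in $\kappa\Z^{\lvert\sQ_0\rvert}$. The action is faithful simply because every $r^\mu$ is nonzero. This holds for any charge matrix and has nothing to do with Lemma \ref{lem:rankminorK}.
\item The ring only sees the coarse moduli space. So the stack in your step (iii) is an input, the global quotient $[\mC(\sQred)/\Gamma_\kappa]$, exactly as in the paper; your Gale-duality paragraph does not establish it. Drop the \emph{root stack} alternative: a root stack, with stabilizers along a divisor, is a different stack with the same coarse space.
\end{enumerate}
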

	The coarse moduli space claim was observed in \cite{Nawata:2023rdx}, and established in \cite{Hanany:2023uzn}. A direct proof is given below for completeness, along the lines of \cite{Nawata:2023rdx}. The sole novelty here is to emphasize that $\mC \left( \sQ^{\prime} \right)$ should be thought of as a stack. 
\end{itemize}
It is possible to quantize these hypertoric stacks, and will be done in \S\ref{sec:quantizedhypertoric}.\par

\subsubsection{Gauging a subgroup of the topological symmetry}
One way to produce a quiver $\sQ^{\prime}$ with non-minimal charges is to gauge a finite Abelian subgroup of the Coulomb branch isometry of $\sQred$ \cite{Nawata:2023rdx,Hanany:2023uzn}.\par
Consider an Abelian quiver gauge theory $\sQ^{\prime \prime}=(\sQ, \lambda)$. Let $\sV\subset \sQ_0$ be defined by the property that the quiver with all $v \in \sV$ erased has charges multiple of $\kappa$, but the greatest common divisor is 1 if any $v \in \sV$ is added back in. That is, $\exists \kappa > 1$ such that
\begin{equation}
	\lvert \mathrm{gcd}_{\substack{ e \in \sQ_1 \\ v \in \sQ_0 \setminus \sU}} \left( \{ {\lambda_{e}}^{v} \} \right)\rvert = \begin{cases}  \kappa  , & \sV \subseteq \sU , \\ 1, & \sU \subsetneq \sV .\end{cases}
\end{equation}
There is a finite Abelian subgroup $\Gamma_{\kappa} := \prod_{v \in \sV} \Z_{\kappa} \subset \prod_{v \in \sV} \C^{\ast}_v \subseteq \ct_C$ acting on $\mC \left( \sQ^{\prime \prime} \right)$. To gauge this symmetry corresponds to project onto the invariant subalgebra $\C [\mC \left( \sQ^{\prime \prime} \right)]^{\Gamma_{\kappa}}$, whose spectrum is the quotient stack $\left[ \mC \left( \sQ^{\prime \prime} \right) / \Gamma_{\kappa} \right]$. Besides, the net effect of the orbifold is to rescale ${\lambda_e}^{v} \mapsto {\kk_e}^{v}=\kappa {\lambda_e}^{v}$ for all $v \in \sV$ \cite{Nawata:2023rdx}.\par
In conclusion, this outputs a new quiver $\sQ^{\prime}$ with non-minimal charges as claimed.
\begin{proof}[Proof of {Proposition \ref{prop:Cisquot}}]
	Given $\sQ^{\prime}$, it is always possible to define $\sQred$, and apply the above argument with $\sQ^{\prime \prime}=\sQred$ and $\sV=\sQ_0$.
\end{proof}

\subsubsection{Quantized hypertoric stacks}
\label{sec:quantizedhypertoric}
The following technical claim shows that it is possible to quantize a gerbe over a Higgs or Coulomb branch $M$. Lemma \ref{lem:quantumgerbe} is not necessary for the results in the rest of the work but is included for completeness.\par
\begin{lem}\label{lem:quantumgerbe}
	Let $M$ be a smooth affine scheme with $H^3(M, \Z_{\kappa})=0$, and $\mM \longrightarrow M$ a $\Z_{\kappa}$-gerbe over $M$. Assume in addition that there exists a deformation quantization $\mA_{\hbar}$ of $M$, flat over $\C [\hbar]$. Then, there exists a family of non-commutative rings $\mA_{\hbar}^{\prime}$ over $\C [\hbar]$ quantizing $\mM$.
\end{lem}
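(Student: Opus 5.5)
The plan is to exploit the fact that $\Z_{\kappa}$-gerbes over $M$ are classified by $H^2(M,\Z_{\kappa})$, and that the hypothesis $H^3(M,\Z_{\kappa})=0$ removes the obstruction to lifting such a class along the deformation. Since $\mA_{\hbar}$ is flat and reduces to $\C[M]$ modulo $\hbar$, the underlying topological space and the étale site of the $\hbar$-adic (or formal) family are those of $M$. Hence the gerbe data (a Čech 2-cocycle $c_{ijk}$ with values in $\mu_{\kappa}$ on an affine étale cover $\{U_i\}$) can be transported verbatim to the non-commutative setting, because $\mu_{\kappa}$ is central and constant. Concretely, I would realise $\mM$ through its category of twisted sheaves: quasi-coherent sheaves on $\mM$ on which the inertia $\mu_{\kappa}$ acts by a fixed character split into $\kappa$ summands, each a category of $c^{j}$-twisted sheaves on $M$. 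I would define $\mA_{\hbar}^{\prime}$ as the $\Z_{\kappa}$-graded algebra $\bigoplus_{j=0}^{\kappa-1}\mA_{\hbar}^{(j)}$, where $\mA_{\hbar}^{(j)}$ is the quantized algebra of sections twisted by $c^{j}$.

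The steps, in order, are as follows. First, I would localise $\mA_{\hbar}$ to a sheaf of algebras $\mathcal{O}_{\hbar}$ on $M$, using Ore localisation with respect to the $\hbar$-adic filtration, as is standard for quantizations of smooth symplectic varieties. I would then restrict it to the affine cover, with restrictions $\mathcal{O}_{\hbar}|_{U_i}$. Second, I would construct $c^{j}$-twisted $\mathcal{O}_{\hbar}$-modules by gluing the free rank-one modules $\mathcal{O}_{\hbar}|_{U_i}$ along multiplication by the central units $c_{ij}$ on overlaps. The cocycle condition holds up to $c_{ijk}$ precisely as in the commutative case, because the $\mu_{\kappa}$-values commute with everything. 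Third, I would check flatness over $\C[\hbar]$: each graded piece is locally free of rank one over the flat sheaf $\mathcal{O}_{\hbar}$, and $M$ is affine, so the higher Čech cohomology vanishes modulo $\hbar$. By $\hbar$-adic completeness and the Mittag-Leffler property, global sections are then flat and reduce modulo $\hbar$ to $\bigoplus_j \Gamma(M,L_{c^j})$, which is the ring of functions on $\mM$ (its coarse/graded algebra). Multiplication $\mA_{\hbar}^{(i)}\otimes\mA_{\hbar}^{(j)}\to\mA_{\hbar}^{(i+j)}$ comes from tensoring the twisted modules, and associativity is inherited from $\mathcal{O}_{\hbar}$.

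The main obstacle is the second step, namely making sure the twist does not need to be deformed. The worry is that a gerbe over the commutative $M$ might not extend compatibly to the quantized sheaf, so that one would need to solve for corrections $c_{ij}+\hbar(\cdots)$ order by order. The obstruction classes for such corrections would land in $H^3(M,\cdot)$. The hypothesis $H^3(M,\Z_{\kappa})=0$ is exactly what kills them, together with the centrality of the finite coefficients, which makes the correction terms constant. I would first try to argue that no correction is needed at all, since $\mu_{\kappa}$-valued cocycles are locally constant and therefore automatically central and $\hbar$-independent. The $H^3$ vanishing would then be used only to guarantee that the gerbe class is unobstructed under change of cover and refinement, so that $\mA_{\hbar}^{\prime}$ is well defined up to isomorphism.
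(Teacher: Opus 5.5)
There is a genuine gap in your second step. A $\Z_{\kappa}$-gerbe is given by a $\mu_{\kappa}$-valued \v{C}ech 2-cocycle $c_{ijk}$. The gluing data of a $c$-twisted rank-one sheaf are units $g_{ij}$ on overlaps with $g_{ij}g_{jk}g_{ki}=c_{ijk}$. If these could be taken to be $\mu_{\kappa}$-valued constants, as you propose, then $c$ would be a $\mu_{\kappa}$-coboundary and the gerbe would be trivial. For a nontrivial gerbe the $g_{ij}$ must therefore be non-constant sections of $\mO_M^{\ast}$. Such sections exist only if the Kummer image $\delta(\beta)\in H^2_{\text{\rm \'et}}(M,\mO_M^{\ast})$ vanishes. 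On a smooth affine $M$ with $H^3(M,\Z_{\kappa})=0$ this $\kappa$-torsion Brauer class need not vanish; the cyclic class $(x,y)_{\kappa}$ on $\mathbb{G}_m^2$ is an example. In that case there are no twisted line bundles at all, so your $\mA^{(j)}_{\hbar}$ do not exist.

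This is the idea the paper uses and you are missing. It pushes $\beta$ through the Kummer sequence to a Brauer class and realizes it by a sheaf of Azumaya algebras $\scA$, locally $\mathrm{Mat}_{\kappa\times\kappa}(\mO_{U_i})$; these are twisted sheaves of rank $\kappa$ rather than rank one. It then quantizes locally as $\mathrm{Mat}_{\kappa\times\kappa}(\scD_{\hbar}(U_i))$, glues these into $\scA_{\hbar}$ using the $H^3$-vanishing, and sets $\mA^{\prime}_{\hbar}=\Gamma(M,\scA_{\hbar})$. This gives $\mA^{\prime}_{\hbar}/(\hbar)\cong\Gamma(M,\scA)$.

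Even when $\delta(\beta)=0$, the rest of your argument does not go through as written.
\begin{enumerate}[(i)]
\item The $g_{ij}$ do not commute in $\mathcal{O}_{\hbar}$, so the twisted cocycle condition must be corrected order by order in $\hbar$. Your claim that no correction is needed is false. The obstructions land in coherent cohomology $H^2(M,\mO_M)$, which vanishes because $M$ is affine, so $H^3(M,\Z_{\kappa})=0$ is not what is at work there.
\item One-sided $\mathcal{O}_{\hbar}$-modules cannot be tensored over a noncommutative sheaf. The multiplication $\mA^{(i)}_{\hbar}\otimes\mA^{(j)}_{\hbar}\to\mA^{(i+j)}_{\hbar}$ is therefore undefined without extra bimodule structure. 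Closing the $\Z_{\kappa}$-grading would also require a trivialization of the honest line bundle $L_c^{\otimes\kappa}$.
\item $\bigoplus_j\Gamma(M,L_{c^j})$ is not the ring of functions on $\mM$. Functions on a $\mu_{\kappa}$-gerbe are only the weight-zero part, i.e. $\C[M]$, while the nonzero weights carry twisted sheaves. This is why the paper's classical limit is the Azumaya algebra $\Gamma(M,\scA)$ rather than a commutative graded ring.
\end{enumerate}
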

\begin{proof}
This proof requires working in \'etale topology. The subsequent statements on \'etale cohomology and Azumaya algebras are classical, and can be found in the book \cite{Milne:1980}.\par
The gerbe $\mM$ is characterized by a class $\beta \in H^{2}_{\text{\rm \'et}} \left( M  , \underline{\Z}_{\kappa}\right)$. On the one hand, \cite[Thm.XI.4.4]{SGA4} gives 
\begin{equation}\label{eq:SGA4H}
	H^{\bullet}_{\text{\rm \'et}} \left( M  , \underline{\Z}_{\kappa}\right) \cong H^{\bullet} \left( M, \Z_{\kappa} \right) .
\end{equation}
On the other hand, Kummer's short exact sequence of sheaves (in \'etale topology) 
\begin{equation}
	1 \longrightarrow \underline{\Z}_{\kappa} \longrightarrow  \mO_M^{\ast} \longrightarrow  \mO_M^{\ast}  \longrightarrow  1 
\end{equation}
induces a map $\delta : H^{2}_{\text{\rm \'et}} \left( M  , \underline{\Z}_{\kappa}\right) \longrightarrow H^{2}_{\text{\rm \'et}} \left( M  , \mO_M^{\ast}\right)$,
with the image given by the $\kappa$-torsion subgroup of $H^{2}_{\text{\rm \'et}} \left( M  , \mO_M^\ast\right)_{\mathrm{tor}}$. In this way, $\delta (\beta)$ determines a Morita-equivalence class of sheaves $\scA$ of Azumaya algebras over $M$. Fix an open cover $\{U_i \}$ of $M$ such that, locally, $\scA\vert_{U_i} \cong \mathrm{Mat}_{\kappa \times \kappa} \left( \mathcal{O}_{U_i} \right)$.\par
Consider next the deformation quantization. For every fixed $\hbar$ it is possible to define a sheaf of algebras $\scD_{\hbar}$ over $M$ such that $\Gamma \left(M, \scD_{\hbar} \right)= \mA_{\hbar}$. Explicitly, $\scD_{\hbar} (U) $ is the vector space $\mathcal{O}_X (U)[ \hbar]$ with multiplication induced by $\mA_{\hbar}$ on the ring of functions on the open $U \subset M$.\par
Locally it is possible to take $\scA_{\hbar} (U) =  \mathrm{Mat}_{\kappa \times \kappa} \left( \scD_{\hbar} (U) \right)$. The obstruction to the global existence is controlled by $H^{3}_{\text{\rm \'et}} \left( M  , \underline{\Z}_{\kappa}\right)$, which, by virtue of \eqref{eq:SGA4H} and the assumption on $M$, vanishes. One thus constructs a sheaf of non-commutative algebras $\scA_{\hbar}$.\par 
To conclude the proof, one takes sections and defines $\mA_{\hbar}^{\prime} := \Gamma \left( M, \scA_{\hbar} \right)$. It satisfies: $\mA_{\hbar}^{\prime}/ (\hbar) \cong \Gamma \left( M, \scA \right)$ by construction; $\mA_{\hbar}^{\prime} \cong \mA_{\hbar}$ if $\kappa=1$; and $\mA_{\hbar}^{\prime} $ is Morita-equivalent to $\mA_{\hbar}$ if $\beta=0$.
\end{proof}

The first new observation of the present paper is the quantization of the Higgs and Coulomb branches of theories with non-minimal charges.\par
\begin{thm}\label{thm:quantumhyper}
Let $\sQ^{\prime}$ describe an Abelian gauge theory with charges with greatest common divisor $\kappa$, and $\sQred$ describe the Abelian theory with analogous field content and charges divided by $\kappa$. Assume there exists an open region of parameter space such that the resolution $\widetilde{\mH} \left( \sQred \right) \longrightarrow \mH \left( \sQred \right)$ is smooth. Then:
\begin{enumerate}[(i)]
	\item\label{quanthyperC} There exists a family of non-commutative rings $\mA_{\hbar}^{\mC \left( \sQ^\prime \right)}$ flat over $\C [\hbar]$, quantizing the Coulomb branch $\mC \left( \sQ^{\prime} \right)$;
	\item\label{quanthyperH} There exists a family of non-commutative rings $\mA_{\hbar}^{\mH \left( \sQ^\prime \right)}$ flat over $\C [\hbar]$, quantizing the Higgs branch $\mH \left( \sQ^{\prime} \right)$.
\end{enumerate}
\end{thm}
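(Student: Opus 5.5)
The plan is to treat the two branches separately, reducing each to known quantizations of the reduced theory $\sQred$ via Propositions \ref{prop:Cisquot} and \ref{prop:HBgerbe}.

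\emph{Coulomb branch, part (\ref{quanthyperC}).} I will use the orbifold description from the proof of Proposition \ref{prop:Cisquot}: $\mC(\sQ^{\prime})$ is the quotient stack $[\mC(\sQred)/\Gamma_{\kappa}]$, with $\Gamma_{\kappa}\subset\ct_C$ a finite subgroup of the topological torus. The Braverman--Finkelberg--Nakajima construction gives a flat quantization $\mA_{\hbar}^{\mC(\sQred)}$ of $\mC(\sQred)$. The torus $\ct_C$ acts on this algebra by algebra automorphisms; concretely, it is the grading by monopole charge, under which the abelianized monopole operators $r_{\pm}$ carry definite weights and the Cartan generators have weight zero. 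The subgroup $\Gamma_{\kappa}$ therefore acts by $\C[\hbar]$-linear automorphisms. I will set
\begin{equation*}
\mA_{\hbar}^{\mC(\sQ^{\prime})} := \left(\mA_{\hbar}^{\mC(\sQred)}\right)^{\Gamma_{\kappa}} ,
\end{equation*}
or, to remember the stacky structure, the smash product $\mA_{\hbar}^{\mC(\sQred)}\rtimes\C[\Gamma_{\kappa}]$. The invariant subalgebra is Morita-equivalent to the smash product when the action is generic enough.

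Flatness follows because $\Gamma_{\kappa}$ is finite, so taking invariants is exact in characteristic zero. The invariants are a direct summand of a flat, hence free graded, $\C[\hbar]$-module. Exactness also gives
\begin{equation*}
\left(\mA_{\hbar}^{\mC(\sQred)}\right)^{\Gamma_{\kappa}}\Big/(\hbar) \cong \C[\mC(\sQred)]^{\Gamma_{\kappa}} ,
\end{equation*}
which is the ring of functions of the quotient stack, as required. As a consistency check, I will verify that this invariant algebra agrees with the abelian BFN algebra computed directly with the rescaled charges $\kk$. That computation produces monopole operators of charge in $\kappa\Z$ per node, which matches the description of Proposition \ref{prop:Cisquot}.

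\emph{Higgs branch, part (\ref{quanthyperH}).} I will apply Lemma \ref{lem:quantumgerbe}, with $\mM=\mH(\sQ^{\prime})$ viewed as the $\Z_{\kappa}$-gerbe of Proposition \ref{prop:HBgerbe}. The hypotheses on $M$ have to be arranged, and this is the main obstacle. Lemma \ref{lem:quantumgerbe} needs $M$ smooth, affine, with $H^3(M,\Z_{\kappa})=0$. The hypothesis of the theorem gives a smooth resolution $\widetilde{\mH}(\sQred)$, and since the gerbe is defined by the same GIT quotient, it extends over the resolution. However, $\widetilde{\mH}(\sQred)$ is not affine. It is a hypertoric variety, which retracts onto its core, a union of compact toric varieties of complex dimension at most half. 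Such a variety has cohomology only in even degrees, so $H^3=0$ with any coefficients.

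To handle non-affineness, I will rerun the argument of Lemma \ref{lem:quantumgerbe} on $\widetilde{\mH}(\sQred)$ itself. The hypertoric quantization (Braden--Licata--Proudfoot--Webster) is a sheaf of algebras $\scD_{\hbar}$ on the resolution. Locally I will twist it by $\mathrm{Mat}_{\kappa\times\kappa}$, and glue, using the vanishing of $H^3_{\text{\'et}}$ via \eqref{eq:SGA4H}. Taking global sections, together with the vanishing of higher cohomology of $\scD_{\hbar}$ on the conical symplectic resolution, then gives a flat family $\mA_{\hbar}^{\mH(\sQ^{\prime})}$ whose reduction mod $\hbar$ recovers the functions on the gerbe. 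At $\zeta\to 0$ this family specializes to the singular $\mH(\sQ^{\prime})$.

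A simpler alternative deserves a check first. The gerbe arises because $\Z_{\kappa}\subset(\C^{\ast})^{\lvert\sQ_0\rvert}$ acts trivially on $\C^{2\lvert\sQ_1\rvert}$. In that case quantum Hamiltonian reduction of the Weyl algebra by $(\C^{\ast})^{\lvert\sQ_0\rvert}$, keeping track of the $\Z_{\kappa}$-characters, may directly produce the quantization as a $\Z_{\kappa}$-graded algebra, with flatness inherited from the reduced case. I will check whether this bypasses the gluing argument.
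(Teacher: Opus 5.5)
\textbf{Part (i): a different but equivalent route.} The paper does not pass through invariants. It defines $\mA_{\hbar}^{\mC(\sQ^{\prime})}$ directly as the BFN algebra $H^{\mathrm{BM}}_{\bullet}(\mathcal{R})^{\hat{G}_{\C}[\![z]\!]\rtimes\C^{\ast}}$ for $(\C^{\ast})^{\lvert\sQ_0\rvert}$ acting with weights rescaled by $\kappa$. Flatness and the classical limit are then just BFN's theorem for another cotangent-type representation.

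You instead take $\bigl(\mA_{\hbar}^{\mC(\sQred)}\bigr)^{\Gamma_{\kappa}}$. You get flatness and the $\hbar=0$ fiber from exactness of finite-group invariants in characteristic zero. Your consistency check is exactly what identifies the two algebras. The rescaled representation is the pullback of the reduced one along the isogeny $t\mapsto t^{\kappa}$. Its monopoles and dressing factors are therefore those of $\sQred$ with magnetic charge in $\kappa\Z^{\lvert\sQ_0\rvert}$, once the Cartan generators are rescaled by $\kappa$.

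What each route buys:
\begin{itemize}
\item Yours makes the match with the description $\C[\mC(\sQred)]^{\Gamma_{\kappa}}$ of Proposition \ref{prop:Cisquot} immediate, as a consequence of exactness.
\item The paper obtains that match by combining the proposition with BFN's theorem for the rescaled representation. In exchange, its construction is intrinsic to $\sQ^{\prime}$ and needs no group action on the quantized algebra.
\end{itemize}
Drop the smash product. It is not needed, and at $\hbar=0$ it is not Morita-equivalent to the invariants, since $\Gamma_{\kappa}$ fixes the origin.

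\textbf{Part (ii): the paper's argument.} This is Lemma \ref{lem:quantumgerbe} applied to $M=\widetilde{\mH}(\sQred)$. You correctly notice that this $M$ is not affine, which the paper passes over.

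In your rerun, however, the cohomology vanishing needed at the global-sections step is for the glued sheaf $\scA_{\hbar}$, not for $\scD_{\hbar}$. A twisted form of $\mathrm{Mat}_{\kappa\times\kappa}$ can have $H^1\neq 0$, e.g.\ $\mathcal{E}nd(\mO\oplus\pi^{\ast}\mO(-2))$ on $T^{\ast}\mathbb{P}^1$.

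Flatness survives anyway, since global sections of an $\hbar$-torsion-free sheaf are torsion-free. To control the $\hbar=0$ fiber, though, you should choose the representative of $\scA$:
\begin{itemize}
\item The gerbe is the root gerbe of the tautological line bundles of the GIT quotient.
\item Hence $\delta(\beta)$ comes from $\mathrm{Pic}(M)$ and vanishes in $H^2_{\text{\'et}}(M,\mO_M^{\ast})$.
\item So $\scA=\mathrm{Mat}_{\kappa\times\kappa}(\mO_M)$ works, with higher cohomology killed by Grauert--Riemenschneider.
\end{itemize}
This also makes the $H^3$ obstruction moot.

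Note too that the $\hbar=0$ fiber is $\Gamma(M,\scA)$, the sections of an Azumaya algebra. It is not the ring of functions on the gerbe, which is just the ring of functions on the base.

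That is also why your quantum Hamiltonian reduction alternative cannot see the gerbe. All torus weights on the Weyl algebra lie in $\kappa\Z^{\lvert\sQ_0\rvert}$, so there are no semi-invariants with character nontrivial on the band. The reduction therefore returns the quantization of $\mH(\sQred)$ with a rescaled quantum parameter.
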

\begin{proof}
(\ref{quanthyperC}) The Coulomb statement is essentially contained in \cite{Braverman:2016wma}.\par 
By construction, Proposition \ref{prop:Cisquot} gives $\C \left[ \mC \left( \sQ^{\prime} \right)\right] = \C \left[ \mC \left( \sQred\right)\right]^{\Gamma_{\kappa}}$. To quantize it, one uses that the authors of \cite{Braverman:2016wma} construct $\mA_{\hbar}^{\mC  \left( \sQred \right)} = H_{\bullet}^{\mathrm{BM}} \left( \mathcal{R} \right)^{G_{\C} [\![ z]\!] \rtimes \C^{\ast}}$ as a certain equivariant Borel--Moore homology, where $G_{\C} [\![ z]\!]$ is the complexification of the gauge group with entries formal power series (see \cite{Braverman:2016wma} for the details and for the definition of $\mathcal{R}$), and the $\C^{\ast}$-action is as discussed in \S\ref{sec:QuantumCH}.\par
Gauging $\Gamma_{\kappa}$ modifies the action of the gauge group. Denoting by $\hat{G}_{\C} \curvearrowright \mathcal{R}$ the new action, it is possible to define 
\begin{equation}
	\mA_{\hbar}^{\mC \left( \sQ^\prime \right)} := H_{\bullet}^{\mathrm{BM}} \left( \mathcal{R} \right)^{\hat{G}_{\C} [\![ z]\!] \rtimes \C^{\ast}} .
\end{equation}
This is true for any gauge group and representation of cotangent type \cite{Braverman:2016wma}, and is applied here to $G_{\C}\cong  (\C^{\ast})^{\lvert \sQ_0\rvert}$, and $\hat{G}_{\C}\cong G_{\C}$ acting on $\mathcal{R}$ with weights rescaled by $\kappa$.\par
\medskip
(\ref{quanthyperH}) The Higgs branch statement is an application of Lemma \ref{lem:quantumgerbe} to $M=\widetilde{\mH} \left( \sQred \right)$.
\end{proof}

\subsubsection{Torus fixed points on hypertoric stacks}
Let $\sQ^{\prime} =\left( \sQ, \kk \right)$ be a quiver with non-minimal charges, $\kappa>1$, and $\sQred$ be as above. 
\begin{prop}\label{prop:stack123}
\begin{tenumerate}
	\item\label{stack1} If ${\kk_{e}}^{v}= \kappa \left( \delta^{v}_{\mathsf{h}(e)}+\delta_{\mathsf{t}(e)}^{v} \right)$, then $\sQred$ satisfies Hypothesis \ref{hyp:mava}.
	\item\label{stack2} The torus fixed points $p_{\alpha} \in \widetilde{\mC} \left( \sQred \right)^{\ct_C}$ are in one-to-one correspondence with immersion maps 
	\begin{equation}\label{eq:alphaembed}
		\alpha : \sQ_0 \hookrightarrow \sQ_1 \quad \text{ such that } \quad \det_{v,w\in\sQ_0} ({\kk_{\alpha(w)}}^v) \ne 0 .
	\end{equation}
	\item\label{stack3} The quotient map sends each $p_{\alpha} \in \widetilde{\mC} \left( \sQred \right)^{\ct_C}$ to a fat point in $\mC \left( \sQ^\prime \right)^{\ct_C}$.
\end{tenumerate}
\end{prop}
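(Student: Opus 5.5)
The plan is to work throughout with the explicit hypertoric description of $\widetilde{\mC}(\sQred)$ and $\widetilde{\mH}(\sQred)$, as $\C^{2\lvert\sQ_1\rvert}$ reduced by the gauge torus $(\C^{\ast})^{\lvert\sQ_0\rvert}$ with charge matrix $\frac{1}{\kappa}\kk$ (respectively by the Gale dual torus for the Coulomb side). The three items are then handled in order.

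For (\ref{stack1}): when ${\kk_e}^v=\kappa(\delta^v_{\mathsf{h}(e)}+\delta^v_{\mathsf{t}(e)})$, the reduced charge matrix $\frac{1}{\kappa}\kk$ has entries $0,1$, i.e. up to signs fixed by orientation it is exactly the incidence matrix of an ordinary Abelian quiver gauge theory in the sense of \S\ref{sec:quiverdef}. Such a matrix is totally unimodular, so every nonvanishing maximal minor is $\pm1$. By the standard criterion for hypertoric varieties \cite{Braden:2008kny}, unimodularity of the charge matrix together with generic FI parameters makes the hypertoric variety $\widetilde{\mH}(\sQred)$ smooth. The isolated fixed points of $\ct_H$ then correspond to the bases of the matroid, and Lemma \ref{lem:rankminorK} ensures that bases exist. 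By the equivalence of the two bullets in Hypothesis \ref{hyp:mava}, this gives the claim.

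For (\ref{stack2}): I would take the Coulomb-side description, where $p_\alpha$ is specified by which hypermultiplet coordinates are allowed to be nonzero at a vacuum. Recall the partition-function picture of \S\ref{sec:SQED}, where each vacuum is the solution of $\sigma_v$-equations setting $\lvert\sQ_0\rvert$ hypermultiplet masses to zero. A fixed point is cut out by choosing, for each $v\in\sQ_0$, an edge $\alpha(v)$ whose real mass equation ${\kk_{\alpha(v)}}^w\sigma_w=m_{\alpha(v)}$ is imposed. This choice yields a unique isolated solution exactly when the chosen rows are linearly independent, i.e. when $\det({\kk_{\alpha(w)}}^v)\neq0$; injectivity of $\alpha$ is forced by this determinant condition. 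Conversely, generic $(\zeta,m)$ ensure that distinct immersions give distinct points and that every fixed point arises this way (the matroid basis count). Rescaling by $\kappa$ does not change which minors vanish, so the statement holds equally with $\kk$ or $\frac{1}{\kappa}\kk$.

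For (\ref{stack3}): by Proposition \ref{prop:Cisquot}, $\mC(\sQ^\prime)=[\mC(\sQred)/\Gamma_\kappa]$, with $\Gamma_\kappa\subset\ct_C$. Since $\Gamma_\kappa$ lies in the torus, it commutes with $\ct_C$ and fixes each $\ct_C$-fixed point $p_\alpha$. The image of $p_\alpha$ is therefore $[\mathrm{pt}/\Gamma_\kappa]$, or more precisely its scheme-theoretic fixed locus is non-reduced. Locally at $p_\alpha$ the variety is $\C^{2n}$ with weights $\pm w_i$. After quotienting, the invariant coordinates are $\kappa$-th powers, and the $\ct_C$-fixed scheme acquires the structure $\C[x]/(x^{\kappa})$-type nilpotents, with length read off from $\lvert\det\kk_\alpha\rvert$. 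This is the step I expect to be the main obstacle, because "fat point" requires a careful choice between the stacky and scheme-theoretic fixed loci. I would first compute it explicitly in the SQED-type local model $\C^2$ with $\Z_\kappa$ acting, and then extend it to products using the unimodular local coordinates obtained in (\ref{stack2}).
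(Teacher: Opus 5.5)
Your proposal is correct in substance, and for (i) and (iii) it follows the paper. For (i) the paper only cites the hyperplane-arrangement formalism; you make this explicit through total unimodularity of the incidence matrix and the unimodularity criterion for smoothness of hypertoric varieties. For (iii) the paper's entire argument is your opening observation: $\Gamma_\kappa\subset\ct_C$ fixes every $p_\alpha$, so each $p_\alpha$ descends to a point with stabilizer $\Gamma_\kappa$ (``with multiplicity'').

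The real difference is in (ii). The paper argues through $\ct_C$-linearizing local coordinates at $p_\alpha$, each attached to an invertible $\lvert\sQ_0\rvert\times\lvert\sQ_0\rvert$ minor, and relies on (i) for isolatedness. You instead identify fixed points with vertices of the arrangement $\{\sum_v{\kk_e}^v\sigma_v=m_e\}$. Your route needs only generic masses, not unimodularity, so it proves (ii) for arbitrary $\kk$, which is the generality in which \eqref{eq:ZQprime} sums over $\alpha$. It also makes explicit that passing between $\kk$ and $\frac{1}{\kappa}\kk$ does not change which minors vanish. In exchange, the paper's picture foregrounds the tangent weights at $p_\alpha$ (the dual basis encoded by $\kk^{-1}$, which reappears in $x_{v;\alpha}$ of \eqref{eq:xva}). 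Yours foregrounds the location $\sigma_v=\sum_w{(\kk^{-1})_v}^{\alpha(w)}m_{\alpha(w)}$, which is what pairs with $\zeta$ in $\pair{\zeta}{m}$.

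In (iii) you go beyond the paper, and your local computation does go through. In $\C[u,v]^{\Z_\kappa}=\C[u^\kappa,v^\kappa,uv]$ the ideal generated by functions of nonzero weight is $(u^\kappa,v^\kappa)$, so the fixed scheme of the coarse space is $\C[uv]/((uv)^\kappa)$. At a unimodular vertex, $\Gamma_\kappa=\ct_C[\kappa]$ acts coordinatewise, giving $\bigotimes_{v\in\sQ_0}\C[z_v]/(z_v^\kappa)$. This has length $\kappa^{\lvert\sQ_0\rvert}=\lvert\det\kk_\alpha\rvert$, consistent with the length-$\kappa$ subscheme claimed in \S\ref{sec:higersqed}.

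Three small repairs are needed.

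\textbf{Signs in (i).} The phrase ``up to signs fixed by orientation'' hides a real point: you must read the hypothesis as the signed incidence matrix, as in \eqref{eq:chargesgauged}. The unsigned matrix $\delta^v_{\mathsf{h}(e)}+\delta^v_{\mathsf{t}(e)}$ written in the statement is not totally unimodular when the quiver has an odd cycle; a triangle of gauge nodes gives a maximal minor equal to $2$. Sign changes of rows or columns only flip signs of minors, so they cannot repair this.

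\textbf{Immersions versus images in (ii).} The claim that ``distinct immersions give distinct points'' is false as written. Immersions differing by a permutation of $\sQ_0$ have the same image and give the same fixed point. The bijection is with images $\alpha(\sQ_0)$, i.e.\ bases of the matroid, which is also what the paper implicitly uses when it passes to subsets of $\sQ_1$.

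\textbf{The local model in (iii).} The invariant that becomes nilpotent is the weight-zero monomial $uv$, not a $\kappa$-th power. Also, the unimodular local coordinates you invoke come from (i), not from (ii). So your length formula $\lvert\det\kk_\alpha\rvert$ is justified only under the hypothesis of (i). This is harmless, since (iii) only asserts a fat point.
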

\begin{proof}
Properties (\ref{stack1}) and (\ref{stack2}) are known, and have been used for example in \cite[Sec.6]{Bullimore:2016nji} and \cite[Sec.3]{Gaiotto:2023kuc}. They can be proven in the formalism of hyperplane arrangements \cite{Braden:2008kny}.\par
A self-contained proof of (\ref{stack2}) is given. From (\ref{stack1}), $p_{\alpha} \in \widetilde{\mC} \left( \sQred \right)^{\ct_C}$ is isolated. It is possible to choose local coordinates centered at $p_{\alpha}$ such that $\ct_C$ acts linearly on them. Each such change of variables is in one-to-one correspondence with an invertible $\lvert \sQ_0\rvert \times \lvert \sQ_0\rvert$ minor of $\kk$ and, by Lemma \ref{lem:rankminorK}, such minors exist. Finally, $\lvert \sQ_0\rvert \times \lvert \sQ_0\rvert$ minors of a $\lvert \sQ_1\rvert \times \lvert \sQ_0\rvert$ matrix are in one-to-one correspondence with subsets of $\sQ_1$ of size $\lvert \sQ_0\rvert$. This shows \eqref{eq:alphaembed}.\par
(\ref{stack3}) uses Proposition \ref{prop:Cisquot}. Because $\Gamma_{\kappa} \subset \ct_C$, then $\left[  \widetilde{\mC} \left( \sQred \right)^{\ct_C} \right]^{\gamma} = \widetilde{\mC} \left( \sQred \right)^{\ct_C}$ for all $\gamma \in\Gamma_{\kappa}$. Therefore, each torus fixed points $p_{\alpha} \in \widetilde{\mC} \left( \sQred \right)^{\ct_C}$ is fixed by the whole $\Gamma_{\kappa}$ and descends with multiplicity to the quotient stack $\mC \left( \sQ^\prime \right)^{\ct_C}$.
\end{proof}

\subsection{Twisted traces and sphere quantization}
\label{sec:HigherChargeSphere}
Let $\sQ^{\prime}$ be as in Definition \ref{def:Qprime}, and introduce FI parameters $\zeta=\{\zeta^{v}\}_{v\in\sQ_0}$ and redundant mass parameters $m=\{m_e\}_{e\in \sQ_1}$. 
For each $p_{\alpha} \in \widetilde{\mC} \left( \sQred \right)^{\ct_C}$, let 
\begin{equation}
	\pair{\zeta}{m} = \sum_{v,w \in \sQ_0} \zeta^{v} {(\kk^{-1})_v}^{\alpha(w)} m_{\alpha(w)} ,
\end{equation}
or, in the notation of \eqref{eq:pairmzeta}, ${(\keff_{\alpha})_{v}}^{i} = \sum_{w\in\sQ_0} \delta^{i}_{\alpha (w)} {(\kk^{-1})_v}^{\alpha(w)}$. Here, with a slight abuse of notation, ${(\kk^{-1})_v}^{\alpha(w)}$ indicate the entries of the inverse of the $\lvert \sQ_0\rvert \times \lvert \sQ_0\rvert$ minor of $\kk$ determined by the map $\alpha$ in \eqref{eq:alphaembed}. For later convenience, define also
\begin{equation}\label{eq:hatkappachargek}
	{(\hat{\kappa}_{\alpha})_{e}}^{v} :=\sum_{w\in\sQ_0} {\kk_e}^{w} {(\kk^{-1})_w}^{\alpha(v)} .
\end{equation}\par
The partition function of $\sQ^{\prime}$ is
\begin{equation}\label{eq:ZQpgAb}
	\mz_{\sQ^{\prime}} (\zeta, m) = \int_{\R^{\lvert\sQ_0\rvert}} \prod_{v\in\sQ_0} \dd \sigma_v e^{\ii 2\pi \zeta^{v}\sigma_v} \prod_{e\in\sQ_1} \ch \left( \sum_{v\in\sQ_0}{\kk_e}^{v}  \sigma_v  -m_e\right)^{-1}  .
\end{equation}
\begin{thm}
It holds that
\begin{equation}
\label{eq:ZQprime}
\begin{aligned}
	\mz_{\sQ^{\prime}} (\zeta, m) = &\sum_{ p_{\alpha} \in\widetilde{\mC} \left( \sQred \right)^{\ct_C}}  \frac{e^{\ii 2 \pi \pair{\zeta}{m}}}{\underset{v,w\in\sQ_0}{\det} ({\kk_{\alpha(w)}}^v) } e^{-\ii \frac{\pi}{4}\tilde{\kappa}_{\alpha}} \sum_{n\in \N^{\sQ_0}}\sum_{\nu \in \N^{\sQ_1 \setminus \alpha (\sQ_0)}} e^{ -\ii 2 \pi \hat{\kappa}_{\alpha} (n , \nu) } \\
	\times & \left[ \prod_{v\in \sQ_0} e^{- \ii \pi (\kappa^C_{\alpha})^{v} n_v} (x_{v;\alpha})^{\frac{1}{2} +n_{v}} \right] ~\left[ \prod_{e\in \sQ_1 \setminus \alpha (\sQ_0)} e^{- \ii \pi  (\kappa^H_{\alpha})_{e} \nu^e}  (y_{e;\alpha})^{\frac{1}{2}+\nu^e} \right] 
\end{aligned}
\end{equation}
where $\hat{\kappa}_{\alpha} (n , \nu) := \sum_{v\in\sQ_0 , e\in \sQ_1 \setminus \alpha (\sQ_0)} {(\hat{\kappa}_{\alpha})_{e}}^{v} n_{v}\nu^{e}$, 
\begin{subequations}\label{eq:kappacoefhigher}
\begin{align}
	\tilde{\kappa}_{\alpha}& := \sum_{v\in\sQ_0}\sum_{e\in \sQ_1 \setminus \alpha (\sQ_0)} {(\hat{\kappa}_{\alpha})_{e}}^{v} , \label{eq:kappacoefhighera}\\ 
	(\kappa^C_{\alpha})^{v} & := 1+\sum_{e\in \sQ_1 \setminus \alpha (\sQ_0)} {(\hat{\kappa}_{\alpha})_{e}}^{v} , \qquad (\kappa^H_{\alpha})_{e}:= 1+\sum_{v\in \sQ_0} {(\hat{\kappa}_{\alpha})_{e}}^{v} , \label{eq:kappacoefhigherb}
\end{align}
\end{subequations}
$x_{v;\alpha}$ are exponential functions of $\zeta \in \R^{\sQ_0}$ and $y_{e;\alpha}$ are exponential functions of $m \in \R^{\sQ_1}$. Explicitly, for $\zeta \in (0,\infty)^{\lvert \sQ_0\rvert}$ and generic $m$ they are given by
\begin{subequations}\label{eq:xyhigher}
\begin{align}
	x_{v;\alpha}&:= \prod_{w\in \sQ_0}e^{-2\pi \zeta^{w} {(\kk^{-1})_w}^{\alpha(v)}}, \quad & \forall v \in \sQ_0 , \label{eq:xva}\\
	y_{e;\alpha}&:= \left( e^{2\pi m_e} \prod_{v\in \sQ_0} e^{-2\pi {(\hat{\kappa}_{\alpha})_{e}}^{v}  m_{\alpha(v)}} \right)^{\mathrm{sgn}_{e;\alpha}} , \quad &\forall e \in \sQ_1 \setminus \alpha (\sQ_0), \label{eq:yea}
\end{align}
\end{subequations}
with, in \eqref{eq:yea}, $\mathrm{sgn}_{e;\alpha} \in \{\pm 1 \}$ is such that $\lvert y_{e;\alpha}\rvert <1$.
\end{thm}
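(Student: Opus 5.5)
The natural route is an iterated residue evaluation of \eqref{eq:ZQpgAb}, generalizing the SQED computation of \S\ref{sec:SQED}. For $\zeta \in (0,\infty)^{\lvert \sQ_0\rvert}$ the factor $e^{\ii 2\pi \zeta^{v}\sigma_v}$ decays in the upper half-plane of each $\sigma_v$. I will therefore close all $\lvert\sQ_0\rvert$ contours upward, one variable at a time. This is the standard Jeffrey--Kirwan-type multidimensional residue prescription for Abelian integrals. The poles of the integrand lie on the hyperplanes
\begin{equation*}
\sum_{v}{\kk_e}^{v}\sigma_v - m_e = \ii\left(\tfrac12 + \nu^e\right), \qquad \nu^e\in\Z .
\end{equation*}
The non-degenerate intersection points of $\lvert\sQ_0\rvert$ such hyperplanes are labelled by a choice of $\lvert\sQ_0\rvert$ edges with invertible charge minor. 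By Proposition \ref{prop:stack123}(\ref{stack2}), these choices are exactly the maps $\alpha$ in \eqref{eq:alphaembed}, i.e.\ the fixed points $p_\alpha$. The positivity of $\zeta$ selects only $n\in\N^{\sQ_0}$ for the edges in $\alpha(\sQ_0)$. Genericity of $m$ guarantees that the poles are simple and that no three hyperplanes meet non-transversally.

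At a fixed $\alpha$ and $n$, I change variables $s_{w} := \sum_v {\kk_{\alpha(w)}}^{v}\sigma_v$, so that $\sigma_v = \sum_w {(\kk^{-1})_v}^{\alpha(w)} s_w$. The Jacobian produces the factor $1/\det({\kk_{\alpha(w)}}^{v})$. The residue of $\prod_{w}\ch(s_w - m_{\alpha(w)})^{-1}$ at $s_w = m_{\alpha(w)} + \ii(\tfrac12+n_w)$ is $\prod_w (-\ii)(-1)^{n_w}/\pi$. Together with the $2\pi\ii$ factors this leaves an $n$-dependent sign $\prod_w(-1)^{n_w}$ and an overall constant phase. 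Substituting the pole into the exponential gives two pieces:
\begin{itemize}
\item $e^{\ii 2\pi\pair{\zeta}{m}}$, from the real parts $m_{\alpha(w)}$;
\item $\prod_v (x_{v;\alpha})^{\frac12+n_v}$ as in \eqref{eq:xva}, from the imaginary parts.
\end{itemize}

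The remaining factors are $\ch$ of the non-selected edges $e\notin\alpha(\sQ_0)$. Their argument becomes
\begin{equation*}
\sum_v {(\hat\kappa_\alpha)_e}^{v}\Bigl(m_{\alpha(v)} + \ii(\tfrac12+n_v)\Bigr) - m_e .
\end{equation*}
I expand each $1/\ch$ as a geometric series in the variable $y_{e;\alpha}$ of \eqref{eq:yea}, choosing $\mathrm{sgn}_{e;\alpha}$ so that $\lvert y_{e;\alpha}\rvert<1$. Concretely,
\begin{equation*}
\frac{1}{\ch u} = 2\sum_{\nu\ge0}(-1)^{\nu} e^{\mp(2\nu+1)\pi u}.
\end{equation*}
Inserting $u$ into this expansion produces three contributions:
\begin{itemize}
\item the mass part, which gives $(y_{e;\alpha})^{\frac12+\nu^e}$;
\item the cross term $\ii\pi(2\nu^e+1)\sum_v{(\hat\kappa_\alpha)_e}^{v}(\tfrac12+n_v)$, which expands into the bilinear phase $e^{-\ii2\pi\hat\kappa_\alpha(n,\nu)}$, the linear pieces $e^{-\ii\pi\sum_e {(\hat\kappa_\alpha)_e}^v n_v}$ and $e^{-\ii\pi\sum_v {(\hat\kappa_\alpha)_e}^{v}\nu^e}$, and the constant $e^{-\ii\frac{\pi}{2}\sum{(\hat\kappa_\alpha)_e}^v}$;
\item the alternating signs $(-1)^{\nu^e}$ and $(-1)^{n_v}$.
\end{itemize}
These signs combine with the linear phases into $e^{-\ii\pi(\kappa^H_\alpha)_e\nu^e}$ and $e^{-\ii\pi(\kappa^C_\alpha)^v n_v}$, which reproduces \eqref{eq:kappacoefhigherb}. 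The constant phases, together with the powers of $\pm\ii$ and the factors of $2$ and $\pi$, assemble into $e^{-\ii\frac{\pi}{4}\tilde\kappa_\alpha}$ with $\tilde\kappa_\alpha$ as in \eqref{eq:kappacoefhighera}. Fixing the normalization of $\ch$ so that all constants cancel is a bookkeeping step, done exactly as in \eqref{eq:ZSQEDres}. I expect to have to absorb $\mathrm{sgn}_{e;\alpha}$-dependent signs into $\tilde\kappa_\alpha$ modulo $8$.

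The main obstacle is justifying the iterated residue evaluation. Closing the contours at infinity requires that the arcs vanish. This holds because $\zeta^v>0$ and $1/\ch$ is bounded away from its poles, but the decay must be checked uniformly in the other variables. More seriously, I must show that the poles picked up are exactly the vertices indexed by $\alpha$ with $n\in\N^{\sQ_0}$, with no overcounting across orderings of the iterated integrals. I would handle this by using the Jeffrey--Kirwan residue with covector $\zeta$ in the positive chamber. Its non-vanishing cones are exactly those of the bases $\alpha$ whose dual cone contains $\zeta$. Before this I would check that for $\zeta\in(0,\infty)^{\lvert\sQ_0\rvert}$ every invertible minor contributes, and if not, I would restrict the sum accordingly; the determinant sign in the prefactor should encode the orientation. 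Finally, absolute convergence of the double series in $n,\nu$ follows from $\lvert x_{v;\alpha}\rvert,\lvert y_{e;\alpha}\rvert<1$, which justifies exchanging the sums.
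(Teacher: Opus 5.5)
Your skeleton is the paper's own proof. You close the contours upward, label the contributing intersection points by invertible minors $\alpha$, change variables to $s_w=\sum_v{\kk_{\alpha(w)}}^{v}\sigma_v$, and expand the $1/\ch$ of the unselected edges geometrically. The paper does exactly this, more tersely, and your remarks on arcs, Jeffrey--Kirwan residues and absolute convergence are more careful than what it writes. What does not go through is the phase bookkeeping, which you steer toward the displayed target instead of reading off. With the paper's normalization $\ch(\sigma)=2\cosh(\pi\sigma)$, $2\pi\ii$ times the residue at a selected pole is exactly $(-1)^{n_w}$. Likewise $1/\ch(u)=\sum_{\nu\ge0}(-1)^{\nu}e^{-(2\nu+1)\pi u}$ for $\mathrm{Re}\,u>0$, with no prefactor; your $(-\ii)(-1)^{n_w}/\pi$ and your factor $2$ correspond to $\cosh(\pi\sigma)$. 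So no constants are left over. In any case, constants independent of $\hat\kappa_\alpha$ could not turn the $e^{-\ii\frac{\pi}{2}\sum_{v,e}{(\hat\kappa_\alpha)_e}^{v}}$ you correctly obtain into $e^{-\ii\frac{\pi}{4}\tilde\kappa_\alpha}$, since $\tilde\kappa_\alpha$ is rational. The computation gives $e^{-\ii\frac{\pi}{2}\tilde\kappa_\alpha}$. This is what the paper's own intermediate factor $e^{-\ii2\pi{(\hat\kappa_\alpha)_e}^{v}(\frac12+n_v)(\frac12+\nu^e)}$ produces. It also reproduces $(-\ii)^{k-1}$ in \eqref{eq:ZSQEDres}, and Corollary \ref{cor:hypertwtr} writes it as $\ii^{-\tilde\kappa_\alpha}$. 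The $\frac{\pi}{4}$ in the statement appears to be a misprint, so you should not try to manufacture it.

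Second, pairs with $\mathrm{sgn}_{e;\alpha}=-1$ cannot be absorbed into $\tilde\kappa_\alpha$ modulo $8$. For them you must expand in $e^{+(2\nu+1)\pi u}$. That flips the sign of the whole row ${(\hat\kappa_\alpha)_e}^{v}$ in the bilinear phase, in $\kappa^C_\alpha$, in $\tilde\kappa_\alpha$, and in $(\kappa^H_\alpha)_e$, which becomes $1-\sum_v{(\hat\kappa_\alpha)_e}^{v}$. Since the entries are rational, these $n$- and $\nu$-dependent changes are genuine. The correct rule is the one in the paper's proof: for such pairs replace ${(\hat\kappa_\alpha)_e}^{v}\mapsto-{(\hat\kappa_\alpha)_e}^{v}$ and $m_e\mapsto-m_e$ everywhere, i.e. reorient $e$.

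The same phenomenon affects the selected edges, which you and the paper both pass over. ``Positivity of $\zeta$ selects $n\in\N^{\sQ_0}$'' requires $c_w:=\sum_v\zeta^{v}{(\kk^{-1})_v}^{\alpha(w)}>0$, i.e. $\lvert x_{w;\alpha}\rvert<1$. Already for $\fnode{1}\text{---}\gnode{1}\text{---}\gnode{1}\text{---}\fnode{1}$, no orientation of the edges achieves this for all three $\alpha$. Your fallback of restricting the sum would lose vacua. Each family of poles of $1/\ch$ runs in both imaginary directions, so the Jeffrey--Kirwan residue with covector $\zeta$ picks every invertible minor exactly once, with the tower $n_w<0$ whenever $c_w<0$. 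Reorienting $\alpha(w)$ turns this back into $n_w\in\N$. The Jacobian is then $1/\lvert\det({\kk_{\alpha(w)}}^{v})\rvert$. The sign of the determinant changes when $\alpha$ reorders the same subset of edges, so it cannot encode any orientation.
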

\begin{proof}
Fix the chamber $\zeta \in (0,\infty)^{\lvert \sQ_0\rvert} \subset \mathfrak{t}_C$. If there exists a chamber such that $\lvert y_{e;\alpha}\rvert <1$, the proof is given in such a chamber for definiteness; if this is not the case for some pairs $(e;\alpha)$, the final expression is simply obtained by replacing ${(\hat{\kappa}_{\alpha})_{e}}^{v} \mapsto - {(\hat{\kappa}_{\alpha})_{e}}^{v}$ and $m_e \mapsto -m_e$ for the corresponding pair.\par
Closing the integration contour of \eqref{eq:ZQpgAb} in the upper quadrant of $\R^{\lvert \sQ_0\rvert}$ picks the poles at 
\begin{equation}
	\sum_{v\in\sQ_0}{\kk_e}^{v}  \sigma_v  = m_e + \ii \left( \frac{1}{2} + n \right) , \qquad n \in \N ,
\end{equation}
for $e$ in a subset of $\sQ_1$ of size $\lvert \sQ_0\rvert$. Selecting $\lvert \sQ_0\rvert$ edges $e \in \sQ_1$ such that the corresponding minor of $\kk$ can be inverted allows one to solve for $\{\sigma_v\}_{v\in\sQ_0}$, and pick their contribution to \eqref{eq:ZQpgAb}. This leads to the sum over $\alpha$ as in \eqref{eq:alphaembed}.\par 
Computing the residues yields:
\begin{equation}
\begin{aligned}
	\mz_{\sQ^{\prime}} (\zeta, m) = &  \sum_{\alpha \text{ as in \eqref{eq:alphaembed}}} \frac{e^{\ii 2 \pi \pair{\zeta}{m}}}{\underset{v,w\in\sQ_0}{\det} ({\kk_{\alpha(w)}}^v) }\sum_{n\in \N^{\sQ_0}} \prod_{w\in\sQ_0} e^{-2\pi \sum_{v\in \sQ_0} \zeta^{v} {(\kk^{-1})_v}^{\alpha(w)} \left( \frac{1}{2} + n_w \right)} (-1)^{n_w} \\
	&\times \prod_{\substack{e \in \sQ_1\\ e \notin \alpha (\sQ_0)}} \ch \left[ -m_e +\sum_{v,w\in\sQ_0} {\kk_e}^{v} {(\kk^{-1})_v}^{\alpha(w)} \left( m_{\alpha (w)} + \ii \left( \frac{1}{2} + n_w \right)  \right) \right]^{-1} .
\end{aligned}
\end{equation}
Expanding the $1/\ch$ in a geometric series and using the definition \eqref{eq:xyhigher} gives:
\begin{equation}
\begin{aligned}
	\mz_{\sQ^{\prime}} (\zeta, m) &= \sum_{\alpha \text{ as in \eqref{eq:alphaembed}}}  \frac{e^{\ii 2 \pi \pair{\zeta}{m}}}{\underset{v,w\in\sQ_0}{\det} ({\kk_{\alpha(w)}}^v) }  \sum_{n\in \N^{\sQ_0}}\sum_{\nu \in \N^{\sQ_1 \setminus \alpha (\sQ_0)}}  \\
	& \times \prod_{v\in \sQ_0} (-1)^{n_v} (x_{v;\alpha})^{\frac{1}{2} +n_{v}}\prod_{e\in \sQ_1 \setminus \alpha (\sQ_0)} (-1)^{\nu^e}  (y_{e;\alpha})^{\frac{1}{2}+\nu^e}  ~e^{-\ii 2 \pi {(\hat{\kappa}_{\alpha})_{e}}^{v}  \left( \frac{1}{2} +n_{v}\right)\left(\frac{1}{2} +\nu^{e} \right)}
\end{aligned}
\end{equation}
Rearranging the terms independent of $x_{v;\alpha}, y_{e;\alpha}$ in the second line concludes the proof.
\end{proof}
\begin{rmk}The coefficients \eqref{eq:kappacoefhigher} are in general rational.
\end{rmk}\par
\medskip
With $\sQ^{\prime}$ as above, let $\sQ$ be the underlying quiver. For each $p_{\alpha} \in \mC (\sQ)^{\ct_C}$, let $\scH^{A}_{\alpha}$ and $\scH^{B}_{\alpha}$ be the Verma modules as prescribed in \S\ref{sec:Verma}. Define a new operator on $\scH^{A}_{\alpha}\otimes \scH^{B}_{\alpha}$:
\begin{equation}\label{eq:twistop}
	\Delta_{\alpha} := \frac{1}{2} \kappa^C_{\alpha} \left( J_C - \frac{1}{2} \right) + \frac{1}{2} \kappa^H_{\alpha} \left( J_H - \frac{1}{2} \right) + \hat{\kappa}_{\alpha} \left( J_C -\frac{1}{2} \right)\otimes  \left( J_H -\frac{1}{2} \right) . 
\end{equation}
A rewriting of the partition function relates it to twisted traces of modules over the quantized algebras of the underlying quiver.\par
\begin{center}
\noindent\fbox{%
\parbox{0.94\linewidth}{
\begin{cor}\label{cor:Zhighertrace}
	 The sphere partition function of $\sQ^{\prime}$ is a sum over twisted traces:
	\begin{equation}
	\label{eq:hypertoricnotfactor}
		\mz_{\sQ^{\prime}} (\zeta, m) = \sum_{p_{\alpha}\in \left( \mC \left( \sQ^{\prime}\right)^{\ct_C} \right)_{\mathrm{red}} } \frac{e^{\ii 2 \pi \pair{\zeta}{m}}}{\underset{v,w\in\sQ_0}{\det} ({\kk_{\alpha(w)}}^v) } e^{- \ii \frac{\pi}{4} \tilde{\kappa}_{\alpha}} ~\tr_{\scH^{A}_{\alpha}\otimes\scH^{B}_{\alpha}} \left(  e^{-\ii 2 \pi \Delta_{\alpha}} x^{J_C} y^{J_H} \right) ,
	\end{equation}
	with $\Delta_{\alpha}$ as defined in \eqref{eq:twistop} and $\tilde{\kappa}_{\alpha}$ in \eqref{eq:kappacoefhighera}.
\end{cor}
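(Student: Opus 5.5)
The corollary will follow by rewriting the preceding theorem, equation \eqref{eq:ZQprime}, summand by summand. First I will fix the index sets. By Proposition \ref{prop:stack123}(\ref{stack2}), the fixed points $p_{\alpha}\in\widetilde{\mC}(\sQred)^{\ct_C}$ are labelled by the immersions $\alpha$ of \eqref{eq:alphaembed}. By Proposition \ref{prop:stack123}(\ref{stack3}), each such point maps to a single fat point of $\mC(\sQ^{\prime})^{\ct_C}$, and distinct $\alpha$ remain distinct in the coarse quotient because $\Gamma_{\kappa}\subset\ct_C$ fixes every $p_\alpha$. Hence the sum over $\widetilde{\mC}(\sQred)^{\ct_C}$ in \eqref{eq:ZQprime} is the same as the sum over $\left(\mC(\sQ^{\prime})^{\ct_C}\right)_{\mathrm{red}}$. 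The prefactor $e^{\ii 2\pi\pair{\zeta}{m}}\,\det(\kk_{\alpha(w)}{}^v)^{-1}e^{-\ii\frac{\pi}{4}\tilde\kappa_\alpha}$ is then carried over unchanged.

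Next I will identify the bases of the Verma modules. For the underlying quiver $\sQ$ (minimal charges, satisfying Hypothesis \ref{hyp:mava}), I will use the standard description of \cite{Bullimore:2016nji,Bullimore:2020jdq} for hypertoric Verma modules attached to $p_\alpha$. The module $\scH^{A}_{\alpha}$ has a basis $\{|n\rangle\}_{n\in\N^{\sQ_0}}$ on which $J_C$ acts diagonally with multi-weight $\tfrac12+n_v$ in the direction $x_{v;\alpha}$. The module $\scH^{B}_{\alpha}$ has a basis $\{|\nu\rangle\}_{\nu\in\N^{\sQ_1\setminus\alpha(\sQ_0)}}$ with $J_H$-weight $\tfrac12+\nu^e$ in the direction $y_{e;\alpha}$. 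Here $x^{J_C}$ and $y^{J_H}$ are understood as the multi-index monomials $\prod_v x_{v;\alpha}^{J_{C,v}}$ and $\prod_e y_{e;\alpha}^{J_{H,e}}$, so on $|n\rangle\otimes|\nu\rangle$ the operator $x^{J_C}y^{J_H}$ acts by exactly the monomial in the second line of \eqref{eq:ZQprime}.

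Then I will evaluate $\Delta_\alpha$ from \eqref{eq:twistop} on $|n\rangle\otimes|\nu\rangle$. The components of $J_C-\tfrac12$ and $J_H-\tfrac12$ are $n_v$ and $\nu^e$, and the coefficients $\kappa^C_\alpha,\kappa^H_\alpha,\hat\kappa_\alpha$ contract these componentwise. This gives
\[
\Delta_\alpha|n\rangle\otimes|\nu\rangle=\Bigl(\tfrac12\sum_v(\kappa^C_\alpha)^v n_v+\tfrac12\sum_e(\kappa^H_\alpha)_e\nu^e+\hat\kappa_\alpha(n,\nu)\Bigr)|n\rangle\otimes|\nu\rangle .
\]
Therefore $e^{-\ii2\pi\Delta_\alpha}$ produces exactly the phases $e^{-\ii\pi(\kappa^C_\alpha)^v n_v}$, $e^{-\ii\pi(\kappa^H_\alpha)_e\nu^e}$ and $e^{-\ii2\pi\hat\kappa_\alpha(n,\nu)}$ in \eqref{eq:ZQprime}. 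Summing over the product basis reproduces the double sum, so the trace in \eqref{eq:hypertoricnotfactor} equals the inner sum of \eqref{eq:ZQprime}. Convergence is ensured because $|x_{v;\alpha}|,|y_{e;\alpha}|<1$ in the chosen chamber.

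The main obstacle is the second step: justifying that the $J_C$-grading of $\scH^A_\alpha$ is aligned with the variables $x_{v;\alpha}$, which involve the inverse minor $\kk^{-1}$ with rational entries rather than the naive FI exponentials. I will handle this by taking the grading to be the one defined by the local linear coordinates at $p_\alpha$ from the proof of Proposition \ref{prop:stack123}(\ref{stack2}). In those coordinates the equivariant weights are given by the rows of $\kk^{-1}$ restricted to $\alpha$. I will also treat the sign choices $\mathrm{sgn}_{e;\alpha}$ by the same replacement $m_e\mapsto -m_e$ used in the theorem, so that the highest-weight direction of $\scH^B_\alpha$ is consistent with $|y_{e;\alpha}|<1$.
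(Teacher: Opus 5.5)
Your proposal is correct and follows the paper's own argument. You identify the sum over the immersions $\alpha$ in \eqref{eq:ZQprime} with the sum over $\left(\mC(\sQ^{\prime})^{\ct_C}\right)_{\mathrm{red}}$, expand the trace in the tensor-product eigenbasis $\lvert n\rangle_A\otimes\lvert\nu\rangle_B$ of $J_C$ and $J_H$, and read off $e^{-\ii 2\pi\Delta_\alpha}$ from its eigenvalue to match \eqref{eq:ZQprime} term by term. Your remarks on the $\alpha$-dependent meaning of $x^{J_C}$ (through $\kk^{-1}$), the signs $\mathrm{sgn}_{e;\alpha}$, and convergence make explicit what the paper leaves implicit.
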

}}\end{center}
\begin{proof}
First, using $ \left( \mC \left( \sQred\right)^{\ct_C} \right)_{\mathrm{red}}= \mC (\sQ)^{\ct_C} = \left( \mC \left( \sQ^{\prime}\right)^{\ct_C} \right)_{\mathrm{red}}$ one replaces the sum over $\alpha$ as in \eqref{eq:alphaembed} with a sum over fixed points on $\mC (\sQ)$, and then one may think of the Verma modules $\scH^{A}_{\alpha}$ and $\scH^{B}_{\alpha}$ as associated with the collection of fixed points $ \left( \mC \left( \sQ^{\prime}\right)^{\ct_C} \right)_{\mathrm{red}}$.\par
The trace is expanded in the tensor product basis 
\begin{equation}
	\left\{ \lvert n \rangle_A\otimes \lvert \nu \rangle_B \  : \ n \in \N^{\sQ_0} ,  \nu \in \N^{\sQ_1  \setminus \alpha (\sQ_0) }  \right\} 
\end{equation}
chosen as in \cite{Bullimore:2020jdq}, where $ \lvert n \rangle_A$ are eigenvectors of $J_C$ with eigenvalues $\left\{n_v + \frac{1}{2} \right\}_{v\in \sQ0}$ and likewise $ \lvert \nu \rangle_B$ are eigenvectors of $J_H$. Using 
\begin{equation}
	\Delta_{\alpha} \lvert n \rangle_A\otimes \lvert \nu \rangle_B  = \frac{1}{2} \sum_{v\in \sQ_0}(\kappa^C_{\alpha})^{v} n_v + \frac{1}{2} \sum_{e\in \sQ_1\setminus \alpha (\sQ_0) } (\kappa^H_{\alpha})_{e} \nu^{e}  +  \sum_{v\in\sQ_0 , e\in \sQ_1 \setminus \alpha (\sQ_0)} {(\hat{\kappa}_{\alpha})_{e}}^{v} n_{v}\nu^{e}
\end{equation}
shows that \eqref{eq:hypertoricnotfactor} equals \eqref{eq:ZQprime}.
\end{proof}
Therefore $\mz_{\sQ^{\prime}}$ is, in general, a sum of traces of modules over $\mA^{\mC (\sQ)}_{\hbar} \otimes \mA^{\mH (\sQ)}_{\hbar}$, which not necessarily factorize into products of twisted traces.\par

\subsubsection{Examples of twists for hypertoric stacks}
\begin{enumerate}[(i)]
	\item As a consistency check, if ${\kk_e}^{v} \in \{ 0,\pm 1\}$, then ${(\hat{\kappa}_{\alpha})_{e}}^{v} \in \Z$ and \eqref{eq:hypertoricnotfactor} recovers the result of \cite{Gaiotto:2023kuc}, with $\tilde{\kappa}_{\alpha} \in \Z$ in \eqref{eq:kappacoefhighera}, $(\kappa^C_{\alpha})^{v},(\kappa^H_{\alpha})_{e} \in \Z$ in \eqref{eq:kappacoefhigherb}, and with twisted traces: 
	\begin{subequations}
	\begin{align}
		\hat{\chi}_{\alpha}^{\mC \left( \sQ \right)} &= \sum_{n \in \N^{\sQ_0}} \prod_{v\in\sQ_0} x_{v;\alpha}^{\frac{1}{2}} \left[(-1)^{\kappa_C^{v} } x_{v;\alpha} \right]^{n_v} , \\ 
		\hat{\chi}_{\alpha}^{\mH \left( \sQ \right)} &= \sum_{\nu \in \N^{\sQ_1 \setminus \alpha (\sQ_0)}} \prod_{e\in\sQ_1 \setminus \alpha (\sQ_0)} y_{e;\alpha}^{\frac{1}{2}} \left[(-1)^{\kappa_{H,e} } y_{e;\alpha} \right]^{\nu^e} ,
	\end{align}
	\end{subequations}
	The twist $e^{- \ii 2 \pi \Delta_{\alpha}} = (-1)^{R_A}(-1)^{R_B}$ and the half-integral shifts of the exponents are in perfect agreement with \cite{Bullimore:2020jdq}.
	\item For a given gauge theory, the rational numbers \eqref{eq:kappacoefhigher} may or may not be integers depending on the fixed point $p_{\alpha}$. Consider for instance a quiver with a single node and two edges:
		\begin{equation}
			\sQ^{\prime} = \fnode{1}\text{---}\gnode{1}\multe{\kappa}\fnode{1} .
		\end{equation}
		In this example, $\kk=(1,\kappa)$ and $\left( \widetilde{\mC}\left(\sQ^{\prime}\right)^{\ct_C}\right)_{\mathrm{red}} = \left\{ p_1, p_2 \right\}$, corresponding to the two maps \eqref{eq:alphaembed} with $\alpha (1)=1$ and $\alpha (1)=2$. Hence, 
		\begin{equation}
			\hat{\kappa}_1 = (1, \kappa) , \qquad \hat{\kappa}_2 = \left(\frac{1}{\kappa}, 1 \right) ,
		\end{equation}
		which, plugged into \eqref{eq:kappacoefhigher}, give 
		\begin{subequations}
		\begin{align}
			\tilde{\kappa}_1 &= \kappa , \qquad \kappa^C_{1} = 1 + \kappa , \qquad \kappa^H_{1} =1 + \kappa ; \\
			\tilde{\kappa}_2 &= \frac{1}{\kappa} , \qquad \kappa^C_{2} =\frac{ 1 + \kappa}{\kappa} , \quad \kappa^H_{2} = \frac{ 1 + \kappa}{\kappa} .
		\end{align}
		\end{subequations}
		Then, $e^{- \ii 2 \pi \Delta_{1}} = (-1)^{R_A}(-1)^{R_B}$ attains the form of \cite{Gaiotto:2023hda}, whereas $e^{- \ii 2 \pi \Delta_{2}}$ prevents the factorization of the trace.
	\item If $\sQ^{\prime}$ is obtained from $\sQred=\sQ$ by rescaling all the charges by $\kappa \in \Z_{\ne 0}$, then ${(\hat{\kappa}_{\alpha})_{e}}^{v} \in \{0, \pm 1 \}$ and $(\kappa_{\alpha}^C)^v,(\kappa_{\alpha}^H)_e \in \Z$ in \eqref{eq:kappacoefhigherb} are the same as those for $\sQ$. In this case, 
		\begin{equation}
			e^{- \ii 2 \pi \Delta_{\alpha}} = (-1)^{R_A (\sQ)}(-1)^{R_B (\sQ)}
		\end{equation}
		where on the right-hand side $R_A (\sQ)$ is the $\Z$-grading operator for $\scH_{\alpha}^{A}$ as a module over the quantization $\mA_{\hbar}^{\mC (\sQ)}$, not $\mA_{\hbar}^{\mC (\sQ^{\prime})}$; the same holds for $R_B (\sQ)$. In other words, the twist in this case depends on the R-charges of the quiver theory $\sQ$, not $\sQ^\prime$, and is insensitive to the rescaling of charges.
	\item\label{ex4higher} As a generalization of the previous example, let $\sQ^{\prime}$ be the quiver with higher charges obtained from $\sQ$ by gauging a finite Abelian subgroup $\Gamma_{\underline{\kappa}} = \prod_{v \in \sQ_0} \Z_{\kappa^v}$ of the topological symmetry $\ct_C$. The net effect is to rescale the charges of the hypermultiplets under the $U(1)$ gauge group at the node $v \in \sQ_0$ by $\kappa^{v}$. Then, 
		\begin{equation}\label{eq:chargesgauged}
			{\kk_e}^{v} =\begin{cases} \kappa^{v} &  v = \mathsf{h}(e) ,  \\ -\kappa^{v} &  v = \mathsf{t}(e) ,  \\ 0 & \text{ otherwise}, \end{cases}
		\end{equation}
		and also in this case ${(\hat{\kappa}_{\alpha})_{e}}^{v} \in \{0, \pm 1 \}$, $\tilde{\kappa}_{\alpha} \in \Z$ and $\kappa_{\alpha}^C,\kappa_{\alpha}^H $ are the same as those for $\sQ$. Hence, the twist is again given by the $\Z$-grading operators inherited from $\sQ$, different from the R-charges of $\sQ^{\prime}$:
		\begin{equation}
			e^{- \ii 2 \pi \Delta_{\alpha}} = (-1)^{R_A (\sQ)}(-1)^{R_B (\sQ)} .
		\end{equation}
\end{enumerate}

\subsubsection{Twisted traces and higher charges}

\begin{center}
\noindent\fbox{%
\parbox{0.94\linewidth}{
\begin{cor}\label{cor:hypertwtr}
Let $\sQ^{\prime}=(\sQ, \kk)$ describe an Abelian gauge theory with charge matrix given in \eqref{eq:chargesgauged}. Then
	\begin{equation}
		\mz_{\sQ^{\prime}} (\zeta, m) = \frac{1}{\lvert \Gamma_{\underline{\kappa}} \rvert} \sum_{p_{\alpha}\in  \mC \left( \sQ\right)^{\ct_C}} \ii^{-\tilde{\kappa}_{\alpha}} e^{\ii 2 \pi \pair{\zeta}{m}} \hat{\chi}_{\alpha}^{\mC \left( \sQ \right)} (x) \hat{\chi}_{\alpha}^{\mH \left( \sQ \right)} (y) ,
	\end{equation}
	where the twisted traces on the right-hand side are defined in \eqref{eq:tchiC}-\eqref{eq:tchiH}, and $\Gamma_{\underline{\kappa}} = \prod_{v \in \sQ_0} \Z_{\kappa^v}$.
\end{cor}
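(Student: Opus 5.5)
The plan is to specialize the general residue formula \eqref{eq:ZQprime} (equivalently Corollary \ref{cor:Zhighertrace}) to the charge matrix \eqref{eq:chargesgauged}, and then check that every $\alpha$-dependent datum reduces to the corresponding datum of $\sQ$, except for an overall determinant that produces $1/\lvert\Gamma_{\underline{\kappa}}\rvert$.

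First I would write $\kk = \kk^{\sQ} D$, where $\kk^{\sQ}$ is the $\{0,\pm1\}$ charge matrix of $\sQ$ and $D=\mathrm{diag}(\kappa^{v})_{v\in\sQ_0}$. For each immersion $\alpha$ the minor of $\kk$ selected by $\alpha$ is then $\kk^{\sQ}_{\alpha} D$. Consequently the set of $\alpha$ with non-vanishing determinant in \eqref{eq:alphaembed} coincides for $\sQ$ and $\sQ^{\prime}$, so the sum ranges over $\mC(\sQ)^{\ct_C}$. The determinant factor becomes $\det(\kk^{\sQ}_{\alpha})\prod_v\kappa^{v}$.

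Next I would compute \eqref{eq:hatkappachargek}:
\[
\hat{\kappa}_{\alpha} = \kk (\kk_{\alpha})^{-1} = \kk^{\sQ} D D^{-1} (\kk^{\sQ}_{\alpha})^{-1} = \hat{\kappa}_{\alpha}^{\sQ}.
\]
So $\tilde{\kappa}_{\alpha}$, $\kappa^C_{\alpha}$ and $\kappa^H_{\alpha}$ in \eqref{eq:kappacoefhigher} are exactly those of $\sQ$. The variables $y_{e;\alpha}$ in \eqref{eq:yea} depend only on $\hat{\kappa}_{\alpha}$ and $m$, so they are unchanged. The pairing $\pair{\zeta}{m}$ and the variables $x_{v;\alpha}$ involve $(\kk_{\alpha})^{-1} = D^{-1}(\kk^{\sQ}_{\alpha})^{-1}$, so they coincide with those of $\sQ$ after the harmless reparametrization $\zeta^{v}\mapsto\zeta^{v}/\kappa^{v}$. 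I would absorb this into the definition of $x$, noting that the FI parameter of $\sQ^{\prime}$ couples to $\kappa^{v}$ times the $\sQ$ topological charge.

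Since $\sQ$ has minimal charges, $\det(\kk^{\sQ}_{\alpha})=\pm1$ because $\kk^{\sQ}$ is totally unimodular, being a graph incidence matrix. This sign, together with $e^{-\ii\pi\tilde{\kappa}_{\alpha}/4}$, should combine into the phase $\ii^{-\tilde{\kappa}_{\alpha}}$. I expect this to be the most delicate step. I would first redo the sign bookkeeping of the SQED example in \S\ref{sec:SQED}, where the sign from ordering the masses in the chamber gets absorbed into $\tilde{\kappa}$ mod $4$, and argue that the same reordering yields $\det(\kk^{\sQ}_{\alpha})\,e^{-\ii\pi\tilde{\kappa}_{\alpha}/4}$ equal to the integer-$\kappa$ phase convention of \cite{Bullimore:2020jdq}.

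Finally, because all $\hat{\kappa}_{\alpha}$ entries lie in $\{0,\pm1\}$, the cross term $e^{-\ii2\pi\hat{\kappa}_{\alpha}(n,\nu)}$ equals $1$. The operator $e^{-\ii2\pi\Delta_{\alpha}}$ then reduces to $(-1)^{\kappa^C n}(-1)^{\kappa^H\nu}$, which is $(-1)^{R_A(\sQ)}(-1)^{R_B(\sQ)}$ as in example (\ref{ex4higher}). The trace over $\scH^{A}_{\alpha}\otimes\scH^{B}_{\alpha}$ therefore factorizes into $\hat{\chi}^{\mC(\sQ)}_{\alpha}(x)\,\hat{\chi}^{\mH(\sQ)}_{\alpha}(y)$. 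Collecting the prefactor $1/\prod_v\kappa^{v} = 1/\lvert\Gamma_{\underline{\kappa}}\rvert$ concludes the argument.
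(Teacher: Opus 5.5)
Your argument is correct in substance, but it takes a different route from the paper. The paper never works with the residue formula for $\sQ^{\prime}$. It makes the change of variables $\sigma_v^{\prime}=\kappa^{v}\sigma_v$ directly in the integral \eqref{eq:ZQpgAb}, which gives the exact identity $\mz_{\sQ^{\prime}}(\zeta,m)=\bigl(\prod_{v}\kappa^{v}\bigr)^{-1}\mz_{\sQ}(\zeta^{\prime},m)$ with $\zeta^{\prime,v}=\zeta^{v}/\kappa^{v}$. It then imports the known factorized form for the unit-charge quiver $\sQ$, including its phase conventions.

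Your factorization $\kk=\kk^{\sQ}D$ is the residue-level shadow of that same rescaling. The consequences you draw from it are all right: $\hat{\kappa}_{\alpha}=\kk^{\sQ}(\kk^{\sQ}_{\alpha})^{-1}$, $\det\kk_{\alpha}=\det\kk^{\sQ}_{\alpha}\prod_v\kappa^{v}$, and $(\kk_{\alpha})^{-1}=D^{-1}(\kk^{\sQ}_{\alpha})^{-1}$. What your route buys is an explicit check that all twist data coincide with those of $\sQ$: $\tilde{\kappa}_{\alpha}$, $\kappa^{C}_{\alpha}$, $\kappa^{H}_{\alpha}$, $y_{e;\alpha}$, and integrality of $\hat{\kappa}_{\alpha}$ via total unimodularity. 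In other words, you prove the claims of example (\ref{ex4higher}) rather than assert them. What the paper's route buys is brevity and robustness, since it never depends on the sign and phase bookkeeping of the multi-dimensional residue formula.

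That bookkeeping is exactly where your plan stalls as written. A sign $\pm1$ times $e^{-\ii\pi\tilde{\kappa}_{\alpha}/4}$ equals $\ii^{-\tilde{\kappa}_{\alpha}}$ only when $\tilde{\kappa}_{\alpha}\equiv 0 \bmod 4$, so the combination you describe cannot work in general. The fix is to recompute the constant phase yourself. In the proof of \eqref{eq:ZQprime}, the $(n,\nu)$-independent part of $e^{-\ii 2\pi (\hat{\kappa}_{\alpha})_e^{v}(\frac12+n_v)(\frac12+\nu^e)}$ is $e^{-\ii\frac{\pi}{2}(\hat{\kappa}_{\alpha})_e^{v}}$. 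The prefactor is therefore $\ii^{-\tilde{\kappa}_{\alpha}}$ directly, consistent with the $(-\ii)$ per edge in \eqref{eq:ZSQEDres}. So the $\frac{\pi}{4}$ in \eqref{eq:ZQprime} and \eqref{eq:hypertoricnotfactor} should read $\frac{\pi}{2}$.

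The leftover sign $\det\kk^{\sQ}_{\alpha}=\pm1$, together with the chamber-dependent flips $\mathrm{sgn}_{e;\alpha}$, is then absorbed by regarding $\tilde{\kappa}_{\alpha}$ modulo $4$ (a shift by $2$), as in the SQED example. This is effectively the convention the paper relies on when it quotes the unit-charge result.
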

}}\end{center}
\begin{proof}
It follows immediately from the change of variables $\sigma_v^{\prime}=\kappa^{v}\sigma_v$ for all $v\in\sQ_0$, which yields 
	\begin{equation}\label{eq:ZQprimeisZQred}
		\mz_{\sQ^{\prime}} (\zeta, m) = \left( \prod_{v\in\sQ_0}\frac{1}{\kappa^{v}} \right) \mz_{\sQ} (\zeta^{\prime}, m) ,
	\end{equation}
	where $\zeta^{\prime,v}= \frac{\zeta^{v}}{\kappa^{v}}$. Then, $\lvert \Gamma_{\underline{\kappa}}\rvert = \prod_{v\in\sQ_0}\kappa^{v}$ together with the rewriting \eqref{eq:hypertoricnotfactor} applied to $\mz_{\sQ}$ shows the statement. Note that the rescaling $\zeta \mapsto \zeta^{\prime}$ is a manifestation of the quotient by $\Gamma_{\underline{\kappa}} \subset \ct_C$.
\end{proof}
\begin{rmk}
For given $\sQ^{\prime}$ as in Example (\ref{ex4higher}), there may exist a different quiver $\check{\sQ}$ with unit charges, such that $\mC \left( \check{\sQ}\right)= \mC \left( \sQ \right)/\Gamma_{\underline{\kappa}} = \mC \left(\sQ^{\prime} \right)$. The sphere quantization will endow $\mC \left( \sQ^\prime\right)$ and $\mC \left( \check{\sQ} \right)$ with \emph{different} Verma modules, and different twisted traces on them.
\end{rmk}

\subsection{Example: SQED with non-minimal highest weight representation}
\label{sec:higersqed}

The above discussion is illustrated in SQED, generalizing \S\ref{sec:SQED} to non-minimal charges:
\begin{equation}\label{eq:SQEDhigercharge}
	\sQ^{\prime} \ = \ \gnode{1} \multe{\kappa} \fnode{k} .
\end{equation}
The hypermultiplet corresponding to the edge is taken to transform in a $U(1)$-representation of highest weight $\kappa \ge 1$. By construction, $\sQred =\sQ$ is the usual SQED \eqref{eq:SQED}, and \eqref{eq:SQEDhigercharge} is obtained by gauging a $\Z_{\kappa}$ subgroup of the Coulomb branch isometry.
$\mathfrak{t}_C=\mathfrak{u}(1)$ and $\mathfrak{g}_H=\mathfrak{su}(k)$ exactly as in \S\ref{sec:SQED}.\par
\begin{itemize}
	\item $\mH (\sQ^\prime)$ is the symplectic reduction of $\C^{k} \oplus \C^{k}$ by a $\C^{\ast}$-action with weight $(\kappa,-\kappa)$. For $\zeta>0$, $\widetilde{\mH} (\sQ^\prime)$ is a $\Z_{\kappa}$-gerbe over $T^{\ast}\mathbb{P}^{k-1}$.\footnote{The gerbe structure of the Higgs branch of \eqref{eq:SQEDhigercharge} was already discussed by Mathew Bullimore in their talk at the 2022 workshop ``Mirrors, Moduli and M-theory in the Midlands''. I thank Marcus Sperling for pointing this out to me.}
	\item $\mC (\sQ^\prime)$ is a quotient stack with coarse moduli space a global $\Z_{\kappa}$-orbifold of the Coulomb branch of the theory with charge-1 hypermultiplets \cite{Nawata:2023rdx}. For generic mass parameters, 
	\begin{equation}
		\widetilde{\mC} (\sQ^\prime)= \left[ (\widetilde{\C^2 / \Z_k})/\Z_{\kappa} \right] .
	\end{equation}
	If $\mathrm{gcd}(\kappa,k)=1$, the Chinese remainder theorem implies that the coarse moduli space is
	\begin{equation}\label{eq:highersqeddoublequot}
		\mC \left( \sQ^{\prime}\right) \longrightarrow \C^2/\Z_{\kappa k} .
	\end{equation}
	If $\mathrm{gcd}(\kappa,k)\ne 1$, in principle $\mC \left( \sQ^{\prime} \right) =\C^2/\Gamma$ for an extension 
	\begin{equation}
		1 \longrightarrow \Z_{\kappa} \longrightarrow \Gamma \longrightarrow \Z_{k} \longrightarrow 1 .
	\end{equation}
	It was argued in \cite[Sec.6.1]{Hanany:2023uzn} that \eqref{eq:highersqeddoublequot} holds for any $\kappa$.
	Each fixed point $p_{\alpha} \in (\widetilde{\C^2 / \Z_k})^{\ct_C}$ descends to a length-$\kappa$, degree-0 subscheme of $\mC (\sQ^\prime)^{\ct_C}$.
\end{itemize}\par
\medskip
The sphere partition function satisfies
\begin{equation}
	\mz_{\gnode{1} \multe{\kappa} \fnode{k}} \left( \zeta, m \right) = \frac{1}{\kappa} \mz_{\gnode{1} \text{---} \fnode{k}}  \left( \frac{\zeta}{\kappa} , m \right)
\end{equation}
whereby, from the results reviewed in \S\ref{sec:SQED} and using $x=e^{-2\pi \frac{\zeta}{\kappa}}$ in the region $\zeta>0$,
\begin{align}
	\mz_{\gnode{1} \multe{\kappa} \fnode{k}} \left( \zeta, m \right) &= \frac{1}{\kappa} \sum_{\alpha=1}^{k} \ii^{-\tilde{\kappa}_{\alpha}} e^{\ii \frac{2 \pi}{\kappa} \pair{\zeta}{m}^{\text{\eqref{eq:SQED}}} } \hat{\chi}_{\alpha}^{\widetilde{\C^2/\Z_k}} \left( x\right) \hat{\chi}_{\alpha}^{T^{\ast}\mathbb{P}^{k-1}} \left( y \right) ,
\end{align}
with $y_{\alpha} = \left\{ y_{i;\alpha} , \ i =1 , \dots, k , \ i \ne \alpha \right\}$ being products of exponentials of the mass parameters, and $\tilde{\kappa}_{\alpha}=k-1 + d_{\alpha}$ where $d_{\alpha}$ depends on the Weyl chamber of the mass parameter space considered.\par 
This result matches the specialization of \eqref{eq:ZQprime} to the case \eqref{eq:SQEDhigercharge}, where $\hat{\kappa}_{\alpha}=1$ $\forall \alpha=1, \dots, k$. The twisted traces on the right-hand side can be identified with traces on modules of the quantized Coulomb and Higgs branch algebras of $\sQ^{\prime}$.\par
\medskip

The superconformal index provides an additional tool to study the quantized Higgs and Coulomb algebra. The conventions are collected in \S\ref{app:sci}.\par
The Coulomb limit of the index $\mI_{\gnode{1} \multe{\kappa} \fnode{k}}$ gives:
\begin{subequations}
\begin{align}
	\lim_{\tilde{q} \to 0 } \mI_{\gnode{1} \multe{\kappa} \fnode{k}} \left( q, \tilde{q};x,y\right) &= \left( \sum_{j=0}^{\kappa-1} q^{j} \right) \lim_{\tilde{q} \to 0 } \mI_{\gnode{1} \text{---}\fnode{k}} \left( q^{\kappa}, \tilde{q};x,y\right) \\
		& = \frac{\left(1-q^{\kappa k}\right) }{\left( 1-q\right) \left( 1-x q^{\kappa \frac{k}{2}}\right)\left( 1-x^{-1} q^{\kappa \frac{k}{2}}\right)} ,
\end{align}
\end{subequations}
where the latter expression is the Hilbert series of $\C[u_1,u_2,z]/(u_1 u_2-z^{\kappa k})$, indicating that $\mC (\sQ^\prime)$ is a quotient stack with coarse moduli space $\C^2 /\Z_{\kappa k}$. On the other hand, the Higgs limit is independent of $\kappa$,
\begin{align}
	\lim_{q\to 0 } \mI_{\gnode{1} \multe{\kappa} \fnode{k}} \left( q, \tilde{q};x,y\right) &=\lim_{q\to 0 } \mI_{\gnode{1} \text{---}\fnode{k}} \left( q, \tilde{q};x,y\right) .
\end{align}
The untwisted traces obtained from the Coulomb and Higgs branch Hilbert series equal the ones found in \cite{Bullimore:2020jdq} for SQED with $\kappa=1$, as predicted by the general discussion above.

\section{Twisted traces and quantized moduli stacks of Chern--Simons-matter theories}
\label{sec:CSquantized}

$\Th_{\underline{\kappa}}$ is a 3d $\mN=4$ Chern--Simons-matter theory as defined in \S\ref{sec:CSdef}.\par
The branches $\mM_A$ and $\mM_B$ of the moduli space of vacua of $\Th_{\underline{\kappa}}$ are treated here as Deligne--Mumford stacks, with coarse moduli spaces
\begin{equation}
	\mM_A \longrightarrow M_A , \qquad \mM_B  \longrightarrow M_B .
\end{equation}
$M_A$ and $M_B$ are expected to be symplectic singularities and, as such, to admit a deformation quantization. In words, there should exist two non-commutative algebras $\mA_{\hbar}^{A},\mA_{\hbar}^{B}$ together with isomorphisms of graded rings such that 
\begin{equation}
	\mA_{\hbar}^{A} / \left( \hbar \right) \cong  \C \left[ M_A  \right] , \qquad \mA_{\hbar}^{B}  / \left( \hbar \right) \cong \C \left[ M_B \right]  .
\end{equation}\par
This section initiates the investigation of the non-commutative algebras $\mA_{\hbar}^{A},\mA_{\hbar}^{B}$ and their modules. The sphere quantization is utilized to derive twisted traces on them.

\subsection{Quantized moduli stacks of Chern--Simons-matter theories}
\label{sec:VermaCS}

\subsubsection{Quantized A- and B-branches}
Typically, $\Th_{\underline{\kappa}}$ does not contain enough parameters to fully resolve $\mM_A$ and $\mM_B$. 
The lack of smoothings in the moduli spaces of Chern--Simons-matter theories prevents the analysis of \cite{Bullimore:2016nji} from carrying over directly to the present case, but does not prevent their quantization. 
\begin{rmk}[Quantization of symplectic singularities]
\hspace{1cm}
\begin{tenumerate}
	\item Bezrukavnikov--Kaledin \cite{BerKal} proved the existence of universal deformation quantizations of symplectic resolutions. 
	\item There exists a family of deformation quantizations of slices in the affine Grassmannian \cite{Kamnitzer:2012}.
	\item Losev \cite{Losev:2016} presents the quantization of any symplectic singularity.
\end{tenumerate}\par 
	Therefore, given a moduli stack, the quantization of its coarse moduli space is known, when it is a symplectic singularity \cite{Kamnitzer:2012,Losev:2016}. In \S\ref{sec:ex}, some examples will reproduce the explicit quantization of the du Val singularity $A_{\kappa -1}$ given in \cite[Sec.2.1]{Etingof:2020fls}.
\end{rmk}
The following is a compendium of the aforementioned results.
\begin{prop}\label{prop:quantumMAMB}\begin{enumerate}[(i)]
\item\label{quantYi} Let $\mM_A$ be a Deligne--Mumford stack, and assume its coarse moduli space $M_A$ is a symplectic singularity. There exists a family of non-commutative algebras $\mA_{\hbar}^{A}$ over $\C [\hbar]$, such that $\mA_{\hbar}^{A} /(\hbar ) \cong \C [M_A]$ is an isomorphism of $\Z$-graded rings.
\item\label{quantYii} Moreover, assume there exists a quiver $\sQ_A$ such that: (a) $\mC \left( \sQ_A \right) \cong M_A$; (b) $\sQ_A$ satisfies Hypothesis \ref{hyp:mava}.
Then, $\mA_{\hbar}^{A}$ is obtained specializing $\mA_{\hbar}^{\mC \left( \sQ_A \right)}$ at the origin of the parameter space.
\end{enumerate}
\end{prop}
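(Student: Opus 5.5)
The plan is to treat Proposition \ref{prop:quantumMAMB} as a compendium: each part reduces to a cited quantization result once the hypotheses are matched. The stack structure of $\mM_A$ plays no role in (\ref{quantYi}), since the claim concerns only the coarse space. For (\ref{quantYi}) I would first note that, since $\mM_A$ is Deligne--Mumford, $M_A$ exists, and by hypothesis it is a (conical) symplectic singularity. The contracting $\C^{\ast}$-action from the Cartan of the R-symmetry gives a non-negative $\Z$-grading on $\C[M_A]$, with the Poisson bracket of negative degree. I would then invoke Losev's result \cite{Losev:2016}, in the graded form where the canonical filtered quantization is built from the Namikawa--Losev universal Poisson deformation over $\mathfrak{P}=H^2(\widetilde{M}_A^{\mathrm{Q}})$ of a $\mathbb{Q}$-factorial terminalization. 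Taking the fiber over $0\in\mathfrak{P}$ and Rees-ifying the filtration yields $\mA_{\hbar}^{A}$, graded with $\deg\hbar$ equal to the bracket degree. Flatness over $\C[\hbar]$ follows because a Rees algebra of an exhaustive filtration is torsion free. The isomorphism $\mA_{\hbar}^{A}/(\hbar)\cong \mathrm{gr}\,\mA^{A}\cong\C[M_A]$ is graded by construction. Where $M_A$ is a slice in the affine Grassmannian, \cite{Kamnitzer:2012} gives the same conclusion explicitly.

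For (\ref{quantYii}) I would use the construction of \cite{Braverman:2016wma}: $\mA_{\hbar}^{\mC(\sQ_A)}$ is a flat family over $\C[\hbar]\otimes\C[\mathfrak{t}_H]$, the mass parameters being the equivariant parameters of the flavor torus. Specializing at $m=0$ gives a graded algebra whose $\hbar=0$ fiber is $\C[\mC(\sQ_A)]$, flat in $\hbar$ by base change. Composing with the isomorphism (a), assumed graded, gives a quantization of $M_A$ of the kind in (\ref{quantYi}).

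The remaining point, and the main obstacle, is to show that this specialization \emph{is} $\mA_{\hbar}^{A}$, i.e.\ agrees with Losev's canonical quantization at period $0$. I would argue as follows. Hypothesis \ref{hyp:mava} says that $\widetilde{\mC}(\sQ_A)$ is a symplectic resolution over generic $m$, so $\mathfrak{t}_H$ maps onto the resolution directions in $H^2$. Losev's classification says graded quantizations are parametrized by $\mathfrak{P}/W$. The Braverman--Finkelberg--Nakajima family is a graded quantization of this universal deformation, so it is pulled back from the universal one along a linear map $\mathfrak{t}_H\to\mathfrak{P}$. Such a map sends $0\mapsto0$, which forces the period at $m=0$ to vanish. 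The subtle issue is a possible shift in the period (a $\rho$-type shift, as in the half-integral weights appearing in the twisted traces). I would check this against the anti-involution/even-quantization normalization of \cite{Etingof:2020fls}, where the symmetric choice at the origin is canonical. If no such shift is present, the claim follows. Otherwise I would state the result up to this canonical identification.
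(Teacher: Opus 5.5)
Your argument for (i), and the first half of your (ii), is the paper's proof. For (i) the paper simply cites \cite{Losev:2016} (and \cite{Kamnitzer:2012} for slices). For (ii) it uses only that the quantized Coulomb branch algebra, specialized at zero masses, is a flat graded quantization of $\mC(\sQ_A)\cong M_A$; it cites \cite{Hilburn:2020aau} where you cite \cite{Braverman:2016wma}. Since (i) is purely existential, (ii) amounts to ``one may take $\mA_{\hbar}^{A}$ to be this specialization''. Your proof is therefore complete once you reach that point.

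The paragraph you label the main obstacle is not needed, and as written it does not work.

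\begin{enumerate}[(a)]
\item A graded family over the mass parameters is classified by a map that is homogeneous jointly in $(m,\hbar)$. At $\hbar=1$ it is affine in $m$, of the form $m\mapsto Lm+c$, not linear.
\item The constant $c$ is exactly the $\rho$-type shift you mention. Twisted differential operators on a flag variety are the standard case where the natural parameter and the period differ by $\rho$.
\item So ``$0\mapsto 0$'' is what would have to be proved; it is not a consequence of the classification.
\end{enumerate}

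The anti-involution/evenness normalization of \cite{Etingof:2020fls} cannot settle it either. The opposite of the quantization with period $\lambda$ has period $-\lambda$, so being isomorphic to one's opposite only forces $-\lambda\in W\lambda$. For instance, every central quotient of $U(\mathfrak{sl}_2)$ has an anti-involution. For type $A_{n-1}$ with $n\ge 2$, the even quantizations form a positive-dimensional family.

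If you want the identification with Losev's period-$0$ quantization, it needs a genuine computation, e.g.\ comparing the specialized algebra with the explicit parametrization of \cite{Etingof:2020fls} in the du Val examples. Otherwise, simply drop that paragraph.
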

\begin{proof}
(\ref{quantYi}) is a result of \cite[Sec.4.3]{Kamnitzer:2012} for slices in the affine Grassmannian, and \cite{Losev:2016} in general.\par
(\ref{quantYii}) is a direct consequence of the requirements on $\sQ_A$, and the analysis of \cite{Hilburn:2020aau}.
\end{proof}
\begin{rmk}
The map $\mA_{\hbar}^{\mC \left( \sQ_A \right)} \longrightarrow \mA_{\hbar}^{A} $ in Proposition \ref{prop:quantumMAMB}(\ref{quantYii}) need \emph{not} originate from a symplectic resolution of $M_A$.
\end{rmk}
The proposal of \cite{Marino:2025uub}, reviewed in \S\ref{sec:QAQB}, provides a constructive algorithm to derive $\sQ_A$ and $\sQ_B$ describing $M_A$ and $M_B$ when $\Th_{\underline{\kappa}}$ is based on an A-type quiver. Assuming its validity, combined with Proposition \ref{prop:quantumMAMB} it yields an explicit description of the quantized A- and B-branches. Note that Proposition \ref{prop:quantumMAMB} itself does not assume \cite{Marino:2025uub}.

\subsubsection{Verma modules}

Generic points of the parameter space of $\Th_{\underline{\kappa}}$ correspond to partial smoothings 
\begin{equation}
	\widetilde{M}_A \longrightarrow M_A, \qquad \widetilde{M}_B \longrightarrow M_B .
\end{equation}
By abuse of notation, let $\mA_{\hbar}^{A}$ also denote the quantization of $\widetilde{M}_A$. Under the assumptions of Proposition \ref{prop:quantumMAMB}(\ref{quantYii}), there exists a partial contraction 
\begin{equation}
\label{eq:CQAtoMA}
	\widetilde{\mC }\left( \sQ_A \right) \longrightarrow \widetilde{M}_A ,
\end{equation}
corresponding to specializing the resolution parameters to non-generic values.\par 
Assume that, under the map \eqref{eq:CQAtoMA}, some fixed points in $\widetilde{\mC}(\sQ_A)^{\ct_C}$ coalesce but there is no jump in dimension; whence the fixed locus $\widetilde{M}_A^{\ct_A}$ is a reducible, degree-0 subscheme of $\widetilde{M}_A$. Then, \eqref{eq:CQAtoMA} induces a surjective map
\begin{equation}\label{eq:equivariantCQAtoMA}
	\widetilde{\mC }\left( \sQ_A \right)^{\ct_C} \twoheadrightarrow \left( \widetilde{M}_A^{\ct_A} \right)_{\mathrm{red}}. 
\end{equation}
By Lemma \ref{lem:Vermatofp}(\ref{Vermatofp2}), the Verma modules over $\mA_{\hbar}^{A}$ are obtained from those of $\mA_{\hbar}^{ \widetilde{\mC} (\sQ_A)}$ by imposing the isomorphisms induced by \eqref{eq:equivariantCQAtoMA} \cite{Hilburn:2020aau}.

\subsection{Verma modules and twisted traces from Chern--Simons-matter theories}
\label{sec:sphereCS}
The notation continues as in \S\ref{sec:VermaCS}. It is convenient to recall the partition function of $U(N)_{\kappa}$ supersymmetric pure Chern--Simons theory:\footnote{The Witten--Reshetikhin--Turaev topological invariant of $\mathbb{S}^3$ is obtained by replacing $\kappa $ with $\kappa + N$.}
\begin{equation}
	\mz_{\csnode{N}{\kappa}}  = \frac{1}{\lvert \kappa\rvert ^{N/2}} \prod_{1\le a <b \le N} 2 \sin \left( \frac{\pi}{\kappa} (b-a)\right) .
\end{equation}\par
The main result of this work is the proposal:

\begin{center}
\noindent\fbox{%
\parbox{0.95\linewidth}{%
\begin{conj}\label{myconj1}
	Assume the fixed locus $\widetilde{M}_A^{\ct_A}$ is a degree-0 scheme. Then, for every $p_{\alpha} \in \left(\widetilde{M}_A^{\ct_A}\right)_{\mathrm{red}}$ there exist
	\begin{itemize}
		\item A Verma module $\scH_{\alpha}^{A}$ over $\mA_{\hbar}^{A}$, and a Verma module $\scH_{\alpha}^{B}$ over $\mA_{\hbar}^{B}$;
		\item A bilinear pairing $\pair{\cdot}{\cdot} : \mathfrak{t}_A \times \mathfrak{t}_B \to \C $;
		\item An integer $\kappa_{\alpha} \ne 0$;
		\item A subset $\sV_{\alpha} \subseteq \sQ_0$, determined solely by the local geometry around $p_{\alpha}$;
	\end{itemize}
	such that 
		\begin{equation}\label{eq:conj1Z}
		\mz_{\Th_{\underline{\kappa}}}  = \sum_{p_{\alpha} \in \left(\widetilde{M}_A^{\ct_A}\right)_{\mathrm{red}}} S_{\alpha}~e^{\ii 2 \pi \pair{ \zeta}{ m} } \tr_{\scH^{A}_{\alpha}\otimes\scH^{B}_{\alpha}} \left(  e^{-\ii \frac{2 \pi}{\kappa_{\alpha}} \left( \frac{1}{2}+ R_A\right) \otimes \left(\frac{1}{2} +R_B\right) } x^{J_A} y^{J_B} \right) ,
	\end{equation}
	where $R_A,R_B$ are, respectively, the $\Z$-grading operators on $\scH^{A}_{\alpha},\scH^{B}_{\alpha}$ induced by the contracting $\C^{\ast}$-action on $M_A,M_B$, and 
	\begin{equation}\label{eq:conj1S}
		S_{\alpha} := \prod_{v \in \sV_{\alpha}} \mz_{\csnode{N_v}{\kappa_v}}  .
	\end{equation}
\end{conj}
}}\end{center}
This is to be compared with the Gaiotto--Okazaki Conjecture \ref{c:GO}. For 3d $\mN=4$ Chern--Simons-matter theories, Conjecture \ref{myconj1} predicts that the partition function attains the form of a sum over twisted traces on the Verma modules $\scH^{A}_{\alpha}\otimes\scH^{B}_{\alpha}$, but such trace does not generically factorize into the product of two twisted traces. 
Conjecture \ref{myconj1} is tested in examples in \S\ref{sec:ex}.\par
\begin{rmk}
\begin{itemize}
\item For A-type quivers, which have $\kappa_v \in\{ 0, \pm \kappa \}$ for all $v \in \sQ_0$, $\kappa_{\alpha} \in \{ 1, \kappa \}$.
\item If $\lvert \kappa_{\alpha} \rvert =1$ for each $\alpha$, $e^{-\frac{\ii 2 \pi}{\kappa_{\alpha}} \left( \frac{1}{2}+R_A\right)\left( \frac{1}{2}+R_B\right) } = - \ii (-1)^{R_A}(-1)^{R_B}$, and the conjecture reduces to the form of Conjecture \ref{c:GO}. In particular, this is the case for A-type quivers with $\kappa=1$.
\item Isomorphic modules are identified as objects in the category $\mA_{\hbar}^{\bullet}\text{-mod}$, and they contribute only once to \eqref{eq:conj1Z}. In other words, the partition function is sensitive to the Verma modules associated with the reduced scheme $\left(\widetilde{M}_A^{\ct_A}\right)_{\mathrm{red}}$.
\end{itemize}
\end{rmk}

\subsection{Quantized hypertoric stacks and magnetic quivers of Chern--Simons-matter theories}
\label{sec:magQprime}

$\Th_{\underline{\kappa}}$ is now assumed to be an Abelian 3d $\mN=4$ quiver Chern--Simons-matter theory based on an A-type or affine A-type Dynkin diagram.
In this case, the Chern--Simons levels are $\kappa_v \in \{ 0, \pm \kappa \}$ and appear with alternating sign.\par
In the notation as above, recall that the proposal of \cite{Marino:2025uub} (reviewed in \S\ref{sec:QAQB}) provides two `magnetic' quivers $\sQ_A, \sQ_B$ with $\mC \left( \sQ_A \right) = M_A$ and $\mC \left( \sQ_B \right) = M_B$.\par
\begin{itemize}
	\item In general, $\mC \left( \sQ_A \right)$ admits desingularizations that are obstructed in $M_A$; likewise for $\mC \left(\sQ_B \right)$ and $M_B$. Simple examples of this fact are given in \S\ref{sec:ex}.
	\item In general, the twisted traces on $\mA^{A}_{\hbar}$-modules differ from the twisted traces on $\mA_{\hbar}^{\mC \left( \sQ_A \right)}$-modules read off from $\mz_{\sQ_A}$. The same holds for the B-branch. Again, this can be checked in all examples given in \S\ref{sec:ex}.
\end{itemize}

\begin{table}
\centering
\begin{tabular}{|c|c|}
	\hline
		local quiver $\subset \Th_{\underline{\kappa}}$ & local magnetic quiver $\subset \sQ^{\prime}$ \\
		\hline
		$\cdots \underbrace{\csnode{1}{\kappa} \text{---}\csnode{1}{0} \text{---}\cdots  \text{---}\csnode{1}{0} \text{---} \csnode{1}{-\kappa}}_{k} \cdots $ & \begin{tikzpicture}[baseline=-1] \node (g) at (0,0) {$\cdots \gnode{1} \cdots $}; \node (fl) at (-1,-0.75) {$\fnode{1}$};  \node (fc) at (0,-0.75) {$\cdots$}; \node (fr) at (1,-0.75) {$\fnode{1}$}; \path[anchor=east] (-0.8,-0.6) edge node {$\scriptstyle \kappa$} (-0.25,-0.25);\path[anchor=west] (0.8,-0.6) edge node {$\scriptstyle \kappa$} (0.25,-0.25); \node[anchor=north] (u) at (0,-0.85) {$\underbrace{\hspace{2.3cm}}_{k}$};\end{tikzpicture}\\
		\hline
		$\csnode{1}{-\kappa} \text{---} \cdots $ & $\fnode{1} \text{---} \cdots$ \\
		\hline
		$ \cdots \text{---}\csnode{1}{\kappa} $ & $ \cdots \text{---} \fnode{1}$ \\
		\hline
	\end{tabular}
\caption{Prescription to obtain the auxiliary quiver $\sQ^{\prime}$ for Abelian linear Chern--Simons-matter theories.}
\label{tab:myQprime}
\end{table}\par
Here a different, auxiliary quiver $\sQ^{\prime}$ is proposed, which includes edges of charge-$\kappa$. The replacement rule to derive $\sQ^{\prime}$ is given in Table \ref{tab:myQprime}. This is largely inspired by but different from \cite{Marino:2025uub}, and employs quiver gauge theories with non-minimal charges studied in \S\ref{sec:highercharge}.\par
\begin{conj}\label{myconjmag}
	Let $\Th_{\underline{\kappa}}$ be an Abelian 3d $\mN=4$ Chern--Simons-matter theory whose underlying quiver is an A-type or affine A-type Dynkin diagram, and assume that $\sum_{v \in \sQ_0} \kappa_v=0$. Consider the quiver $\sQ^{\prime}$ obtained by the prescription in Table \ref{tab:myQprime}. Then
	\begin{enumerate}[(i)]
		\item $\mM_A \cong \mC (\sQ^{\prime})$ as stacks, and  $\mM_B \cong \mH (\sQ^{\prime})$ as gerbes.
		\item $\mz_{\Th_{\underline{\kappa}}} = \mz_{\sQ^{\prime}}$.
	\end{enumerate}
\end{conj}
This conjecture is verified in families of examples in \S\ref{sec:ex}.
\begin{cor}\label{cor:magquiver}
	Let $\Th_{\underline{\kappa}}$ and $\sQ^{\prime}$ be as above, and assume they satisfy Conjecture \ref{myconjmag}. Then 
	\begin{enumerate}[(i)]
		\item\label{prop:aux1} $\Th_{\underline{\kappa}}$ satisfies Conjecture \ref{myconj1}.
		\item\label{prop:aux2} If $\sQ^{\prime}$ has charge matrix given in \eqref{eq:chargesgauged}, the twisted traces of the Verma modules $\scH_{\alpha}^{A}$ read off by the sphere partition function are twisted by the grading operator on $\C \left[ \mC \left( \sQ \right) \right]$. The same holds for the B-branch.
	\end{enumerate}
\end{cor}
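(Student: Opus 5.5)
The plan is to transport everything through the equality $\mz_{\Th_{\underline{\kappa}}} = \mz_{\sQ^{\prime}}$ of Conjecture \ref{myconjmag}(ii), and then apply the results of \S\ref{sec:highercharge} to $\sQ^{\prime}$.

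\textbf{Part (\ref{prop:aux1}).} I would proceed in three steps.

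First, I use Conjecture \ref{myconjmag}(i) to identify $\mM_A \cong \mC(\sQ^{\prime})$ and $\mM_B \cong \mH(\sQ^{\prime})$. Proposition \ref{prop:Cisquot} then shows that the coarse moduli space is $M_A = \mC(\sQred)/\Gamma_{\kappa}$. Theorem \ref{thm:quantumhyper} supplies the quantizations $\mA_{\hbar}^{A}=\mA_{\hbar}^{\mC(\sQ^{\prime})}$ and $\mA_{\hbar}^{B}=\mA_{\hbar}^{\mH(\sQ^{\prime})}$.

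Second, I identify the fixed points. By Proposition \ref{prop:stack123}(\ref{stack3}), the reduced fixed locus $(\widetilde{M}_A^{\ct_A})_{\mathrm{red}}$ equals $(\mC(\sQ^{\prime})^{\ct_C})_{\mathrm{red}} = \mC(\sQ)^{\ct_C}$, a finite set. So the degree-0 hypothesis of Conjecture \ref{myconj1} holds. The Verma modules $\scH^{A}_{\alpha}, \scH^{B}_{\alpha}$ are those attached to $p_\alpha$ as in \S\ref{sec:Verma}, with identifications via Lemma \ref{lem:Vermatofp}(\ref{Vermatofp2}).

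Third, I apply Corollary \ref{cor:Zhighertrace} to $\mz_{\sQ^{\prime}}$. This writes the partition function as a sum over $\alpha$ of $e^{\ii 2\pi\pair{\zeta}{m}}$, times a prefactor $\det(\kk_{\alpha})^{-1}e^{-\ii\pi\tilde\kappa_\alpha/4}$, times $\tr(e^{-\ii 2\pi\Delta_\alpha}x^{J_C}y^{J_H})$.

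\textbf{Main obstacle: matching the twist.} What remains is to show that, for the specific charge matrices produced by Table \ref{tab:myQprime}, the operator $\Delta_\alpha$ of \eqref{eq:twistop} agrees with the twist in \eqref{eq:conj1Z}, up to integer shifts that do not change $e^{-\ii2\pi\Delta_\alpha}$.

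The key computation is the matrix $\hat\kappa_\alpha$ of \eqref{eq:hatkappachargek}. In the quivers of Table \ref{tab:myQprime}, every node has edges of charge $1$ or $\kappa$ only. Choose $\alpha(v)$, the edge selected at node $v$. If it has charge $\kappa$, then inverting the minor gives entries of $\hat\kappa_\alpha$ in $\{0,\pm1,\pm1/\kappa\}$, exactly as in example (ii) of the twisted hypertoric discussion. If it has charge $1$, the entries lie in $\{0,\pm1,\pm\kappa\}$.

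I would then check two things:
\begin{itemize}
\item the nonintegral part of $\Delta_\alpha$ organizes as $\frac{1}{\kappa_\alpha}(\tfrac12+R_A)\otimes(\tfrac12+R_B)$ modulo $\Z$, with $\kappa_\alpha\in\{1,\kappa\}$;
\item the half-integer shifts $J-\tfrac12$ are identified with the R-gradings $R_A,R_B$, using the basis $\lvert n\rangle_A\otimes\lvert\nu\rangle_B$ from the proof of Corollary \ref{cor:Zhighertrace}.
\end{itemize}
I expect this bookkeeping, done node by node, to be the hard step. I would first treat the single-node local pieces of Table \ref{tab:myQprime}, then argue that the contributions are additive along the linear quiver.

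\textbf{The prefactor.} The factor $\det(\kk_\alpha)^{-1}e^{-\ii\pi\tilde\kappa_\alpha/4}$ has to be recognized as $S_\alpha$. Each charge-$\kappa$ edge in $\alpha(\sQ_0)$ contributes $1/\kappa$ to the determinant. Since $\mz_{\csnode{1}{\kappa}}=\lvert\kappa\rvert^{-1/2}$, pairs of nodes of levels $\pm\kappa$ give exactly $1/\kappa$. This determines $\sV_\alpha$, and the leftover phase is absorbed, which I would verify using $\sum_v\kappa_v=0$.

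\textbf{Part (\ref{prop:aux2}).} When the charge matrix is \eqref{eq:chargesgauged}, the statement is immediate from Corollary \ref{cor:hypertwtr}, or equivalently from example (\ref{ex4higher}). The change of variables $\sigma'_v=\kappa^v\sigma_v$ reduces $\mz_{\sQ^{\prime}}$ to $\mz_{\sQ}$. Then $e^{-\ii2\pi\Delta_\alpha}=(-1)^{R_A(\sQ)}(-1)^{R_B(\sQ)}$, so the twist is by the grading on $\C[\mC(\sQ)]$, and the B-side statement follows in the same way.
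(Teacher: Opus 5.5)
Your proposal follows the paper's proof: part (i) comes from combining $\mz_{\Th_{\underline{\kappa}}}=\mz_{\sQ^{\prime}}$ with Corollary~\ref{cor:Zhighertrace}, and part (ii) from combining it with Corollary~\ref{cor:hypertwtr}. The step you flag as the main obstacle is matching $e^{-\ii 2\pi\Delta_\alpha}$ to the literal twist $\frac{1}{\kappa_\alpha}(\frac12+R_A)\otimes(\frac12+R_B)$ and the prefactor to $S_\alpha$; the paper never does this, since it takes the trace form from Corollary~\ref{cor:Zhighertrace} as the content of Conjecture~\ref{myconj1}, so you have missed nothing relative to the paper, but be aware that a literal node-by-node match is delicate, because the paper's own instance \eqref{eq:Zex4a} produces a phase coupling $n_1-n_2$ to $\frac12+\nu$.
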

\begin{proof}
(\ref{prop:aux1}) follows from $\mz_{\Th_{\underline{\kappa}}} = \mz_{\sQ^{\prime}}$ and Corollary \ref{cor:Zhighertrace}.\par
(\ref{prop:aux2}) follows from $\mz_{\Th_{\underline{\kappa}}} = \mz_{\sQ^{\prime}}$ and Corollary \ref{cor:hypertwtr}.
\end{proof}

\section{Examples}
\label{sec:ex}
This section contains several explicit examples demonstrating the main results and supporting the main conjectures. For concreteness in the exposition, it is assumed that 
\begin{equation}
	\kappa >0 .
\end{equation}

\subsection{Abelian two-node quiver}
\label{sec:ex2node}
A simple example is
\begin{equation}\label{eq:ex2node}
		\csnode{1}{\kappa}\text{---} \csnode{1}{-\kappa}\ .
\end{equation}
$\ct_B$ is trivial, and $\ct_A \cong \C^{\ast}$. The A-branch Hilbert series \cite{Li:2023ffx}
\begin{equation}\label{eq:ex2HS}
	\hs_{\C [ M_{A,\eqref{eq:ex2node}} ] } \left( q;x \right) = \frac{1-q^{\kappa}}{(1-q)\left(1-xq^{\frac{\kappa}{2}}\right)\left(1-x^{-1}q^{\frac{\kappa}{2}}\right)}  
\end{equation}
equals the Hilbert series of $\C[u_1,u_2,z]/(u_1u_2-z^{\kappa})$. Therefore, the coarse moduli space is
\begin{equation}
	M_{A,\eqref{eq:ex2node}} = \mathrm{Spec} \frac{\C \left[u_1,u_2,z\right]}{\left( u_1 u_2-z^{\kappa}\right)} \cong \C^2/\Z_{\kappa} ,
\end{equation}
while the B-branch is trivial as an algebraic variety \cite{Li:2023ffx}. In fact, a derivation analogous to the one for ABJM theory \cite{Bergman:2020ifi} shows that \eqref{eq:ex2node} possesses a $\Z_{\kappa}$ 1-form symmetry, and therefore the moduli spaces are best characterized as stacks:
\begin{itemize}
\item $\mM_{A,\text{\eqref{eq:ex2node}}} = \left[ \C^2/\Z_{\kappa} \right] $;
\item $ \mM_{B,\text{\eqref{eq:ex2node}}}$ is a $\Z_{\kappa}$-gerbe over a point.
\end{itemize}
The untwisted trace read off from the limit $q \to 1$ of \eqref{eq:ex2HS} is
\begin{equation}
	\chi^{A,\eqref{eq:ex2node}} =  \frac{x^{\frac{1}{2}}}{1-x} .
\end{equation}\par
Denoting $\zeta$ the free parameter in \eqref{eq:ex2node}, the partition function evaluates to 
\begin{equation}\label{eq:Z2node}
	\mz_{\text{\eqref{eq:ex2node}}} = \frac{1}{\kappa \ch (\zeta)} .
\end{equation}
\begin{thm}
	Conjecture \ref{myconj1} holds for \eqref{eq:ex2node}.
\end{thm}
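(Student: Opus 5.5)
The plan is to verify \eqref{eq:conj1Z} directly. Both sides are explicit for \eqref{eq:ex2node}: the right-hand side has a single summand, and the left-hand side is the closed form \eqref{eq:Z2node}. So the proof is a matter of fixing the data listed in Conjecture \ref{myconj1} (modules, pairing, $\kappa_{\alpha}$, $\sV_{\alpha}$) and checking the resulting series.

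\emph{Step 1: fixed points and modules.} Since $M_A=\C^2/\Z_{\kappa}$ admits no resolution parameter in this theory, the $\ct_A$-fixed locus is the origin, a degree-0 (fat) point. Hence $\left(\widetilde{M}_A^{\ct_A}\right)_{\mathrm{red}}$ is a single point $p_1$. I would take $\scH^{A}_1$ to be the unique Verma module over the quantization of $\C[u_1,u_2,z]/(u_1u_2-z^{\kappa})$ from \cite[Sec.2.1]{Etingof:2020fls}. Its basis $\lvert n\rangle=u_1^n\lvert 0\rangle$, $n\in\N$, has $J_A$-eigenvalue $n+\tfrac12$, consistent with the untwisted trace $\chi^{A,\eqref{eq:ex2node}}=x^{1/2}/(1-x)$ read off from \eqref{eq:ex2HS}. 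Since $M_B$ is a point and $\ct_B$ is trivial, $\scH^{B}_1$ is one-dimensional, $y^{J_B}=1$ and $R_B=0$. Because $\ct_B$ is trivial there are no masses, so the pairing is $\pair{\zeta}{m}=0$.

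\emph{Step 2: twist, $\kappa_1$ and $S_1$.} I would choose $\kappa_1=\kappa$ and $\sV_1=\sQ_0$, i.e.\ both nodes. This gives $S_1=\mz_{\csnode{1}{\kappa}}\mz_{\csnode{1}{-\kappa}}=\kappa^{-1/2}\kappa^{-1/2}=1/\kappa$, which already accounts for the prefactor in \eqref{eq:Z2node}. From \eqref{eq:ex2HS}, $u_1$ carries weight $q^{\kappa/2}$, i.e.\ R-charge $\kappa$ in the integer normalization, so $R_A$ increases by $\kappa$ from $\lvert n\rangle$ to $\lvert n+1\rangle$. Normalizing the highest-weight vector so that $(\tfrac12+R_A)\lvert 0\rangle=0$ gives $(\tfrac12+R_A)\lvert n\rangle=\kappa n\lvert n\rangle$. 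The twist therefore acts as $e^{-\ii\frac{2\pi}{\kappa}\kappa n\cdot\frac12}=(-1)^n$, which is the $(-1)^{\frac1\kappa R_A}$ announced in the introduction.

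\emph{Step 3: summation.} The trace becomes $\sum_{n\ge0}(-1)^n x^{n+\frac12}=x^{1/2}/(1+x)$. With $x=e^{-2\pi\zeta}$ this equals $1/\ch(\zeta)$ in the normalization of $\ch$ used in \eqref{eq:ZSQEDres}, where geometric expansion of $1/\ch$ produces exactly such alternating series. Multiplying by $S_1=1/\kappa$ reproduces \eqref{eq:Z2node}. As a cross-check, I would also compute \eqref{eq:Z2node} directly by residues from the defining integral, as in \S\ref{sec:SQED}, and confirm that the single tower of poles matches the single Verma module.

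The main obstacle is Step 2: pinning down the normalization of the $\Z$-grading $R_A$ on $\scH^{A}_1$, in particular the value on the highest-weight vector, and of $\ch$. A different shift would leave a spurious constant phase such as $e^{-\ii\pi/(2\kappa)}$, which could then only be absorbed into $S_1$ or the pairing. My first attempt is to fix the shift by requiring that the $\kappa=1$ specialization reproduce the Conjecture \ref{c:GO} twist $(-1)^{R_A}$ of \cite{Bullimore:2020jdq}. I would then check this against the Etingof--Ma quantization, where $u_1$ raises the grading by $\kappa$ relative to the generator $z$.
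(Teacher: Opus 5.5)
Your proposal is correct and follows the same route as the paper's proof: a single (fat) fixed point at the origin, $S_1=1/\kappa$ as the product of the two pure $U(1)_{\pm\kappa}$ partition functions, and $\kappa_1=\kappa$. As in the paper, $R_A$ steps by $\kappa$ along the unique Verma module, so the twist reduces to $(-1)^n$ and $\sum_{n\ge 0}(-1)^n x^{n+\frac12}=1/\ch(\zeta)$. Your worry about the constant phase is well founded and is a point the paper glosses over: it only states that $R_A$ has eigenvalues $\kappa\N$, and with $R_B=0$ this leaves an uncancelled $e^{-\ii\pi/(2\kappa)}$ (the pairing vanishes and $S_1$ is real), so a half-unit shift such as your normalization $(\tfrac12+R_A)\lvert 0\rangle=0$ is what makes the identity hold exactly.
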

\begin{proof}
	The theory \eqref{eq:ex2node} does not have parameters to resolve $M_{A,\eqref{eq:ex2node}}=\C^2/\Z_{\kappa}$, and the fixed point set is $\left( M_{A,\eqref{eq:ex2node}}^{\ct_A}\right)_{\text{red}} = \{0\}$.\par 
	The overall factor is $\frac{1}{\kappa} =\mz_{\csnode{1}{\kappa}}  \mz_{\csnode{1}{-\kappa}} $, and $\frac{1}{\ch (\zeta)}$ is the twisted trace on the Verma module associated with $0\in (\C^2)^{\ct_A}$. The operator $R_A$ in \eqref{eq:ex2node} acts on this Verma module with eigenvalues $\kappa \N$. These facts recast \eqref{eq:Z2node} in the form of Conjecture \ref{myconj1}, where $\kappa_\alpha=\kappa$.
\end{proof}
Moreover, the twisted trace in \eqref{eq:Z2node} is consistent with Corollary \ref{cor:magquiver}(\ref{prop:aux2}).

\subsubsection{Magnetic quiver analysis}
The prescription of \S\ref{sec:magQprime} associates to \eqref{eq:ex2node} the quiver
\begin{equation}\label{eq:aux2node}
	\sQ^{\prime}_{\eqref{eq:ex2node}} \ = \ \gnode{1} \multe{\kappa} \fnode{1}  .
\end{equation}
\begin{itemize}
\item The Coulomb branch is \cite{Nawata:2023rdx,Hanany:2023uzn},
	\begin{equation}
		\mC \left( \gnode{1} \multe{\kappa} \fnode{1}\right) =  \left[ \mC \left( \gnode{1} \text{---} \fnode{1}\right) / \Z_{\kappa}\right] =  \left[\C^2 / \Z_{\kappa} \right] .
	\end{equation}
\item The coarse moduli space of the Higgs branch is a point, but the theory possesses a 1-form symmetry. Hence $ \mH \left(\gnode{1} \multe{\kappa} \fnode{1}\right)$ is a $\Z_{\kappa}$-gerbe over a point.
\end{itemize}
They agree with, respectively, $\mM_{A,\eqref{eq:ex2node}}$ and $\mM_{B,\eqref{eq:ex2node}}$.\par
Furthermore, computing the partition function of \eqref{eq:aux2node} with FI parameter $\zeta^{\prime}$ gives 
\begin{equation}
		\mz_{\sQ^{\prime}_{\eqref{eq:ex2node}}} (\zeta^{\prime})= \frac{1}{\kappa \ch \left( \frac{\zeta^{\prime}}{\kappa}\right)}= \mz_{\eqref{eq:ex2node}} \left( \zeta = \frac{\zeta^{\prime}}{\kappa} \right) .
\end{equation}
Computing the superconformal index of both \eqref{eq:ex2node} and \eqref{eq:aux2node} for the first few values of $\kappa$ gives an additional match.\par
\begin{thm}
	Conjecture \ref{myconjmag} holds for \eqref{eq:ex2node}.
\end{thm}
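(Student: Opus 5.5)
Conjecture \ref{myconjmag} has two parts for \eqref{eq:ex2node}: (i) the stacks of $\Th_{\underline{\kappa}}$ match those of $\sQ^{\prime}_{\eqref{eq:ex2node}}$, and (ii) the sphere partition functions agree. I would first check that the hypotheses apply: the underlying quiver is the $A_2$ Dynkin diagram, and $\sum_v \kappa_v = \kappa - \kappa = 0$. Next I would apply the replacement rules of Table \ref{tab:myQprime}. The two nodes form a block $\csnode{1}{\kappa}\text{---}\csnode{1}{-\kappa}$ of length $k=2$, so they produce one gauge node carrying charge-$\kappa$ edges to framings. Fixing the flavor rank against the ends-of-quiver rules must reproduce \eqref{eq:aux2node}. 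I would confirm this by matching coarse moduli spaces, as the algorithm of \S\ref{sec:magQprime} prescribes: both sides should give $\C^2/\Z_{\kappa}$ with no resolution parameters.

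For part (i), the A-branch side is handled by Proposition \ref{prop:Cisquot}. Here $\sQred = \gnode{1}\text{---}\fnode{1}$ has $\mC(\sQred)=\C^2$, so $\mC(\sQ^{\prime}) = [\C^2/\Z_{\kappa}]$. This agrees with $\mM_{A,\eqref{eq:ex2node}}$, which was identified above from the Hilbert series \eqref{eq:ex2HS} together with the $\Z_{\kappa}$ 1-form symmetry. As an extra check I would compare the Coulomb limit of the index of \eqref{eq:aux2node}, computed as in \S\ref{sec:higersqed} with $k=1$, against \eqref{eq:ex2HS}; they agree after setting $k=1$ in the formula there. For the B-branch side, Proposition \ref{prop:HBgerbe} says $\mH(\sQ^{\prime})$ is a $\Z_{\kappa}$-gerbe over $\mH(\sQred)$, which is a point (one hyper reduced by $\C^{\ast}$). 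This matches $\mM_{B,\eqref{eq:ex2node}}$.

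For part (ii), I would compute both integrals directly. For $\sQ^{\prime}$, the substitution $\sigma^{\prime}=\kappa\sigma$ gives \eqref{eq:ZQprimeisZQred}, i.e. $\frac{1}{\kappa}\mz_{\gnode{1}\text{---}\fnode{1}}(\zeta^{\prime}/\kappa)$. The Fourier transform of $1/\ch$ then yields $1/(\kappa\ch(\zeta^{\prime}/\kappa))$. For \eqref{eq:ex2node}, the Gaussian $\sigma_1,\sigma_2$ integral with levels $\pm\kappa$ should be evaluated by changing to the variables $\sigma_1\pm\sigma_2$. The Chern--Simons term $\ii\pi\kappa(\sigma_1^2-\sigma_2^2)$ factorizes, the integral over $\sigma_1+\sigma_2$ becomes a Fourier-type Gaussian, and the result is \eqref{eq:Z2node} (this computation sits in appendix \S\ref{app:SUSYZ}).

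The main obstacle is the parameter identification $\zeta=\zeta^{\prime}/\kappa$. It has to be justified as the correct map between the FI parameter of $\Th_{\underline{\kappa}}$ and that of $\sQ^{\prime}$, namely as the equivariant parameter of the same $\ct_A\cong\C^{\ast}$ acting on $\C^2/\Z_{\kappa}$. I would argue this by comparing the $x$-gradings of the two Hilbert series: the generators $u_{1,2}$ have weight $x^{\pm1}$ in both. Once that normalization is fixed, the equality $\mz_{\Th_{\underline{\kappa}}}=\mz_{\sQ^{\prime}}$ is immediate, with the same overall factor $1/\kappa$ on both sides.
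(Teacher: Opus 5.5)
Your proposal is correct and is essentially the paper's own argument. It uses the same auxiliary quiver $\gnode{1}\multe{\kappa}\fnode{1}$, the same stack identifications ($\mC(\sQ^{\prime})=[\C^2/\Z_{\kappa}]$ via the $\Z_{\kappa}$-gauging of $\gnode{1}\text{---}\fnode{1}$, and $\mH(\sQ^{\prime})$ a $\Z_{\kappa}$-gerbe over a point, matched with $\mM_A,\mM_B$), the same equality $\mz_{\sQ^{\prime}}(\zeta^{\prime})=1/(\kappa\ch(\zeta^{\prime}/\kappa))=\mz_{\text{\eqref{eq:ex2node}}}(\zeta^{\prime}/\kappa)$, and an index/Hilbert-series comparison as a supplementary check.

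The paper simply posits $\zeta=\zeta^{\prime}/\kappa$. If you carry out your grading justification, note that with node FI parameters $\zeta^{1},\zeta^{2}$ (the bifundamental mass can be absorbed into them) the integral depends only on $(\zeta^{1}+\zeta^{2})/\kappa$, while $u_1=V_{(1,1)}q^{\kappa}$ has weight $\zeta^{1}+\zeta^{2}$. So the $\zeta$ of \eqref{eq:Z2node} is $1/\kappa$ times the weight-one parameter, and naively setting $x=e^{-2\pi\zeta}$ in \eqref{eq:ex2HS} would give the wrong map $\zeta=\zeta^{\prime}$.
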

\begin{rmk}[Comparison with {\cite{Marino:2025uub}}] The proposal of \S\ref{sec:QAQB} yields the pair of quivers
\begin{equation}\label{eq:QAisSQED}
	\sQ_{A,\eqref{eq:ex2node}} \ = \ \gnode{1}\text{---}\fnode{\kappa} , \qquad \sQ_{B,\eqref{eq:ex2node}} \ = \ \varnothing .
\end{equation}
The Coulomb branch of \eqref{eq:aux2node} does not admit a desingularization, exactly as in the original \eqref{eq:ex2node}, and contrary to \eqref{eq:QAisSQED}. Besides, \eqref{eq:QAisSQED} reproduces the symplectic singularity as the coarse moduli space, but neglects the gerbe structure.
\end{rmk}

\subsubsection{Field theory analysis}
\label{sec:QFTex2node}
\begin{table}
\centering
\begin{tabular}{|l|c c|c|}
	\hline
		 & $q$ & $\tilde{q}$ & $V_{(\lambda_1,\lambda_2)}$ \\
		\hline
		$U(1)_{\kappa}$ & $+1$ & $-1$ & $-\kappa \lambda_1$ \\
		$U(1)_{-\kappa}$ & $-1$ & $+1$ & $\kappa \lambda_2$ \\
		\hline
	\end{tabular}
\caption{Charges of the chiral multiplets and bare monopole operators in the two-node Chern--Simons-matter quiver \eqref{eq:ex2node}.}
\label{tab:ops2node}
\end{table}

The field theoretic study of $M_A$ is based on \cite{Assel:2017eun}, and special attention is paid here to the obstruction to desingularization.\par
The scalars in the chiral multiplets forming the bi-fundamental hypermultiplet of \eqref{eq:ex2node} are denoted $(q,\tilde{q})$, the complex scalars in the vector multiplet are denoted $\{ \varphi_v\}_{v=1,2}$; finally, $V_{\lambda}$ is the bare monopole operator with magnetic charge $\lambda =(\lambda_1, \lambda_2)$. The gauge charges are summarized in Table \ref{tab:ops2node}.\par
The complex moment map equations read:
\begin{equation}
	q \tilde{q} = \kappa \varphi_1, \qquad - q \tilde{q} = - \kappa \varphi_2 .
\end{equation}
The real moment map is dealt with analogously and is omitted here for brevity; see \cite[Sec.2]{Assel:2017eun} for the full set of equations. The canonical superpotential supplements these equations with
\begin{equation}
	q (\varphi_1-\varphi_2)=0 , \qquad \tilde{q} (\varphi_2-\varphi_1)=0 .
\end{equation}
There exists a solution parametrized by the gauge-invariant operator $z := q \tilde{q} = \kappa \varphi_1 = \kappa \varphi_2$, together with the dressed, gauge-invariant monopole operators 
\begin{equation}
	u_1 := V_{(1,1)} q^{\kappa} , \qquad u_2 := V_{(-1,-1)} \tilde{q}^{\kappa} .
\end{equation}
The generators $(z,u_1,u_2)$ of the ring $\C \left[M_{A,\eqref{eq:ex2node}} \right]$ have R-charges $(2,\kappa,\kappa)$. Observe the relation:
\begin{equation}\label{eq:2nodeOPE1}
	u_{1} u_{2} = z^{\kappa} , 
\end{equation}
which shows that 
\begin{equation}
	\C \left[ M_{A,\eqref{eq:ex2node}} \right] = \C \left[ u_1,u_2,z\right] /\left( u_1u_2-z^{\kappa} \right) ,
\end{equation}
reproducing \eqref{eq:ex2HS}. Contrarily to 3d $\mN=4$ SQED, the model \eqref{eq:ex2node} does not admit parameters to resolve \eqref{eq:2nodeOPE1}.

\subsubsection{Brane analysis}
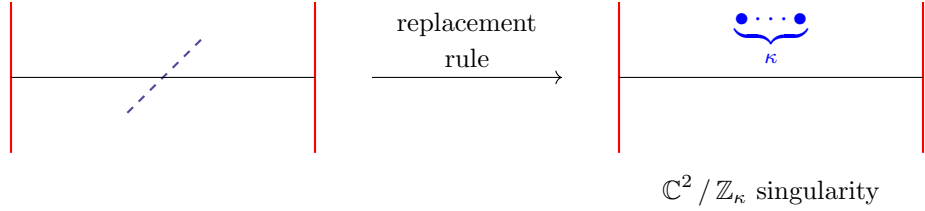
\begin{figure}
\centering
\begin{tikzpicture}
	\draw[red,thick] (-2,4) -- (-2,2);
	\draw[red,thick] (2,4) -- (2,2);
	\draw[Violet,dashed,thick] (.5,3.5) -- (-.5,2.5);
	\draw (-2,3) -- (2,3);
	
	\draw[red,thick] (6,4) -- (6,2);
	\draw[red,thick] (10,4) -- (10,2);
	\node at (8,3.5) {\begin{color}{blue}$\underbrace{\bullet \cdots\bullet}_{\kappa}$\end{color}};
	\draw (6,3) -- (10,3);
	\node at (8,1.5) {\small $\C^2/\Z_{\kappa}$ singularity};

	\path[->] (2.75,3) edge node[anchor=south,align=center] {\small replacement\\ \small rule} (5.25,3);
\end{tikzpicture}
\caption{Brane realization of the two-node Chern--Simons-matter quiver \eqref{eq:ex2node} (left) and the magnetic quiver for the A-branch (right).}
\label{fig:2node}
\end{figure}

The brane setup realizing \eqref{eq:ex2node} is shown in Figure \ref{fig:2node}, alongside the brane realization of $\sQ_{A,\eqref{eq:ex2node}}$ \eqref{eq:QAisSQED}.\par
The singularity of $M_{A,\eqref{eq:ex2node}}$ originates from tensionless strings when the D3-brane segment intersects the $(1,\kappa)$-brane. However, as opposed to the brane realization of \eqref{eq:QAisSQED}, there is no motion of the $(1,\kappa)$-brane that produces a (partial) smoothing of the singularity.

\subsection{Abelian A-type quiver with \texorpdfstring{($\kappa, 0, \dots, 0,-\kappa$)}{(k,0,...,0,k)}}
\label{sec:exk00k}
Consider the quiver
\begin{equation}\label{eq:ex3node0}
		\csnode{1}{\kappa}\text{---} \overbrace{\csnode{1}{0} \text{---} \cdots \text{---} \csnode{1}{0}}^{k-1} \text{---} \csnode{1}{-\kappa}\ ,
\end{equation}
modeled on the $A_{k+1}$ Dynkin diagram, $\lvert \sQ_0 \rvert=k+1$. The A- and B-branch of \eqref{eq:ex3node0} are derived from field theory in \S\ref{sec:QFTlin}. By the same argument as in \S\ref{sec:ex2node}, they are best described as stacks.
\begin{itemize}
\item $\mM_{A,\eqref{eq:ex3node0}}= \left[ \C^2 / \Z_{k\kappa} \right] $, which only admits a partial smoothing into $k$ singular points, each locally a du Val singularity $A_{\kappa-1}$.
\item The smoothing of $\mM_{B,\eqref{eq:ex3node0}}$ is a $\Z_{\kappa}$-gerbe over $T^{\ast}\mathbb{P}^{k-1}$.
\end{itemize}\par 
\medskip
Introducing one mass parameter $m$, associated for concreteness with the right-most edge of the quiver, and redundant FI parameters $\{\zeta^{v} \}_{v=0}^{k}$ subject to $\sum_{v=0}^{k} \zeta^{v} =0$, the sphere partition function is:
\begin{equation}
	\mz_{\text{\eqref{eq:ex3node0}}} = \int_{-\infty}^{+\infty} \dd \sigma_0\int_{-\infty}^{+\infty} \dd \sigma_1 \cdots \int_{-\infty}^{+\infty}\dd \sigma_k \frac{e^{\ii \pi \kappa \left( \sigma_0^2-\sigma_k^2\right) + 2 \pi \ii \sum_{v=0}^{k-1}\zeta^{v} \sigma_v }}{\ch \left( \sigma_{k-1}-\sigma_k +m\right)\prod_{v=1}^{k-1}\ch \left( \sigma_{v-1}-\sigma_{v}\right) } \label{eq:Z3node0a}.
\end{equation}
\begin{lem}
\begin{equation}\label{eq:AbelianCStoSQED}
	\mz_{\text{\eqref{eq:ex3node0}}} = \mz_{\csnode{1}{\kappa}} \mz_{\csnode{1}{-\kappa}} \mz_{\gnode{1}\text{---}\fnode{k}} ,
\end{equation}
where, on the right-hand side, the FI parameter is $\tilde{\zeta}= m$, and the masses are linear combinations of $\{\zeta^{v} \}_{v=1}^{k-1}$.
\end{lem}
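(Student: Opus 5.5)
The plan is to integrate out all gauge variables exactly by Fourier-transforming the hypermultiplet factors. The integrand of \eqref{eq:Z3node0a} is then Gaussian or linear in the $\sigma_v$, and all that survives is a single integral over an auxiliary dual variable, which will be recognized as the SQED integral of \S\ref{sec:SQED}. The basic identity is the self-duality of $1/\ch$ under Fourier transform,
\begin{equation*}
	\frac{1}{\ch (x)} = \int_{-\infty}^{+\infty} \dd \tau \, \frac{e^{\ii 2\pi \tau x}}{\ch (\tau)} .
\end{equation*}
I would apply it to each of the $k$ edge factors: introduce $\tau_v$ for the edge $\ch(\sigma_{v-1}-\sigma_v)$, $v=1,\dots,k-1$, and $\tau_k$ for the edge $\ch(\sigma_{k-1}-\sigma_k+m)$. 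The last of these produces a factor $e^{\ii 2\pi \tau_k m}$. After exchanging the order of integration, which I expect to justify by Fubini with the standard $\ii\epsilon$ regularization of the Gaussians, the exponent is linear in the interior variables $\sigma_1,\dots,\sigma_{k-1}$.

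Step two integrates out the interior nodes. Each $\sigma_v$ with $1\le v\le k-1$ appears only linearly, so its integral gives a delta function $\delta(\zeta^{v}+\tau_{v+1}-\tau_v)$. These $k-1$ constraints fix $\tau_v=\tau_1-\sum_{w=1}^{v-1}\zeta^{w}$ for $v=2,\dots,k$, leaving a single free variable $\tau:=\tau_1$. The end nodes $\sigma_0$ and $\sigma_k$ carry the Chern--Simons Gaussians $e^{\pm\ii\pi\kappa\sigma^2}$ and are linearly coupled to $\zeta^{0}+\tau_1$ and to $-\tau_k$ respectively. Completing the square gives
\begin{equation*}
	\kappa^{-1/2}e^{\ii\pi/4}\,e^{-\ii\pi(\tau+\zeta^{0})^2/\kappa}
	\quad\text{and}\quad
	\kappa^{-1/2}e^{-\ii\pi/4}\,e^{+\ii\pi\tau_k^2/\kappa} .
\end{equation*}
The two phases $e^{\pm\ii\pi/4}$ cancel, and the prefactor $1/\kappa$ equals $\mz_{\csnode{1}{\kappa}}\mz_{\csnode{1}{-\kappa}}$.

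Step three, which I expect to be the main obstacle, is to check that the quadratic terms in $\tau$ cancel. The constraint $\sum_{v=0}^{k}\zeta^{v}=0$ should do this: $\tau_k=\tau-\sum_{w=1}^{k-1}\zeta^{w}$ differs from $\tau+\zeta^{0}$ only by $\zeta^{k}$ up to sign conventions. I would verify this carefully, tracking where the $\zeta^{k}\sigma_k$ term sits, since the exponent of \eqref{eq:Z3node0a} runs only up to $v=k-1$. If the cancellation works, the remaining $\tau$-dependence is $e^{\ii 2\pi\tau\, m}$ times a $\tau$-independent phase, possibly with an extra linear shift of order $\zeta/\kappa$. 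Such a shift I would absorb either into the redefinition of the masses or into an overall phase $e^{\ii 2\pi\pair{\zeta}{m}}$-type factor.

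In the final step, the remaining integral is
\begin{equation*}
	\int\dd\tau\,\frac{e^{\ii 2\pi\tau m}}{\prod_{v=1}^{k}\ch(\tau-\tilde m_v)},
	\qquad
	\tilde m_v=\sum_{w=1}^{v-1}\zeta^{w} ,
\end{equation*}
which is exactly $\mz_{\gnode{1}\text{---}\fnode{k}}$ with FI parameter $\tilde\zeta=m$ and masses given by partial sums of the $\zeta^{v}$. I would then shift $\tau$ by a constant to make $\sum_v \tilde m_v=0$, as required by the $\mathfrak{su}(k)$ convention of \S\ref{sec:SQED}. This produces only an overall exponential, and I would check that it is trivial or cancels against the leftover phase from step three.
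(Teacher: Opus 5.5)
Your route is the paper's own (Appendix \S\ref{app:Abeliancalc}). You Fourier-transform each $1/\ch$, integrate the interior $\sigma_v$ into delta functions, do the two Chern--Simons Gaussians, and read off the SQED integral. The prefactor $1/\kappa$ and the cancellation of the $e^{\pm\ii\pi/4}$ phases are as you say. Step three, however, is misdiagnosed, and your fallback there would fail.

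The $\tau^2$ terms cancel for any parameter values. This is simply because the end levels are $+\kappa$ and $-\kappa$, and the two Gaussian sources are $+\tau$ and $-\tau$ up to constants. What the constraint $\sum_{v=0}^{k}\zeta^v=0$ must remove is the \emph{linear} term. It does so exactly only if the integrand contains $\zeta^k\sigma_k$, which it does in the appendix restatement of the integrand, with $\sum_{v=0}^{k}\zeta^v\sigma_v$. In that case the source for $\sigma_k$ is $\zeta^k-\tau_k=-(\tau_1+\zeta^0)$. Setting $a:=\tau_1+\zeta^0$, the two Gaussians give $\kappa^{-1}e^{-\ii\pi a^2/\kappa}e^{+\ii\pi a^2/\kappa}=\kappa^{-1}$, and nothing is left over.

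If you keep the source $-\tau_k$ for $\sigma_k$, as in your step two, a factor $e^{-\ii 2\pi Z\tau/\kappa}$ survives, with $Z:=\sum_{v=0}^{k-1}\zeta^v$. The SQED FI parameter then comes out as $m-Z/\kappa$, not $m$. Your proposed remedy does not help. A term linear in the integration variable \emph{is} the FI parameter, and no redefinition of the masses or shift of $\tau$ removes it; a shift only moves the masses and produces a constant phase. To get $\tilde\zeta=m$ as stated, you need the $\zeta^k\sigma_k$ term together with the constraint.

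There is also a cosmetic point. Because you keep $\tau=\tau_1$, the factor $e^{\ii 2\pi\tau_k m}$ leaves a constant phase $e^{-\ii 2\pi m\sum_{w=1}^{k-1}\zeta^w}$, which your final display omits. The paper avoids this by solving the delta functions as $\tau^j=\tau^k+\sum_{v=j}^{k-1}\zeta^v$ and keeping $\tau^k$ as the surviving variable. It treats the mixed background phase from the final $\mathfrak{su}(k)$ shift as immaterial, as you intend to.
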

\begin{proof}
It is shown in \S\ref{app:Abeliancalc}, from manipulations and direct integration, that
\begin{equation}
	\mz_{\text{\eqref{eq:ex3node0}}} = \frac{1}{\kappa} \int_{-\infty}^{+\infty} \dd \sigma \frac{e^{\ii 2 \pi m \sigma }}{ \prod_{i=1}^{k}\ch \left( \sigma - \tilde{m}_{i}\right)} , \label{eq:Z3node0b}
\end{equation}
where $\tilde{m}_{i}$ are linear combinations of $\{\zeta^{v} \}_{v=1}^{k-1}$ only, after using $\sum_{v=0}^{k}\zeta^{v} =0$. Recognizing the partition function of $U(1)$ gauge theory with $k$ hypermultiplets shows \eqref{eq:AbelianCStoSQED}.
\end{proof}

\begin{thm}
	Conjecture \ref{myconj1} holds for \eqref{eq:ex3node0}.
\end{thm}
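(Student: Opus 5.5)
The plan is to start from the Lemma just proven, \eqref{eq:AbelianCStoSQED}, which reduces $\mz_{\eqref{eq:ex3node0}}$ to $\frac{1}{\kappa}$ times the partition function of SQED with $k$ flavors. The prefactor $\frac{1}{\kappa}$ is exactly $\mz_{\csnode{1}{\kappa}}\mz_{\csnode{1}{-\kappa}}$. This identifies $S_\alpha$ in \eqref{eq:conj1S} with $\sV_\alpha=\{0,k\}$, the two Chern--Simons nodes, uniformly in $\alpha$.

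I will then evaluate the remaining SQED integral by residues exactly as in \S\ref{sec:SQED}, now with FI parameter $\tilde\zeta=m$ and masses $\tilde m_i$ built from the $\zeta^v$. This produces \eqref{eq:ZSQEDres}: a sum over $k$ towers of poles, each contributing the phase $e^{\ii 2\pi \tilde\zeta \tilde m_\alpha}$, a product of $1/\sh$ factors, and the geometric series $\sum_n (-1)^{kn}x^{n+\frac12}$. The next step is to match the index set with $\left(\widetilde M_A^{\ct_A}\right)_{\mathrm{red}}$. Under the partial smoothing of $M_A=\C^2/\Z_{k\kappa}$ into $k$ points that are locally $A_{\kappa-1}$ singularities, the fixed locus is a degree-0 scheme whose reduced support consists of $k$ points. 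This is the image under \eqref{eq:equivariantCQAtoMA} of the $k\kappa$ fixed points of $\widetilde{\mC}(\sQ_A)$: they coalesce in groups of $\kappa$, and by Lemma \ref{lem:Vermatofp}(\ref{Vermatofp2}) the corresponding Verma modules are identified. So the $k$ residue towers are matched bijectively with these $k$ points. The pairing $\pair{\zeta}{m}$ is then read off as $m\,\tilde m_\alpha$.

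The final step, and the main obstacle, is to recast each summand as $\tr_{\scH^A_\alpha\otimes\scH^B_\alpha}\bigl(e^{-\ii\frac{2\pi}{\kappa_\alpha}(\frac12+R_A)\otimes(\frac12+R_B)}x^{J_A}y^{J_B}\bigr)$, with $\kappa_\alpha=\kappa$.

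1. On the A-side, I will use that $\scH^A_\alpha$ is the Verma module at an $A_{\kappa-1}$ point. Its $R_A$-eigenvalues lie in $\kappa\N$, as in the two-node example: the local generators are $u_1,u_2,z$ with charges $(\kappa,\kappa,2)$, and the Verma module is built by $u_1$ alone. Dividing by $\kappa$ then converts the phase into $(-1)$-type signs.
2. On the B-side, $\scH^B_\alpha$ is the Verma module on the $\Z_\kappa$-gerbe over $T^*\mathbb{P}^{k-1}$, with the standard grading.
3. I will expand the $1/\sh$ factors as geometric series in $y$, and rewrite the phases $\ii^{-\tilde\kappa_\alpha}$, $(-1)^{kn}$ and the half-integer shifts as $e^{-\ii\frac{2\pi}{\kappa}(\frac12+R_A)(\frac12+R_B)}$ acting on the product basis $|n\rangle_A\otimes|\nu\rangle_B$. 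The main check is that the bilinear term $R_A\otimes R_B/\kappa$ reproduces the mixed signs $(-1)^{n\nu}$ and that the constant $\frac{1}{4\kappa}$ pieces combine into the overall $\ii$-powers.

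If this sign bookkeeping fails directly, I would fall back on the auxiliary quiver $\gnode{1}\multe{\kappa}\fnode{k}$ of \S\ref{sec:higersqed}. Its partition function equals $\mz_{\eqref{eq:ex3node0}}$ after rescaling $m$, so I would instead invoke Corollary \ref{cor:magquiver} via Corollary \ref{cor:Zhighertrace}, with $\hat\kappa_\alpha=1$.
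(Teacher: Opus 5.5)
Your first half matches the paper's proof. The Lemma gives $\mz_{\text{\eqref{eq:ex3node0}}}=\frac{1}{\kappa}\mz_{\gnode{1}\text{---}\fnode{k}}$, the prefactor is $S_\alpha=\mz_{\csnode{1}{\kappa}}\mz_{\csnode{1}{-\kappa}}$ with $\sV_\alpha=\{0,k\}$, and the $k$ residue towers of \eqref{eq:ZSQEDres} are matched with the $k$ points of $(\widetilde{M}_A^{\ct_A})_{\mathrm{red}}$.

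The gap is the target of your final step: for $k\ge 2$ the form with $\kappa_\alpha=\kappa$ cannot be reached. Each summand of \eqref{eq:ZSQEDres} is already a product of an A-part and a B-part, and apart from the overall $\frac{1}{\kappa}$ nothing in it depends on $\kappa$. Its only phases are $(-\ii)^{k-1}$, the ordering sign and $(-1)^{kn}$. Expanding the $1/\sh$ factors gives geometric series in the $y$'s with no phases at all.

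Now evaluate your operator on $|n\rangle_A\otimes|\nu\rangle_B$ with the gradings you chose, $R_A=\kappa n$ and $R_B=r\in\Z$. The B-side grading is $\kappa$-independent, since the Higgs limit of the index does not see $\kappa$. You get $\frac{1}{\kappa}(\frac12+R_A)(\frac12+R_B)=\frac{1}{4\kappa}+\frac{r}{2\kappa}+\frac{n}{2}+nr$.
\begin{itemize}
\item The bilinear piece $nr$ is an integer, so it yields no mixed signs $(-1)^{n\nu}$. None are needed, since the summand factorizes.
\item The piece $\frac{r}{2\kappa}$ yields $\nu$-dependent phases $e^{-\ii\pi r/\kappa}$ that are simply not in the SQED expansion.
\item The piece $\frac{n}{2}$ gives $(-1)^n$ instead of $(-1)^{kn}$.
\item The constant $e^{-\ii\pi/(2\kappa)}$ cannot combine into powers of $\ii$ for $\kappa>1$.
\end{itemize}
Switching to the grading that Conjecture \ref{myconj1} actually prescribes repairs the $(-1)^{kn}$ but not the B-side phases. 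That grading is induced by the contracting action on $M_A=\C^2/\Z_{k\kappa}$, so $R_A$ moves in steps of $k\kappa$, not by your local $A_{\kappa-1}$ grading.

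What the paper does is take $\kappa_\alpha=1$ for every $\alpha$. By the remark following Conjecture \ref{myconj1}, the twist is then $-\ii(-1)^{R_A}(-1)^{R_B}$. The statement becomes literally the Gaiotto--Okazaki decomposition of SQED (Conjecture \ref{c:GO}, known for SQED), with gradings inherited from $\gnode{1}\text{---}\fnode{k}$ as Corollary \ref{cor:magquiver}(\ref{prop:aux2}) predicts. There are no $\kappa$-dependent phases to manufacture, and that is the whole proof.

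Your fallback does land there. For $\gnode{1}\multe{\kappa}\fnode{k}$, Corollary \ref{cor:Zhighertrace} has $\hat{\kappa}_\alpha=1$, $\kappa^C_\alpha=k$ and $\kappa^H_\alpha=2$, hence $e^{-\ii 2\pi\Delta_\alpha}=(-1)^{kn}$. This is the same factorized form, equivalently Corollary \ref{cor:hypertwtr} with $\lvert\Gamma_{\underline{\kappa}}\rvert=\kappa$. But it contradicts your stated target instead of completing it, and the detour buys nothing: $\mz_{\sQ^{\prime}}$ is itself $\frac{1}{\kappa}\mz_{\gnode{1}\text{---}\fnode{k}}$ by \eqref{eq:ZQprimeisZQred}. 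Going through Corollary \ref{cor:magquiver} would also needlessly assume the stack isomorphisms of Conjecture \ref{myconjmag}.
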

\begin{proof}
	It follows from \eqref{eq:AbelianCStoSQED} and the fact that SQED satisfies Conjecture \ref{c:GO} \cite{Gaiotto:2019mmf,Bullimore:2020jdq}.
\end{proof}
Moreover, \eqref{eq:AbelianCStoSQED} is consistent with Corollary \ref{cor:magquiver}(\ref{prop:aux2}).

\subsubsection{Magnetic quiver analysis}
The prescription of \S\ref{sec:magQprime} for the auxiliary quiver $\sQ^{\prime}$ yields:
\begin{equation}\label{eq:aux3node}
	\sQ^{\prime}_{\eqref{eq:ex3node0}} \ = \ \gnode{1} \multe{\kappa} \fnode{k} ,
\end{equation}
namely $U(1)$ gauge theory with $k$ hypermultiplets of charge $\kappa$, already analyzed in \S\ref{sec:higersqed}. 
\begin{itemize}
	\item The coarse moduli space of the Coulomb branch of \eqref{eq:aux3node} is $\C^2 / \Z_{\kappa k} $ \cite{Nawata:2023rdx,Hanany:2023uzn}.
	\item The Higgs branch $\mH \left( \sQ^{\prime}_{\eqref{eq:ex3node0}} \right)$ is a $\Z_{\kappa}$-gerbe over the Higgs branch of SQED.
\end{itemize}
They match with, respectively, $\mM_{A,\eqref{eq:ex3node0}}$ and $\mM_{B,\eqref{eq:ex3node0}}$.\par
The partition functions of \eqref{eq:ex3node0} and \eqref{eq:aux3node} are also equal, by virtue of \eqref{eq:Z3node0b}:
\begin{equation}
	\mz_{\text{\eqref{eq:ex3node0}}} (m, \zeta) = \left.\mz_{\text{\eqref{eq:aux3node}}} (m^{\prime}, \zeta^{\prime}) \right\rvert_{m^{\prime}_i = \tilde{m}_i, \zeta^{\prime} = \kappa \tilde{\zeta} } .
\end{equation}
Furthermore, computing the superconformal indices of \eqref{eq:ex3node0} and \eqref{eq:aux3node} for the first few values of $k$ and $\kappa$ shows that they match.\par
\begin{thm}
	Conjecture \ref{myconjmag} holds for \eqref{eq:ex3node0}.
\end{thm}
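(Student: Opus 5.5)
The claim is that Conjecture \ref{myconjmag} holds for \eqref{eq:ex3node0}. Part (ii), the equality of partition functions, is essentially already available, so the plan is to settle it first and then establish part (i), which concerns the moduli stacks.

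\textbf{Part (ii).} I would start from the reduced integral \eqref{eq:Z3node0b}, derived in \S\ref{app:Abeliancalc}, namely $\frac{1}{\kappa}\int \dd\sigma\, e^{\ii 2\pi m\sigma}/\prod_{i}\ch(\sigma-\tilde m_i)$. Next I take \eqref{eq:ZQpgAb} for $\sQ^{\prime}=\gnode{1}\multe{\kappa}\fnode{k}$ and substitute $\sigma^{\prime}=\kappa\sigma$, as in the proof of Corollary \ref{cor:hypertwtr}. This gives $\mz_{\sQ^{\prime}}(\zeta^{\prime},m^{\prime})=\frac{1}{\kappa}\mz_{\gnode{1}\text{---}\fnode{k}}(\zeta^{\prime}/\kappa,m^{\prime})$. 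Setting $m^{\prime}_i=\tilde m_i$ and $\zeta^{\prime}=\kappa m$ makes the two expressions coincide identically. The identification of parameters is linear and invertible, and it maps the $k-1$ independent combinations of $\{\zeta^v\}$ onto the traceless masses. Hence (ii) holds as an equality of functions on parameter space, not merely at special values.

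\textbf{Part (i), coarse spaces.} I would first match the coarse moduli spaces. On the Chern--Simons side, I would extend the field-theoretic analysis of \S\ref{sec:QFTex2node} along the chain, referring to \S\ref{sec:QFTlin}. The moment map and superpotential equations force all $\varphi_v$ to be equal, so $z=q_v\tilde q_v$ does not depend on $v$. The dressed monopoles are $u_1=V_{(1,\dots,1)}\prod_v q_v^{a_v}$ and $u_2=V_{(-1,\dots,-1)}\prod_v \tilde q_v^{a_v}$. Counting the charges gives R-charge $k\kappa/2\cdot 2$ and the relation $u_1u_2=z^{k\kappa}$. So $M_A\cong\C^2/\Z_{k\kappa}$. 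The same conclusion should follow from the Hilbert series, computed from the Coulomb limit of the index as in \S\ref{sec:higersqed}. This matches the coarse space of $\mC(\sQ^{\prime})$ found there.

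\textbf{Part (i), stack and gerbe structures.} The stacky refinement comes from the $\Z_\kappa$ 1-form symmetry, argued as for \eqref{eq:ex2node}. The diagonal $U(1)$ couples to the bifundamentals with charge zero, and the CS levels $\pm\kappa$ leave a $\Z_\kappa$ unscreened. On the other side, $\sQ^{\prime}$ has all charges divisible by $\kappa$, hence it carries the same $\Z_\kappa$. By Proposition \ref{prop:Cisquot}, $\mC(\sQ^{\prime})=[\mC(\sQ)/\Z_\kappa]$. By Proposition \ref{prop:HBgerbe}, $\mH(\sQ^{\prime})$ is a $\Z_\kappa$-gerbe over $\mH(\sQ)$, which is $T^\ast\mathbb{P}^{k-1}$ after resolution. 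For the B-branch, the interior $U(1)_0$ nodes should reproduce $\mH$ of SQED with $k$ flavours as the coarse space. The remaining task is to show that the gerbe class and the orbifold group agree, not only their orders. I would compare the automorphism groups at generic points, which are the unbroken gauge subgroups, in the two descriptions. I would also check that the partial smoothing by $m$ produces $k$ points of type $A_{\kappa-1}$ on both sides. For $\mC(\sQ^{\prime})$ this follows from Proposition \ref{prop:stack123}(iii).

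\textbf{Main obstacle.} I expect the hard step to be this identification of stacks, beyond coarse spaces and group orders. A fully rigorous treatment of the Chern--Simons moduli stack is not available. I would therefore rely on the physical characterisation: the stabilisers are determined by the 1-form symmetry and by the residual gauge group. The superconformal index match for small $k,\kappa$ would serve as a consistency check.
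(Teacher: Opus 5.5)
Your proposal follows the paper's argument essentially step for step. For (ii), you use \eqref{eq:Z3node0b} with the rescaling $\sigma'=\kappa\sigma$ and the dictionary $m'_i=\tilde m_i$, $\zeta'=\kappa m$. For (i), you match the coarse spaces from \S\ref{sec:QFTlin} with the known branches of $\gnode{1}\multe{\kappa}\fnode{k}$, attribute the stack/gerbe refinement to the $\Z_\kappa$ 1-form symmetry, and use the superconformal index as a cross-check; the paper's stack identification is no more rigorous than yours.

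One small slip: the partial smoothing of $M_A$ into $k$ points of type $A_{\kappa-1}$ is driven by the FI parameters $\zeta^v$ of the interior nodes, which become the masses $\tilde m_i$ of $\sQ'$. It is not driven by the single mass $m$, which your own dictionary sends to the FI parameter of $\sQ'$.
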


\begin{rmk}[Comparison with {\cite{Marino:2025uub}}] The magnetic quivers of \S\ref{sec:QAQB} for \eqref{eq:ex3node0} are:
\begin{equation}\label{eq:QAQBex3}
	\sQ_A \ = \ \gnode{1}\text{---}\fnode{\kappa k} , \qquad \sQ_B \ = \ \gnode{1}\text{---}\fnode{k}  .
\end{equation}
$\mC\left( \sQ_A \right)$ of \eqref{eq:QAQBex3} matches the coarse moduli space of $\mC \left( \sQ^{\prime}_{\eqref{eq:ex3node0}} \right)$, and $\mC\left( \sQ_B \right)$ of \eqref{eq:QAQBex3} matches the base of the gerbe in $\mH \left( \sQ^{\prime}_{\eqref{eq:ex3node0}} \right)$.
\end{rmk}

\subsubsection{Field theory analysis}
\label{sec:QFTlin}

\begin{table}
\centering
\begin{tabular}{|l|c c|c|}
	\hline
		 & $q_e$ & $\tilde{q}_e$ & $V_{(\lambda_0,\dots, \lambda_k)}$ \\
		\hline
		$U(1)_{\kappa}$ & $\delta_{e, 0\to 1}$ & $-\delta_{e, 0\to 1}$ & $-\kappa \lambda_0$ \\
		$v^{\text{th}} \ U(1)_0$ & $\delta_{e,v \to v+1}- \delta_{e,v-1\to v}$ & $\delta_{e,v-1 \to v} -\delta_{e,v \to v+1}$ & 0 \\
		$U(1)_{-\kappa}$ & $-\delta_{e, k-1\to k}$ & $\delta_{e, k-1\to k}$ & $\kappa \lambda_k$ \\
		\hline
	\end{tabular}
\caption{Charges of the chiral multiplets and bare monopole operators in the A-type Chern--Simons-matter quiver \eqref{eq:ex3node0}.}
\label{tab:opslinear}
\end{table}

The analysis of the vacuum equations in field theory is a generalization of \cite[Sec.4.2]{Assel:2017eun}, and is very similar to \S\ref{sec:QFTex2node}.\par
The field content and charges are summarized in Table \ref{tab:opslinear}, where $q_e,\tilde{q}_e$ are the chiral multiplets corresponding to $e\in \sQ_1$ and $V_{\lambda}$ is the bare monopole operator of highest weight $\lambda$. The complex moment map and canonical superpotential give the equations 
\begin{subequations}
\begin{align}
	q_{0\to 1}\tilde{q}_{0\to 1} & = \kappa \varphi_0 \label{eq:qftlin1} \\
	q_{v \to v+1}\tilde{q}_{v \to v+1} - q_{v-1\to v}\tilde{q}_{v-1\to v} &=0 , \qquad \forall 1 \le v \le k-1 \label{eq:qftlin2} \\
	-q_{k-1\to k}\tilde{q}_{k-1\to k} & = -\kappa \varphi_k \label{eq:qftlin3} \\
	q_e \left( \varphi_{\mathsf{h}(e)} - \varphi_{\mathsf{t}(e)}  \right) &=0 , \qquad \forall e \in \sQ_1 \label{eq:qftlin4} \\
	\tilde{q}_e \left( -\varphi_{\mathsf{h}(e)} + \varphi_{\mathsf{t}(e)}  \right) &=0, \qquad \forall e \in \sQ_1 . \label{eq:qftlin5}
\end{align}
\end{subequations}
As monopole operators, the generators can be taken to be
\begin{equation}
	u_0 = V_{(1,\dots,1)} \prod_{e\in \sQ_1} q_e^{\kappa} , \qquad  u_{v}^{\pm}= V_{(0, \dots, \lambda_v=\pm 1, \dots , 0)} , \qquad u_k =V_{(-1,\dots,-1)} \prod_{e\in \sQ_1} \tilde{q}_e^{\kappa} 
\end{equation}
subject to the constraints \cite[Eq.(2.8)]{Assel:2017eun}
\begin{subequations}\label{eq:monoeqlin}
\begin{align}
	u_v^{\pm} q_{v-1\to v} &=0=u_{\pm} \tilde{q}_{v-1\to v} , \qquad \forall 1 \le v \le k-1 \\
	u_v^{\pm} q_{v\to v+1} &=0=u_{\pm} \tilde{q}_{v\to v+1} , \qquad \forall 1 \le v \le k-1 .
\end{align}
\end{subequations}\par
Denoting for short $z := q_{0\to 1}\tilde{q}_{0\to 1}$, \eqref{eq:qftlin1}-\eqref{eq:qftlin3} imply $q_e \tilde{q}_e =z$ for all $e\in \sQ_1$ and $\kappa \varphi_0 = \kappa \varphi_k = z$. There are two sets of solutions, corresponding to the A- and B-branch.
\begin{itemize}
\item The A-branch is obtained solving \eqref{eq:qftlin4}-\eqref{eq:qftlin5} by $\kappa \varphi_v=z$ for all $v$, and solving \eqref{eq:monoeqlin} by $u_v^{+}=0=u_v^{-}$ for all $1\le v \le k-1$. The remaining generators are $u_0, u_k,z$, subject to the relation 
	\begin{equation}
		u_0 u_k = \prod_{e \in \sQ_1} (q_e \tilde{q}_e)^{\kappa} = z^{k \kappa} .
	\end{equation}
	Therefore 
	\begin{equation}
		M_{A,\eqref{eq:ex3node0}} = \mathrm{Spec} \frac{\C \left[ u_0 ,u_k, z \right] }{\left( u_0 u_k - z^{k \kappa } \right)} = \C^2 / \Z_{k \kappa} .
	\end{equation}
	It only admits a partial smoothing into $k$ locally $A_{\kappa-1}$ singularities.
\item The B-branch arises from solving \eqref{eq:monoeqlin} by $q_e=0=\tilde{q}_e$ for every $e$, which implies $z=0$ and $u_0=0=u_k$. Since \eqref{eq:qftlin4}-\eqref{eq:qftlin5} are automatically satisfied, the remaining parameters are $\left\{ u_v , \varphi_v \right\}_{v=1}^{k-1}$. These are independent of $\kappa$. The case $k=2$ was analyzed in \cite[Sec.4.2]{Assel:2017eun}, where it is shown that $M_{B,\eqref{eq:ex3node0}}$ is equal to the Higgs branch of SQED with $k=2$. Assuming the derivation therein extends to any $k$, the final result is 
	\begin{equation}
		\widetilde{M}_{B,\eqref{eq:ex3node0}} = T^{\ast} \mathbb{P}^{k-1}.
	\end{equation}
\end{itemize}

\subsubsection{Brane analysis}
\begin{figure}
\centering
\begin{tikzpicture}
	\draw[red,thick] (-3,4) -- (-3,2);
	\draw[red,thick] (3,4) -- (3,2);
	\draw[Violet,dashed,thick] (-.5,3.5) -- (-1.5,2.5);
	\draw[Violet,dashed,thick] (1.5,3.5) -- (.5,2.5);
	\draw (-3,3) -- (3,3);
	\node[anchor=south][Violet] at (0.5,3.25) {$\overbrace{ \hspace{1cm} \cdots \hspace{1cm}  }^{k}$};
\end{tikzpicture}
\caption{Brane realization of the Chern--Simons-matter quiver \eqref{eq:ex3node0}.}
\label{fig:3node0}
\end{figure}
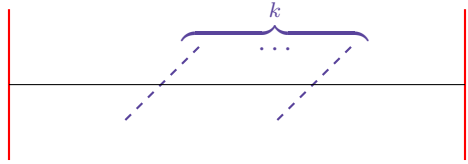

The brane setup is drawn in Figure \ref{fig:3node0}. The maximal branches of the moduli space of \eqref{eq:ex3node0} arise analogously to SQED: One branch corresponds to the motion of the D3-brane suspended between two NS5-branes; the other branch corresponds to the motion of the D3-brane segment suspended between the other type of 5-branes. The only difference from the SQED setup is that the 5-branes are now $(1,\kappa)$-branes, instead of D5-branes, giving rise to a different type of singularity.\par
Moreover, from Figure \ref{fig:3node0}, the A-branch can be partially smoothed misaligning the $(1,\kappa)$-branes, yielding $k$ singularities locally of type $A_{\kappa -1}$. The brane setup cannot be desingularized further, in agreement with the Coulomb branch of \eqref{eq:aux3node}.

\subsection{Abelian three-node quiver with \texorpdfstring{$(\kappa,-\kappa,\kappa)$}{(k,-k,k)}}
Another simple model with $\mN=4$ supersymmetry, first studied in \cite{Jafferis:2008em}, is 
\begin{equation}\label{eq:ex3nodeJY}
		\csnode{1}{\kappa}\text{---} \csnode{1}{-\kappa}\text{---} \csnode{1}{\kappa}\ .
\end{equation}
This theory has $\ct_A \cong \C^{\ast}\cong \ct_B$, thus admitting one FI parameter $\zeta$ and one mass parameter $m$. As usual, $x=e^{-2\pi \zeta}, y=e^{-2\pi m}$.\par
The Hilbert series computation gives \cite{Li:2023ffx}
\begin{equation}
	\hs_{\C [ M_{A,\eqref{eq:ex3nodeJY}} ] } \left( q;x \right) = \frac{1-q^{\kappa+1}}{(1-q)\left(1-xq^{\frac{\kappa+1}{2}}\right)\left(1-x^{-1}q^{\frac{\kappa+1}{2}}\right)} = \hs_{\frac{\C \left[u_1,u_2,z\right]}{\left( u_1 u_2-z^{\kappa+1}\right)}} \left( q;x\right) ,
\end{equation}
and $\hs_{\C [ M_{B,,\eqref{eq:ex3nodeJY}}] } =\hs_{\C [ M_{A,\eqref{eq:ex3nodeJY}}] }\vert_{(q;x) \mapsto (\tilde{q};y)}  $, leading to 
\begin{equation}
	M_{A,\eqref{eq:ex3nodeJY}} \cong \C^2/\Z_{\kappa+1} , \qquad M_{B,\eqref{eq:ex3nodeJY}} \cong \C^2/\Z_{\kappa+1} .
\end{equation}
In this example, no gerbe structure is expected, thus the moduli spaces are identified with the underlying schemes, given by the $A_{\kappa}$ singularity.\par
\medskip
The partition function is manipulated into 
\begin{equation}\label{eq:ZJY}
	\mz_{\text{\eqref{eq:ex3nodeJY}}} = \frac{1}{\sqrt{\kappa}} \int_{-\infty}^{\infty} \dd \sigma \frac{e^{\ii 2 \pi \zeta \sigma }}{\ch (\kappa \sigma) \ch \left(\sigma - m \right)} ,
\end{equation}
in the conventions \eqref{eq:phaseZ} for the overall phase. The derivation is almost identical to \S\ref{app:altcalc}.

\begin{thm}
	Conjecture \ref{myconj1} holds for \eqref{eq:ex3nodeJY}.
\end{thm}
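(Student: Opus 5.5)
The plan is to start from the one-dimensional integral \eqref{eq:ZJY} and recognise it as the partition function of an Abelian theory with non-minimal charges, so that Corollary \ref{cor:Zhighertrace} applies. Indeed, \eqref{eq:ZJY} is, up to the prefactor $\kappa^{-1/2}$, exactly \eqref{eq:ZQpgAb} for the quiver $\sQ^{\prime}=\fnode{1}\multe{\kappa}\gnode{1}\text{---}\fnode{1}$. Its charge matrix is $\kk=(\kappa,1)$, with masses $(0,m)$ and FI parameter $\zeta$.

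This is the example already treated in \S3.2(ii) with the edges swapped. Hence $\widetilde{\mC}(\sQred)^{\ct_C}$ has two fixed points, $\alpha(1)=1$ and $\alpha(1)=2$. Closing the contour in the upper half-plane for $\zeta>0$ picks up two towers of poles:
\begin{itemize}
\item $\sigma=\ii(\tfrac12+n)/\kappa$, with residue weight $1/\kappa$;
\item $\sigma=m+\ii(\tfrac12+n)$, with weight $1$.
\end{itemize}
Applying \eqref{eq:ZQprime}, the two summands have $\hat{\kappa}_1=(1,1/\kappa)$ and $\hat{\kappa}_2=(\kappa,1)$. So one fixed point carries a twist $e^{-\ii 2\pi\Delta_1}$ with denominators $\kappa$, and the other carries the standard $(-1)^{R_A}(-1)^{R_B}$ twist.

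The next step is to match this with the geometry of $M_A\cong\C^2/\Z_{\kappa+1}$. The single parameter $m$ partially resolves $M_A$ into one smooth fixed point and one residual $A_{\kappa-1}$ point, mirroring the brane picture in which one $(1,\kappa)$-brane and one NS5 segment are misaligned. So $\left(\widetilde{M}_A^{\ct_A}\right)_{\mathrm{red}}$ has two points, in bijection with the two poles towers. I would take $\scH^A_\alpha,\scH^B_\alpha$ to be the Verma modules of the quantized du Val singularities as in Proposition \ref{prop:quantumMAMB}: the residual $A_{\kappa-1}$ point gets the Verma module of \cite{Etingof:2020fls}, and the smooth point gets an ordinary one. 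Then I would set:
\begin{itemize}
\item $\kappa_\alpha=\kappa$ at the singular point and $\kappa_\alpha=1$ at the smooth point;
\item $\sV_\alpha$ equal to one node with level $\pm\kappa$ at the singular point, so that $S_\alpha=\kappa^{-1/2}$ times the residue factor $1/\kappa$, reorganised using the $\kappa^{-1/2}$ prefactor;
\item $\sV_\alpha=\emptyset$ at the smooth point, up to the same overall normalisation.
\end{itemize}
The pairing $\pair{\zeta}{m}$ is read off from the $e^{\ii 2\pi\zeta m}$ factor of the second tower.

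Finally, I would rewrite each residue series as a trace over a basis $\lvert n\rangle_A\otimes\lvert\nu\rangle_B$, as in the proof of Corollary \ref{cor:Zhighertrace}. The identification is that $R_A,R_B$ act with eigenvalues $\kappa\N$ (resp. $\N$) at the singular (resp. smooth) point, matching the R-charges of the generators $(z,u_1,u_2)$ of the $A_{\kappa-1}$ ring, as in the $\kappa$-scaling used for \eqref{eq:ex2node}. With this, the phase $e^{-\ii 2\pi\hat{\kappa}_\alpha(n+\frac12)(\nu+\frac12)}$ becomes $e^{-\ii\frac{2\pi}{\kappa_\alpha}(\frac12+R_A)\otimes(\frac12+R_B)}$.

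The main obstacle is this last identification. I must check that the linear terms $\kappa^C_\alpha,\kappa^H_\alpha$ of \eqref{eq:twistop} are absorbed correctly by the grading convention, with the half-integer shifts matching $\tfrac12+R$. I must also check that the choice of $\mathrm{sgn}_{e;\alpha}$ in \eqref{eq:yea} does not spoil the twist. I would verify this first at $\kappa=1$, where the formula must reduce to Conjecture \ref{c:GO} for SQED with two flavours, and then track the $\kappa$-dependence term by term.
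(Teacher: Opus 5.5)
Your overall route matches the paper's. You expand \eqref{eq:ZJY} over the two towers of poles; quoting Corollary \ref{cor:Zhighertrace} for $\kk=(\kappa,1)$ just repackages that residue computation. You then attach the tower at $\sigma=\ii(\frac12+n)/\kappa$ to the residual $A_{\kappa-1}$ point with $\kappa_\alpha=\kappa$, and the other tower to the smooth point with $\kappa_\alpha=1$.

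The gap is in the step you flag as the main obstacle: with the grading you propose, the identification fails.
\begin{itemize}
\item At the singular point, take $R_A,R_B$ with eigenvalues $\kappa n,\kappa\nu$. Then $\frac{1}{\kappa}(\frac12+\kappa n)(\frac12+\kappa\nu)=\frac{1}{4\kappa}+\frac{n+\nu}{2}+\kappa n\nu$, so the twist collapses to a constant times $(-1)^{n+\nu}$. It factorizes and loses both the fractional linear term $\frac12\kappa^C_1 n$, with $\kappa^C_1=\frac{\kappa+1}{\kappa}$, and the cross term $\frac{n\nu}{\kappa}$ present in \eqref{eq:ZJYa}. That is exactly the non-factorizable phase Conjecture \ref{myconj1} is meant to capture.
\item At the smooth point, eigenvalues in $\N$ give $(-1)^{n+\nu}$, whereas \eqref{eq:ZJYb} carries $(-1)^{(\kappa+1)(n+\nu)}$. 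These differ for odd $\kappa$.
\item Your own $\kappa=1$ sanity check would expose this. For SQED with two flavours, \eqref{eq:chiC2Zn} has sign $(-1)^{2n}=1$, not $(-1)^{n}$.
\end{itemize}

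The missing point is that the conjecture prescribes the grading induced by the contracting $\C^{\ast}$-action on $M_A\cong\C^2/\Z_{\kappa+1}$ (and on $M_B$). It is not the local grading of the residual singularity. The paper accordingly takes eigenvalues $(\kappa+1)n$ and $(\kappa+1)\nu$ at \emph{both} fixed points. Then, for both $\alpha$:
\begin{itemize}
\item the linear terms match exactly, $\frac{(\kappa+1)n}{2\kappa_\alpha}=\frac12\kappa^C_\alpha n$;
\item the bilinear terms match mod $1$: $\frac{(\kappa+1)^2 n\nu}{\kappa_\alpha}$ is congruent to $\frac{n\nu}{\kappa}$ for $\alpha=1$ and to $0\equiv\kappa n\nu$ for $\alpha=2$.
\end{itemize}
This is precisely the absorption of the linear and bilinear terms of \eqref{eq:twistop} you needed, up to constant phases and the orientation convention of the exponent.

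Separately, your $\sV_\alpha$ assignment does not reproduce the prefactors. Equation \eqref{eq:conj1Z} leaves no free overall normalisation, and the two towers come with $\kappa^{-3/2}$ and $\kappa^{-1/2}$. You therefore need $\sV_1=\sQ_0$ (all three nodes) and $\sV_2$ a single node, as in the paper, rather than one node and $\emptyset$.
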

\begin{proof}
The partial smoothing $\widetilde{M}_A \to M_A$ contains an $A_{\kappa-1} $ singularity. The reduced scheme fixed locus $\left( \widetilde{M}_A^{\ct_A} \right)_{\text{red}}$ consists of two points $\{p_{\alpha}\}_{\alpha=1,2}$. $p_2$ is a smooth point in $\widetilde{M}_A$, while $p_1$ is an $A_{\kappa-1} $ singularity. The same holds for the B-branch.\par
In the region of parameters $\zeta \in (0,\infty) , m\in (0,\infty)$, it is possible to close the integration contour in the upper half-plane. The partition function \eqref{eq:ZJY} receives contributions from two towers of poles:
\begin{equation}
	\sigma = \frac{\ii}{\kappa} \left( \frac{1}{2} + n \right) , \qquad \sigma = m  + \ii\left( \frac{1}{2} + n \right) , \qquad \forall n \in \N .
\end{equation}
Taking the residues and rearranging the expressions, one arrives at:
\begin{subequations}
\begin{align}
	\mz_{\text{\eqref{eq:ex3nodeJY}}} & = \frac{1}{\kappa^{\frac{3}{2}}} \mz_{\text{\eqref{eq:ex3nodeJY}}}^{(1)}+ \frac{e^{\ii 2 \pi \zeta m }}{\sqrt{\kappa}} \mz_{\text{\eqref{eq:ex3nodeJY}}}^{(2)} , \\
	\mz_{\text{\eqref{eq:ex3nodeJY}}}^{(1)} &= \sum_{n =0}^{\infty} \left( x^{1/\kappa}\right)^{\frac{1}{2}} \left(-x^{1/\kappa}\right)^{n} \sum_{\nu=0}^{\infty} y^{\frac{1}{2}} (-y)^{\nu} ~e^{\ii \frac{2 \pi}{\kappa} \left( \frac{1}{2} +n \right)\left( \frac{1}{2} +\nu \right) }  \label{eq:ZJYa}\\
	\mz_{\text{\eqref{eq:ex3nodeJY}}}^{(2)} &= \sum_{n=0}^{\infty} x^{\frac{1}{2}} \left(-x\right)^{n} \sum_{\nu=0}^{\infty} y^{\frac{\kappa}{2}} \left(-y^{\kappa}\right)^{\nu}~(-1)^{\kappa (n+\nu)} \ii^{\kappa  } . \label{eq:ZJYb}
\end{align}
\end{subequations}
Note that $\mz_{\text{\eqref{eq:ex3nodeJY}}}^{(1)}$ is obtained from $\mz_{\text{\eqref{eq:ex3nodeJY}}}^{(2)}$ by replacing $\kappa \mapsto 1/\kappa$ followed by $(x,y)\mapsto (x^{1/\kappa},y^{1/\kappa})$. Analogous expressions are obtained in other chambers of the parameter space.\par
This expression is of the form conjectured in \ref{myconj1}, with $\sV_1 = \sQ_0$, $\sV_2=\{3\}$, and $\kappa_{\alpha=1}=\kappa$, $\kappa_{\alpha=2}=1$. $R_A$ and $R_B$ act on the respective Verma modules with eigenvalues $(\kappa +1)n$ and $(\kappa +1)\nu$, whence $\frac{1}{\kappa_{\alpha}}\left( \frac{1}{2} + R_A\right) \otimes \left( \frac{1}{2} + R_B\right) $ has eigenvalues $\frac{1}{\kappa_\alpha}\left( \frac{1}{2} +n \right)\left( \frac{1}{2} +\nu \right) \mod 1$, in perfect agreement with \eqref{eq:ZJYa}-\eqref{eq:ZJYb}.\par
The contribution at $p_2$,
\begin{equation}
	\frac{e^{\ii 2 \pi \zeta m}}{\sqrt{\kappa}} \mz_{\text{\eqref{eq:ex3nodeJY}}}^{(2)} = \frac{\ii^{\kappa}}{\sqrt{\kappa}} e^{\ii 2 \pi \zeta m } \hat{\chi}^{A}\left( e^{-2\pi \zeta} \right)\hat{\chi}^{B}\left( e^{-2\pi \kappa m} \right) ,
\end{equation}
has the factorized form of Conjecture \ref{c:GO}, but the contribution at $p_1$ does not. Nevertheless, it is a trace of the form predicted by Conjecture \ref{myconj1}.
\end{proof}

\subsubsection{Magnetic quiver analysis}

The prescription in \S\ref{sec:magQprime} associates to \eqref{eq:ex3nodeJY} the quiver
\begin{equation}\label{eq:aux3JY}
	\sQ^{\prime}_{\eqref{eq:ex3nodeJY}} \ = \ \fnode{1} \multe{\kappa}  \gnode{1} \multe{1} \fnode{1} ,
\end{equation}
namely $U(1)$ gauge theory coupled to one hypermultiplet of charge $1$ and one hypermultiplet of charge $\kappa$. The partition function of \eqref{eq:aux3JY} is proportional to \eqref{eq:ZJY}:
	\begin{equation}\label{eq:ZSQED2}
		\mz_{\text{\eqref{eq:ex3nodeJY}}}= \frac{1}{\sqrt{\kappa}} \mz_{\text{\eqref{eq:aux3JY}}} ,
	\end{equation}
where, on the right-hand side, the charge-1 hypermultiplet is given mass $m$ and the charge-$\kappa$ one is massless. This can be put in a more symmetric fashion with a shift of variables. The normalization \eqref{eq:phaseZ} avoids a mismatch of overall roots of unity.\par
Furthermore, the Coulomb and Higgs branch Hilbert series for \eqref{eq:aux3JY} match, respectively, the results of \cite{Li:2023ffx} for the A- and B-branch of \eqref{eq:ex3nodeJY}.\par
\begin{thm}
	Conjecture \ref{myconjmag} holds for \eqref{eq:ex3nodeJY}.
\end{thm}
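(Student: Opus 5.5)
The statement to establish is Conjecture \ref{myconjmag} for \eqref{eq:ex3nodeJY}: (i) $\mM_A \cong \mC(\sQ^{\prime})$ as stacks and $\mM_B\cong\mH(\sQ^{\prime})$ as gerbes, and (ii) equality of the sphere partition functions. I will treat (ii) first, since most of the work has already been done. The manipulation leading to \eqref{eq:ZJY} reduces the three-node Chern--Simons integral to a single $\sigma$-integral, obtained by Gaussian integration over the combinations of $\sigma_1,\sigma_2,\sigma_3$ that appear only quadratically, exactly as in \S\ref{app:altcalc}. I then compare with \eqref{eq:ZQpgAb} specialized to $\kk=(\kappa,1)$, with FI parameter $\zeta$, the charge-$\kappa$ edge massless and the charge-$1$ edge of mass $m$. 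This yields precisely the integrand of \eqref{eq:ZJY}. Hence \eqref{eq:ZSQED2} holds, and with the normalization \eqref{eq:phaseZ} the prefactor $1/\sqrt{\kappa}=\mz_{\csnode{1}{\kappa}}^{1/2}$ is exactly the allowed overall normalization. To make the identification independent of the choice of which edge is massless, I will shift $\sigma$ so that the masses appear as a symmetric redundant pair.

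For (i), I will first match the coarse spaces. By Proposition \ref{prop:Cisquot} and Lemma \ref{lem:rankminorK}, $\kk=(\kappa,1)$ has $\gcd=1$, so $\sQred=\sQ^{\prime}$, $\Gamma_\kappa$ is trivial, and $\mH(\sQ^{\prime})$ carries no gerbe by Proposition \ref{prop:HBgerbe}. This is consistent with the absence of a 1-form symmetry for \eqref{eq:ex3nodeJY}, which I will check directly: with levels $(\kappa,-\kappa,\kappa)$ the matter charges already generate the full charge lattice. Both moduli spaces are therefore schemes, and it suffices to identify coarse spaces.

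For the Coulomb branch, I will compute the monopole formula (equivalently, the Coulomb limit of the index) for $U(1)$ with charges $\kappa$ and $1$. The bare monopole of charge $\pm1$ has R-charge $\frac{1}{2}(\kappa+1)$, and together with $\varphi$ it gives $\C[u_1,u_2,z]/(u_1u_2-z^{\kappa+1})$, reproducing $\hs_{\C[M_A]}$. For the Higgs branch, the hyperkähler quotient of $\C^4$ by $\C^{\ast}$ with weights $(\kappa,1)$ gives invariants $a=q_1\tilde q_1$, $b=q_2^{\kappa}\tilde q_1$, $c=\tilde q_2^{\kappa} q_1$ subject to the moment map $\kappa a + q_2\tilde q_2=0$. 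This yields $bc=\pm\kappa^{\kappa} a^{\kappa+1}$, i.e.\ $\C^2/\Z_{\kappa+1}$, matching $M_B$ and the Hilbert series of \cite{Li:2023ffx}. I will also match the equivariant structure: the residual $\ct$-actions have a single parameter each and two fixed points after partial smoothing, one smooth and one $A_{\kappa-1}$. This agrees with the proof of Conjecture \ref{myconj1} above and with the fixed-point data in \eqref{eq:ZQprime} for $\alpha(1)\in\{1,2\}$.

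The main obstacle is the stack-level statement rather than the coarse one: certifying that $\mM_A$ has no nontrivial stabilizer or gerbe data beyond the scheme. I would argue this via the absence of a 1-form symmetry. As a supporting check, I would compare the full superconformal indices for small $\kappa$, since the index is sensitive to such finite data.
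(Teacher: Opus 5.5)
Your treatment of part (ii) and of the coarse moduli spaces follows the paper. You reduce to \eqref{eq:ZJY}, recognize the integrand of \eqref{eq:ZQpgAb} with $\kk=(\kappa,1)$ and the charge-$\kappa$ edge massless to obtain \eqref{eq:ZSQED2}, and match the Coulomb and Higgs Hilbert series of $\sQ^{\prime}$ with those of $M_A$ and $M_B$. Your explicit relations $u_1u_2=z^{\kappa+1}$ and $bc=(-\kappa)^{\kappa}a^{\kappa+1}$ are correct. The paper gives no argument for the stack-level part beyond asserting that no gerbe is expected, and the justification you add fails.

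It is not true that \eqref{eq:ex3nodeJY} has no 1-form symmetry. Take $K=\mathrm{diag}(\kappa,-\kappa,\kappa)$ and matter charges $(1,-1,0)$, $(0,1,-1)$. The screening lattice $K\Z^3+\mathrm{span}_{\Z}\{(1,-1,0),(0,1,-1)\}=\{n\in\Z^3 : n_1+n_2+n_3\equiv 0 \bmod \kappa\}$ has index $\kappa$. The lines of charge $g=(a,-a,a)$ satisfy $g^{T}K^{-1}q\in\Z$ for both matter charges $q$, so they generate a $\Z_{\kappa}$ 1-form symmetry, just as $(a,-a)$ does in \eqref{eq:ex2node}. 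Your proposed lattice check would therefore return $\Z_\kappa$, not the trivial group.

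The real distinction between the two cases is the spin $\tfrac{1}{2}g^{T}K^{-1}g$. It vanishes for \eqref{eq:ex2node}, so that $\Z_\kappa$ is non-anomalous, hence the gerbe. Here it equals $a^2/2\kappa$, so the braiding $e^{2\pi\ii ab/\kappa}$ is non-degenerate and the $\Z_\kappa$ is anomalous. Its lines then split off as a decoupled minimal abelian TQFT instead of producing gerbe data on $\mM_A$ or $\mM_B$. That is the argument you need for the ``no gerbe'' step.

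The same observation fixes the normalization, which you misidentify. Since $\mz_{\csnode{1}{\kappa}}=\kappa^{-1/2}$, the factor in \eqref{eq:ZSQED2} is $\mz_{\csnode{1}{\kappa}}$ itself, not its square root; it is the $\cs^3$ partition function of that decoupled TQFT. Note also that Conjecture \ref{myconjmag}(ii) asserts an equality and assumes $\sum_{v}\kappa_v=0$, which fails here since $\sum_v\kappa_v=\kappa$. What you, like the paper, actually establish is $\mz_{\text{\eqref{eq:ex3nodeJY}}}=\kappa^{-1/2}\mz_{\text{\eqref{eq:aux3JY}}}$. You should state this as equality up to the decoupled topological sector, rather than as an ``allowed normalization.'' Finally, Lemma \ref{lem:rankminorK} concerns the rank of $\kk$, not its gcd, and plays no role in showing that $\Gamma_\kappa$ is trivial.
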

\begin{rmk}
	If $\kappa=1$, the Chern--Simons-matter theory \eqref{eq:ex3nodeJY} is dual to $\gnode{1}\text{---}\fnode{2}$ \cite{Jafferis:2008em}. As first shown in \cite[Sec.4.2]{Gaiotto:2019mmf}, \eqref{eq:ZSQED2} at $\kappa=1$ satisfies Conjecture \ref{c:GO}.
\end{rmk}
\begin{rmk}[Comparison with {\cite{Marino:2025uub}}] The prescription reviewed in \S\ref{sec:QAQB} yields:
\begin{equation}
	\sQ_A \ = \ \gnode{1}\text{---}\fnode{\kappa +1} , \qquad  \sQ_B \ = \ \gnode{1}\text{---}\fnode{\kappa +1}.
\end{equation}
They reproduce the singular schemes, $\mC (\sQ_A) =M_{A,\eqref{eq:ex3nodeJY}}$ and $ \mC (\sQ_B) =M_{B,\eqref{eq:ex3nodeJY}}$, but, contrary to \eqref{eq:aux3JY}, are not suited for studying partial smoothings.
\end{rmk}

\subsubsection{Field theory analysis and brane analysis}
The emergence of the coarse moduli spaces from the vacuum equations, as well as the brane interpretation, are detailed in \cite[Sec.4.3]{Assel:2017eun}. It suffices here to observe that the brane setup only allows the partial desingularization $A_{\kappa} \longrightarrow A_{\kappa +1}$. This agrees with the sphere partition function result and with the magnetic quiver analysis.\par

\subsection{Abelian A-type quiver with alternating \texorpdfstring{$\pm \kappa$}{k}}
\label{sec:exaltpmk}
Consider an Abelian quiver modeled on the A-type Dynkin diagram, with Chern--Simons levels with alternating sign, $\kappa_{v} = (-1)^{v-1} \kappa$. For concreteness, this section deals with the case $\lvert \sQ_0 \rvert=2 \ell$ of an even number of vertices; the case $\lvert \sQ_0 \rvert=2 \ell+1$ is almost identical.\par
The theory of interest is:
\begin{equation}\label{eq:ex4node}
		\csnode{1}{\kappa}\text{---} \csnode{1}{-\kappa}\text{---} \cdots \text{---}  \csnode{1}{\kappa}\text{---} \csnode{1}{-\kappa}\ .
\end{equation}\par
The partition function can be cast in the form:
\begin{align}
	 \mz_{\text{\eqref{eq:ex4node}}} &= \int_{\R^{\ell}} \left( \prod_{v=1}^{\ell}\dd \tilde{\sigma}_v  \frac{e^{\ii 2 \pi \zeta^{v}\tilde{\sigma}_v }}{\ch (\kappa \tilde{\sigma}_v ) }\right)  \prod_{v=1}^{\ell-1}\frac{1}{\ch (\tilde{\sigma}_v-\tilde{\sigma}_{v+1} -m_v )}.  \label{eq:ZaltAb}
\end{align}
This claim is proven in \S\ref{app:altcalc}.\par
\begin{thm}
	Conjecture \ref{myconj1} holds for \eqref{eq:ex4node}.
\end{thm}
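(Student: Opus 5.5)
Starting from the integral representation \eqref{eq:ZaltAb}, I will evaluate $\mz_{\text{\eqref{eq:ex4node}}}$ by iterated residues, in the same way as for \eqref{eq:ZJY}. Note that \eqref{eq:ZaltAb} is precisely the partition function of an Abelian quiver $\sQ^{\prime}$ with non-minimal charges in the sense of Definition~\ref{def:Qprime}: there are $\ell$ gauge nodes, each node $v$ carries one charge-$\kappa$ edge to a framing node (the factor $\ch(\kappa\tilde{\sigma}_v)$), and consecutive nodes are joined by unit-charge bifundamental edges with masses $m_v$. This means I do not need to redo the residue calculation. Instead I apply the general formula \eqref{eq:ZQprime}, or equivalently Corollary~\ref{cor:Zhighertrace}, directly to this $\sQ^{\prime}$, working in the chamber $\zeta^{v}>0$ with generic $m_v$. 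The result is a sum over the invertible $\ell\times\ell$ minors $\alpha:\sQ_0\hookrightarrow\sQ_1$. Each summand is $e^{\ii 2\pi\pair{\zeta}{m}}/\det$ times a phase $e^{-\ii\frac{\pi}{4}\tilde{\kappa}_{\alpha}}$ times a trace of $e^{-\ii 2\pi\Delta_{\alpha}}x^{J_C}y^{J_H}$.

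The main step is to compute $\hat{\kappa}_{\alpha}$ from \eqref{eq:hatkappachargek} for each $\alpha$ and to put the answer in the form \eqref{eq:conj1Z}. The charge matrix is lower-bidiagonal-like, so a minor $\alpha$ consists of a choice of which nodes are "frozen" by their charge-$\kappa$ edge. On any maximal chain of consecutive nodes, exactly one node is fixed by its framing edge, and the remaining nodes of the chain are fixed by bifundamental edges. Inverting such a minor gives entries in $\{0,\pm1,\pm 1/\kappa\}$. Consequently every nonzero entry of $\hat{\kappa}_{\alpha}$ is either an integer or $\pm1/\kappa$; in particular, the framing edges left out of $\alpha$ have $\hat{\kappa}$ entries equal to $\kappa$ or $1$ in the relevant rows. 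Following the pattern seen in \eqref{eq:ZJYa}--\eqref{eq:ZJYb}, I will show the following:
\begin{itemize}
\item The integral parts of $\Delta_{\alpha}$ combine into $(-1)^{R_A}(-1)^{R_B}$-type signs.
\item The genuinely fractional part collects into $\frac{1}{\kappa_{\alpha}}(\frac12+R_A)\otimes(\frac12+R_B)$ with $\kappa_{\alpha}\in\{1,\kappa\}$.
\item The variables $x_{v;\alpha},y_{e;\alpha}$ in \eqref{eq:xyhigher} get rescaled by powers $1/\kappa$ or $\kappa$. These rescalings are absorbed by identifying the gradings $J_A,J_B$, just as the substitution $(x,y)\mapsto(x^{1/\kappa},y^{1/\kappa})$ did for \eqref{eq:ZJYa}.
\end{itemize}

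Next I identify the geometry. The reduced fixed points $(\widetilde{M}_A^{\ct_A})_{\mathrm{red}}$ should be in bijection with the minors $\alpha$, using Proposition~\ref{prop:stack123}(\ref{stack2}) for the underlying $\sQred$ together with the surjection \eqref{eq:equivariantCQAtoMA}. Isomorphic Verma modules are counted once, which follows from Lemma~\ref{lem:Vermatofp}(\ref{Vermatofp2}). The local singularity at each $p_{\alpha}$ is of type $A_{\kappa-1}$ or smooth, and this determines $\sV_{\alpha}$ and $\kappa_{\alpha}$.

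The prefactor $1/\det=\kappa^{-\#\text{framing edges in }\alpha}$ must then be matched with $S_{\alpha}=\prod_{v\in\sV_{\alpha}}\mz_{\csnode{1}{\kappa_v}}=\kappa^{-|\sV_{\alpha}|/2}$. To do this I will combine it with the overall normalization carried over from the reduction to \eqref{eq:ZaltAb} (from Appendix~\S\ref{app:altcalc}), and I will track the phase $e^{-\ii\pi\tilde{\kappa}_{\alpha}/4}$ in the convention \eqref{eq:phaseZ}. Finally, the eigenvalues of $R_A,R_B$ on the Verma modules are read off from the Hilbert series exponents (for instance $(\kappa+1)n$ as in the three-node case), and I check that $\Delta_{\alpha}$ agrees with the conjectured twist modulo integers.

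I expect this bookkeeping to be the main obstacle. For a general chain configuration $\alpha$, one has to verify that all the $1/\kappa$ cross terms $\hat{\kappa}_{\alpha}(n,\nu)$ really organize into a single product $(\frac12+R_A)\otimes(\frac12+R_B)/\kappa_{\alpha}$, with the correct eigenvalue identification and no leftover mixed phase. My first attempt will be the case $\ell=2$, computed fully by hand. I will then argue inductively along the chain: removing an end node should reduce the computation to the $(\ell-1)$ case, multiplied by a factor of exactly the type that appeared in \eqref{eq:ZJYa}--\eqref{eq:ZJYb}.
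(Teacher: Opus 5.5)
\textbf{Where your plan is right.} Recognising \eqref{eq:ZaltAb} as exactly \eqref{eq:ZQpgAb} for the auxiliary quiver \eqref{eq:aux4node} is sound. Invoking Corollary~\ref{cor:Zhighertrace} is then just Corollary~\ref{cor:magquiver}(\ref{prop:aux1}) in this example. The paper instead does the residues by hand for $\ell=2$ and only sketches general $\ell$. Write $f_v$ for the charge-$\kappa$ edge at node $v$ and $b_w$ for the bifundamental between $w$ and $w+1$. Your other claims are also correct:
\begin{itemize}
\item each chain of chosen $b_w$'s carries exactly one chosen $f_v$;
\item the inverse minor has entries in $\{0,\pm1,\pm1/\kappa\}$;
\item the prefactor is $1/\lvert\det\rvert=\kappa^{-c}$, with $c$ the number of chosen $f_v$, so $\lvert\sV_\alpha\rvert=2c$;
\item for $\ell=2$ this reproduces \eqref{eq:Zex4nodesum}.
\end{itemize}

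\textbf{The gap.} The step you flag as the crux is false for $\ell\ge3$, so the induction cannot close. The only non-integral bilinear entries of $\hat\kappa_\alpha$ sit in the rows of the $c-1$ unchosen $b_w$. Each such $b_w$ joins two adjacent chains whose chosen charge-$\kappa$ nodes are $L(w)$ and $R(w)$, and it pairs with exactly those two framing columns. So the fractional bilinear part of $\Delta_\alpha$ is $\frac1\kappa\sum_w (n_{L(w)}-n_{R(w)})\,\nu^w$, a sum of $c-1$ independent products.

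Concretely, take $\alpha=\{f_1,\dots,f_\ell\}$. The poles $\kappa\tilde\sigma_v=\ii(\frac12+n_v)$ produce the phase
\begin{equation*}
\exp\left\{\frac{2\pi\ii}{\kappa}\sum_{w=1}^{\ell-1}\left(n_w-n_{w+1}\right)\left(\frac12+\nu^{w}\right)\right\} .
\end{equation*}
Its bilinear coefficient matrix has columns $e_w-e_{w+1}$, so it has rank $\ell-1$ over every $\mathbb{F}_p$.

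Now take affine gradings $R_A=\sum_v a_v n_v+a_0$ and $R_B=\sum_w b_w\nu^w+b_0$, as on these Verma modules. Agreement modulo $1$ for all $(n,\nu)$ forces each bilinear coefficient to agree modulo $1$. Equivalently, the integer matrix $(\kappa a_v b_w/\kappa_\alpha)$ must be congruent modulo $\kappa$ to the path-incidence matrix above. But its $2\times2$ minors vanish identically, so its reduction modulo any prime $p\mid\kappa$ has rank at most one. This rules out \eqref{eq:conj1Z} literally for $\ell\ge3$, whatever $\kappa_\alpha$ you choose. Peeling off an end node therefore multiplies in a new, independent factor of the type \eqref{eq:ZJYa}; it does not merge with the previous one.

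\textbf{What your route actually proves.} It gives the form of Corollary~\ref{cor:Zhighertrace}, and this is all the paper's ``expressions akin to \eqref{eq:Zex4a} or \eqref{eq:Zex4b}'' amounts to for general $\ell$. Each fixed point contributes $\kappa^{-c}$ times a trace on $\scH^{A}_{\alpha}\otimes\scH^{B}_{\alpha}$ twisted by $e^{-\ii 2\pi\Delta_\alpha}$. This is compatible with the single-product twist of \eqref{eq:conj1Z} only when $c\le2$. That condition covers all three fixed points for $\ell=2$, which is the case the paper actually computes.

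So you have two options:
\begin{itemize}
\item state the general-$\ell$ result with $(\frac12+R_A)\otimes(\frac12+R_B)$ replaced by a sum of such products, one per unchosen bifundamental edge; or
\item restrict the literal single-product claim to fixed points with $c\le2$.
\end{itemize}
Do not try to prove that there is no leftover mixed phase.
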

\begin{proof}
To reduce clutter, the proof is first spelled out explicitly for $\ell=2$. The argument is later generalized to arbitrary $\ell$.\par
The partial smoothing $\widetilde{M}_A \to M_A$ contains two singular points, which are locally $A_{\kappa-1} $ singularities, while $\left( \widetilde{M}_A^{\ct_A}\right)_{\mathrm{red}}=\{p_{\alpha}\}_{\alpha=1,2,3}$.\par
The partition function specialized to $\ell=2$ reads
\begin{align}
	 \mz_{\text{\eqref{eq:ex4node}}} &= \int_{\R^2} \dd \tilde{\sigma}_1 \dd \tilde{\sigma}_2 \frac{ e^{\ii 2 \pi \kappa  \left( m_1\tilde{\sigma}_1 +m_3\tilde{\sigma}_2\right)} }{\ch (\kappa \tilde{\sigma}_1 ) \ch (\tilde{\sigma}_1-\tilde{\sigma}_2 +m_2 ) \ch (\kappa \tilde{\sigma}_2 )} ,  \label{eq:Z4nodeb}
\end{align}
where $\zeta^{1}=\kappa m_1$ and $\zeta^{2}=\kappa m_3$, cf. \S\ref{app:altcalc}. In the region $(m_1,m_2,m_3) \in (0,\infty)^3$, denoting 
\begin{equation}
	x_1=e^{-2\pi m_1}  \qquad y=e^{-2\pi m_2} , \qquad x_2=e^{-2\pi m_3} ,
\end{equation}
\eqref{eq:Z4nodeb} is evaluated by closing the integration contours in the upper half-plane and picking the residues, and gives:
\begin{subequations}
\begin{align}\label{eq:Zex4nodesum}
	\mz_{\text{\eqref{eq:ex4node}}} & = \frac{1}{\kappa^{2}} \mz_{\text{\eqref{eq:ex4node}}}^{(1)} (x_1,x_2,y)+ \frac{e^{\ii 2 \pi \kappa m_2 m_3 }}{\kappa} \mz_{\text{\eqref{eq:ex4node}}}^{(2)} (x_1,x_2,y) + \frac{e^{-\ii 2 \pi \kappa m_2 m_1 }}{\kappa} \mz_{\text{\eqref{eq:ex4node}}} ^{(2)} (x_2,x_1,y^{-1}), 
\end{align}
\begin{align}
	\mz_{\text{\eqref{eq:ex4node}}}^{(1)} (x_1,x_2,y) &= \sum_{n_1,n_2 =0}^{\infty} (x_1x_2)^{\frac{1}{2}} (-x_1)^{n_1}(-x_2)^{n_2} \sum_{\nu=0}^{\infty} y^{\frac{1}{2}} (-y)^{\nu} ~e^{\ii \frac{2 \pi}{\kappa} \left( n_1-n_2 \right)\left( \frac{1}{2} +\nu \right) }  \label{eq:Zex4a}\\
	\mz_{\text{\eqref{eq:ex4node}}}^{(2)} (x_1,x_2,y)&= \sum_{n_1,n_2 =0}^{\infty} x_1^{\frac{1}{2}}x_2^{\frac{\kappa+1}{2}} (x_1x_2)^{n_1} \left(-(-x_2)^{\kappa}\right)^{n_2} \sum_{\nu=0}^{\infty} y^{\frac{\kappa}{2}} \left(-y\right)^{\kappa\nu}\ii^{\kappa+1 } . \label{eq:Zex4b}
\end{align}
\end{subequations}
The second and third contributions to \eqref{eq:Zex4nodesum} are given by the same function, and are related via $(m_1,m_2,m_3) \mapsto (m_3,-m_2,m_1)$. Expanding in the chamber $m_2>0$, the resulting expressions only differ by exchanging $m_1\leftrightarrow m_3$.\par
For arbitrary $\ell$, the partition function splits into the sum of as many terms as ways of picking poles in \eqref{eq:ZaltAb}. Expanding each summand gives expressions akin to \eqref{eq:Zex4a} or \eqref{eq:Zex4b}. The inverse power of $\kappa$ in front of each summand equals the number of variables $\sigma_v$ picking the poles of $1/\ch (\kappa \sigma_v)$.
\end{proof}

\subsubsection{Magnetic quiver analysis}
To \eqref{eq:ex4node}, the prescription of \S\ref{sec:magQprime} associates the auxiliary quiver 
\begin{equation}\label{eq:aux4node}
	\sQ^{\prime}_{\text{\eqref{eq:ex4node}}} \ = \ \begin{tikzpicture}[baseline=-1] 
	\node (gl) at (-2,0.2) {$\gnode{1}$}; 
	\node (fl) at (-2,-1) {$\underset{1}{\Box}$};  
	\node (gc) at (-1,0.2) {$\gnode{1}$}; 
	\node (fc) at (-1,-1) {$\underset{1}{\Box}$}; 
	\node (d) at (0,0) {$\cdots$}; 
	\node (gr) at (1,0.2) {$\gnode{1}$}; 
	\node (fr) at (1,-1) {$\underset{1}{\Box}$}; 
	\draw (-1.75,0)--(-1.25,0); \draw (-0.75,0)--(-0.25,0); \draw (0.25,0)--(0.75,0);
	\path (gl) edge node[anchor=east] {$\scriptstyle \kappa$} (fl);
	\path (gc) edge node[anchor=east] {$\scriptstyle \kappa$} (fc);
	\path (gr) edge node[anchor=east] {$\scriptstyle \kappa$} (fr);
	\node[anchor=south] (u) at (-0.5,0.5) {$\overbrace{\hspace{3.3cm}}^{\ell}$};
	\end{tikzpicture} .
\end{equation}
The partition function of \eqref{eq:aux4node} equals \eqref{eq:ZaltAb}, supporting Conjecture \ref{myconjmag}.

\subsubsection{Brane analysis}
\begin{figure}
\centering
\begin{tikzpicture}
	\draw[red,thick] (-6,4) -- (-6,2);
	\draw[red,thick] (-2,4) -- (-2,2);
	\draw[red,thick] (6,4) -- (6,2);
	\draw[red,thick] (2,4) -- (2,2);
	\draw[Violet,dashed,thick] (-3.5,3.5) -- (-4.5,2.5);
	\draw[Violet,dashed,thick] (4.5,3.5) -- (3.5,2.5);
	\draw (-6,3) -- (6,3);
	\node at (0,3.5) {$\cdots$};
\end{tikzpicture}
\caption{Brane realization of Chern--Simons-matter quiver with alternating Chern--Simons levels.}
\label{fig:4alt}
\end{figure}
The brane system realizing \eqref{eq:ex4node} is shown in Figure \ref{fig:4alt}. The A-branch is described by the motion of D3-brane segments suspended between two consecutive NS5-branes. The B-branch corresponds to reconnecting the D3-segments, and sliding the resulting D3 between two $(1,\kappa)$-branes.\par
Figure \ref{fig:4alt} shows that, for generic mass parameters, $M_A$ contains $\ell$ singular points, locally $A_{\kappa-1}$ singularities that cannot be resolved further. This is in agreement with the sphere partition function analysis from \eqref{eq:ZaltAb}.

\subsection{Abelian affine A-type quiver: ABJM}
Abelian ABJM theory is modeled on the affine $A_1$ quiver, with gauge group $U(1)_{\kappa} \times U(1)_{-\kappa}$:
\begin{equation}\label{eq:abABJM}
	\begin{tikzpicture}[auto,node distance=1cm,baseline=-1]
		\node (gauge1) at (-1,0) {$\csnode{1}{\kappa}$};
		\node (gauge2) at (1,0) {$\csnode{1}{-\kappa}$};
		\path (gauge1) edge [bend left] node {$\scriptstyle -m$} (gauge2);
		\path (gauge2) edge [bend left] node {$\scriptstyle +m$} (gauge1);
		
		\node[anchor=east] at (-2,0) {Abelian ABJM};
	\end{tikzpicture}
\end{equation}\par
This theory has enhanced $\mN=6$ supersymmetry. The full moduli space has underlying algebraic variety $\C^4/\Z_{\kappa}$; fixing an $\mN=4$ subalgebra of the supersymmetry algebra singles out two branches, with coarse moduli spaces 
\begin{equation}\label{eq:MAMBABJM}
	M_{A,\eqref{eq:abABJM}} \cong \C^2/\Z_{\kappa} , \qquad M_{B,\eqref{eq:abABJM}} \cong \C^2/\Z_{\kappa} .
\end{equation}
ABJM theory has a $\Z_{\kappa}$ 1-form symmetry \cite{Bergman:2020ifi}, whence it is more appropriate to formulate $\mM_{A,\eqref{eq:abABJM}}, \mM_{B,\eqref{eq:abABJM}}$ as moduli stacks, rather than algebraic varieties.\par

The tori acting on these branches are $\ct_A \cong \C^{\ast}$, exactly as in \S\ref{sec:ex2node}, and $\ct_B= (\C^{\ast})^2 / \C^{\ast} \cong \C^{\ast}$, due to the presence of one additional hypermultiplet compared to \S\ref{sec:ex2node}.
The FI parameter is $\zeta$, the masses of the bi-fundamental hypermultiplets are $(+m,-m)$, and it is convenient to define
\begin{equation}
	m_{\pm} :=  m \pm \frac{2 \zeta}{\kappa} .
\end{equation}\par
\medskip
The partition function is given by \cite{Russo:2017qmw}
\begin{equation}\label{eq:ZABJM}
	\mz_{\text{\eqref{eq:abABJM}}} = \frac{1}{\kappa \ch \left(m_{+} \right)\ch \left( m_{-} \right)} ,
\end{equation}
\begin{thm}
	Conjecture \ref{myconj1} holds for Abelian ABJM theory.
\end{thm}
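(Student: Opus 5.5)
We have $\mz = \frac{1}{\kappa\,\ch(m_+)\ch(m_-)}$. The plan is to follow the pattern of the two-node example \S\ref{sec:ex2node}. The first step is to identify the fixed locus. The torus $\ct_A\cong\C^\ast$ acts on $M_A\cong\C^2/\Z_\kappa$, there are no resolution parameters for the A-branch, and so $\left(M_A^{\ct_A}\right)_{\mathrm{red}}=\{0\}$. The sum in \eqref{eq:conj1Z} should therefore have a single term with $\kappa_\alpha=\kappa$. We expect the overall factor $\frac{1}{\kappa}$ to be $S_\alpha=\mz_{\csnode{1}{\kappa}}\mz_{\csnode{1}{-\kappa}}$, with $\sV_\alpha=\sQ_0$.

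The second step is to expand the two factors $1/\ch(m_\pm)$. Work in a chamber where $m_\pm>0$, i.e. $m>\frac{2}{\kappa}\lvert\zeta\rvert$; other chambers follow by sign flips as in the earlier examples. Each factor becomes a geometric series $2\sum_{n\ge0}(-1)^n e^{-2\pi m_\pm(n+\frac12)}$ (absorbing normalizations into the convention \eqref{eq:phaseZ}). This gives a double sum over $(n_+,n_-)\in\N^2$.

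The third step is to rewrite the exponents in terms of $x=e^{-2\pi\zeta}$ and $y=e^{-2\pi m}$. One has
\begin{equation*}
m_+\left(n_++\tfrac12\right)+m_-\left(n_-+\tfrac12\right)=m\,(n_++n_-+1)+\tfrac{2\zeta}{\kappa}(n_+-n_-).
\end{equation*}
The natural reindexing is to set $\nu:=n_++n_-$, which labels the B-grading, and $n:=n_+-n_-\in\Z$, which labels the A-grading. Both lie in a single Verma module over the quantized du Val singularity of \cite{Etingof:2020fls}; the degree-$d$ part of $\C[u_1,u_2,z]/(u_1u_2-z^\kappa)$ is tracked by the charge $n$ and by $\nu$.

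The main obstacle is the matching with the Verma module structure. The combined module $\scH^A\otimes\scH^B$ must be identified so that the trace in \eqref{eq:conj1Z} reproduces the exponents and the phases. The signs $(-1)^{n_++n_-}$ have to be recast as $e^{-\ii\frac{2\pi}{\kappa}(\frac12+R_A)\otimes(\frac12+R_B)}$, using that $R_A$ acts with eigenvalues in $\kappa\N$ on the Verma module, as in \S\ref{sec:ex2node}, and that $R_B$ acts similarly by the symmetry $A\leftrightarrow B$ of ABJM. So the phase reduces mod $1$ to the required one, possibly up to an overall root of unity. If the direct identification fails, the fallback is Conjecture \ref{myconjmag}. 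The auxiliary quiver $\sQ'$ for the affine case has two charge-$\kappa$ edges on one $U(1)$ node. Its partition function $\frac{1}{\kappa}\int \dd\sigma\, e^{\ii2\pi\zeta'\sigma}/\bigl(\ch(\kappa\sigma-\cdot)\ch(\kappa\sigma-\cdot)\bigr)$ can be evaluated by Corollary \ref{cor:hypertwtr} after rescaling $\sigma\mapsto\sigma/\kappa$. One then matches its output with \eqref{eq:ZABJM} and invokes Corollary \ref{cor:magquiver}(\ref{prop:aux1}).
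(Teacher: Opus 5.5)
Your first two steps match the paper: the reduced fixed locus is the single point $0$, $\kappa_\alpha=\kappa$, $S_\alpha=\mz_{\csnode{1}{\kappa}}\mz_{\csnode{1}{-\kappa}}=\frac{1}{\kappa}$, and each $1/\ch(m_\pm)$ is expanded as $\sum_{n\ge0}(-1)^n e^{-2\pi m_\pm(n+\frac12)}$. There is no factor of $2$ here, since $\ch(\sigma)=2\cosh(\pi\sigma)$; the convention \eqref{eq:phaseZ} only fixes an eighth root of unity and could not absorb one anyway.

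The gap is in your third step. You take $x=e^{-2\pi\zeta}$ and $y=e^{-2\pi m}$ as the $\ct_A$- and $\ct_B$-fugacities and reindex by $n=n_+-n_-$, $\nu=n_++n_-$. This gives the A-side the weights $\frac{2}{\kappa}(n_+-n_-)$, which range over all of $\frac{2}{\kappa}\Z$ and are unbounded below. Such weights cannot belong to a Verma module over $\mA^A_\hbar$; in \S\ref{sec:ex2node} they are $n+\frac12$ with $n\in\N$. Also, the index set $\{(n,\nu)\,:\,\lvert n\rvert\le\nu,\ n\equiv\nu \bmod 2\}$ is not a product basis of $\scH^A_\alpha\otimes\scH^B_\alpha$.

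Put differently, $e^{-2\pi(m_+J_++m_-J_-)}=e^{-2\pi m(J_++J_-)}\,e^{-\frac{4\pi\zeta}{\kappa}(J_+-J_-)}$. So $\zeta$ couples to a mixture of the gradings of the two modules, and is not the equivariant parameter of $\ct_A$ alone. Your fourth step then asserts that the signs work out using $R_A,R_B\in\kappa\N$, but you never say what $R_A$ and $R_B$ are in the $(n,\nu)$ labelling. The computation only goes through if $R_A=\kappa n_+$ and $R_B=\kappa n_-$, i.e.\ if the reindexing is undone.

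The missing point is that the equivariant parameters of $\ct_A$ and $\ct_B$ are $m_+$ and $m_-$ themselves. The magnetic-quiver check confirms this: $\zeta'=\kappa m_+$ is the FI parameter and $m'=m_-$ is the adjoint mass. With $x=e^{-2\pi m_+}$ and $y=e^{-2\pi m_-}$ there is no obstacle at all. One has $\mz_{\text{\eqref{eq:abABJM}}}=\mz_{\csnode{1}{\kappa}}\mz_{\csnode{1}{-\kappa}}\,\hat{\chi}^{A}(x)\,\hat{\chi}^{B}(y)$, where each factor is the twisted trace $t^{1/2}/(1+t)$ of \S\ref{sec:ex2node} evaluated at $t=x$ and $t=y$ respectively; this is exactly the paper's proof. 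Since $\frac{1}{\kappa}\left(\frac12+\kappa n_+\right)\left(\frac12+\kappa n_-\right)=\frac{1}{4\kappa}+\frac{n_++n_-}{2}+\kappa n_+n_-$, the twist in \eqref{eq:conj1Z} equals $e^{-\ii\pi/(2\kappa)}(-1)^{n_++n_-}$. This reproduces the signs up to an overall root of unity.

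Your fallback does not repair the argument. The auxiliary quiver for ABJM is a $U(1)$ node with one charge-$\kappa$ flavor and one neutral adjoint loop, not two charge-$\kappa$ flavors. The quiver you propose gives, after $\sigma\mapsto\sigma/\kappa$, a two-flavor SQED integral of the form $\sin(\cdot)/\left(\sh(\cdot)\sh(\cdot)\right)$. That is not $1/\left(\kappa\,\ch(m_+)\ch(m_-)\right)$, so Corollary \ref{cor:magquiver} cannot be invoked with it.
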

\begin{proof}
It suffices to note that
\begin{equation}
	\mz_{\text{\eqref{eq:abABJM}}} = \mz_{\csnode{1}{\kappa}}  \mz_{\csnode{1}{-\kappa}} \cdot  \hat{\chi}^{A,\eqref{eq:abABJM}} \left( e^{-2\pi m_{+}}\right)\hat{\chi}^{B,\eqref{eq:abABJM}} \left( e^{-2\pi m_{-}}\right) ,
\end{equation}
where the twisted traces associated with the stack $\left[ \C^2/\Z_{\kappa} \right] $ are the same as in \S\ref{sec:ex2node}. 
\end{proof}
Furthermore, \eqref{eq:ZABJM} is consistent with Corollary \ref{cor:magquiver}(\ref{prop:aux2}).

\subsubsection{Magnetic quiver analysis}
The prescription for the auxiliary quiver of \S\ref{sec:magQprime} applied to \eqref{eq:abABJM} gives 
\begin{equation}\label{eq:auxABJM}
	\sQ^{\prime}_{\text{\eqref{eq:abABJM}}} = \begin{tikzpicture}[baseline=-10pt]  \node (f1) at (4,0) {$\fnode{1}$};
		\node (m1) at (2.8,0) {$ \gnode{1}$};
		\path (3,-0.15) edge node[anchor=south] {$\scriptstyle \kappa$} (3.8,-0.15);
		\node (phantom) at (3,-0.15) { \hspace{16pt} };
		\draw (phantom) to[loop left] (phantom);\end{tikzpicture} .
\end{equation}
\begin{itemize}
\item This auxiliary quiver has a Coulomb branch isomorphic, as a quotient stack, to that of $\sQ^{\prime}_{\eqref{eq:ex2node}}$ in \eqref{eq:aux2node}:
\begin{equation}
	\mC \left( \sQ^{\prime}_{\text{\eqref{eq:abABJM}}}  \right) \cong  \mC \left( \sQ^{\prime}_{\eqref{eq:ex2node}}\right) ,
\end{equation}
in agreement with $\mM_{A, \eqref{eq:abABJM}}$.
\item The adjoint hypermultiplet of \eqref{eq:auxABJM} parameterizes a $\C^2$. However, there is an unbroken $\Z_{\kappa}$ gauge symmetry at every point, so the Higgs branch of \eqref{eq:auxABJM} is expected to be $\left[ \C^2 /\Z_{\kappa} \right]$, in agreement with $\mM_{B, \eqref{eq:abABJM}}$.
\end{itemize}\par
The partition functions are also equal, 
\begin{equation}
	\mz_{\text{\eqref{eq:abABJM}}} \left( m_+, m_- \right)= \left. \mz_{\sQ^{\prime}_{\text{\eqref{eq:abABJM}}} }  \left( \zeta^{\prime}, m^{\prime} \right)\right\rvert_{\zeta^{\prime}=\kappa m_+ , m^{\prime}=m_-} .
\end{equation}
\begin{thm}
	Conjecture \ref{myconjmag} holds for Abelian ABJM theory.
\end{thm}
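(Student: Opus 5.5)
The statement is that Conjecture \ref{myconjmag} holds for Abelian ABJM. I will check its two parts separately, following the pattern of the earlier examples \eqref{eq:ex2node} and \eqref{eq:ex3node0}.

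\textbf{Part (ii), the partition function.} I start here because it is a direct computation.
\begin{itemize}
\item Write the partition function of \eqref{eq:auxABJM} as a one-dimensional integral. The adjoint hypermultiplet is neutral under $U(1)$, so it contributes the constant factor $1/\ch(m^{\prime})$, independent of $\sigma$.
\item The remaining integral is $\int \dd\sigma\, e^{\ii 2\pi\zeta^{\prime}\sigma}/\ch(\kappa\sigma)$.
\item Substitute $\sigma^{\prime}=\kappa\sigma$, as in the proof of Corollary \ref{cor:hypertwtr}. The integral becomes $\frac{1}{\kappa}\cdot\frac{1}{\ch(\zeta^{\prime}/\kappa)}$, using $\int \dd s\, e^{\ii2\pi t s}/\ch s = 1/\ch t$.
\item Altogether $\mz_{\sQ^{\prime}} = \frac{1}{\kappa\,\ch(\zeta^{\prime}/\kappa)\,\ch(m^{\prime})}$.
\item Setting $\zeta^{\prime}=\kappa m_+$ and $m^{\prime}=m_-$ reproduces \eqref{eq:ZABJM} exactly. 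This includes the prefactor $1/\kappa=\mz_{\csnode{1}{\kappa}}\mz_{\csnode{1}{-\kappa}}$.
\end{itemize}
The only bookkeeping point is that the redefinition $(\zeta,m)\mapsto(m_+,m_-)$ is an invertible linear map of parameter spaces. It should therefore be checked that $\ct_A$ and $\ct_B$ are each one-dimensional on both sides, which is the case.

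\textbf{Part (i), the A-branch.} The Coulomb branch of \eqref{eq:auxABJM} coincides with that of \eqref{eq:aux2node}. The neutral adjoint hypermultiplet does not change the monopole dimensions, so the generators and the relation are the same as in \S\ref{sec:QFTex2node}. Proposition \ref{prop:Cisquot} then gives the quotient stack $[\C^2/\Z_\kappa]$. This matches $\mM_A$ as established in \S\ref{sec:ex2node}, since the A-branch of ABJM is the same $[\C^2/\Z_\kappa]$ with $\ct_A\cong\C^\ast$.

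\textbf{Part (i), the B-branch.} Here I would argue in three steps.
\begin{itemize}
\item The Higgs branch of \eqref{eq:auxABJM}: the charge-$\kappa$ hypermultiplet is set to zero by the moment map at $\zeta^{\prime}=0$. The adjoint hypermultiplet parametrizes $\C^2$, on which $U(1)$ acts trivially except through the unbroken $\Z_\kappa\subset U(1)$.
\item To obtain the stack $[\C^2/\Z_\kappa]$ rather than a $\Z_\kappa$-gerbe over $\C^2$, the residual $\Z_\kappa$ must act nontrivially on the adjoint directions. I would justify this through the 1-form symmetry and the identification of the $\Z_\kappa$ of ABJM \cite{Bergman:2020ifi}.
\item As a consistency check, compare Hilbert series via the Higgs limit of the superconformal index against the B-branch Hilbert series $\C^2/\Z_\kappa$ from \eqref{eq:MAMBABJM}.
\end{itemize}

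\textbf{Main obstacle.} I expect the B-branch identification to be the hard step. Naively, the adjoint hypermultiplet is neutral, so it gives a gerbe over $\C^2$ and not the orbifold $[\C^2/\Z_\kappa]$. The matching must therefore rely on a physical input: the $\Z_\kappa$ action induced by the charge-$\kappa$ hypermultiplet sector, together with the 1-form symmetry. If this cannot be made rigorous, I would settle the statement at the level of coarse spaces and Hilbert series, and of the index matching for small $\kappa$, as done in the earlier examples.
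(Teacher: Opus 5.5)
Your partition-function computation and your A-branch argument are the same as the paper's. The paper identifies the Coulomb branch of \eqref{eq:auxABJM} with that of \eqref{eq:aux2node} as a quotient stack. It matches $\mz_{\sQ^{\prime}}$ with \eqref{eq:ZABJM} under exactly your substitution $\zeta^{\prime}=\kappa m_+$, $m^{\prime}=m_-$.

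The gap is the B-branch, and neither route you propose closes it.
\begin{enumerate}[(i)]
\item The residual $\Z_{\kappa}$ cannot be made to act nontrivially on the adjoint directions. The adjoint of $U(1)$ has weight zero, so no subgroup of the gauge group acts on it. The $\Z_{\kappa}$ 1-form symmetry says precisely that the stabilizer $\Z_{\kappa}$ of the charge-$\kappa$ sector acts trivially on all matter. It is what produces the gerbe $\C^2\times B\Z_{\kappa}$ (as in Proposition \ref{prop:HBgerbe}); it cannot turn it into an orbifold.
\item Your Hilbert-series fallback fails too. In the Higgs-branch Hilbert series of \eqref{eq:auxABJM}, the charge-$\kappa$ hypermultiplet contributes $(1-t^2)\oint\frac{\dd z}{2\pi\ii z}\frac{1}{(1-tz^{\kappa})(1-tz^{-\kappa})}=1$. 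So the result is the Hilbert series of $\C^2$, not that of the $A_{\kappa-1}$ singularity $M_B\cong\C^2/\Z_{\kappa}$ in \eqref{eq:MAMBABJM}. Even the coarse spaces disagree.
\end{enumerate}

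The paper does not supply the missing input either. At this step it only makes the stabilizer observation you start from: the adjoint parametrizes $\C^2$, and there is an unbroken $\Z_{\kappa}$ at every point. It then says the Higgs branch is ``expected to be'' $[\C^2/\Z_{\kappa}]$, in agreement with $\mM_B$, with no coarse-space or Hilbert-series comparison on the B-side. An unbroken $\Z_{\kappa}$ at every point means the trivial-action quotient, which is exactly your gerbe.

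To reproduce the paper's argument, drop the attempt to manufacture a nontrivial action. State the B-branch match at the level of this $\Z_{\kappa}$-stacky structure, i.e.\ the 1-form symmetry shared by both theories.

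Your suspicion is nonetheless well-founded. That stack has coarse space $\C^2$, while \eqref{eq:MAMBABJM} gives $\C^2/\Z_{\kappa}$. The partition function cannot tell these apart, since its B-factor is just $1/\ch(m_-)$ either way. So for ABJM the B-branch half of Conjecture \ref{myconjmag}(i) is supported only heuristically, and as written your proposal does not prove it.
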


\begin{rmk}[Comparison with {\cite{Marino:2025uub}}] Both magnetic quivers $\sQ_A$ and $\sQ_B$ prescribed by \S\ref{sec:QAQB} are given by the Abelian ADHM with $\kappa$ hypermultiplets, 
\begin{equation}
	\sQ_{A,\text{\eqref{eq:abABJM}}} =\  \begin{tikzpicture}[baseline=-8pt]\node (f1) at (4,0) {$\fnode{\kappa}$};
		\node (m1) at (2.8,0) {$ \gnode{1}$};
		\draw (3,-0.15) -- (3.8,-0.15);
		\node (phantom) at (3,-0.15) { \hspace{16pt} };
		\draw (phantom) to[loop left] (phantom);\end{tikzpicture} \ = \sQ_{B,\text{\eqref{eq:abABJM}}} .
\end{equation}
As in the example of \S\ref{sec:ex2node}, they reproduce the coarse moduli spaces \eqref{eq:MAMBABJM} as singular schemes, but the resolution parameters differ.
\end{rmk}

\subsubsection{Field theory analysis and brane analysis}
The analysis of the vacuum equations in field theory goes along the same lines of \S\ref{sec:QFTex2node}.
ABJM theory has been largely studied in the literature, and the field theoretic and brane analyses can be found in \cite[Sec.6.1]{Assel:2017eun}. For the present discussion, it suffices to observe that it is manifest in the brane setup that the $\Z_{\kappa}$-orbifold singularity cannot be resolved.

\subsection{Abelian three-node affine A-type quiver}
The next Abelian Chern--Simons-matter theory based on an affine A-type Dynkin diagram is:
\begin{equation}\label{eq:circular3}
	\begin{tikzpicture}[baseline=0]
			\node (a) at (1,0) {$\csnode{1}{\kappa}$};
			\node (b) at (0,1) {$\csnode{1}{0}$};
			\node (c) at (-1,0) {$\csnode{1}{-\kappa}$};
			\draw (c) -- (a);
			\draw (a) -- (b);
			\draw (b) -- (c);
			\node[anchor=west,Violet] (n1) at (1.1,0) {$\scriptstyle v=1$};
			\node[anchor=west,Violet] (n2) at (0.1,1) {$\scriptstyle v=2$};			
			\node[anchor=east,Violet] (n3) at (-1.1,0) {$\scriptstyle v=3$};
		\end{tikzpicture}
\end{equation}
with the colored label indicating the conventions for the vertices. The analysis of the A- and B-branch is deferred to \S\ref{sec:QFTcircular3}-\S\ref{sec:circular3brane}. It suffices to anticipate here that the A- and B-branches are analogous to the specialization $k=2$ of the example \S\ref{sec:exk00k}, but in addition there is an overall, factorized component due to the presence of one additional hypermultiplet compared to \S\ref{sec:exk00k}.\par
The parameters of \eqref{eq:circular3} are: One FI parameter $\zeta$ from the sub-quiver \eqref{eq:ex2node} of \S\ref{sec:ex2node}; one FI parameter $\zeta^{2}$, from the topological symmetry of the node $v=2$; one mass parameter from the anti-diagonal combination of the $\C^{\ast}$-actions scaling the edges $2\to 1$ and $2\to 3$.\par
\medskip
Up to a shift of variables, the partition function is
\begin{equation}
	\mz_{\eqref{eq:circular3}} = \int_{\R^3} \dd \sigma_1 \dd \sigma_2 \dd \sigma_3 \frac{ e^{\ii 2 \pi \zeta^2 \sigma_2 + \ii \pi \kappa \left( \sigma_1^2 - \sigma_3^2 \right) } }{\ch (\sigma_2 - \sigma_1 + m_1)\ch (\sigma_2 - \sigma_3 - m_1) \ch (\sigma_1-\sigma_3 +\zeta)} .
\end{equation}
A computation analogous to those in \S\ref{app:evaluation} gives:
\begin{equation}\label{eq:Zcirc3}
	\mz_{\eqref{eq:circular3}} = \frac{1}{\kappa} \frac{1}{\ch (m_+)}\int_{\R} \dd \sigma \frac{ e^{\ii 2 \pi \zeta^{\prime} \sigma } }{\ch (\sigma+ m^{\prime})\ch (\sigma_1 - m^{\prime})} ,
\end{equation}
where 
\begin{equation}
	m_+ := \zeta + \frac{\zeta^2}{\kappa} , \qquad \zeta^{\prime} := 2m_1 -\frac{\zeta^2}{\kappa}  \qquad m^{\prime} := \frac{\zeta^2}{2} .
\end{equation}
\begin{thm}
	Conjecture \ref{myconj1} holds for \eqref{eq:circular3}.
\end{thm}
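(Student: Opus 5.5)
The plan is to start from the reduced form \eqref{eq:Zcirc3} of the partition function, which factorizes as
\begin{equation*}
\mz_{\eqref{eq:circular3}} = \mz_{\csnode{1}{\kappa}}\mz_{\csnode{1}{-\kappa}} \cdot \frac{1}{\ch (m_+)} \cdot \mz_{\gnode{1}\text{---}\fnode{2}}(\zeta^{\prime}; m^{\prime},-m^{\prime}) ,
\end{equation*}
once the integrand is read (correcting the evident typo $\sigma_1 \to \sigma$) as SQED with two hypermultiplets of masses $\mp m^{\prime}$ and FI parameter $\zeta^{\prime}$. The overall $1/\kappa$ is identified with $S_\alpha=\mz_{\csnode{1}{\kappa}}\mz_{\csnode{1}{-\kappa}}$, so $\sV_\alpha=\{1,3\}$ for every $\alpha$. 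The factor $1/\ch(m_+)$ is, exactly as in the proof for Abelian ABJM and for \eqref{eq:ex2node}, the twisted trace on the unique Verma module attached to the origin of the factorized stack component $\left[\C^2/\Z_\kappa\right]$, which admits no smoothing. Its contribution is therefore common to all summands.

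Next I would invoke the known validity of Conjecture \ref{c:GO} for $\gnode{1}\text{---}\fnode{2}$ \cite{Gaiotto:2019mmf,Bullimore:2020jdq}, as reviewed in \S\ref{sec:SQED}. Closing the $\sigma$-contour in the upper half-plane for $\zeta^{\prime}>0$ picks the two towers of poles at $\sigma=\pm m^{\prime}+\ii(\tfrac12+n)$. This yields two summands, each of the form $\ii^{-\tilde{\kappa}_\alpha}e^{\ii 2\pi\pair{\zeta^{\prime}}{m^{\prime}}}\hat{\chi}^{A}_\alpha\hat{\chi}^{B}_\alpha$ with twists $(-1)^{R_A}$ and $(-1)^{R_B}$. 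I would then multiply each summand by the common factor and match the product to the tensor-product trace in \eqref{eq:conj1Z}. On the total Verma module $\scH^A_\alpha\otimes\scH^B_\alpha$ (the $\left[\C^2/\Z_\kappa\right]$ factor tensored with the SQED modules), the twist $e^{-\ii 2\pi(\frac12+R_A)\otimes(\frac12+R_B)}$ with $\kappa_\alpha=1$ reduces to $-\ii(-1)^{R_A}(-1)^{R_B}$, which is the factorized case noted in the remark after Conjecture \ref{myconj1}. The remaining step is to identify the fixed points. Using the partial smoothing described in the text (the analogue of $k=2$ in \S\ref{sec:exk00k} together with a factorized $\C^2/\Z_\kappa$), $\left(\widetilde{M}_A^{\ct_A}\right)_{\mathrm{red}}$ should consist of two points, matching the two towers of poles. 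I would also check that the pairing $\pair{\zeta}{m}$ is a bilinear form in the original parameters, which is immediate since $\zeta^{\prime},m^{\prime},m_+$ are linear in $(\zeta,\zeta^2,m_1)$.

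The main obstacle is the geometric bookkeeping rather than the residue computation. The $1/\ch(m_+)$ factor mixes the FI parameters $\zeta,\zeta^2$, so it is not obviously a trace of a pure A- or pure B-branch module. I would first try to assign $1/\ch(m_+)$ to the $\left[\C^2/\Z_\kappa\right]$ component on the A-side and the SQED Higgs-branch trace to $\scH^B_\alpha$, checking this against the branch analysis of \S\ref{sec:QFTcircular3}. If the grading assignment produced a non-integral twist, I would fall back to taking $\kappa_\alpha=\kappa$ for that factor as in \S\ref{sec:ex2node}, where $R_A$ has eigenvalues in $\kappa\N$. Consistency of the overall phases with the normalization \eqref{eq:phaseZ} would also need to be checked.
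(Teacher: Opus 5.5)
Your skeleton is the paper's own (implicit) argument, the same one written out for \eqref{eq:circular4}. You start from \eqref{eq:Zcirc3} (where $\sigma_1$ should indeed read $\sigma$), identify $1/\kappa$ with the two pure Chern--Simons partition functions, use that SQED with two flavors satisfies Conjecture \ref{c:GO}, and take $\kappa_\alpha=1$ at both fixed points.

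The gap is in the one step that is not bookkeeping: the identification of the factor $1/\ch(m_+)$. You attach it to a $[\C^2/\Z_\kappa]$ component ``exactly as'' in \S\ref{sec:ex2node} and ABJM. But there $R_A$ acts on that Verma module with eigenvalues in $\kappa\N$, which is precisely why $\kappa_\alpha=\kappa$ was needed. With your choice $\kappa_\alpha=1$, the twist is $e^{-\ii 2\pi(\frac12+R_A)\otimes(\frac12+R_B)}=-\ii(-1)^{R_A}(-1)^{R_B}$. It therefore acts on the $n$-th state of that factor by $(-1)^{\kappa n}$. However, $1/\ch(m_+)=\sum_{n\ge 0}(-1)^n e^{-2\pi m_+(n+\frac12)}$ requires $(-1)^n$, so the identification fails for every even $\kappa$.

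Your fallback, taking $\kappa_\alpha=\kappa$ only ``for that factor'', is not available. In \eqref{eq:conj1Z}, $\kappa_\alpha$ is a single integer per fixed point multiplying the whole $(\frac12+R_A)\otimes(\frac12+R_B)$. Setting $\kappa_\alpha=\kappa$ globally would instead introduce non-trivial phases $e^{-\frac{2\pi\ii}{\kappa}R_AR_B}$ on the SQED states and spoil their signs.

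The identification is also geometrically off. An extra $[\C^2/\Z_\kappa]$ tensored with the SQED Coulomb-branch module would give an A-branch of quaternionic dimension two, with coarse space $\C^2/\Z_\kappa\times\C^2/\Z_2$. This is not the $\C^2/\Z_{2\kappa}$ found in \S\ref{sec:QFTcircular3}. The $\Z_\kappa$ is already carried by the SQED factor through the rescaled parameters, as in Corollary \ref{cor:hypertwtr} applied to \eqref{eq:auxcirc3}.

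The missing idea is that $1/\ch(m_+)$ is the partition function of a free hypermultiplet that splits off. This is the unconstrained pair $q_{3\to 1},\tilde q_{3\to 1}$ giving the global $\C^2$ factor in \S\ref{sec:QFTcircular3}. Equivalently, it is the decoupled adjoint hypermultiplet of \eqref{eq:auxcirc3}, whose mass the paper identifies with $m_+$. Its Verma module is the Fock module of the Weyl algebra, with $R$-eigenvalues $n\in\N$, so $(-1)^R$ reproduces $(-1)^n$ exactly.

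With this identification the argument is uniform with $\kappa_\alpha=1$. Each of the two summands is $\frac1\kappa$ times a product of $(-1)^R$-twisted traces on Verma modules of the ADHM quiver $\sQred$ with two flavors. That is the factorized case in the remark after Conjecture \ref{myconj1}, and it is what the paper means by consistency with Corollary \ref{cor:magquiver}(\ref{prop:aux2}).
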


\subsubsection{Magnetic quiver analysis}
The auxiliary quiver for \eqref{eq:circular3} is 
\begin{equation}\label{eq:auxcirc3}
	\sQ^{\prime}_{\text{\eqref{eq:circular3}}} = \begin{tikzpicture}[baseline=-10pt]  \node (f1) at (4,0) {$\fnode{2}$};
		\node (m1) at (2.8,0) {$ \gnode{1}$};
		\path (3,-0.15) edge node[anchor=south] {$\scriptstyle \kappa$} (3.8,-0.15);
		\node (phantom) at (3,-0.15) { \hspace{16pt} };
		\draw (phantom) to[loop left] (phantom);\end{tikzpicture}
\end{equation}
The number of FI and mass parameters in \eqref{eq:auxcirc3} matches exactly with \eqref{eq:circular3}. The adjoint hypermultiplet decouples, thus the moduli space of vacua of \eqref{eq:auxcirc3} will be a global product with a $\C^2$ parameterized by this field.
\begin{itemize}
\item The Coulomb branch is $\mC \left( \sQ^{\prime}_{\text{\eqref{eq:circular3}}} \right) =\left[ \C^2 / \Z_{2\kappa} \right]$, and can be partially resolved into two locally $A_{\kappa-1}$ singularities.
\item Higgs branch $\mH \left( \sQ^{\prime}_{\text{\eqref{eq:circular3}}} \right)$ is a $\Z_{\kappa}$-gerbe over $\mH \left( \gnode{1} \text{---} \fnode{2}\right)= \C^2/\Z_2$.
\end{itemize}
It is shown in \S\ref{sec:QFTcircular3} that the underlying algebraic varieties match $M_A,M_B$.\par
The partition function of \eqref{eq:auxcirc3} equals \eqref{eq:Zcirc3}, identifying $m_+$ with the mass of the adjoint hypermultiplet. The result is consistent with Corollary \ref{cor:magquiver}(\ref{prop:aux2}), where for this case $\sQred_{\text{\eqref{eq:circular3}}}$ is the ADHM quiver with two fundamental hypermultiplets.
\begin{thm}
	Conjecture \ref{myconjmag} holds for \eqref{eq:circular3}.
\end{thm}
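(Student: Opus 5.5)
The statement has two parts: the equality of sphere partition functions, and the identification of the branches as stacks and gerbes. I would prove the partition function equality first, because it is a finite computation, and then identify the branches by reducing to the structural results of \S\ref{sec:highercharge}.

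\emph{Partition function.} For $\sQ^{\prime}_{\text{\eqref{eq:circular3}}}$ I would use \eqref{eq:ZQpgAb} with one gauge variable $\sigma$. The adjoint hypermultiplet has zero gauge charge, so its factor $1/\ch(\cdot)$ comes out of the integral, and I give it mass $m_+$. The two flavours of charge $\kappa$ get masses $\pm m^{\prime\prime}$ and the FI parameter is $\zeta^{\prime\prime}$. This gives
\begin{equation*}
	\mz_{\sQ^{\prime}_{\text{\eqref{eq:circular3}}}} = \frac{1}{\ch(m_+)} \int_{\R}\dd\sigma \frac{e^{\ii 2\pi \zeta^{\prime\prime}\sigma}}{\ch(\kappa\sigma+m^{\prime\prime})\ch(\kappa\sigma-m^{\prime\prime})} .
\end{equation*}
Exactly as in the proof of Corollary \ref{cor:hypertwtr}, the change of variables $\sigma\mapsto\sigma/\kappa$ produces the prefactor $1/\kappa$ and rescales $\zeta^{\prime\prime}\mapsto\zeta^{\prime\prime}/\kappa$. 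Setting $\zeta^{\prime\prime}=\kappa\zeta^{\prime}$ and $m^{\prime\prime}=m^{\prime}$ then reproduces \eqref{eq:Zcirc3} term by term.

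This needs \eqref{eq:Zcirc3} itself, which I take from the appendix-type manipulation. The steps are: shift $\sigma_1,\sigma_3$ to diagonalize the Gaussian $\ii\pi\kappa(\sigma_1^2-\sigma_3^2)$; apply the Fourier identity $1/\ch(x)=\int\dd\tau\, e^{\ii 2\pi x\tau}/\ch(\tau)$ to the edge $1\to 3$; do the Gaussian integrals; and integrate out the decoupled centre-of-mass combination, which gives the factor $1/(\kappa\ch(m_+))$. Finally, the count of parameters on both sides is three, which fixes the map $(\zeta,\zeta^2,m_1)\mapsto(m_+,\zeta^{\prime},m^{\prime})$.

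\emph{A-branch.} I would repeat the vacuum analysis of \S\ref{sec:QFTlin} with $k=2$, adding the extra edge $3\to 1$ connecting the $\pm\kappa$ nodes. The moment map and superpotential equations force $q_e\tilde q_e=z$ on the two edges through $v=2$, and $\kappa\varphi_1=\kappa\varphi_3=z$. The extra bifundamental, together with the diagonal vector multiplet, then behaves like the free sector of Abelian ABJM, and parametrizes a factorized $\C^2$ that matches the decoupled adjoint of $\sQ^{\prime}$. The remaining generators $u_0,u_k,z$ satisfy $u_0u_k=z^{2\kappa}$, so the coarse space is $\C^2/\Z_{2\kappa}\times\C^2$.

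On the $\sQ^{\prime}$ side, Proposition \ref{prop:Cisquot} with $\sQred$ the ADHM quiver with two flavours gives $\mC(\sQ^{\prime})=[\mC(\sQred)/\Z_\kappa]$. Its coarse space is $\C^2\times\C^2/\Z_{2\kappa}$, using the argument of \S\ref{sec:higersqed} with $k=2$. To upgrade the match from coarse spaces to stacks, I would argue as in \S\ref{sec:ex2node}: the $\Z_\kappa$ 1-form symmetry of \eqref{eq:circular3}, derived as for ABJM \cite{Bergman:2020ifi}, supplies the generic $\Z_\kappa$ stabilizer, and this is the same as the $\Gamma_\kappa=\Z_\kappa$ of Proposition \ref{prop:Cisquot}.

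\emph{B-branch.} On the B-branch I would solve \eqref{eq:monoeqlin} by $q_e=\tilde q_e=0$ on the edges through $v=2$, leaving $T^\ast\mathbb{P}^1$-type data independent of $\kappa$, together with the free $\C^2$. This is compared with Proposition \ref{prop:HBgerbe}, which says that $\mH(\sQ^{\prime})$ is a $\Z_\kappa$-gerbe over $\mH(\gnode{1}\text{---}\fnode{2})\times\C^2$. The gerbe class on both sides again comes from the same 1-form symmetry. As a cross-check, I would match the Coulomb and Higgs limits of the superconformal index against the Hilbert series of \cite{Li:2023ffx}.

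\emph{Main obstacle.} I expect the hardest step to be the stacky identification rather than the coarse spaces or $\mz$. Matching coarse moduli spaces and partition functions does not by itself fix the stabilizers or the gerbe class. I would first show that the $\Z_\kappa$ acting trivially on $M_A$ and $M_B$ in \eqref{eq:circular3} is generated by the same holonomy that Proposition \ref{prop:Cisquot} identifies as $\Gamma_\kappa$. Failing that, I would fall back on the index comparison for small $\kappa$, as in the earlier examples.
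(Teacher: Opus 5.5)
Your partition-function argument is the paper's: the neutral loop edge of $\sQ^{\prime}$ factors out as $1/\ch(m_+)$, and rescaling $\sigma$ by $\kappa$ turns the charge-$\kappa$ integral into \eqref{eq:Zcirc3}. Your strategy for the branches is also the paper's: coarse spaces from the vacuum equations, and stack and gerbe structure from Propositions \ref{prop:Cisquot} and \ref{prop:HBgerbe} together with the $\Z_\kappa$ 1-form symmetry. The paper does not establish the stacky match any more rigorously than you propose.

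There is, however, a concrete error: you put the decoupled $\C^2$ in \emph{both} branches. Since $\sQ^{\prime}$ has rank one, $\mC(\sQ^{\prime})$ has quaternionic dimension one. The loop edge at a $U(1)$ node has charge zero, so its $\C^2$ lies in $\mH(\sQ^{\prime})$ only. Likewise, $\mC(\sQred)$ for the rank-one ADHM quiver with two flavours is $\C^2/\Z_2$, not $\C^2\times\C^2/\Z_2$. Proposition \ref{prop:Cisquot} therefore gives coarse space $\C^2/\Z_{2\kappa}$.

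On the Chern--Simons side, the moment maps at $v=1,3$ give $\kappa\varphi_1=\kappa\varphi_3=z-w$ with $w=q_{3\to1}\tilde q_{3\to1}$, not $z$. This is what makes the superpotential equations for the edge $3\to1$ automatic and leaves $(q_{3\to1},\tilde q_{3\to1})$ free. That edge comes from the NS5-brane, so it is of the opposite type to $1\to2$ and $2\to3$. Its $\C^2$ is therefore rotated by the same R-symmetry factor as the B-branch generators $(u_\pm,\varphi_2)$. Your own matching of it with the adjoint of $\sQ^{\prime}$, a Higgs-branch field, says the same, exactly as in ABJM where that adjoint reproduces $\mM_B$. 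Points where both this $\C^2$ and the $\C^2/\Z_{2\kappa}$ coordinates are nonzero lie on a mixed branch.

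So $M_A=\C^2/\Z_{2\kappa}$, and your A-branch match of $\C^2/\Z_{2\kappa}\times\C^2$ with $\C^2\times\C^2/\Z_{2\kappa}$ is an agreement between two incorrect spaces. Dropping the $\C^2$ from the A side repairs it. Your B-side comparison with a $\Z_\kappa$-gerbe over $\mH(\sQred)\cong\C^2\times\C^2/\Z_2$ is then fine.

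Relatedly, in your stacky step the 1-form symmetry cannot supply a \emph{generic} $\Z_\kappa$ stabilizer on the A-branch. $\Gamma_\kappa\subset\ct_C$ acts effectively on $\mC(\sQred)$, with weights $\pm1$ on $u_\pm$. Hence $[\mC(\sQred)/\Gamma_\kappa]$ is an orbifold with trivial generic stabilizer. The trivially acting $\Z_\kappa$, i.e.\ the gerbe band, sits on the B side. What you need to match is $\Gamma_\kappa$ with the orbifold structure of $\mM_A$, and the band of $\mH(\sQ^{\prime})$ with that of $\mM_B$.
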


\subsubsection{Field theory analysis}
\label{sec:QFTcircular3}
\begin{table}
\centering
\begin{tabular}{|l|c c|c|}
	\hline
		 & $q_{e}$ & $\tilde{q}_{e}$ & $V_{(\lambda_1,\lambda_2,\lambda_3)}$ \\
		\hline
		$U(1)_{\kappa}$ & $\delta_{e, 1 \to 2} - \delta_{e, 3 \to 1}$ & $-\delta_{e, 1 \to 2}+\delta_{e, 3 \to 1}$ & $-\kappa \lambda_1$ \\
		$U(1)_{0}$ & $\delta_{e, 2 \to 3}-\delta_{e, 1 \to 2}$ & $- \delta_{e, 2 \to 3}+\delta_{e, 1 \to 2}$ & $0$ \\
		$U(1)_{-\kappa}$ & $\delta_{e, 3 \to 1} -\delta_{e, 2 \to 3}$ & $-\delta_{e, 3 \to 1}+\delta_{e, 2 \to 3}$ & $\kappa \lambda_3$ \\
		\hline
	\end{tabular}
\caption{Charges of the chiral multiplets and bare monopole operators in the three-node Chern--Simons-matter quiver \eqref{eq:circular3}.}
\label{tab:ops3nodecirc}
\end{table}

The field theoretic study of $M_A, M_B$ is in part similar to \cite[Sec.4.2]{Assel:2017eun}. To lighten the exposition, only the complex moment map is discussed; the real moment map is dealt with in exactly the same way.\par
The scalars in the chiral multiplets corresponding to the edges in \eqref{eq:circular3} are denoted $(q_e,\tilde{q}_e)$, the complex scalars in the vector multiplet $v \in \{1,2,3\}$ are denoted $\varphi_v$; finally, $V_{\lambda}$ denotes the bare monopole operator with magnetic charge $\lambda =(\lambda_1, \lambda_2, \lambda_3)$. The gauge charges are summarized in Table \ref{tab:ops3nodecirc}.\par
The complex moment map equations read:
\begin{subequations}\label{eq:monentcircular3}
\begin{align}
	q_{1\to 2} \tilde{q}_{1\to 2} - q_{3\to 1} \tilde{q}_{3\to 1} &= \kappa \varphi_1, \\
	q_{2\to 3} \tilde{q}_{2\to 3} - q_{1\to 2} \tilde{q}_{1\to 2} &= 0 \\
	q_{3\to 1} \tilde{q}_{3\to 1} - q_{2\to 3} \tilde{q}_{2\to 3} &= - \kappa \varphi_3 ,
\end{align}
\end{subequations}
supplemented by the critical locus of the canonical superpotential, 
\begin{equation}\label{eq:dWcircular3}
	q_e \left( \varphi_{\mathsf{h}(e)} - \varphi_{\mathsf{t}(e)}\right) = 0 , \qquad \tilde{q}_e \left( \varphi_{\mathsf{t}(e)} - \varphi_{\mathsf{h}(e)}\right) = 0 ,
\end{equation}
for all $e\in \sQ_1$. The discussion of monopole operators is very similar to \cite[Sec.4.2]{Assel:2017eun}, and one can check that 
\begin{equation}
	u_{\pm} := V_{(0, \pm 1 , 0)} , \qquad u_1 := V_{(1,1,1)} (q_{1\to 2}q_{2\to 3} )^{\kappa} , \qquad u_2 := V_{(-1,-1,-1)} (\tilde{q}_{1\to 2}\tilde{q}_{2\to 3} )^{\kappa} 
\end{equation}
constitute a set of gauge-invariant generators. There are additional constraints (from \cite[Eq.(2.8)]{Assel:2017eun})
\begin{subequations}\label{eq:monoeqcirc3}
\begin{align}
	u_{\pm} q_{1\to 2} &=0=u_{\pm} \tilde{q}_{1\to 2} \\
	u_{\pm} q_{2\to 3} &=0=u_{\pm} \tilde{q}_{2\to 3} .
\end{align}
\end{subequations}\par
It is convenient to define the shorthand notation 
\begin{equation}
	z := q_{1\to 2} \tilde{q}_{1\to 2} = q_{2\to 3} \tilde{q}_{2\to 3} , \qquad w := q_{3\to 1} \tilde{q}_{3\to 1} .
\end{equation}
Then \eqref{eq:monentcircular3} implies
\begin{equation}
	\kappa \varphi_1 = \kappa \varphi_3  = z-w .
\end{equation}
This condition inserted in \eqref{eq:dWcircular3} automatically solves the equations involving $q_{3\to 1},\tilde{q}_{3\to 1}$, leaving them unconstrained. Since $q_{3\to 1},\tilde{q}_{3\to 1}$ do not appear in \eqref{eq:monoeqcirc3}, the moduli space of vacua is a global product with a $\C^2$ parameterized by these variables. For the rest, there are two solutions.
\begin{itemize}
\item One solution has $u_{\pm}=0$, and is parametrized by $u_1, u_2, z $, where the remaining equations of \eqref{eq:dWcircular3} are solved setting $\kappa \varphi_2= z$. Observe that there is one relation, 
	\begin{equation}
		u_1 u_2 = z^{2\kappa} .
	\end{equation}
	This relation can be partly resolved by turning on the parameter $m_1$, which modifies it into 
	\begin{equation}
		u_1 u_2 = \left( z^{\kappa} -m_1\right)\left( z^{\kappa} +m_1\right).
	\end{equation}
	Therefore 
	\begin{equation}
		\C \left[ M_{A,\eqref{eq:circular3}} \right] = \C \left[ u_1,u_2,z \right] / \left( u_1 u_2 - z^{2\kappa} \right) \quad \Longrightarrow \quad M_{A,\eqref{eq:circular3}} = \C^2/\Z_{2\kappa} ,
	\end{equation}
	and it only admits a partial smoothing, with two locally $A_{\kappa-1}$ singularities.
\item Alternatively, one solves \eqref{eq:monoeqcirc3} by 
	\begin{equation}
		q_{1\to 2}=\tilde{q}_{1\to 2}=q_{2\to 3}=\tilde{q}_{2\to 3} =0 ,
	\end{equation}
	which forces $u_1=0=u_2$ and $z=0$. \eqref{eq:dWcircular3} is then satisfied, leaving $u_{\pm}, \varphi_2$ as generators. This branch is identical to the one analyzed in \cite[Sec.4.2]{Assel:2017eun}, where it is argued that there is one relation \cite[Eq.(4.8)]{Assel:2017eun}
	\begin{equation}
		u_+u_-= \varphi_2^2 ,
	\end{equation}
	whence $M_{B,\eqref{eq:circular3}}=\C^2/\Z_2$.
\end{itemize}

\subsubsection{Brane analysis}
\label{sec:circular3brane}
There are two ways to realize \eqref{eq:circular3} using brane configurations in Type IIB string theory. One uses a circular D3-brane intersecting one NS5-brane and two $(1,\kappa)$-branes; alternatively, the circular D3-brane intersects two NS5-branes and one $(1,\kappa)$-brane. The two configurations differ in the exchange of A-branch and B-branch. Here the former setup is considered.\par
The A-branch is one-quaternionic dimensional, associated with the sliding of the D3-brane along the NS5-brane. The location at which the D3-brane intersects the two $(1,\kappa)$-branes produces a singularity $\C^2 / \Z_{2\kappa}$, which can be partly smoothed by misaligning the two $(1,\kappa)$-branes, producing two singular points of type $A_{\kappa-1}$.\par
One component of the B-branch is analogous to the Higgs branch of SQED, arising from sliding a D3-brane segment suspended between the two $(1,\kappa)$-branes. Additionally, there is a universal factor for circular configurations, analogous to ABJM theory (see \cite{Assel:2017eun}).

\subsection{Abelian affine A-type quiver with \texorpdfstring{($\kappa, 0, \dots, 0,-\kappa,0$)}{(k,0,...,0,k,0)}}
Consider next the Abelian Chern--Simons-matter theory based on an affine A-type Dynkin diagram:
\begin{equation}\label{eq:circular4}
	\begin{tikzpicture}[baseline=0]
			\node (a) at (-2,0) {$\csnode{1}{\kappa}$};
			\node (b) at (-1,0) {$\csnode{1}{0}$};
			\node (o) at (0,1) {$\csnode{1}{0}$};
			\node (e) at (0,0) {$\cdots$};
			\node (c) at (1,0) {$\csnode{1}{0}$};
			\node (d) at (2,0) {$\csnode{1}{-\kappa}$};
			\draw (b) -- (a);
			\draw (e) -- (b);
			\draw (e) -- (c);
			\draw (d) -- (c);
			\draw (o) -- (a);
			\draw (o) -- (d);
			\node[Violet] (n1) at (a.south) {$\scriptstyle v=0$};
			\node[Violet] (n2) at (b.south) {$\scriptstyle v=1$};	
			\node[Violet] (n3) at (c.south) {$\scriptstyle v=k-1$};
			\node[Violet] (n4) at (d.south) {$\scriptstyle v=k$};	
			\node[anchor=west, Violet] (n0) at (0.1,1) {$\scriptstyle v=\infty$};
		\end{tikzpicture}
\end{equation}
The colored label indicates the conventions for the vertices. This is an affine version of the quiver \eqref{eq:ex3node0} from \S\ref{sec:exk00k}, obtained by adding one node with vanishing Chern--Simons levels and two edges to close the necklace.\par
The discussion of A- and B-branches is deferred to \S\ref{sec:circular4brane}, with the aid of the brane setup. The sub-quiver \eqref{eq:ex3node0} provides $k$ independent FI parameters, parametrized as in \S\ref{sec:exk00k} by $\{\zeta^{v} \}_{v=0}^{k}$ subject to $\sum_{v=0}^{k} \zeta^{v}=0$, and one mass parameter $m$ which is associated for definiteness with the right-most edge. In addition, \eqref{eq:circular4} has one FI parameter $\zeta^{\infty}$ from the additional node, and one mass parameter $m_{\infty}$ such that the hypermultiplets corresponding to the two edges $\infty \to 0$ and $\infty \to k$ have masses $\pm m_{\infty}$.\par
\medskip
It is shown in \S\ref{app:4nodecirc} that the partition function of \eqref{eq:circular4} equals
\begin{equation}
\label{eq:Zcirc4}
	\mz_{\eqref{eq:circular4}} = \mz_{\csnode{1}{\kappa}} \mz_{\csnode{1}{-\kappa}}\mz_{\gnode{1} \text{---} \fnode{2} } (\zeta^{\prime} , m^{\prime})  \mz_{\gnode{1} \text{---} \fnode{k}} (\zeta^{\prime\prime}, m^{\prime\prime}) 
\end{equation}
with, on the right-hand side, parameters given by:
\begin{equation}
\label{eq:circ4param}
	 \zeta^{\prime}= 2m_{\infty}, \quad \zeta^{\prime \prime} =m, \quad m^{\prime}=\zeta^{\infty}, \quad m^{\prime\prime}_i = \sum_{v=i}^{k-1} \zeta^{v} .
\end{equation}
\begin{thm}
	Conjecture \ref{myconj1} holds for \eqref{eq:circular4}.
\end{thm}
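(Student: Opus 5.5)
The plan is to reduce the statement to the factorized form \eqref{eq:Zcirc4}, exactly as was done for \eqref{eq:ex3node0} via \eqref{eq:AbelianCStoSQED}. I take \eqref{eq:Zcirc4} as given (its derivation is deferred to \S\ref{app:4nodecirc}). The right-hand side is then the product of three pieces: the prefactor $\mz_{\csnode{1}{\kappa}} \mz_{\csnode{1}{-\kappa}} = 1/\kappa$, and two ordinary SQED partition functions, with $k=2$ and with general $k$. Both SQED theories satisfy Hypothesis \ref{hyp:mava}. By \S\ref{sec:SQED} (following \cite{Gaiotto:2019mmf,Bullimore:2020jdq}), each therefore admits the expansion \eqref{eq:ZSQEDres}, a sum over its fixed points of $\ii^{-\tilde\kappa}e^{\ii2\pi\pair{\cdot}{\cdot}}$ times the product of twisted traces $\hat\chi^{\mC}\hat\chi^{\mH}$ with twists $(-1)^{R}$.

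Next I multiply the two expansions. This produces a double sum over pairs $(\beta,\gamma)$, with $\beta\in\{1,2\}$ and $\gamma\in\{1,\dots,k\}$, giving $2k$ terms in total. I match these pairs with the points of $(\widetilde M_A^{\ct_A})_{\mathrm{red}}$. Using the brane picture of \S\ref{sec:circular4brane}, the A-branch is expected to be a product of $\C^2/\Z_2$ (coming from the $\zeta^{\infty}$ sector) with $\left[\C^2/\Z_{k\kappa}\right]$. The second factor partially smooths into $k$ points of type $A_{\kappa-1}$, while the first resolves to $T^\ast\mathbb P^1$ with its two fixed points. This accounts for the $2k$ reduced fixed points. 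For each pair I set $\scH^A_\alpha=\scH^{A}_{\beta}\otimes\scH^{A}_{\gamma}$, and similarly for $\scH^B_\alpha$. The bilinear pairing is the sum of the two SQED pairings, rewritten through the change of variables \eqref{eq:circ4param}. I then choose $\kappa_\alpha=1$ and $\sV_\alpha=\{0,k\}$ for every $\alpha$, so that $S_\alpha$ reproduces the prefactor $1/\kappa$ as required by \eqref{eq:conj1S}.

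The twist is handled using the remark after Conjecture \ref{myconj1}: when $\kappa_\alpha=1$, the operator $e^{-\ii2\pi(\frac12+R_A)\otimes(\frac12+R_B)}$ reduces to $-\ii(-1)^{R_A}(-1)^{R_B}$. Moreover, $R_A$ and $R_B$ on the tensor products are the sums of the gradings on the factors. So the product of the two factorized SQED traces is exactly a trace of the form \eqref{eq:conj1Z}, and the leftover constant phases combine into $\ii^{-\tilde\kappa_\beta-\tilde\kappa_\gamma}$ and the $-\ii$. As in Corollary \ref{cor:hypertwtr}, the gradings here are the ones inherited from $\sQred$, in line with Corollary \ref{cor:magquiver}(\ref{prop:aux2}).

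I expect the main obstacle to be this bookkeeping of overall roots of unity and normalizations. The phases $\ii^{-\tilde\kappa}$ together with the normalization convention \eqref{eq:phaseZ} have to be absorbed consistently into the trace, and the variables $x,y$ must be chosen so that they are the correct equivariant parameters for $\ct_A$ and $\ct_B$ of the Chern--Simons-matter theory rather than those of the auxiliary SQEDs. If the phases do not combine cleanly, I would absorb the discrepancy into the definition of the pairing or into a chamber-dependent reordering, as was done for $\tilde\kappa_\alpha$ in \S\ref{sec:SQED}.
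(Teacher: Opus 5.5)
Your proposal follows the paper's proof: both reduce, via the factorization \eqref{eq:Zcirc4}, to the prefactor $\mz_{\csnode{1}{\kappa}}\mz_{\csnode{1}{-\kappa}}=1/\kappa$ times two SQED partition functions, and then invoke Conjecture \ref{c:GO} for SQED, so that each of the $2k$ summands is a factorized trace on a tensor product of SQED Verma modules (in the paper, the A-trace splits into the $\widetilde{\C^2/\Z_2}$ and $\widetilde{\C^2/\Z_k}$ pieces, and the B-trace into the $\widetilde{\C^2/\Z_2}$ and $T^\ast\mathbb{P}^{k-1}$ pieces). Your choices $\kappa_\alpha=1$, $\sV_\alpha=\{0,k\}$ and gradings inherited from $\sQred$ are more explicit than the paper, which likewise does not track the $\alpha$-dependent roots of unity; note only that such constant phases cannot be absorbed into the pairing, since it is bilinear, but only into the chamber-dependent reordering of the masses used in \S\ref{sec:SQED}.
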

\begin{proof}
	It follows immediately from \eqref{eq:Zcirc4} and the fact that SQED satisfies Conjecture \ref{c:GO}. In this case, not only the twisted trace on $\scH_{\alpha}^{A} \otimes \scH_{\alpha}^{B}$ given in \eqref{eq:conj1Z} factorizes into the product of twisted traces $\hat{\chi}_{\alpha}^{A} \hat{\chi}_{\alpha}^{B}$, but moreover each twisted trace itself factorizes:
	\begin{subequations}
	\begin{align}
		\hat{\chi}_{\alpha}^{A,\eqref{eq:circular4}} (e^{-2\pi \zeta}) &= \hat{\chi}_{\alpha}^{\widetilde{\C^2 / \Z_2}} (e^{-2\pi \zeta^{\prime}}) \hat{\chi}_{\alpha}^{\widetilde{\C^2 / \Z_k}} (e^{-2\pi \zeta^{\prime\prime}}) , \\ 
		\hat{\chi}_{\alpha}^{B,\eqref{eq:circular4}} (e^{-2\pi m}) &= \hat{\chi}_{\alpha}^{\widetilde{\C^2 / \Z_2}} (e^{-2\pi m^{\prime}}) \hat{\chi}_{\alpha}^{T^{\ast}P^{k-1}} (e^{-2\pi m^{\prime\prime}}) .
	\end{align}
	\end{subequations}
\end{proof}

\subsubsection{Magnetic quiver analysis}
The auxiliary quiver for \eqref{eq:circular4} is 
\begin{equation}\label{eq:auxcirc4}
\sQ^{\prime}_{\text{\eqref{eq:circular4}}} = \begin{tikzpicture}[baseline=0]  \node (f1) at (2,0) {$\underset{ \ }{\fnode{k}}$};
	\node (g1) at (1,0) {$ \csnode{1}{}$};
	\node (g2) at (-0.5,0) {$ \csnode{1}{}$};
	\path (g1) edge node[anchor=south] {$\scriptstyle \kappa$} (f1);
	\path (g1) edge[bend left] (g2);
	\path (g1) edge[bend right] (g2);
	\end{tikzpicture}
\end{equation}
This theory has two FI parameters, say $\eta^{\prime,1}, \eta^{\prime,2}$; $(k-1)$ independent mass parameters for the charge-$\kappa$ hypermultiplets of charge $\kappa$, and one mass parameter $m^{\prime}$ for the bi-fundamental hypermultiplets.
\begin{itemize}
\item The Coulomb branch contains locally an $A_{k\kappa-1}$ singularity, which can be partially resolved into $k$ locally $A_{\kappa-1}$ singularities with resolution parameters $m^{\prime\prime}_i$. Moreover, even after Higgsing the gauge node connected to the framing, there is an $A_1$ singularity which can be smoothed by turning on the mass parameters $m^{\prime}_2$.
\item The Higgs branch factorizes into the Higgs branch of SQED with charge-$\kappa$ hypermultiplets, and the contribution by the bi-fundamental hypermultiplets.
\end{itemize}\par
The partition function of \eqref{eq:auxcirc4} is:
\begin{subequations}
\begin{align}
	\mz_{\eqref{eq:auxcirc4}} &= \int_{\R^2} \frac{ \dd \sigma^{\prime}_1 \dd \sigma^{\prime}_2 ~e^{\ii 2 \pi \sum_{v=1}^{2}\eta^{\prime,v} \sigma^{\prime}_v } }{\ch \left(\sigma^{\prime}_2 - \sigma^{\prime}_1 + m^{\prime} \right)\ch \left(\sigma^{\prime}_2 - \sigma^{\prime}_1  \right)} \prod_{i=1}^{k} \frac{1}{ \ch \left(\kappa\sigma^{\prime}_1 + m^{\prime\prime}_i \right)} \\
		&= \frac{1}{\kappa} \int_{-\infty}^{+\infty} \dd \sigma_1 e^{\ii 2 \pi \left( \frac{\eta^{\prime ,1} +\eta^{\prime, 2}}{\kappa} \right) \sigma_1} \prod_{i=1}^{k}\frac{ 1}{\ch \left(\sigma_1 + m^{\prime\prime}_i \right)}  \int_{-\infty}^{+\infty} \dd \sigma_2 \frac{e^{\ii 2 \pi \eta^{\prime, 2} \sigma_2} }{\ch \left(\sigma_2 + m^{\prime} \right)\ch \left(\sigma_2 \right)} ,
\end{align}
\end{subequations}
where the conventions here are chosen for comparison with \S\ref{app:4nodecirc}; more symmetric assignments of masses would differ just by an overall factor $e^{\ii 2 \pi \eta^{\prime,1} m^{\prime}}$.
The second line stems from the change of variables $\sigma_2 = \sigma_2^{\prime}-\sigma^{\prime}_1 $ followed by $\sigma_1= \kappa \sigma_1^{\prime}$. This equals \eqref{eq:Zcirc4} with identification of the FI parameters 
\begin{equation}
	\zeta^{\prime\prime} =  \frac{\eta^{\prime}_1+\eta^{\prime}_2}{\kappa} , \qquad \zeta^{\prime} = \eta^{\prime}_2 .
\end{equation}\par
\begin{thm}
	Conjecture \ref{myconjmag} holds for \eqref{eq:circular4}.
\end{thm}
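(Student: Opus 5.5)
The statement to prove is Conjecture \ref{myconjmag} for \eqref{eq:circular4}, against the auxiliary quiver \eqref{eq:auxcirc4}. I would check the two items of the conjecture separately, following the pattern of the previous examples.

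\textbf{Item (ii), the partition functions.} This is essentially in hand. I would combine the factorization \eqref{eq:Zcirc4}, established in \S\ref{app:4nodecirc}, with the two-line evaluation of $\mz_{\eqref{eq:auxcirc4}}$ just given, which is obtained by the changes of variables $\sigma_2=\sigma_2^{\prime}-\sigma_1^{\prime}$ and $\sigma_1=\kappa\sigma_1^{\prime}$. The prefactor $1/\kappa$ equals $\mz_{\csnode{1}{\kappa}}\mz_{\csnode{1}{-\kappa}}$. The $\sigma_1$-integral is $\mz_{\gnode{1}\text{---}\fnode{k}}$ with FI parameter $(\eta^{\prime,1}+\eta^{\prime,2})/\kappa$ and masses $m^{\prime\prime}_i$. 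The $\sigma_2$-integral is $\mz_{\gnode{1}\text{---}\fnode{2}}$ with FI parameter $\eta^{\prime,2}$ and mass difference $m^{\prime}$. Matching with \eqref{eq:circ4param} then fixes $\eta^{\prime,2}=2m_\infty$, $\eta^{\prime,1}+\eta^{\prime,2}=\kappa m$, the bi-fundamental mass to $\zeta^{\infty}$, and $m^{\prime\prime}_i=\sum_{v\ge i}\zeta^v$. The only care needed is the overall phase $e^{\ii 2\pi\eta^{\prime,1}m^{\prime}}$, which is absorbed by the symmetric mass assignment, and checking that the parameter counts agree: $k+1$ FI-type and $k+1$ mass-type parameters on both sides.

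\textbf{Item (i), the stacks.} I would carry out the vacuum-equation analysis of \S\ref{sec:circular4brane}, generalizing \S\ref{sec:QFTcircular3} and \S\ref{sec:QFTlin}. The complex moment maps force $q_e\tilde q_e=z$ along the chain $0\to\dots\to k$ and $w:=q_{\infty\to0}\tilde q_{\infty\to0}=q_{\infty\to k}\tilde q_{\infty\to k}$, with $\kappa\varphi_0=\kappa\varphi_k$ expressed through $z$ and $w$. On one solution the chain monopoles $u_v^\pm$ vanish. The dressed monopoles $u_0=V_{(1,\dots,1)}\prod q_e^\kappa$ and $u_k$ then satisfy $u_0u_k=z^{k\kappa}$, which is partially resolved by the $m^{\prime\prime}_i$ into $k$ locally $A_{\kappa-1}$ points. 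The node $\infty$ contributes an independent $A_1$ factor, with monopoles $V_{\pm e_\infty}$ and a relation $u_+u_-=\varphi_\infty^2$ deformed by $m_\infty$. On the other solution, the chain contributes $T^\ast\mathbb{P}^{k-1}$ as in \S\ref{sec:QFTlin}, while the two $\infty$-edges contribute the ADHM-like factor.

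Next I would compare with the Coulomb branch of \eqref{eq:auxcirc4}. By Proposition \ref{prop:Cisquot} applied to the node carrying the charge-$\kappa$ edges, its Coulomb branch is the $\Z_\kappa$ quotient stack of $\mC\bigl(\gnode{1}\text{---}\fnode{k}\bigr)\times\mC\bigl(\gnode{1}\text{---}\fnode{2}\bigr)$, after decoupling via the same change of variables. By Proposition \ref{prop:HBgerbe}, its Higgs branch is a $\Z_\kappa$-gerbe over $T^\ast\mathbb{P}^{k-1}\times\widetilde{\C^2/\Z_2}$. For the gerbe and stacky structure on the Chern--Simons side, I would use the $\Z_\kappa$ 1-form symmetry argument of \S\ref{sec:ex2node}, following \cite{Bergman:2020ifi}. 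As further checks I would match Hilbert series, or superconformal indices for small $k$ and $\kappa$, as in earlier subsections.

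\textbf{Main obstacle.} The hard part is item (i) at the level of stacks rather than coarse spaces. In particular, the B-branch identification for general $k$ relies on extending \cite[Sec.4.2]{Assel:2017eun} beyond $k=2$, and the gerbe class must be pinned down. I would first confirm the coarse spaces and the count of resolution parameters, then argue that the gerbe is fixed by the 1-form symmetry, exactly as in the previous examples.
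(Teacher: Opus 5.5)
Your item (ii) is the paper's argument. The changes of variables $\sigma_2=\sigma_2^{\prime}-\sigma_1^{\prime}$ and $\sigma_1=\kappa\sigma_1^{\prime}$ split the partition function of \eqref{eq:auxcirc4} into $\frac{1}{\kappa}\,\mz_{\gnode{1}\text{---}\fnode{k}}\,\mz_{\gnode{1}\text{---}\fnode{2}}$, which is then matched with \eqref{eq:Zcirc4} through \eqref{eq:circ4param}.

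For item (i) your route differs. The paper does not solve vacuum equations for \eqref{eq:circular4}. It describes $\mC(\sQ^{\prime})$ and $\mH(\sQ^{\prime})$ directly, noting that $\sQred$ is the disjoint union of $\gnode{1}\text{---}\fnode{2}$ and $\gnode{1}\text{---}\fnode{k}$, and compares them with the A- and B-branches read off from the brane configuration of \S\ref{sec:circular4brane}. Your moment-map analysis, modelled on \S\ref{sec:QFTlin} and \S\ref{sec:QFTcircular3}, is a legitimate alternative at the same heuristic level. It gives explicit generators, relations and deformation parameters. The brane picture is quicker and sidesteps extending \cite{Assel:2017eun} beyond $k=2$, which you correctly single out as the weak point. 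In neither version is the stacky or gerbe structure actually derived; it rests on the $\Z_\kappa$ 1-form symmetry heuristic.

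Three things need fixing.

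\textbf{Parameter count.} $\sQ^{\prime}$ has a rank-two gauge group and $k+2$ hypermultiplets, hence two FI parameters and $k$ masses, not ``$k+1$ and $k+1$''. On the Chern--Simons side these are matched by the two masses $m,m_\infty$ and the $k$ parameters $\zeta^1,\dots,\zeta^{k-1},\zeta^\infty$. The remaining combination of $\zeta^0,\dots,\zeta^k$ drops out after the Gaussian integrals over $\sigma_0,\sigma_k$ in \S\ref{app:4nodecirc}.

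\textbf{Incompatible dictionaries for the $A_1$ factor.} In (ii) you set $\eta^{\prime,2}=2m_\infty$ and the bifundamental mass equal to $\zeta^\infty$. So the $A_1$ factor of $\mC(\sQ^{\prime})$ has equivariant parameter $2m_\infty$ and is resolved by $\zeta^\infty$. In (i) you build the A-branch $A_1$ from $V_{\pm e_\infty}$, whose topological charge is weighted by $\zeta^\infty$ and whose relation you deform by $m_\infty$. Under one dictionary, the partial resolutions of $\mM_A$ and $\mC(\sQ^{\prime})$ then do not correspond, so your planned check of resolution parameters would fail. The fix is cheap but must be made explicitly. $\gnode{1}\text{---}\fnode{2}$ is self-mirror: with symmetric masses its partition function is $\frac{\sin(\pi\zeta a)}{2\sinh(\pi\zeta)\sinh(\pi a)}$, symmetric in FI and mass difference, and otherwise symmetric up to a phase. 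So you may swap the roles of $2m_\infty$ and $\zeta^\infty$ in (ii). Alternatively, adopt the paper's assignment in (i).

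\textbf{B-branch factor.} The $\infty$-edges must reproduce the factor $\mH(\gnode{1}\text{---}\fnode{2})\cong\C^2/\Z_2$ of $\mH(\sQ^{\prime})$ that you yourself wrote down, with the $\Z_\kappa$ gerbe on top. They should not give an ``ADHM-like'' factor: unlike \eqref{eq:auxcirc3}, the quiver \eqref{eq:auxcirc4} has no adjoint hypermultiplet.
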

The result agrees with Corollary \ref{cor:magquiver}(\ref{prop:aux2}), where now the quiver $\sQred$ is disconnected and given by the union of $\gnode{1} \text{---} \fnode{2}$ and $\gnode{1} \text{---} \fnode{k}$.

\subsubsection{Brane analysis}
\label{sec:circular4brane}
The theory \eqref{eq:circular4} is realized by a circular D3-brane intersecting two consecutive NS5-branes and $k$ consecutive $(1,\kappa)$-branes. The A-brach is two-quaternionic dimensional, associated with the sliding of the two D3-brane segments on either side of the interval between consecutive NS5-branes. One of the two sides has a singularity due to the presence of the $(1,\kappa)$-branes, which can be partially smoothed into $k$ singular points by misaligning them.\par
One can perform a partial Higgsing by sliding the D3-brane segments suspended between pairs of $(1,\kappa)$-branes. The motion of the remaining D3-brane is analogous to the Coulomb branch of SQED with two fundamental hypermultiplets.

\subsection{Abelian affine A-type quiver with alternating \texorpdfstring{$\pm \kappa$}{k}}
The next example is the affine version of the case studied in \S\ref{sec:exaltpmk}. It consists of an affine A-type Dynkin diagram with a total of $\lvert \sQ_0\rvert =2\ell$ nodes, and with Chern--Simons levels with alternating sign, $\kappa_{v} = (-1)^{v-1} \kappa$:
\begin{equation}\label{eq:circalt}
		\begin{tikzpicture}[baseline=0]
			\node (a) at (-2,0) {$\csnode{1}{\kappa}$};
			\node (b) at (-1,0) {$\csnode{1}{-\kappa}$};
			\node (o) at (0,1) {$\csnode{1}{-\kappa}$};
			\node (e) at (0,0) {$\cdots$};
			\node (c) at (1,0) {$\csnode{1}{-\kappa}$};
			\node (d) at (2,0) {$\csnode{1}{\kappa}$};
			\draw (b) -- (a);
			\draw (e) -- (b);
			\draw (e) -- (c);
			\draw (d) -- (c);
			\draw (o) -- (a);
			\draw (o) -- (d);
			\node[Violet] (n1) at (a.south) {$\scriptstyle v=1$};
			\node[Violet] (n2) at (b.south) {$\scriptstyle v=2$};	
			\node[Violet] (n4) at (d.south) {$\scriptstyle v=2\ell -1$};	
			\node[anchor=west, Violet] (n0) at (0.1,1) {$\scriptstyle v=2\ell$};
		\end{tikzpicture}
\end{equation}\par
A computation almost identical to that in \S\ref{app:altcalc} shows that the partition function equals:
\begin{align} \label{eq:Zcircalt}
	 \mz_{\eqref{eq:circalt}} &= \int_{\R^{\ell}} \left( \prod_{v=1}^{\ell}\dd \tilde{\sigma}_v  \frac{e^{\ii 2 \pi \zeta^{v}\tilde{\sigma}_v }}{\ch (\kappa \tilde{\sigma}_v ) }\right) \frac{1}{\ch (\tilde{\sigma}_{2\ell}-\tilde{\sigma}_{1} -m_{2\ell} )} \prod_{v=1}^{\ell-1}\frac{1}{\ch (\tilde{\sigma}_v-\tilde{\sigma}_{v+1} -m_v )}. 
\end{align}
\begin{thm}
	Conjecture \ref{myconj1} holds for \eqref{eq:circalt}.
\end{thm}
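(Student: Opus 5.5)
We will prove the theorem for \eqref{eq:circalt} by recognizing \eqref{eq:Zcircalt} as the sphere partition function of an Abelian quiver with non-minimal charges, and then invoking Corollary \ref{cor:Zhighertrace}. The approach mirrors the linear case of \S\ref{sec:exaltpmk}; the affine closure only adds one extra $1/\ch$ factor.

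\textbf{Step 1: identify the auxiliary quiver.} The integrand of \eqref{eq:Zcircalt} has $\ell$ factors $1/\ch(\kappa\tilde\sigma_v)$, which are charge-$\kappa$ edges to a rank-one framing node. It also has $\ell$ factors $1/\ch(\tilde\sigma_v-\tilde\sigma_{v+1}-m_v)$, cyclically including $v=\ell\to 1$, which are bifundamental edges of charge one. Hence $\mz_{\eqref{eq:circalt}}=\mz_{\sQ^{\prime}}$, where $\sQ^{\prime}$ is the necklace of $\ell$ unitary nodes, each carrying one charge-$\kappa$ flavor edge. This is the affine analogue of \eqref{eq:aux4node} and agrees with Table \ref{tab:myQprime}.

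\textbf{Step 2: apply Corollary \ref{cor:Zhighertrace}.} The charge matrix $\kk$ of $\sQ^{\prime}$ has rank $\ell$ by Lemma \ref{lem:rankminorK}, since framing is present. The corollary then expresses $\mz_{\sQ^{\prime}}$ as a sum over immersions $\alpha$ as in \eqref{eq:alphaembed}. Each term is a trace on $\scH^{A}_{\alpha}\otimes\scH^{B}_{\alpha}$ with twist $e^{-\ii2\pi\Delta_\alpha}$.

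\textbf{Step 3: match the twist to \eqref{eq:conj1Z}.} Let $\sV_\alpha$ be the set of nodes $v$ whose pole is taken from the flavor edge $\ch(\kappa\tilde\sigma_v)$. Then $\det(\kk_{\alpha(w)}{}^v)=\pm\kappa^{|\sV_\alpha|}$, up to the sign from the cycle minor. This determinant prefactor, together with the $\sqrt\kappa$ normalizations from \S\ref{app:altcalc}, must reproduce $S_\alpha=\prod_{v\in\sV_\alpha}\mz_{\csnode{1}{\pm\kappa}}$, exactly as the inverse power of $\kappa$ did in the $\ell=2$ computation \eqref{eq:Zex4nodesum}.

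To compute $\hat\kappa_\alpha$ from \eqref{eq:hatkappachargek}, observe that inverting the minor replaces $\tilde\sigma_v$ by $(\text{pole})/\kappa$ for $v\in\sV_\alpha$, and by a unit-coefficient expression otherwise. The entries of $\hat\kappa_\alpha$ therefore lie in $\{0,\pm1,\pm\kappa,\pm1/\kappa\}$. Consequently:
\begin{itemize}
\item modulo the integer contributions, the phase $e^{-\ii2\pi\Delta_\alpha}$ reduces to $e^{-\ii\frac{2\pi}{\kappa_\alpha}(\frac12+R_A)\otimes(\frac12+R_B)}$, with $\kappa_\alpha\in\{1,\kappa\}$;
\item $R_A$ and $R_B$ act on the Verma modules by rescaled versions of $J_C,J_H$, as in the three-node case \eqref{eq:ZJYa}--\eqref{eq:ZJYb};
\item the terms linear in $n$ and $\nu$ in \eqref{eq:twistop} are absorbed into redefinitions $x\mapsto \pm x^{1/\kappa_\alpha}$ of the fugacities, following \eqref{eq:Zex4a}.
\end{itemize}

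\textbf{Step 4: geometric interpretation.} Finally, we identify $(\widetilde M_A^{\ct_A})_{\rm red}$ with the set of admissible $\alpha$, using Proposition \ref{prop:stack123}(\ref{stack2})--(\ref{stack3}) applied to $\sQ^{\prime}$. The fixed points with $\sV_\alpha\neq\emptyset$ are the images of the $\ell$ locally $A_{\kappa-1}$ points.

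\textbf{Main obstacle.} The hardest step is the closing cycle edge. Unlike the linear case, the bifundamental minors now form a cycle, so some choices of $\alpha$ yield singular minors or minors whose inverse is not unimodular. We must check two things. First, that $\det$ is never zero for the selected $\alpha$; a choice of $\ell$ cycle edges alone is rank-deficient and is excluded automatically. Second, that every $\hat\kappa_\alpha$ entry has the form $(\text{integer})/\kappa_\alpha$ with a single $\kappa_\alpha$ per fixed point, so that the twist takes the precise form \eqref{eq:conj1Z} rather than the more general $\Delta_\alpha$. I would first verify this explicitly for $\ell=2$ (the ABJM-like necklace) and then argue inductively by contracting a non-$\sV_\alpha$ node.
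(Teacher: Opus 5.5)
Steps 1--2 are correct and amount to what the paper does. Its proof (``analogous to \S\ref{sec:exaltpmk}'') is the residue expansion of \eqref{eq:Zcircalt}. Once \eqref{eq:Zcircalt} is recognised as $\mz_{\sQ^{\prime}}$ for \eqref{eq:auxcircalt}, Corollary \ref{cor:Zhighertrace} packages that expansion; this is Corollary \ref{cor:magquiver}(\ref{prop:aux1}).

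The gap is the first bullet of Step 3, which is false, not merely unproven. Take the fixed point at which every node takes its flavor pole, $\kappa\tilde\sigma_v=\ii(\tfrac12+n_v)$ for all $v$, so that all $\ell$ cycle edges are non-chosen. Expanding $1/\ch(\tilde\sigma_v-\tilde\sigma_{v+1}-m_v)$ gives the phase $\exp\bigl(\mp\frac{2\pi\ii}{\kappa}(\tfrac12+\nu_v)(n_v-n_{v+1})\bigr)$, with indices mod $\ell$. Hence the bilinear part of $\Delta_\alpha$ modulo $1$ is $\frac1\kappa\sum_{v}\pm\nu_v(n_v-n_{v+1})$. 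Its matrix is a signed cyclic difference matrix. For $\ell\ge3$ it has a $2\times2$ minor equal to $\pm1$ (rows $v=1,2$, columns $w=1,3$), so its rank modulo any $\kappa\ge2$ is at least $2$. By contrast, the bilinear part of $\frac{1}{\kappa_\alpha}(\frac12+R_A)\otimes(\frac12+R_B)$ is $\frac{1}{\kappa_\alpha}R_A\otimes R_B$, which has rank one. Rescaling fugacities only absorbs linear terms. So for $\ell\ge3$ this summand is not of the literal form \eqref{eq:conj1Z}.

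The check you single out as the main obstacle is the wrong one. A common denominator is automatic: incidence matrices of directed graphs are totally unimodular, so every admissible minor equals $\pm\kappa^{\lvert\sV_\alpha\rvert}$ and all entries of $\hat\kappa_\alpha$ lie in $\kappa^{-1}\Z$. The real obstruction is rank, and an induction that contracts a node outside $\sV_\alpha$ cannot reach this fixed point. What your route actually proves is the form of Corollary \ref{cor:Zhighertrace}: a trace on $\scH^{A}_{\alpha}\otimes\scH^{B}_{\alpha}$ twisted by $e^{-\ii2\pi\Delta_\alpha}$. That is the looser sense in which the paper asserts the conjecture here and in \S\ref{sec:exaltpmk}; already \eqref{eq:Zex4a} carries $(n_1-n_2)(\frac12+\nu)$ rather than $(\frac12+R_A)(\frac12+R_B)$. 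You should state the result at that level rather than claim the precise twist.

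Two smaller points. First, your $S_\alpha$ bookkeeping is off by a square. The $\sqrt{\kappa}$ factors of \S\ref{app:altcalc} are already cancelled by the Jacobian of $\tau^{2v-1}=\kappa\tilde\sigma_v$, so the whole prefactor is $1/\lvert\det\rvert=\kappa^{-\lvert\sV_\alpha\rvert}$. One factor $\mz_{\csnode{1}{\pm\kappa}}$ per auxiliary node gives only $\kappa^{-\lvert\sV_\alpha\rvert/2}$. The subset of the nodes of \eqref{eq:circalt} entering \eqref{eq:conj1S} must contain both nodes $2v-1,2v$ for each flavor pole $v$, since that factor descends from the $U(1)_{\kappa}$--$U(1)_{-\kappa}$ edge between them. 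This matches the $\kappa^{-2}$ and $\kappa^{-1}$ coefficients in \eqref{eq:Zex4nodesum}.

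Second, in Step 4 every admissible $\alpha$ has $\sV_\alpha\neq\emptyset$, and there are more of them than the $\ell$ singular points you invoke (five for $\ell=2$). Moreover, identifying them with $(\widetilde M_A^{\ct_A})_{\mathrm{red}}$ via Proposition \ref{prop:stack123} presupposes Conjecture \ref{myconjmag}(i) for this necklace. The paper supports that conjecture here only through the equality of partition functions.
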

\begin{proof}Analogous to \S\ref{sec:exaltpmk}.
\end{proof}

\subsubsection{Magnetic quiver analysis}
To \eqref{eq:circalt}, the prescription of \S\ref{sec:magQprime} associates the auxiliary affine A-type quiver 
\begin{equation}\label{eq:auxcircalt}
	\sQ^{\prime}_{\text{\eqref{eq:circalt}}} \ = \ \begin{tikzpicture}[baseline=-1] 
	\node (gl) at (-2,0.2) {$\gnode{1}$}; 
	\node (fl) at (-2,-1) {$\underset{1}{\Box}$};  
	\node (gc) at (-1,0.2) {$\gnode{1}$}; 
	\node (fc) at (-1,-1) {$\underset{1}{\Box}$}; 
	\node (d) at (0,0) {$\cdots$}; 
	\node (gr) at (1,0.2) {$\gnode{1}$}; 
	\node (fr) at (1,-1) {$\underset{1}{\Box}$}; 
	\draw (-1.75,0)--(-1.25,0); \draw (-0.75,0)--(-0.25,0); \draw (0.25,0)--(0.75,0);
	\path (gl) edge node[anchor=east] {$\scriptstyle \kappa$} (fl);
	\path (gc) edge node[anchor=east] {$\scriptstyle \kappa$} (fc);
	\path (gr) edge node[anchor=east] {$\scriptstyle \kappa$} (fr);
	\path (gr) edge[bend right] (gl);
	\node[anchor=south] (u) at (-0.5,0.75) {$\overbrace{\hspace{3.3cm}}^{\ell}$};
	\end{tikzpicture} .
\end{equation}
The partition function of \eqref{eq:auxcircalt} equals \eqref{eq:Zcircalt}, supporting Conjecture \ref{myconjmag}.

\subsection{Abelian four-node quiver not of Dynkin type}
\label{sec:nonlin1}

\begin{table}
\centering
\begin{tabular}{|c|c|c|c|c|c|}
\hline
$e$ label & 0 & 1 & 2 & 3 & 4 \\
\hline
$e$ in \eqref{eq:nonlinear1} & $\csnode{1}{\kappa} \text{---} \csnode{1}{-\kappa}$ & $\csnode{1}{\kappa} \text{---} \csnode{1}{0}$ & $\csnode{1}{0} \text{---} \csnode{1}{-\kappa}$  & $\csnode{1}{-\kappa} \text{---} \csnode{1}{0}$ & $\csnode{1}{0} \text{---} \csnode{1}{\kappa}$ \\
$\mathsf{t}(e) \to \mathsf{h}(e)$ & $1 \to 3$ & $1 \to 2$ & $2 \to 3$ & $3 \to 4$ & $4 \to 1$ \\
\hline
\end{tabular}
\caption{Labelling of the edges in the quiver \eqref{eq:nonlinear1}.}
\label{tab:nonlinedge}
\end{table}

To test Conjecture \ref{myconj1} beyond A-type quivers, consider the Abelian theory
\begin{equation}\label{eq:nonlinear1}
	\begin{tikzpicture}[baseline=0]
			\node (a) at (1,0) {$\csnode{1}{\kappa}$};
			\node (b) at (0,1) {$\csnode{1}{0}$};
			\node (c) at (-1,0) {$\csnode{1}{-\kappa}$};
			\node (d) at (0,-1) {$\csnode{1}{0}$};
			\draw (c) -- (a);
			\draw (a) -- (b);
			\draw (b) -- (c);
			\draw (c) -- (d);
			\draw (d) -- (a);
			
			\node[anchor=east,Violet] (nl) at (-1.1,0) {$\scriptstyle v=3$};
			\node[anchor=west,Violet] (nr) at (1.1,0) {$\scriptstyle v=1$};
			\node[anchor=west,Violet] (u) at (.1,1) {$\scriptstyle v=2$};
			\node[anchor=west,Violet] (d) at (.1,-1) {$\scriptstyle v=4$};
		\end{tikzpicture}
\end{equation}
The vertices are labeled according to the colored index next to them, and the edges are labeled according to Table \ref{tab:nonlinedge}.\par
The theory has $\ct_A \cong \C^{\ast}$, given by the anti-diagonal combination of the topological symmetries from the two $U(1)$ gauge nodes without Chern--Simons couplings; the corresponding FI parameters are $(+\zeta, -\zeta)$. Additionally, $\ct_B \cong \C^{\ast}_0 \times \C^{\ast}_2\times \C^{\ast}_4$, where $\C^{\ast}_0$ acts by scaling $e=0$, $\C^{\ast}_2$ acts by scaling $e=1$ and $e=2$, and $\C^{\ast}_4$ acts by scaling $e=3$ and $e=4$.\par
\begin{thm}
	Conjecture \ref{myconj1} holds for \eqref{eq:nonlinear1}.
\end{thm}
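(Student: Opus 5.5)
The plan follows the same strategy as in the previous examples: reduce $\mz_{\eqref{eq:nonlinear1}}$, by Gaussian integrations over the Chern--Simons nodes, to an Abelian integral with non-minimal charges. Then apply Corollary \ref{cor:Zhighertrace}, or evaluate residues directly, and match each summand with the form \eqref{eq:conj1Z}.

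Concretely, the partition function is
\begin{equation*}
\mz_{\eqref{eq:nonlinear1}} = \int_{\R^4} \prod_{v=1}^{4}\dd\sigma_v \frac{e^{\ii\pi\kappa(\sigma_1^2-\sigma_3^2) + \ii 2\pi\zeta(\sigma_2-\sigma_4)}}{\prod_{e=0}^{4}\ch\left(\sigma_{\mathsf{h}(e)}-\sigma_{\mathsf{t}(e)}-m_e\right)} ,
\end{equation*}
with masses constrained to the directions of $\ct_B$, namely $m_1=m_2$ and $m_3=m_4$. The first step is to pass to the variables $\sigma_\pm=\sigma_1\pm\sigma_3$, so that the Chern--Simons term becomes $\ii\pi\kappa\,\sigma_+\sigma_-$. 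Next, Fourier transform each $1/\ch$ factor via $1/\ch(s)=\int\dd\tau\, e^{\ii2\pi\tau s}/\ch(\tau)$, exactly as in \S\ref{app:altcalc}. The integrals over $\sigma_2,\sigma_4$ and over one of $\sigma_\pm$ then produce delta functions, while the remaining Gaussian integral over the other combination produces the factor $1/\kappa=\mz_{\csnode{1}{\kappa}}\mz_{\csnode{1}{-\kappa}}$ (up to the phase normalization \eqref{eq:phaseZ}). The expected outcome is a two-dimensional integral of the form
\begin{equation*}
\mz_{\eqref{eq:nonlinear1}} = \frac{1}{\kappa}\int_{\R^2}\dd\tau_1\dd\tau_2\,\frac{e^{\ii2\pi(\cdots)}}{\ch(\kappa\tau_1+\cdots)\,\ch(\tau_1-\tau_2-\zeta)\,\ch(\tau_2-\ldots)\cdots},
\end{equation*}
that is, the partition function of an Abelian theory $\sQ^{\prime}$ with some charge-$\kappa$ edges. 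This is the non-Dynkin analogue of Table \ref{tab:myQprime}: one gauge node per cycle of the quiver, with edges inherited from $e=0,\dots,4$. This reduction is the step I expect to be the main obstacle. Because the graph contains two cycles through the Chern--Simons nodes, the delta functions mix the dual variables nontrivially, and one must check that the resulting charge matrix $\kk$ has only entries $0,\pm1,\pm\kappa$ and that the FI/mass identifications are $\mathbb{Q}$-linear and invertible.

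Once the reduced form is established, the evaluation proceeds by the proof of \eqref{eq:ZQprime}. Close the contours in a fixed chamber (say $\zeta>0$ and $m_0,m_2,m_4>0$), and sum over immersions $\alpha$ as in \eqref{eq:alphaembed}, which label the points of $(\widetilde{M}_A^{\ct_A})_{\mathrm{red}}$. For each $\alpha$, compute $\hat\kappa_\alpha$ from \eqref{eq:hatkappachargek}. When the selected minor involves the charge-$\kappa$ edge, the entries are multiples of $1/\kappa$ and the phase $e^{-\ii2\pi\Delta_\alpha}$ reorganizes, as in \eqref{eq:ZJYa}, into $e^{-\ii\frac{2\pi}{\kappa}(\frac12+R_A)\otimes(\frac12+R_B)}$. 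This identifies $\kappa_\alpha=\kappa$, with $R_A,R_B$ acting on the Verma modules with eigenvalues in the appropriate multiples of $\N$. Otherwise the minor is unimodular, $\kappa_\alpha=1$, and the summand factorizes as in Conjecture \ref{c:GO}. The prefactor $1/\det$ fixes $\sV_\alpha$: all of $\sQ_0$ (giving $S_\alpha=1/\kappa$, or $1/\kappa^{2}$ after including the determinant) or only the Chern--Simons nodes, as in the three-node example.

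Finally, I would cross-check the count of fixed points and the local singularity types against a brane or field-theory analysis of $M_A$. Equality of the number of summands with $\lvert(\widetilde{M}_A^{\ct_A})_{\mathrm{red}}\rvert$ confirms the identification of the Verma modules.
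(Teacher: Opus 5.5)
Your overall route (Fourier-transform each $1/\ch$, integrate out the $\sigma_v$, evaluate by residues) is the one the paper follows. However, the outcome you predict for the reduction is wrong. The matching you build on it --- vacua with $\kappa_\alpha=\kappa$, prefactors $1/\kappa^2$, fractional phases as in \eqref{eq:ZJYa} --- does not occur for \eqref{eq:nonlinear1}.

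The missing observation is that the Chern--Simons Gaussians do not see the dual variables at all. Let $\tau_e$ be dual to the edge $e$ of Table \ref{tab:nonlinedge}, in the paper's conventions for the integrand. The $\sigma_2$ and $\sigma_4$ integrals impose $\tau_2=\tau_1+\zeta$ and $\tau_4=\tau_3-\zeta$. After that, the sources of $\sigma_1$ and $\sigma_3$ are $A$ and $-A$, with $A=\tau_4-\tau_0-\tau_1$. This is because every $\tau_e$ enters with opposite signs at head and tail, and the FI parameters $\pm\zeta$ sum to zero. Therefore
\begin{equation*}
\int_{\R^2}\dd\sigma_1\,\dd\sigma_3\, e^{\ii\pi\kappa(\sigma_1^2-\sigma_3^2)+\ii 2\pi A(\sigma_1-\sigma_3)}=\sqrt{\tfrac{\ii}{\kappa}}\,e^{-\ii\pi A^2/\kappa}\cdot\sqrt{-\tfrac{\ii}{\kappa}}\,e^{\ii\pi A^2/\kappa}=\frac{1}{\kappa}
\end{equation*}
for every $A$. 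In your variables, the $\sigma_+$ integral gives $\frac{2}{\kappa}\delta(\sigma_-)$, which fixes $\sigma_-$ rather than constraining the $\tau_e$. So your count is off by one: three dual variables $\tau_0,\tau_1,\tau_3$ survive, no $\kappa$ remains in the integrand, and they decouple:
\begin{equation*}
\mz_{\eqref{eq:nonlinear1}}=\frac{1}{\kappa}\,\frac{1}{\ch(m_0)}\int_{\R}\dd\tau_1\,\frac{e^{\ii 2\pi(m_1+m_2)\tau_1}}{\ch(\tau_1)\ch(\tau_1+\zeta)}\int_{\R}\dd\tau_3\,\frac{e^{\ii 2\pi(m_3+m_4)\tau_3}}{\ch(\tau_3)\ch(\tau_3-\zeta)} ,
\end{equation*}
up to a background phase bilinear in $\zeta$ and the masses. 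This is the paper's formula: the two pure Chern--Simons partition functions, times a free hypermultiplet, times two copies of $U(1)$ with two hypermultiplets. These have FI parameters $m_1+m_2$ and $m_3+m_4$, and masses $\pm\zeta$.

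The conjecture then holds with the simplest data. For all $2\times 2$ vacua, $\sV_\alpha=\{1,3\}$, $S_\alpha=1/\kappa$ and $\kappa_\alpha=1$. Each summand factorizes as in Conjecture \ref{c:GO}, because each SQED factor does. There is no irremovable charge-$\kappa$ edge, and no connected two-node $\sQ^{\prime}$ with an edge like $\ch(\tau_1-\tau_2-\zeta)$.

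Such edges arise in \eqref{eq:ex3nodeJY} and \eqref{eq:ex4node} because there the Gaussians leave cross terms $\tau\tau^{\prime}/\kappa$ between dual variables. With exactly two Chern--Simons nodes of levels $\pm\kappa$, the quadratic terms cancel, as in \S\ref{sec:exk00k} and \S\ref{sec:nonlin2}. Even if you used the $\delta$-function to eliminate $\tau_0$ and kept $\sigma_+$, that variable would appear in a single $\ch$ factor. Its charge is then removed by rescaling, and every $\hat\kappa_\alpha$ is an integer.

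So the step you flagged as the main obstacle is exactly where the proof becomes immediate. As written, the rest of your argument sets out to verify fractional twists and $1/\kappa^2$ prefactors that are not present.
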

\begin{proof}
Assigning redundant masses $\{m_e\}_{e\in \sQ_1}$, the partition function is 
{\small \begin{equation}
	\mz_{\eqref{eq:nonlinear1}} = \int_{\R^4}  \frac{\dd \sigma_1 \dd \sigma_2\dd \sigma_3\dd \sigma_4 ~e^{\ii \pi \kappa \left(\sigma_1^2 -\sigma_3^2 \right) + \ii 2 \pi \zeta (\sigma_2 - \sigma_4) }}{\ch (\sigma_3-\sigma_1+m_0) \ch (\sigma_2-\sigma_1+m_1) \ch (\sigma_3-\sigma_2+m_2)  \ch (\sigma_4-\sigma_3+m_3) \ch (\sigma_1-\sigma_4+m_4)} .
\end{equation}}
A computation entirely analogous to those in \S\ref{app:evaluation} shows that
\begin{equation}
	\mz_{\eqref{eq:nonlinear1}} =  \mz_{\csnode{1}{\kappa}} \mz_{\csnode{1}{-\kappa}} \cdot \left. \frac{e^{\ii 2 \pi \zeta (m_4-m_2)}}{\ch (m_0)} \mz_{\gnode{1}\text{---}\fnode{2}} (\zeta^{\prime}_1 , m^{\prime}) \mz_{\gnode{1}\text{---}\fnode{2}} (\zeta^{\prime}_3 , -m^{\prime}) \right\rvert_{\scriptscriptstyle \zeta^{\prime}_1 = m_1+m_2, \zeta^{\prime}_3 = m_3+m_4, m^{\prime} = \zeta} .
\end{equation}
Thus, the partition function is a product of pure Chern--Simons partition functions, multiplied by the partition function of a disconnected quiver. The latter factorizes into the product of three partition functions, one for each connected component of the quiver, and each component satisfies Conjecture \ref{c:GO}.
\end{proof}

\subsection{Abelian quiver not of Dynkin type with \texorpdfstring{($\kappa, 0, \dots, 0,-\kappa$)}{(k,0,...,0,k)}}
\label{sec:nonlin2}
Another example of a Chern--Simons-matter quiver not of Dynkin type is:
\begin{equation}\label{eq:nonlinear2}
	\begin{tikzpicture}[baseline=0]
			\node (t) at (0,1.25) {$\csnode{1}{-\kappa}$};
			\node (c2) at (1.5,0) {$\csnode{1}{0}$};
			\node (c1) at (-1.5,0) {$\csnode{1}{0}$};
			\node (cr) at (3,0) {$\csnode{1}{0}$};
			\node (cl) at (-3,0) {$\csnode{1}{0}$};
			\node (b) at (0,-1.25) {$\csnode{1}{\kappa}$};
			\node (d) at (0,0) {$\cdots$};
			\draw (c1) -- (t);
			\draw (c1) -- (b);
			\draw (c2) -- (t);
			\draw (c2) -- (b);
			\draw (cl) -- (t);
			\draw (cl) -- (b);
			\draw (cr) -- (t);
			\draw (cr) -- (b);
			
			\node[anchor=south west,Violet] (nt) at (0.1,1.25) {$\scriptstyle v=\tilde{0}$};
			\node[anchor=north west,Violet] (nb) at (0.1,-1.25) {$\scriptstyle v=0$};			
			\node[anchor=east,Violet] (nl) at (-3.1,0) {$\scriptstyle v=1$};
			\node[anchor=west,Violet] (nr) at (3.1,0) {$\scriptstyle v=2\ell$};
			\node[anchor=west,Violet] (nl) at (-1.4,0) {$\scriptstyle v=2$};
		\end{tikzpicture}
\end{equation}
with gauge group of even rank $U(1)_{\kappa} \times U(1)^{2\ell} \times U(1)_{-\kappa}$. The vertices are labeled according to the colored index next to them. This theory admits $2\ell$ independent mass parameters $\{m_i\}_{i=1,\dots,2\ell}$, which, in terms of the redundant mass parameters $\{ m_e \}_{e\in \sQ_1}$ associated with the scaling action on the edges, read 
\begin{equation}\label{eq:massnonlin2}
	m_i = \sum_{v=1}^{2\ell} \left( m_{0 \to v}  - m_{\tilde{0} \to v} \right)\delta_{v,i} .
\end{equation}
The FI parameters at the vertices $v=1,\dots, 2\ell$ are redundant, and are subject to the constraint $\sum_{v=1}^{2\ell} \zeta^{v} =0$.\par
\medskip
The partition function is shown in \S\ref{app:nonlin2} to be 
\begin{equation}\label{eq:Znonlinear2}
	\mz_{\eqref{eq:nonlinear2}} = \mz_{\csnode{1}{\kappa}} \mz_{\csnode{1}{-\kappa}} \prod_{i=1}^{2\ell} \left. \mz_{\gnode{1}\text{---}\fnode{2}} (\zeta^{\prime}_i , m^{\prime}_i) \right\rvert_{\zeta^{\prime}_i= m_i , m^{\prime}_i= \zeta^{i}} .
\end{equation}
This form is consistent with Conjecture \ref{myconj1}.

\subsection{Abelian Chern--Simons trinion}
\label{sec:trinion}
Conjecture \ref{myconj1} is based on general features of $\mN=4$ Chern--Simons-matter theories, and does not necessarily require a description in terms of a quiver.\par
As one such tractable example, consider $U(1)_{\kappa_1} \times U(1)_{\kappa_2} \times U(1)_{\kappa_3}$ gauge theory with one hypermultiplet in the tri-fundamental representation, with the Chern--Simons levels subject to 
\begin{equation}\label{eq:ATT1}
	\frac{1}{\kappa_1} +\frac{1}{\kappa_2} +\frac{1}{\kappa_3} =0.
\end{equation}
This is a toy version of the models in \cite{Assel:2022row}, obtained by gauging an Abelian version of the `trinion' theory, depicted as: 
\begin{equation}\label{eq:Abtrinion}
		\begin{tikzpicture}[baseline=0]
			\node (a) at (1,0) {$\csnode{1}{\kappa_1}$};
			\node (b) at (-0.75,0.75) {$\csnode{1}{\kappa_2}$};
			\node (c) at (-0.75,-0.75) {$\csnode{1}{\kappa_3}$};
			\draw (0,0) -- (a);
			\draw (0,0) -- (b);
			\draw (0,0) -- (c);
		\end{tikzpicture}
\end{equation}\par
In this model, $M_{A,\eqref{eq:Abtrinion}}$ is a point, $M_{B,\eqref{eq:Abtrinion}}$ is one-quaternionic dimensional, and there is a flavor symmetry $\ct_B\cong \C^{\ast}$ acting on it with a single fixed point $\left( M_{B,\eqref{eq:Abtrinion}}^{\ct_B}\right)_{\mathrm{red}} = \{ 0\}$.\par
\begin{thm}
	Conjecture \ref{myconj1} holds for \eqref{eq:Abtrinion}.
\end{thm}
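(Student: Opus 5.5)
The plan is to evaluate the sphere partition function of \eqref{eq:Abtrinion} in closed form by Gaussian integration, and then recognise the result as a single twisted trace. Since $M_A$ is a point and $\left( M_B^{\ct_B}\right)_{\mathrm{red}}=\{0\}$, the sum in \eqref{eq:conj1Z} has only one term. I expect the final expression to take the form $S_{\alpha}$ times the twisted trace on the single Verma module of the quantized $\C^2$ (the free hypermultiplet algebra), with $\kappa_{\alpha}=1$.

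\textbf{Setting up the integral.} Turn on FI parameters $\zeta^{v}$ at the three nodes and a mass $m$ for the tri-fundamental hypermultiplet. The partition function is then
\begin{equation*}
\mz_{\eqref{eq:Abtrinion}}=\int_{\R^3}\prod_{v=1}^{3}\dd\sigma_v\, e^{\ii\pi\kappa_v\sigma_v^2+\ii 2\pi\zeta^{v}\sigma_v}\,\frac{1}{\ch(\sigma_1+\sigma_2+\sigma_3-m)} .
\end{equation*}
The integrand depends on the matter only through $s:=\sigma_1+\sigma_2+\sigma_3$. I would therefore insert $1=\int\dd s\,\delta(s-\sum_v\sigma_v)$ and represent the delta function as $\int\dd\tau\, e^{\ii2\pi\tau(s-\sum_v\sigma_v)}$.

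\textbf{Decoupling the nodes.} After this substitution, each $\sigma_v$ appears in a decoupled Fresnel integral. Each one gives $\lvert\kappa_v\rvert^{-1/2}e^{\ii\frac{\pi}{4}\mathrm{sgn}\kappa_v}\,e^{-\ii\pi(\tau-\zeta^{v})^2/\kappa_v}$. Collecting the exponents, the $\tau^2$ term has coefficient $\sum_v 1/\kappa_v$, which vanishes exactly by the $\mN=4$ condition \eqref{eq:ATT1}. This cancellation is the crux of the proof. What survives is:
\begin{itemize}
\item a linear term $e^{\ii2\pi\tau\sum_v\zeta^{v}/\kappa_v}$;
\item a constant phase $e^{-\ii\pi\sum_v(\zeta^{v})^2/\kappa_v}$.
\end{itemize}
The $\tau$-integral then produces $\delta\bigl(s+\sum_v\zeta^{v}/\kappa_v\bigr)$, and integrating over $s$ gives
\begin{equation*}
\mz_{\eqref{eq:Abtrinion}}=\Bigl(\prod_{v=1}^{3}\mz_{\csnode{1}{\kappa_v}}\Bigr)\,e^{\ii\frac{\pi}{4}\sum_v\mathrm{sgn}\kappa_v}\,e^{-\ii\pi\sum_v(\zeta^{v})^2/\kappa_v}\,\frac{1}{\ch\bigl(m+\sum_v\zeta^{v}/\kappa_v\bigr)} .
\end{equation*}

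\textbf{Reading off the data of Conjecture \ref{myconj1}.} I would take $\sV_{\alpha}=\sQ_0$, so that $S_{\alpha}=\prod_v\mz_{\csnode{1}{\kappa_v}}$ by \eqref{eq:conj1S}. The effective equivariant parameter is $m_{\mathrm{eff}}=m+\sum_v\zeta^{v}/\kappa_v$, which is consistent with $\ct_B\cong\C^{\ast}$ being the only nontrivial torus. For $m_{\mathrm{eff}}>0$, expand $1/\ch(m_{\mathrm{eff}})$ geometrically with $y=e^{-2\pi m_{\mathrm{eff}}}$. This gives $\sum_{\nu}y^{\frac12+\nu}(-1)^{\nu}$, which, as in \eqref{eq:chiC2Zn} with $k=1$, is the $(-1)^{R_B}$-twisted trace on the Verma module of the quantized $\C^2$. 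Since $\scH^{A}_{\alpha}$ is one-dimensional (because $M_A$ is a point), we have $R_A=0$ on it. Setting $\kappa_{\alpha}=1$, the operator $e^{-\ii2\pi(\frac12+R_A)\otimes(\frac12+R_B)}$ reduces, up to the overall constant $-\ii$, to $(-1)^{R_B}$, exactly as in the remark following Conjecture \ref{myconj1}.

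\textbf{Main obstacle.} I expect the hardest step to be the bookkeeping of the leftover overall factors. These are the eighth-root phase from the Fresnel integrals, the constant $-\ii$ from the twist, and the quadratic term $e^{-\ii\pi\sum_v(\zeta^{v})^2/\kappa_v}$. The first two need to be absorbed into the normalization convention \eqref{eq:phaseZ}. The quadratic term must be rewritten in the pairing form $e^{\ii2\pi\pair{\zeta}{m}}$, possibly after shifting $m$ and exploiting the redundancy of the parameters. A second, minor point is justifying the distributional exchange of integrals. I would handle this by first giving $\kappa_v$ a small imaginary part, or by treating the $\tau$-integral as a Fourier transform of a Schwartz function in $s$.
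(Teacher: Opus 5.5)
Your proposal is correct and is essentially the paper's argument: integrating out the $\sigma_v$ leaves a $\tau^2$ term with coefficient $\sum_v 1/\kappa_v$, which vanishes by \eqref{eq:ATT1}; the Fresnel phases are cancelled by the convention \eqref{eq:phaseZ}; and the surviving $1/\ch(\cdot)$ of the mass is read off as the single twisted trace, with $S_\alpha$ the product of the three pure $U(1)_{\kappa_v}$ Chern--Simons partition functions. The only difference is that the paper turns on just the mass $m$, and since $\ct_A$ is trivial your FI parameters are redundant (the shift $\sigma_v\mapsto\sigma_v-\zeta^v/\kappa_v$ absorbs them into $m$ at the cost of a $\zeta$-only background phase), so the quadratic term you flag does not need to be matched to a pairing at all.
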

\begin{proof}
Let $m$ denote the mass of the hypermultiplet, equivalently the equivariant parameter for the action of $\ct_B$. The partition function is evaluated as:
\begin{subequations}
\begin{align}
	\mz_{\eqref{eq:Abtrinion}} &=\sqrt{-\ii} \int_{\R^3} \left( \prod_{v=1}^{3}\dd \sigma_v e^{\ii \pi \kappa_v \sigma_v^2 } \right) \frac{1}{\ch (\sigma_1 + \sigma_2 + \sigma_3 + m)} \\
	& =\left( \prod_{v=1}^{3}\frac{1}{ \sqrt{\kappa_v}} \right) \frac{1}{\ch (m)} ,
\end{align}
\end{subequations}
where \eqref{eq:ATT1} and the conventions \eqref{eq:phaseZ} have been used after integrating over the $\sigma_v$ variables. The partition function consists of just one term, which equals a twisted trace on the Verma module for $M_B^{\ct_B}$, and with a coefficient given by the product of three pure $U(1)_{\kappa_v}$ Chern--Simons partition functions. 
\end{proof}\par
\medskip
The computation generalizes straightforwardly.
\begin{thm}
	Conjecture \ref{myconj1} holds for the Chern--Simons-matter theory with gauge group $\prod_{v=1}^{\ell} U(1)_{\kappa_v}$, subject to 
	\begin{equation}\label{eq:ATTgen}
		\sum_{v=1}^{\ell} \frac{1}{\kappa_v} =0 ,
	\end{equation}
	 and one hypermultiplet in the representation of highest weight $( 1,\dots ,1)$.
\end{thm}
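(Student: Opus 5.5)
The plan is to repeat the trinion computation with $\ell$ Gaussian factors in place of three. First write the partition function in the conventions \eqref{eq:phaseZ}:
\begin{equation*}
	\mz = c_\ell \int_{\R^{\ell}} \left( \prod_{v=1}^{\ell} \dd \sigma_v\, e^{\ii \pi \kappa_v \sigma_v^2} \right) \frac{1}{\ch \left( \sum_{v=1}^{\ell} \sigma_v + m \right)} .
\end{equation*}
Here $c_\ell$ is the overall phase normalization, the analogue of $\sqrt{-\ii}$. The only coupling between the variables is through $s:=\sum_v \sigma_v$.

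I would insert $1=\int \dd s\, \delta(s-\sum_v\sigma_v)$ and write the delta function as $\int \dd p\, e^{\ii 2\pi p (s-\sum_v \sigma_v)}$. The $\sigma_v$ integrals then become Fresnel integrals,
\begin{equation*}
	\int \dd \sigma_v\, e^{\ii\pi\kappa_v\sigma_v^2 - \ii 2\pi p\sigma_v} = \frac{e^{\ii \pi\,\mathrm{sgn}(\kappa_v)/4}}{\sqrt{\lvert\kappa_v\rvert}}\, e^{-\ii\pi p^2/\kappa_v} .
\end{equation*}
The product over $v$ gives $\exp\left(-\ii\pi p^2 \sum_v 1/\kappa_v\right)$, and condition \eqref{eq:ATTgen} makes this exponent vanish. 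The $p$-integral then collapses to $\delta(s)$, and we obtain
\begin{equation*}
	\mz = \left(\prod_{v=1}^{\ell}\frac{1}{\sqrt{\kappa_v}}\right)\frac{1}{\ch(m)} ,
\end{equation*}
with the phases absorbed by $c_\ell$ as in the trinion case.

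Alternatively, one can change variables to $s$ and $\ell-1$ other coordinates and complete the square. The Gaussian in the remaining directions has determinant proportional to $\sum_v 1/\kappa_v$ times $\prod_v \kappa_v$, and the coefficient of $s^2$ degenerates exactly when \eqref{eq:ATTgen} holds.

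For the geometric side, I would argue as for \eqref{eq:Abtrinion}. The single hypermultiplet couples only through the sum of the $U(1)$'s, and \eqref{eq:ATTgen} is precisely the $\mN=4$ condition. Hence $M_A$ is a point, $M_B$ is one-quaternionic dimensional with $\ct_B\cong\C^{\ast}$, and $(M_B^{\ct_B})_{\mathrm{red}}=\{0\}$. One checks this with the moment map equations $q\tilde q=\kappa_v\varphi_v$ for all $v$, which force the $\varphi_v$ to be compatible only through the constraint.

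The remaining step is to identify $1/\ch(m) = \sum_{\nu}e^{-2\pi m(\frac12+\nu)}(-1)^{\nu}$ (for $m>0$) as the twisted trace on the unique Verma module at the fixed point. In \eqref{eq:conj1Z} one takes $\kappa_\alpha=1$, $\sV_\alpha=\sQ_0$ and $S_\alpha=\prod_v\mz_{\csnode{1}{\kappa_v}}$. The pairing is trivial and $\scH^A$ is one-dimensional.

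The main obstacle is the formal manipulation of oscillatory Gaussians with indefinite signs of $\kappa_v$: exchanging orders of integration and producing $\delta(s)$ require some care, and one may regularize by $\kappa_v\to\kappa_v+\ii\epsilon\,\mathrm{sgn}\kappa_v$. Bookkeeping the overall eighth-roots of unity to match the normalization \eqref{eq:phaseZ} is routine once the sum of signs is tracked.
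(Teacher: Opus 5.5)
Your proposal follows the paper's proof. The paper states the same geometry ($M_A$ a point, $M_B$ one-quaternionic dimensional with a single $\ct_B$-fixed point), integrates out the $\sigma_v$ using \eqref{eq:ATTgen} to obtain $\bigl(\prod_v \lvert\kappa_v\rvert^{-1/2}\bigr)/\ch(m)$, and reads this as $S_\alpha$ times the twisted trace at that fixed point; your delta-function and Fresnel computation fills in the integration step the paper leaves implicit. One small correction to your rigor remark: regularize with $\kappa_v\to\kappa_v+\ii\epsilon$ for every $v$, because $\kappa_v+\ii\epsilon\,\mathrm{sgn}\,\kappa_v$ makes the Gaussians at negative level grow rather than decay.
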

\begin{proof}
Also in this case, $M_A$ is a point, $M_B$ is one-quaternionic dimensional, and $\ct_B\cong \C^{\ast}$ with $\left( M_B^{\ct_B}\right)_{\mathrm{red}} = \{ 0\}$. Using \eqref{eq:ATTgen}, the partition function evaluates to 
\begin{equation}
	\mz = \left( \prod_{v=1}^{\ell} \frac{1}{\sqrt{\kappa_v}} \right) \frac{1}{\ch (m)} .
\end{equation}
\end{proof}

\subsection{Non-Abelian two-node quiver}
\label{sec:ex2nodeUN}
A tractable non-Abelian example is provided by the non-Abelian version of \eqref{eq:ex2node}, that is,
\begin{equation}\label{eq:ex2nodeUN}
		\csnode{N}{\kappa}\text{---} \csnode{N}{-\kappa}\ .
\end{equation}
$\ct_A \cong \C^{\ast}$ as in the Abelian case, and let $\zeta$ denote the equivariant parameter, taken for definiteness in the chamber $\zeta \in (0,\infty)$. It is further assumed that $\kappa \ge 2N-1$. The Hilbert series of the A-branch is \cite[Eq.(3.16)]{Li:2023ffx}
\begin{equation}\label{eq:ex2HSUN}
	\hs_{\C [ M_{A,\eqref{eq:ex2nodeUN}}] } \left( q;x\right) = \prod_{a=1}^{N}\frac{1-q^{\kappa-a+1}}{(1-q^{a})\left(1-xq^{\frac{\kappa}{2}-a+1}\right)\left(1-x^{-1}q^{\frac{\kappa}{2}-a+1}\right)}  ,
\end{equation}
which implies \cite{Li:2023ffx,Marino:2025uub}
\begin{equation}
	M_{A, \eqref{eq:ex2nodeUN}} = \mC \left( \gnode{N}\text{---}\fnode{\kappa} \right) 
\end{equation}
as singular varieties. In the limit $q \to 1$, \eqref{eq:ex2HSUN} becomes
\begin{equation}
	\lim_{q\to 1} \hs_{\C [ M_{A,\eqref{eq:ex2nodeUN}}] } \left( q;x\right) = (-1)^ N \frac{\kappa !}{N! (\kappa - N)!}  \left( \frac{x^{\frac{1}{2}}}{1-x} \right)^{2N} .
\end{equation}
The combinatorial factor is an integer, and one reads off the (untwisted) trace 
\begin{equation}
	\chi^{A,\eqref{eq:ex2nodeUN}} (x) =  \left( \frac{x^{\frac{1}{2}}}{1-x} \right)^N .
\end{equation}\par
\medskip
Under the assumption $\kappa \ge 2N-1$, the partition function of \eqref{eq:ex2nodeUN} is given by \cite{Nosaka:2017ohr}:
\begin{equation}\label{eq:Z2nodeUN}
	\mz_{\eqref{eq:ex2nodeUN}} (\zeta) = \mz_{\csnode{N}{\kappa}}  \mz_{\csnode{N}{-\kappa}} \prod_{a,b=1}^{N} \ch \left(\frac{ \zeta - \ii (a-b)}{\kappa}\right)^{-1} .
\end{equation}
Using the Cauchy identity, this expression yields
\begin{subequations}
\begin{align}
	\mz_{\eqref{eq:ex2nodeUN}} (\zeta) & =  \mz_{\csnode{N}{\kappa}}  \mz_{\csnode{N}{-\kappa}} ~ x^{\frac{N^2}{2}} \sum_{\lambda} \left( -x\right)^{\lvert \lambda \rvert} s_{\lambda} \left( q^{\rho}\right)^{2} \label{eq:Z2nodeqdima}\\
		&=  \mz_{\csnode{N}{\kappa}}  \mz_{\csnode{N}{-\kappa}}  ~ \sum_{n_1 > n_2 > \cdots > n_N \ge 0} (-1)^{\sum_{a=1}^{N} n_a} x^{\sum_{a=1}^{N} \left( n_a+ \frac{1}{2} \right)} \left( \prod_{1 \le a < b \le N} \frac{ \sin \left( \frac{\pi}{\kappa} \left( n_a - n_b\right)\right) }{ \sin \left( \frac{\pi}{\kappa} \left( a - b\right)\right)} \right)^2 .  \label{eq:Z2nodeqdimb}
\end{align}
\end{subequations}
In \eqref{eq:Z2nodeqdima}, the sum is over all partitions $\lambda$, $s_{\lambda}$ is the Schur polynomial, $x=e^{-2 \pi \frac{\zeta}{\kappa}}$, $q=e^{\ii \frac{2 \pi}{\kappa} }$, and $\rho$ is the Weyl vector of $U(N)$ with $\rho_a = \frac{N+1}{2}-a$. The expression $s_{\lambda} \left( q^{\rho}\right)$ is the quantum dimension of the irreducible $U_q(\mathfrak{gl}_N)$-module labeled by $\lambda$. Changing the summation variables as $n_a = \lambda_a +N-a$ gives \eqref{eq:Z2nodeqdimb}.\par
Formula \eqref{eq:Z2nodeqdimb} supports Conjecture \ref{myconj1}.

\subsection{Non-Abelian two-node quiver with unequal ranks}
\label{sec:ex2nodediff}
The non-Abelian example in \S\ref{sec:ex2nodeUN} generalizes to unequal gauge ranks:
\begin{equation}\label{eq:ex2nodediff}
		\csnode{N_1}{\kappa}\text{---} \csnode{N_2}{-\kappa}\ .
\end{equation}
It is convenient to define $N := \min (N_1, N_2) $, $\nu := \lvert N_1-N_2\rvert $, and to impose $\kappa \ge 2N+\nu -1$.\par
As in \S\ref{sec:ex2nodeUN}, $\ct_A = \C^{\ast}$ and $M_{B, \eqref{eq:ex2nodediff}}$ is a point, while it is shown in \cite[Eq.(3.21)]{Li:2023ffx} that:
\begin{equation}
	\C \left[ M_{A, \eqref{eq:ex2nodediff}} \right]= \C \left[ \mC \left( \gnode{N}\text{---}\fnode{\kappa-\nu} \right) \right] .
\end{equation}\par
The partition function is computed in \cite{Nosaka:2017ohr} and gives:
\begin{equation}\label{eq:Zex2nodeNA}
	\mz_{\eqref{eq:ex2nodediff}} (\zeta) = e^{- \ii \pi \frac{ \nu}{6\kappa}\eta_{\text{fr}} } \mz_{\csnode{N_1}{\kappa}}  \mz_{\csnode{N_2}{-\kappa}} \prod_{a=1}^{N_1} \prod_{b=1}^{N_2}\ch \left( \frac{\zeta}{\kappa} - \frac{\ii}{\kappa} \left( a-b + \frac{\nu}{2} \right)\right)^{-1} .
\end{equation}
The overall phase vanishes when $\nu=0$, and the framing anomaly $\eta_{\text{fr}}$, given explicitly in \cite[Eq.(3.13)]{Nosaka:2017ohr}, is not relevant here.
Expanding in the fugacity $x=e^{-2\pi \frac{\zeta}{\kappa}}$ as in \eqref{eq:Z2nodeqdima}, \eqref{eq:Zex2nodeNA} attains a form consistent with Conjecture \ref{myconj1}. The twist is affected by $\nu \ne 0$.

\subsection{Non-Abelian A-type quiver with \texorpdfstring{($\kappa, 0, \dots, 0,-\kappa$)}{(k,0,...,0,k)}}
\label{sec:NAlin}
The non-Abelian Chern--Simons-matter quiver
\begin{equation}\label{eq:exNAlin}
		\csnode{N}{\kappa}\text{---} \overbrace{\csnode{N}{0} \text{---} \cdots \text{---} \csnode{N}{0}}^{k-1} \text{---} \csnode{N}{-\kappa}\ ,
\end{equation}
generalizes \S\ref{sec:exk00k} to non-Abelian gauge groups and \S\ref{sec:ex2nodeUN} to $A_{k+1}$-type Dynkin quivers.
$\kappa k \ge 2N-1$ is assumed. In this case $\ct_A\cong \C^{\ast}$, with associated FI parameter $\zeta$, and $\ct_B \cong( \C^{\ast})^{k-1}$, with associated redundant mass parameters $\{m_i\}_{i=1,\dots, k}$.\par
\medskip
\begin{lem}\label{lem:ZNAlin}
	The sphere partition function of \eqref{eq:exNAlin} equals 
	\begin{equation}\label{eq:ZNAlin}
		\mz_{\eqref{eq:exNAlin}} = \frac{1}{N!} \int_{\R^N} \prod_{1 \le a < b \le N} \sh (\sigma_a - \sigma_b)^2  \prod_{a=1}^{N} \left( \prod_{i=1}^{k}\frac{1}{\ch (\kappa \sigma_a - m_i)} \right) e^{\ii 2 \pi \zeta \sigma_a} \dd \sigma_a .
\end{equation}
\end{lem}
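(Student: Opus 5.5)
The plan is to reduce the linear quiver integral to a single $U(N)$ integral by working node by node with the Cauchy determinant identity and Fourier transformation. This generalizes the Abelian manipulation of \S\ref{app:Abeliancalc} that gave \eqref{eq:Z3node0b}.

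\textbf{Step 1: determinant form.} Label the nodes $v=0,\dots,k$ with Coulomb variables $\sigma^{(v)}\in\R^N$. Write the localization integrand as the Gaussian factors $e^{\ii\pi\kappa\sum_a(\sigma^{(0)}_a)^2-\ii\pi\kappa\sum_a(\sigma^{(k)}_a)^2}$, times the FI exponentials, times the vector and bifundamental one-loop factors. Distribute the vector-multiplet factor $\prod_{a<b}\sh(\sigma^{(v)}_a-\sigma^{(v)}_b)$ of each node evenly, one copy to each adjacent edge; the two end nodes each have only one adjacent edge. For every edge, apply the Cauchy determinant identity
\begin{equation*}
\frac{\prod_{a<b}\sh(\sigma_a-\sigma_b)\sh(\tau_a-\tau_b)}{\prod_{a,b}\ch(\sigma_a-\tau_b+m)}=\det_{a,b}\frac{1}{\ch(\sigma_a-\tau_b+m)} .
\end{equation*}
The end nodes still carry one leftover $\sh$-Vandermonde factor each.

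\textbf{Step 2: Fourier transform.} Insert $1/\ch(x)=\int\dd p\, e^{\ii2\pi px}/\ch(p)$ for every entry of each determinant. Now every $\sigma^{(v)}$ appears only linearly in exponents, except at the two ends. For each middle node $v=1,\dots,k-1$ (level zero), the $\sigma^{(v)}$-integral yields delta functions. These identify, up to a permutation and a shift by $\zeta^v$, the momenta of edge $v\!-\!1\to v$ with those of edge $v\to v+1$. After using antisymmetry of the determinants to absorb the permutations, this collapses the chain to a single set of momenta $p_a$, $a=1,\dots,N$. The result carries an $N!$ and a product of $k$ factors $1/\ch(p_a-\text{shift}_i)$, where the shifts are the $m_i$, i.e.\ partial sums of FI parameters, as in the Abelian case.

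\textbf{Step 3: end nodes.} The two end nodes are handled differently. At node $0$ I would keep the Vandermonde factor and perform the Fresnel-type integral, which gives $e^{-\ii\pi p^2/\kappa}/\sqrt\kappa$ dressed by the $\sh$ factors. Node $k$ gives the complex conjugate Gaussian, shifted by the accumulated FI and mass shifts. The quadratic pieces cancel between the two ends, leaving a factor linear in $p$. The surviving $\sh$ factors recombine with the remaining determinant, via the Cauchy identity read backwards, into $\prod_{a<b}\sh(\sigma_a-\sigma_b)^2$. Rescaling $p_a=\kappa\sigma_a$ then produces $\prod_i 1/\ch(\kappa\sigma_a-m_i)$ and the exponential $e^{\ii2\pi\zeta\sigma_a}$. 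The overall powers of $\kappa$, and the pure Chern--Simons normalization from the Gaussian integrals, should cancel against the Jacobian up to the conventions \eqref{eq:phaseZ}.

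\textbf{Main obstacle.} I expect the main obstacle to be the end-node step. One has to show that the Gaussian integrals over the Vandermonde-weighted $\sigma^{(0)}$ and $\sigma^{(k)}$ exactly cancel the Chern--Simons Gaussians and reproduce the squared $\sh$-Vandermonde in the new variable. This requires careful tracking of the signs of permutations, the $1/N!$ factors, and the precise $\kappa$-dependent normalization. Convergence is also delicate: the Gaussian integrals are only conditionally convergent. The assumption $\kappa k\ge 2N-1$ should be exactly what guarantees absolute convergence of the final integral \eqref{eq:ZNAlin}, justifying the interchange of integrations. To make this rigorous I would first do the computation with $\zeta,m_i$ given small imaginary parts, and then take the limit.
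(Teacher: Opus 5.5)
Your argument is correct in substance but organizes the same ingredients differently from the paper. Both use the Cauchy determinant on each edge, the Fourier transform \eqref{eq:FTcosh}, Gaussian integrals at the two Chern--Simons nodes, and the $\sh$-Vandermonde written as a determinant of exponentials.

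\textbf{The paper's route.} It first proves a determinant formula for arbitrary levels (Lemma \ref{lem:genericlinearCS}). It writes the two end Vandermondes as determinants of exponentials and applies Andr\'eief's identity node by node. This turns $\mz$ into a single $N\times N$ determinant whose entries are Abelian linear-quiver integrals with real exponential insertions at the ends. Each entry is evaluated by reusing the Abelian computation of \S\ref{app:Abeliancalc}: the insertions become imaginary shifts in the end-node Gaussians and produce $e^{2\pi\tilde\sigma(N+1-a-b)}$. Andr\'eief is then run backwards to reassemble the $U(N)$ integral.

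\textbf{Your route.} You collapse the chain in momentum space; your delta functions at the level-zero nodes are Andr\'eief in disguise. You then do the two end-node integrals as genuine $N$-dimensional Gaussian matrix integrals, which removes the reverse-Andr\'eief step. The paper's route gives a reusable formula for general levels. Yours makes the cancellation of the $\kappa^{\mp N/2}$ factors, of the phases, and of the permutation sums more transparent. It also correctly locates where $\kappa k\ge 2N-1$ enters: absolute convergence of the final integrand, since $\prod_{b\neq a}\sh(\sigma_a-\sigma_b)^2$ grows like $e^{2\pi(N-1)\lvert\sigma_a\rvert}$. The paper assumes this condition without comment.

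\textbf{One step is misdescribed.} The final recombination does not use the Cauchy identity read backwards; there is no Cauchy matrix left at that point. After the middle nodes are integrated out, one determinant $\det_{a,b}K(\sigma^{(0)}_a,\sigma^{(k)}_b)$ survives, with $K$ a single momentum integral.
\begin{itemize}
\item Expanding it over permutations $\pi$, the $\sigma^{(0)}$ integral gives $\kappa^{-N/2}e^{-\ii\pi\sum_a p_a^2/\kappa}\prod_{a<b}\sh\bigl((p_a-p_b)/\kappa\bigr)$, up to a constant phase.
\item The $\sigma^{(k)}$ integral gives the conjugate Gaussian times the same Vandermonde evaluated on the permuted momenta. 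The FI shift is common to all $a$, so it drops out of the differences.
\item The resulting $\mathrm{sgn}(\pi)$ cancels the sign from the determinant expansion. All $N!$ terms therefore coincide, and the two Vandermondes multiply to $\prod_{a<b}\sh^2$.
\end{itemize}
Together with $(N!)^{-(k+1)}\cdot(N!)^{k-1}\cdot N!$, this leaves the $1/N!$ of \eqref{eq:ZNAlin}. The constant phases $e^{\pm\ii\pi N/4}$ and $e^{\pm\ii\pi\sum_a (a-\frac{N+1}{2})^2/\kappa}$ cancel between the two ends, and $\kappa^{-N}$ cancels the Jacobian of $p_a=\kappa\sigma_a$. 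With this replacement your proof goes through.
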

This result, proven in \S\ref{app:NAblin}, is utilized to evaluate the partition function.\par
For concreteness, the parameters are taken in the principal Weyl chamber, $\zeta>0$ and $m_1 > m_2 > \cdots > m_k$. From \eqref{eq:ZNAlin}, the vacua are in one-to-one correspondence with maps 
\begin{equation}
\label{eq:alphaNAlin}
	\alpha \ : \ \{ 1, \dots N \} \longrightarrow \{1, \dots, k \}
\end{equation}
modulo permutations. Closing the integration contour for each variable in the upper half-plane, \eqref{eq:ZNAlin} receives contributions from the poles 
\begin{equation}
	\left\{ \sigma_a = \frac{1}{\kappa} \left[  m_{\alpha (a)} + \ii \left( \frac{1}{2} + n_a \right) \right] , \ n_a \in \N \right\}_{a=1, \dots, N}
\end{equation}
for every $\alpha$ as in \eqref{eq:alphaNAlin}. Weyl-equivalent configurations contribute equally. One can thus restrict to $\alpha (a) \le \alpha (b)$ if $a<b$, and moreover restrict to $n_a \ge n_b$ if $a<b$ when $\alpha (a)= \alpha (b)$; summing over these restricted contributions cancels the overall $1/N!$.\par
Computing \eqref{eq:ZNAlin} explicitly by residues gives:
\begin{equation}\label{eq:ZfullNAlin}
\begin{aligned}
	\mz_{\eqref{eq:exNAlin}} &=  \mz_{\csnode{N}{\kappa}}  \mz_{\csnode{N}{-\kappa}} \sum_{\alpha} \ii^{- N(k-1)} e^{\ii 2 \pi \pair{\frac{\zeta}{\kappa} }{m} } \sum_{n \in \mathfrak{N}_{\alpha}} (-1)^{\sum_{a=1}^{N} n_a} x^{\sum_{a=1}^{N} \left( n_a+\frac{1}{2}\right) } \\
	 	& \times \prod_{a=1}^{N}\prod_{\substack{i=1 , \dots, k \\ i \ne \alpha (a)}} \frac{1}{\sh \left( m_{\alpha (a)} - m_i \right)} ~\left( \prod_{1 \le a <b \le N} \frac{ \sh \left( \frac{m_{\alpha (a)} - m_{\alpha (b)} +\ii (n_a - n_b) }{\kappa} \right)}{\sin \left( \frac{\pi (b-a)}{\kappa}\right)}  \right)^2 ,
\end{aligned}
\end{equation}
where $x = e^{- 2 \pi \frac{\zeta}{\kappa}}$, the pairing is 
\begin{equation}
	\pair{\zeta}{m} = \sum_{a=1}^{N} \zeta m_{\alpha (a)} ,
\end{equation}
and the summation is over the set $\mathfrak{N}_{\alpha} \subseteq \N^N$ defined as
\begin{equation}
	\mathfrak{N}_{\alpha} = \left\{  n \in \N^N \ : \  n_a > n_b \text{ if } a<b \text{ and } \alpha (a) = \alpha (b)\right\} .
\end{equation}
The last term in \eqref{eq:ZfullNAlin} is a Laurent polynomial in the variables $y_{\alpha(a)\alpha(b)}^{\frac{1}{2}} :=e^{\frac{\pi}{\kappa} (m_{\alpha(a)} - m_{\alpha(b)})}$. Each monomial carries a product of $y_{\alpha(a)\alpha(b)}^{\frac{1}{2}}$, independent of $n$, which contributes to the twisted trace on $\scH_{\alpha}^{B}$, while the $n$-dependent coefficient of each monomial contributes to the twisted trace on $\scH_{\alpha}^{A}$.\par
\begin{rmk}
\begin{itemize}
	\item If $k=1$, there is only one possible $\alpha$, and \eqref{eq:ZfullNAlin} reproduces \eqref{eq:Z2nodeqdimb}.
	\item When $\kappa=1$, \eqref{eq:exNAlin} is dual to $U(N)$ gauge theory with $k$ hypermultiplets in the fundamental representation. The partition function of the latter is indeed given by \eqref{eq:ZNAlin} at $\kappa=1$. In this case, only vacua with $\alpha (a) \ne \alpha (b)$ contribute non-trivially, giving 
	\begin{equation}
	\begin{aligned}
		\mz_{\gnode{N}\text{---}\fnode{k}} = \sum_{1 \le \alpha_1 < \alpha_2 < \cdots <\alpha_N \le k} & \ii^{- N(k+1)} e^{\ii 2 \pi \pair{\zeta}{m} } \sum_{n \in \N^N} (-1)^{\sum_{a=1}^{N} n_a} x^{\sum_{a=1}^{N} \left( n_a+\frac{1}{2}\right) } \\
		 & \ \times \prod_{a=1}^{N} \prod_{i \in \{1, \dots, k\} \setminus \{ \alpha_1, \dots, \alpha_N \} } \frac{1}{\sh \left( m_{\alpha_a} - m_i \right)}  .
	\end{aligned}
	\end{equation}
	It satisfies Conjecture \ref{c:GO}.
	\item If $\kappa >1$, there is always one set of vacua, labeled by an edge $e=1, \dots, k$, with $\alpha(a)=e$ for all $a=1, \dots, N$. In this case the summand factorizes neatly, the A-branch twisted trace is the same as in \S\ref{sec:ex2nodeUN}, and the B-branch twisted trace is the $N$-th power of the twisted trace in the Abelian theory in \S\ref{sec:exk00k}.
\end{itemize}
\end{rmk}
In general, the summand factorizes into products of twisted traces, one on a Verma module over the A-branch and one on a Verma module over the B-branch. The explicit form of the twisted traces depends on the fixed point $\alpha$.

\subsection{Non-Abelian three-node bad quiver}
Another non-Abelian example with unequal ranks is
\begin{equation}\label{eq:ex3node121}
		\csnode{1}{\kappa}\text{---} \csnode{2}{0}\text{---} \csnode{1}{-\kappa}\ .
\end{equation}
The central node violates Hypothesis \ref{hyp:goodQ}. Nevertheless, isolating the central node would give a $U(2)$ gauge theory with two hypermultiplets; this belongs to the class of `bad' theories redeemed in \cite{Yaakov:2013fza}. It is thus possible to study moduli spaces of vacua and sphere partition functions of \eqref{eq:ex3node121}, provided the FI parameter is non-vanishing.\par
\medskip
The partition function of \eqref{eq:ex3node121} is
\begin{equation}\label{eq:Z3node121a}
	\mz_{\text{\eqref{eq:ex3node121}}} = \frac{1}{2}\int_{\R^4} \dd \sigma_1 \dd \sigma_{2,1}\dd \sigma_{2,2} \dd \sigma_3 \frac{e^{\ii \pi \kappa \left( \sigma_1^2 - \sigma_3^2 \right) + 2 \pi \ii \zeta  \left( \sigma_{2,1} +\sigma_{2,2}\right)}\sh \left( \sigma_{2,1} - \sigma_{2,2}\right)^2  }{ \prod_{a=1}^{2} \ch (\sigma_1 - \sigma_{2,a} +m) \ch (\sigma_{2,a}-\sigma_3 - m) }.
\end{equation}
For generic parameters, the integral can be evaluated explicitly:
\begin{equation}\label{eq:Z3node121b}
	\mz_{\text{\eqref{eq:ex3node121}}} (\zeta, m)= \mz_{\csnode{1}{\kappa}} \mz_{\csnode{1}{-\kappa}} ~\frac{ e^{4 \pi \ii \zeta m } }{\sh \left(\zeta \right)^2} .
\end{equation}
The proof of this equality is a direct computation, crucially using $\zeta \ne 0$, and is deferred to \S\ref{app:121}. Expanding the denominator in power series casts \eqref{eq:Z3node121b} in the form of a trace, compatible with Conjecture \ref{myconj1}.\par
In fact, using $\sh \left(\zeta \right)^2 = \ch \left( \zeta + \frac{\ii}{2} \right)\ch \left( \zeta - \frac{\ii}{2} \right)$, \eqref{eq:Z3node121b} equals 
\begin{equation}
	\mz_{\text{\eqref{eq:ex3node121}}} (\zeta, m) = \mz_{\csnode{1}{\kappa}} \mz_{\csnode{1}{-\kappa}} ~\mz_{\gnode{2}\text{---}\fnode{2}}(\zeta, m)  ,
\end{equation}
with the partition function of $\gnode{2}\text{---}\fnode{2}$ as computed in \cite[Eq.(3.34)]{Yaakov:2013fza}.

\section{Summary and outlook}
This work investigates the sphere quantization formalism \cite{Gaiotto:2023hda} in 3d $\mN=4$ theories for which Hypothesis \ref{hyp:massive} fails. 
\begin{itemize}
	\item The quantization of Coulomb and Higgs stacks of quiver gauge theories with non-minimal charges has been discussed in \S\ref{sec:higherquant}. Corollary \ref{cor:Zhighertrace} shows that the partition function is expressed as a sum of as many terms as the number of vacua, and each term is proportional to a twisted trace over a tensor product of an $\mA_{\hbar}^{\mC}$-module and an $\mA_{\hbar}^{\mH}$-module.\par
	By Corollary \ref{cor:hypertwtr}, at least for some theories with a 1-form symmetry, the twisted trace factorizes into a Coulomb and a Higgs part, thereby attaining a form analogous to the Gaiotto--Okazaki Conjecture \ref{c:GO}.
	\item The quantization of the moduli stacks of vacua $\mM_A$ and $\mM_B$ of 3d $\mN=4$ Chern--Simons-matter theories has been considered in \S\ref{sec:VermaCS}. The analysis of the sphere partition function led to Conjecture \ref{myconj1}. It states that the partition function is expressed as a sum of twisted traces over tensor products of two Verma modules, one over the quantization of the A-branch and one over the quantization of the B-branch. The coefficient of each summand is a supersymmetric pure Chern--Simons partition function on the three-sphere. \par
		The twist depends on the Chern--Simons levels, as well as on the choice of vacuum. In general, it prevents a factorization of the trace into A- and B-part; yet, in some cases, the twisted trace in each summand factorizes, paralleling Conjecture \ref{c:GO}.\par
	\item An interplay between the two main results is highlighted in \S\ref{sec:magQprime}. A prescription is given to associate to any Abelian Chern--Simons-matter A-type quiver an Abelian A-type quiver with higher charges. It is proposed (Conjecture \ref{myconjmag}) that the two theories have the same moduli spaces of vacua, not just as symplectic singularities but as stacks, and moreover that they have equal partition functions.
\end{itemize}
The latter result resonates with work of Chan--Leung \cite{Chan:2024oya}. For the familiar 3d $\mN=4$ gauge theories, the authors of \cite{Chan:2024oya} produce the Higgs branch given the Coulomb branch \emph{as a stack}. Here this observation is promoted to a relation between Chern--Simons-matter theories and `ordinary' 3d $\mN=4$ theories, where not only the moduli spaces of vacua, but also the partition functions are uniquely fixed by the stacky structure of the A-branch.\par
\medskip
The rest of this outlook section lists open avenues for future investigation.

\subsubsection*{Symplectic duality with gerbes and stacks}
The quivers $\sQ^{\prime}$ can be taken independently of their relation to Chern--Simons-matter theories. Their Higgs branches are gerbes over hypertoric varieties, and their Coulomb branches are smooth hypertoric Deligne--Mumford stacks.\par
\S\ref{sec:highercharge} provides a proof of concept that treating the moduli spaces of vacua as stacks is the appropriate formulation. Indeed, the explicit examples show that theories with the same Coulomb branch \emph{as a stack} also have the same Higgs branch. It is well-known that this statement would fail if one only looks at the Coulomb branch as an algebraic variety. Moreover, it was shown in \S\ref{sec:highercharge} how the sphere quantization distinguishes among Coulomb branches with the same coarse moduli space. These observations are consistent with the work \cite{Chan:2024oya}, which, given different stacky Coulomb branches with the same coarse moduli space, outputs different Higgs branches.\par
\medskip
The results of the present work thus raise the question of how to enhance the formulation of symplectic duality to encode these structures. In physics terms, it would be interesting to incorporate higher-form symmetries into the symplectic duality. Gerbes over hypertoric GIT quotients, as the one studied herein, provide a concrete benchmark to test these ideas.

\subsubsection*{Dualities}
It has been extensively checked in \S\ref{sec:ex} that there exist pairs of Abelian theories, namely a Chern--Simons-matter theory and the associated quiver theory $\sQ^{\prime}$, having equal moduli stacks of vacua and equal sphere partition functions.\par
It would be desirable to derive these putative dualities from first principles. 
Moreover, it would be extremely interesting to find candidate duals to non-Abelian Chern--Simons-matter theories. The obstacle is identifying a suitable notion of non-minimal charges for representations of $U(N)$.\par

\subsubsection*{Quantized moduli spaces of \texorpdfstring{$\mathcal{N}=3$}{N=3} Chern--Simons-matter theories}
Finally, it would be worthwhile to extend the current analysis, in which the sphere partition function is used to read off modules over non-commutative algebras, and twisted traces on them, to the moduli spaces of vacua of 3d $\mN=3$ Chern--Simons-matter theories. These theories have several maximal branches, which are nonetheless believed to be symplectic singularities. Explicit calculations \cite{Santilli:2020snh} show that the sphere partition function attains a similar form, with a sum of terms that appear amenable to being reinterpreted as twisted traces.

\begin{appendices}
\section{Supersymmetric partition functions}
\label{app:SUSYZ}
\subsection{Sphere partition function}
\label{app:Zreview}

In the conventions of the main text, let $\Th_{\underline{\kappa}}$ be a 3d $\mN \ge 3$ Chern--Simons-matter theory, specified by a (representation of a) quiver $\sQ=(\sQ_0 \sqcup \sF_0, \sQ_1)$, where $\sQ_0$ and $\sF_0$ are the gauge and framing nodes, together with Chern--Simons levels $\{\kappa_v\}_{v\in \sQ_0} \subset \Z$.\par
A 3d $\mN \ge 3$ gauge theory can be placed on the round three-sphere $\cs^3$ preserving a fraction of the supersymmetry. In the conventions of the present paper, $\cs^3$ is taken to have unit radius. To lighten the expressions, the shorthand notation (often used in the physics literature) is adopted:
\begin{equation}
	\sh (\sigma) := 2 \sinh (\pi \sigma) , \qquad \ch (\sigma) := 2 \cosh (\pi \sigma) .
\end{equation}
Localization reduces the partition function to an ordinary integral \cite{Kapustin:2009kz}:
\begin{equation}\label{eq:Z3d}
\begin{aligned}
	\mz_{\Th_{\underline{\kappa}}}= \frac{e^{- \ii \frac{\pi}{4}\eta_{\Th_{\underline{\kappa}}} }}{\prod_{v \in \sQ_0} N_v !} \int_{\mathfrak{t}_{\text{\tiny gauge}}} & \prod_{v \in \sQ_0} \prod_{a=1}^{N_v} \dd \sigma_{v,a} e^{\ii \pi \kappa_v \sigma_{v,a}^2} \prod_{1 \le a < b \le N_v} \sh (\sigma_{v,a} - \sigma_{v,b})^2 \\
	\times & \prod_{e \in \sQ_1} \prod_{a=1}^{N_{\mathsf{h} (e)}}\prod_{b=1}^{N_{\mathsf{t} (e)}} \frac{1}{\ch (\sigma_{\mathsf{h} (e),a} - \sigma_{\mathsf{t} (e),b})} ,
\end{aligned}
\end{equation}
with integration domain the Cartan subalgebra of the gauge algebra, $\mathfrak{t}_{\text{\tiny gauge}} \cong \R^{\sum_{v \in \sQ_0} N_v}$.\par 
The partition function \eqref{eq:Z3d} is specified up to an overall eighth root of unity $e^{- \ii \frac{\pi}{4}\eta_{\Th_{\underline{\kappa}}} }$. It is proposed that, for 3d $\mN \ge 4$ Chern--Simons-matter theories, 
\begin{equation}\label{eq:phaseZ}\boxed{
	\qquad \eta_{\Th_{\underline{\kappa}}}= \sum_{ v \in \sQ_0 \ : \ \kappa_{v} \ne 0 }  N_v^2 \mathrm{sign} (\kappa_v)  . \qquad 
}\end{equation}
This choice agrees with the explicit derivation in \cite[Sec.6]{Marino:2011nm} in the particular cases of pure Chern--Simons and ABJM theories.\par

\begin{rmk}
For the purposes of the partition function, an edge $e \in \sQ_1$ with $\mathsf{t} (e) \in \sQ_0$ and $\mathsf{h} (e) \in \sF_0$ can be replaced by $N_{\mathsf{h} (e)}$ edges, each ending on a framing node $w \in \sF_0$ with $N_w=1$. This replacement is assumed in \S\ref{sec:HigherChargeSphere} for ease of exposition.
\end{rmk}

\subsection{Sphere partition function: Evaluation}
\label{app:evaluation}

The Fourier transform of the function $1/\ch(\sigma)$ will be repeatedly used:
\begin{equation}
\label{eq:FTcosh}
	\frac{1}{\ch (\sigma)} = \int_{-\infty}^{+\infty}\dd \tau \frac{e^{2 \pi \ii \tau \sigma }}{\ch (\tau)} .
\end{equation}

\subsubsection{Explicit calculations: Abelian A-type quiver}
\label{app:Abeliancalc}
This appendix contains the derivation of \eqref{eq:Z3node0b}.\par
Starting with  \eqref{eq:Z3node0a} and using \eqref{eq:FTcosh} on each term in the denominator gives:
\begin{subequations}
\begin{align}
	\mz_{\text{\eqref{eq:ex3node0}}} &= \int_{-\infty}^{+\infty} \dd \sigma_0\int_{-\infty}^{+\infty} \dd \sigma_1 \cdots \int_{-\infty}^{+\infty}\dd \sigma_k \frac{e^{\ii \pi \kappa \left( \sigma_0^2-\sigma_k^2\right) + 2 \pi \ii \sum_{v=0}^{k}\zeta^{v} \sigma_v }}{ \ch \left( \sigma_{k-1}-\sigma_k +m\right)\prod_{v=1}^{k-1}\ch \left( \sigma_{v-1}-\sigma_{v}\right) } \\
	&=\int_{-\infty}^{+\infty} \dd \sigma_0\int_{-\infty}^{+\infty} \dd \sigma_1 \cdots \int_{-\infty}^{+\infty}\dd \sigma_k e^{\ii \pi \kappa \left( \sigma_0^2-\sigma_k^2\right) }\int_{-\infty}^{+\infty} \frac{\dd \tau^1}{\ch (\tau^1)} \cdots \int_{-\infty}^{+\infty} \frac{\dd \tau^{k}}{\ch (\tau^{k})} \notag \\
	& \times \exp \left\{2 \pi \ii \left[ \sum_{v=1}^{k-1} \left(\tau^{v+1} - \tau^{v} + \zeta^{v}\right)\sigma_v + \tau^{k} m + \left(\tau^{1} + \zeta^{0}\right) \sigma_0 + \left(- \tau^{k} +\zeta^{k}\right)\sigma_k \right] \right\} .
\end{align}
\end{subequations}
Integrating over $\sigma_v$, for $1 \le v \le k-1$, produces $\delta  \left(\tau^{v+1} - \tau^{v} + \zeta^{v}\right)$, and therefore 
\begin{equation}
	\tau^{i} = \tau^k + \sum_{v=i}^{k-1} \zeta^{v} , \qquad i =1, \dots, k-1.
\end{equation}
For shortness, it is convenient to define 
\begin{equation}
	\tilde{m}_{i} := \sum_{v=i}^{k-1} \zeta^{v} , \qquad i =1, \dots, k-1, 
\end{equation}
and also let $\tilde{m}_k=0$. Then, 
\begin{subequations}
\begin{align}
	\mz_{\text{\eqref{eq:ex3node0}}}  &=\int_{-\infty}^{+\infty} \dd \sigma_0\int_{-\infty}^{+\infty}\dd \sigma_k e^{\ii \pi \kappa \left( \sigma_0^2-\sigma_k^2\right) }\int_{-\infty}^{+\infty} \dd \tau^{k}   \prod_{i=1}^{k} \frac{1}{\ch (\tau^k + \tilde{m}_{i} )}  \\
		& \times \int_{-\infty}^{+\infty} \dd \tau^1  e^{\ii 2\pi \left[ \tau^{k} m + \left(\tau^{1} + \zeta^{0}\right) \sigma_0 + \left(- \tau^{k} +\zeta^{k}\right)\sigma_k \right]} \delta \left( \tau^{1} - \tau^{k} - \tilde{m}_{1} \right) \notag \\
	&=\sqrt{ \frac{\ii }{\kappa } } \cdot \sqrt{ - \frac{\ii }{\kappa} }  \int_{-\infty}^{+\infty} \dd \tau^k  e^{2 \pi \ii m \tau^{k}} \prod_{i=1}^{k-1} \frac{1}{\ch (\tau^k + \tilde{m}_{i} )} ,  \label{eq:Z3node0app}
\end{align}
\end{subequations}
where the constraint $\zeta^{0} + \zeta^{k} +\tilde{m}_1 =\sum_{v=0}^{k} \zeta^{v} =0$ has been used to simplify the exponent after integration over $\sigma_0, \sigma_k$. Renaming $\tau^k \mapsto \sigma$ gives \eqref{eq:Z3node0b}.\par
At this stage, \eqref{eq:Z3node0app} is given in terms of $(k-1)$ generic mass parameters and one FI parameter. It is possible to parametrize the Cartan subalgebra of the flavor symmetry with $k$ masses $\{ m^{\prime}_i\}_{i=1}^{k}$ subject to $\sum_{i=1}^{k}m^{\prime}_i=0$. It suffices to let $m^{\prime}_k:= - \frac{1}{k-1}\sum_{v=1}^{k-1} v \zeta^{v}$, shift variables $\tau^{k}=\sigma -m^{\prime}_k$ and rename $\zeta^{\prime}:=m$, $m^{\prime}_i:= -\tilde{m}_i + m^{\prime}_k $. The difference from \eqref{eq:Z3node0b} is in the background mixed Chern--Simons term $e^{-\ii 2\pi m m^{\prime}_k}$, which does not affect the analysis.

\subsubsection{Explicit calculations: Alternating Abelian A-type quiver}
\label{app:altcalc}

This appendix contains the proof of \eqref{eq:ZaltAb}.
Shifting variables, all the parameter dependence appears as giving arbitrary masses to the hypermultiplets. The partition function is:
\begin{equation}
	\mz_{\text{\eqref{eq:ex4node}}}  = \int_{\R^{2\ell}} \prod_{v=1}^{2\ell} \dd \sigma_v e^{\ii \pi \kappa \sum_{v=1}^{2\ell} (-1)^{v-1} \sigma_v^2 }\prod_{v=1}^{2 \ell -1} \frac{1}{\ch (\sigma_v - \sigma_{v+1} +m_v )} .
\end{equation}
Using \eqref{eq:FTcosh} on each term in the denominator gives
\begin{subequations}
\begin{align}
	\mz_{\text{\eqref{eq:ex4node}}}  &= \int_{\R^{2\ell}} \prod_{v=1}^{2\ell} \dd \sigma_v \int_{\R^{2 \ell -1}} \prod_{i=1}^{2\ell -1} \frac{\dd \tau^{i} }{\ch (\tau^{i})} e^{\ii \pi \kappa \sum_{v=1}^{2\ell} (-1)^{v-1} \sigma_v^2 + 2 \pi \ii \sum_{i=1}^{2\ell-1} \tau^{i} \left( m_i +\sigma_i - \sigma_{i+1}\right)} \\
		& = \left( \prod_{v=1}^{2\ell} \sqrt{(-1)^{v-1} \frac{\ii}{\kappa}} \right) \int_{\R^{2 \ell -1}} \prod_{i=1}^{2\ell -1} \frac{\dd \tau^{i} }{\ch (\tau^{i})} \exp \left\{  2 \pi \ii\sum_{i=1}^{2\ell -1} \left( m_i \tau^{i}  + (-1)^{i} \frac{\tau^{i} \tau^{i+1}}{\kappa}\right) \right\} ,
\end{align}
\end{subequations}
where the second line stems from direct integration over the variables $\sigma_v$. Then, for $i=2\nu -1$ odd, one defines $\tilde{\sigma}_{\nu} =\frac{\tau^{2\nu-1}}{\kappa}$, $\nu=1, \dots, \ell$; while the integration over $\tau^{i}$ for $i \in 2 \Z $ is performed using \eqref{eq:FTcosh}. The result is
\begin{equation}
	\mz_{\text{\eqref{eq:ex4node}}}  = \int_{\R^{\ell}}\prod_{\nu=1}^{\ell} \frac{\dd \tilde{\sigma}_\nu}{\ch (\kappa \tilde{\sigma}_{\nu})} ~e^{\ii 2\pi\sum_{\nu=1}^{\ell} m_{2\nu-1} \tilde{\sigma}_{\nu} } \prod_{\nu=1}^{\ell -1} \frac{1}{\ch (\tilde{\sigma}_{\nu} - \tilde{\sigma}_{\nu+1} - m_{2\nu})} .
\end{equation}
Renaming the dummy variable $\nu\mapsto v$ and defining $\zeta^{v}=m_{2v-1},\tilde{m}_{v}=m_{2v}$ proves \eqref{eq:ZaltAb}.

\subsubsection{Explicit calculations: Abelian affine A-type quiver}
\label{app:4nodecirc}

This appendix contains the derivation of \eqref{eq:Zcirc4}.
The partition function of \eqref{eq:circular4} is
\begin{equation}
	\mz_{\eqref{eq:circular4}} = \int_{\R^{k+2}} \frac{e^{\ii \pi \kappa \left( \sigma_0^2-\sigma_k^2\right) + 2 \pi \ii \left[ \sum_{v=0}^{k}\zeta^{v} \sigma_v  + \zeta^{\infty} \sigma_{\infty}  \right]} \dd \sigma_0 \cdots \dd \sigma_k \dd \sigma_{\infty} }{ \ch \left( \sigma_{\infty}-\sigma_0+ m\right)\ch \left( \sigma_{\infty}-\sigma_k -m\right) \ch \left( \sigma_{k-1}-\sigma_k +m\right) \prod_{v=1}^{k-1}\ch \left( \sigma_{v-1}-\sigma_{v}\right) }
\end{equation}
Using \eqref{eq:FTcosh} on each term in the denominator, the integration over $\{ \sigma_v \}_{v=1}^{k-1}$ proceeds exactly as in \S\ref{app:Abeliancalc}, arriving at: 
\begin{equation}
\begin{aligned}
	\mz_{\eqref{eq:circular4}} &=\int_{\R^3} \dd \sigma_0 \dd \sigma_k \dd \sigma_{\infty} e^{\ii \pi \kappa \left( \sigma_0^2-\sigma_k^2\right) } \int_{-\infty}^{+\infty} \dd \tau^{k}   \prod_{i=1}^{k} \frac{1}{\ch (\tau^k + \tilde{m}_{i} )} \int_{\R^2} \frac{\dd\tau^0 \dd \tau^{\infty} }{ \ch (\tau^0) \ch (\tau^{\infty})} \\
		& \times e^{\ii 2\pi \left[ \left(\tau^{k} + \tilde{m}_{0} \right) \sigma_0 + \left(- \tau^{k} +\zeta^{k}\right)\sigma_k  + \left( \tau^0 + \tau^{\infty} + \zeta^{\infty} \right) \sigma_{\infty} + \left( \tau^0 + \tau^{\infty}\right) m_{\infty} + \tau^{k} m \right]} ,
\end{aligned}
\end{equation}
where the masses $\tilde{m}_i= \sum_{v=i}^{k-1} \zeta^{v}$ are defined as in \S\ref{app:Abeliancalc}. Integrating over $\sigma_{\infty}$ at this stage sets $\tau^{0}=-\tau^{\infty}-\zeta^{\infty}$. Integrating over $\sigma_0, \sigma_k$ next leads to
\begin{equation}
	\mz_{\eqref{eq:circular4}} = \sqrt{\frac{\ii}{\kappa}} \sqrt{-\frac{\ii}{\kappa}}\int_{-\infty}^{+\infty} \dd \tau^{\infty}\frac{e^{\ii 4\pi \tau^{\infty}m_{\infty} }}{\ch (\tau^{\infty}) \ch (\tau^{\infty}+\zeta^{\infty})}\int_{-\infty}^{+\infty} \dd \tau^k e^{\ii 2 \pi \tau^{k} m} \prod_{i=1}^{k} \frac{1}{\ch (\tau^k + \tilde{m}_i)} .
\end{equation}
Renaming the variables $\sigma_1 = \tau^{\infty} $, $\sigma_2 = \tau^{k} $ and the parameters as in \eqref{eq:circ4param} gives \eqref{eq:Zcirc4}.

\subsubsection{Explicit calculations: Abelian quiver not of Dynkin type}
\label{app:nonlin2}

This appendix derives the partition function \eqref{eq:Znonlinear2} in \S\ref{sec:nonlin2}.\par
The partition function of \eqref{eq:nonlinear2} in terms of redundant FI and mass parameters is
\begin{equation}
	\mz_{\eqref{eq:nonlinear2}} = \int_{\R^2} \dd \sigma_0 \dd \tilde{\sigma}_0 e^{\ii \pi \kappa \left( \sigma_0^2 - \tilde{\sigma}_0^2\right)}\int_{\R^{2\ell}} \prod_{v=1}^{2\ell} \dd \sigma_v \frac{e^{\ii 2 \pi \zeta^{v} \sigma_v} }{\ch (\sigma_0 - \sigma_v + m_{0v})\ch (\tilde{\sigma}_0 - \sigma_v + \tilde{m}_{0v})}
\end{equation}
where $m_{0v},\tilde{m}_{0v}$ are shorthand notations for the masses $m_e$ of the hypermultiplets corresponding to the edges $e$ with $\mathsf{t}(e)=v$ and, respectively, $\mathsf{h}(e)=0$ and $\mathsf{h}(e)=\tilde{0}$.\par
Applying \eqref{eq:FTcosh} to each $1/\ch$ gives:
\begin{equation}
\begin{aligned}
	\mz_{\eqref{eq:nonlinear2}} &= \int_{\R^2} \dd \sigma_0 \dd \tilde{\sigma}_0 e^{\ii \pi \kappa \left( \sigma_0^2 - \tilde{\sigma}_0^2\right)} \int_{\R^{2\ell}} \prod_{v=1}^{2\ell} \dd \sigma_v \int_{\R^{4\ell}} \prod_{i=1}^{2 \ell} \frac{\dd \tau^{i} \dd \tilde{\tau}^{i}}{\ch (\tau^{i}) \ch (\tilde{\tau}^{i})} \\
		& \times \exp \left\{ \ii 2 \pi \sum_{v=1}^{2\ell}  \left[ \sigma_v (- \tau^{v} - \tilde{\tau}^{v} + \zeta^{v} )  + \tau^{v}m_{0v} + \tilde{\tau}^{v} \tilde{m}_{0v} + \sigma_0 \sum_{i=1}^{2\ell} \tau^{i} + \tilde{\sigma}_0 \sum_{i=1}^{2\ell} \tilde{\tau}^{i}  \right]\right\} .
\end{aligned}
\end{equation}
Integrating over $\sigma_v$ produces a delta function, which is used to remove the integration over $\tilde{\tau}^{i}$ setting 
\begin{equation}
	\tilde{\tau}^{i} = -\tau^{i} + \zeta^{i} .
\end{equation}
It is also convenient to define $m_{i}= m_{0i} - \tilde{m}_{0i}$ in \eqref{eq:massnonlin2}. Integrating over $\sigma_0, \tilde{\sigma}_0$ one arrives at:
\begin{equation}
\begin{aligned}
	\mz_{\eqref{eq:nonlinear2}} &= \sqrt{\frac{\ii}{\kappa}}\sqrt{-\frac{\ii}{\kappa}}\int_{\R^{2\ell}}\prod_{i=1}^{2 \ell} \dd \tau^{i} \frac{e^{\ii 2 \pi \tau^{i} m_i}}{\ch (\tau^{i}) \ch (\tau^{i} - \zeta^{i})}  \\
		& \times \exp \left\{ \ii \frac{\pi}{\kappa} \left( \sum_{v=1}^{2\ell} \zeta^{v} \right)^2 -  \ii \frac{\pi}{\kappa}\sum_{i=1}^{2\ell}  \tau^{i} \sum_{v=1}^{2\ell} \zeta^{v} +\ii 2 \pi \sum_{v=1}^{2\ell} \zeta^{v} \tilde{m}_{0v}\right\} .
\end{aligned}
\end{equation}
Imposing the condition $\sum_{v=1}^{2\ell} \zeta^{v} =0$, the partition function completely factorizes:
\begin{equation}
	\mz_{\eqref{eq:nonlinear2}} =e^{\ii 2 \pi \langle \zeta, \tilde{m} \rangle }\sqrt{\frac{\ii}{\kappa}}\sqrt{-\frac{\ii}{\kappa}} \prod_{i=1}^{2 \ell} \int_{\R}\dd \tau \frac{e^{\ii 2 \pi m_i  \tau}}{\ch (\tau) \ch (\tau - \zeta^{i})}  ,
\end{equation}
with each term equal to the partition function of $U(1)$ gauge theory with two hypermultiplets of mass $0$ and $m^{\prime}_i=\zeta^{i}$, and FI parameter $\zeta^{\prime, i}=m_i$.

\subsubsection{Explicit calculations: Non-Abelian A-type quiver}
\label{app:NAblin}
The goal of this appendix is to derive \eqref{eq:ZNAlin}. The proof makes repeated use of classical identities:
\begin{itemize}
\item The Cauchy determinant formula: 
	\begin{equation}
	\label{eq:Cauchydet}
		\frac{\prod_{1 \le a < b \le N} \sh (\sigma_{1,a} - \sigma_{1,b} ) \prod_{1 \le a < b \le N} \sh (\sigma_{2,a} - \sigma_{2,b} ) }{\prod_{a=1}^{N} \prod_{b=1}^{N}\ch (\sigma_{1,a} - \sigma_{2,b})} = \det_{1 \le a,b \le N} \frac{1}{\ch (\sigma_{1,a} - \sigma_{2,b})} .
	\end{equation} 
\item The Vandermonde determinant formula:
	\begin{equation}
	\label{eq:Vdmdet}
		\prod_{1 \le a< b \le N} \sh (\sigma_{a} - \sigma_{b} ) = (-1)^{N(N-1)/2} \det_{1 \le a,b \le N} \left[ e^{2\pi \sigma_b \left( a - \frac{N+1}{2}\right)} \right].
	\end{equation} 
\item Andr\'eief's identity \cite{Andreief}:
	\begin{equation}
	\label{eq:Andreief}
		\frac{1}{N!} \int_{\R^N} \det_{1 \le a,b \le N} \left[ \Psi_b (\sigma_a) \right] \det_{1 \le a,b \le N} \left[ \Phi_b (\sigma_a)\right] \prod_{a=1}^{N} \dd \sigma_a = \det_{1 \le a,b \le N} \left[ \int_{\R} \dd \tilde{\sigma} \Psi_a (\tilde{\sigma}) \Phi_b (\tilde{\sigma}) \right] .
	\end{equation}
	for arbitrary integrable functions $\{\Psi_b, \Phi_b \}_{b=1, \dots, N}$.
\end{itemize}
Lemma \ref{lem:ZNAlin} will be a corollary of a more general statement.
\begin{lem}\label{lem:genericlinearCS}
	The partition function of 
	\begin{equation}\label{eq:CSlinear}
		\csnode{N}{\kappa_0}\text{---} \overbrace{\csnode{N}{\kappa_1} \text{---} \cdots \text{---} \csnode{N}{\kappa_{k-1}}}^{k-1}\text{---} \csnode{N}{\kappa_{k}} 
	\end{equation}
	is given by 
	\begin{equation}\label{eq:ZCSlinear}
	\begin{aligned}
		\mz_{\eqref{eq:CSlinear}} & =  \det_{1 \le a,b \le N} \left[  \int_{\R^{k+1}} \dd \tilde{\sigma}_{0}\cdots \dd \tilde{\sigma}_{k} e^{\ii \pi \sum_{v=0}^{k}\left[ \kappa_v \tilde{\sigma}_v^2 +2\zeta_{v} \tilde{\sigma}_{v} \right]} \right. \\
			& \left. \times \prod_{v=1}^{k}\frac{1}{\ch (\tilde{\sigma}_{v-1} - \tilde{\sigma}_{v})} \times e^{2\pi \left[ \tilde{\sigma}_{k} \left( b - \frac{N+1}{2}\right) +\tilde{\sigma}_{0} \left( a - \frac{N+1}{2}\right) \right] } \right] ,
	\end{aligned}
	\end{equation}
	where $\zeta^v$ are $\Z$-linear combinations of the FI parameters and of $\kappa_v m_i$, where $m_i$ are the mass parameters.
\end{lem}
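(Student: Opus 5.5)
We start from the localization formula \eqref{eq:Z3d} for \eqref{eq:CSlinear}: for each node $v=0,\dots,k$ there are $N$ eigenvalues $\sigma_{v,a}$, a Gaussian weight $e^{\ii\pi\kappa_v\sigma_{v,a}^2}$ (plus FI phase), a Vandermonde $\prod_{a<b}\sh(\sigma_{v,a}-\sigma_{v,b})^2$, and for each edge $v-1\to v$ a factor $\prod_{a,b}\ch(\sigma_{v-1,a}-\sigma_{v,b}+m_v)^{-1}$. First we shift the integration variables node by node, $\sigma_{v,a}\mapsto\sigma_{v,a}-\sum_{w\le v}m_w$, so that all edge masses are absorbed. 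The cost is that the Gaussian terms produce linear terms $\kappa_v(\sum m)\sigma_{v,a}$ and a constant. The linear terms combine with the FI parameters into effective $\zeta_v$, which are $\Z$-linear combinations of FI parameters and $\kappa_v m_i$, as claimed. We discard the constant phase, or absorb it into the normalization \eqref{eq:phaseZ}.

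Next we split each squared Vandermonde into two single factors, one for each adjacent edge. Node $v$ with $1\le v\le k-1$ gives one $\prod\sh$ to edge $(v-1,v)$ and one to edge $(v,v+1)$. The end nodes $0$ and $k$ each have one single Vandermonde left over. For every edge we apply the Cauchy determinant formula \eqref{eq:Cauchydet}, which turns the edge factor together with its two Vandermondes into $\det_{a,b}\ch(\sigma_{v-1,a}-\sigma_{v,b})^{-1}$. We apply the Vandermonde determinant formula \eqref{eq:Vdmdet} to the two leftover single Vandermondes at the ends, giving $\det[e^{2\pi\sigma_{0,b}(a-\frac{N+1}{2})}]$ and the analogous determinant at node $k$. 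The signs $(-1)^{N(N-1)/2}$ appear twice and cancel.

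The integrand is now a chain of $k+2$ determinants, each sharing one set of eigenvalues with the next, together with single-variable weights $e^{\ii\pi(\kappa_v\sigma^2+2\zeta_v\sigma)}$ that we attach to the columns. We integrate out nodes one at a time with Andréief's identity \eqref{eq:Andreief}. Starting from node $0$, we pair the end Vandermonde determinant with the first Cauchy determinant. The $1/N!$ of node $0$ is consumed, and the result is a single determinant whose entries are one-dimensional integrals over $\tilde\sigma_0$ and which depends on the node-$1$ eigenvalues. We iterate: at each step the accumulated determinant, whose entries are functions of $\sigma_{v,\cdot}$, is paired with the next Cauchy determinant, or finally with the node-$k$ Vandermonde determinant, using the $1/N!$ of that node. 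Each matrix entry thereby becomes a nested integral over $\tilde\sigma_0,\dots,\tilde\sigma_v$. After $k+1$ steps all $1/N_v!$ prefactors are used up, and we are left with exactly the single $N\times N$ determinant in \eqref{eq:ZCSlinear}. Its $(a,b)$ entry is the $(k+1)$-fold integral of the chain $\prod_v\ch(\tilde\sigma_{v-1}-\tilde\sigma_v)^{-1}$ with the Gaussian and FI weights, and with the exponentials $e^{2\pi\tilde\sigma_0(a-\frac{N+1}{2})}$ and $e^{2\pi\tilde\sigma_k(b-\frac{N+1}{2})}$ at the ends.

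The main obstacle is bookkeeping rather than any conceptual point. We have to check that the row and column index conventions survive each application of \eqref{eq:Andreief}: since $\det M=\det M^T$, a transpose can be inserted freely where needed. We also have to make sure the intermediate integrals converge, so that exchanging integration order is legitimate. At nodes with $\kappa_v=0$ the Gaussian is absent, and convergence instead relies on the $1/\ch$ decay on both sides. The end exponentials grow at most like $e^{\pi(N-1)|\tilde\sigma|}$, and this is dominated by the $1/\ch$ decay once we combine it with the oscillatory Gaussian, understood as usual by a slight rotation of the contour. We would state this analytic continuation explicitly and otherwise treat the identity at the level of formal Gaussian integrals, consistent with the paper's other appendix computations.
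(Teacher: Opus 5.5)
Your proposal is correct and is essentially the paper's proof. You absorb the edge masses into the $\zeta_v$ by shifts, apply \eqref{eq:Cauchydet} on every edge and \eqref{eq:Vdmdet} at the two end nodes (the two factors $(-1)^{N(N-1)/2}$ cancel), and integrate out one node at a time with \eqref{eq:Andreief}, each step consuming one $1/N!$. The only cosmetic difference is that the paper starts at node $k$ and applies the Cauchy formula edge by edge as it goes, whereas you convert all edges at once and start from node $0$.

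Your analytic remarks go beyond the paper, which works purely formally. They are consistent with the standing Hypothesis~\ref{hyp:goodQ}: for $N\ge 2$ it rules out a vanishing level at an end node, so the Gaussians needed to tame the end exponentials $e^{2\pi\tilde{\sigma}(a-\frac{N+1}{2})}$ are indeed present.
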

\begin{proof}
	Labeling the nodes by $v=0,1\dots, k$, the partition function of \eqref{eq:CSlinear} is
	\begin{equation}
	\begin{aligned}
		\mz_{\eqref{eq:CSlinear}} = \frac{1}{(N!)^{k+1}} \int_{\R^{(k+1)N}} &\prod_{v=0}^{k} \left( \prod_{1 \le a < b \le N} \sh (\sigma_{v,a} - \sigma_{v,b})^2 \prod_{a=1}^{N}  e^{\ii \pi \kappa_v \sigma_{v,a}^2 + \ii 2 \pi \zeta^{v} \sigma_{v,a}} \dd \sigma_{v,a} \right) \\
		& \times \prod_{v=1}^{k} \prod_{a,b=1}^{N} \frac{1}{\ch \left( \sigma_{v-1,a} - \sigma_{v,b} \right)} .
	\end{aligned}
	\end{equation}
	Here a shift of variables has been used to absorb the dependence on the mass parameters into the redundant FI parameters $\zeta^v$. Starting from the last integral, applying \eqref{eq:Cauchydet} and \eqref{eq:Vdmdet} once shows that 
	\begin{align}
		 &\prod_{1 \le a < b \le N} \sh (\sigma_{k-1,a} - \sigma_{k-1,b})^2  \prod_{a,b=1}^{N} \frac{1}{\ch (\sigma_{k-1,a} - \sigma_{k,b} )} \cdot  \prod_{1 \le a < b \le N} \sh (\sigma_{k,a} - \sigma_{k,b})^2 \\
		 = &(-1)^{\frac{N(N-1)}{2}} \prod_{1 \le a < b \le N} \sh (\sigma_{k-1,a} - \sigma_{k-1,b}) \det_{1 \le a,b \le N}  \frac{1}{\ch (\sigma_{k-1,a} - \sigma_{k,b} )} \det_{1 \le a,b \le N} \left[ e^{2\pi \sigma_{k,b} \left( a - \frac{N+1}{2}\right)} \right] . \notag
	\end{align}
	The product of exponentials $\prod_{a=1}^{N}  e^{\ii \pi \kappa_k \sigma_{k,a}^2 + \ii 2 \pi \zeta^{k} \sigma_{k,a}}$ can be written as a determinant and combined with the Vandermonde determinant. Then, one applies Andr\'eief's identity \eqref{eq:Andreief} for the integral over the variables $\sigma_{k,a}$, which gives 
	\begin{align}
		& \prod_{1 \le a < b \le N} \sh (\sigma_{k-1,a} - \sigma_{k-1,b})^2  \frac{1}{N!} \int_{\R^N} \prod_{a=1}^{N} \dd \sigma_{k,a} e^{\ii \pi \kappa_{k} \sum_{a=1}^{N} \sigma_{k,a}^2 + \ii 2 \pi \zeta_{k} \sum_{a=1}^{N}\sigma_{k,a} }  \notag \\
		& \hspace{128pt} \times  \prod_{1 \le a < b \le N} \sh (\sigma_{k,a} - \sigma_{k,b})^2 \prod_{a,b=1}^{N} \frac{1}{\ch (\sigma_{k-1,a} - \sigma_{k,b} )} \\
	 	&= (-1)^{\frac{N(N-1)}{2}} \prod_{1 \le a < b \le N} \sh (\sigma_{k-1,a} - \sigma_{k-1,b}) \det_{1 \le a,b \le N} \left[ \int_{\R} \dd \tilde{\sigma}_{\ell} \frac{e^{\ii \pi \kappa_{k} \tilde{\sigma}_{k}^2 +\ii 2 \pi \zeta^{k} \tilde{\sigma}_{k}}}{\ch (\sigma^{(\ell-1)}_a - \tilde{\sigma}_{k})}e^{2\pi \tilde{\sigma}_{\ell} \left( b - \frac{N+1}{2}\right)} \right] .  \notag
	\end{align}
	From here, one iteratively applies \eqref{eq:Cauchydet} followed by \eqref{eq:Andreief} for every $v=k-1,\dots, 1$, arriving at the last integral 
	\begin{equation}
	\begin{aligned}
		&\mz_{\eqref{eq:CSlinear}} = \frac{1}{N!}\int_{\R^N}  \left( \prod_{1 \le a < b \le N} \sh (\sigma^{(0)}_a - \sigma^{(0)}_b)\right) (-1)^{\frac{N(N-1)}{2}}  \prod_{a=1}^{N} e^{\ii \pi \kappa_0 \sigma_{0,a}^2 + \ii 2 \pi \zeta_0 \sigma_{0,a} } \\
		&\times \det_{1 \le a , b \le N} \left[  \int_{\R^{k}} \dd \tilde{\sigma}_{1}\cdots \dd \tilde{\sigma}_{k} \frac{e^{2\pi \tilde{\sigma}_{k} \left( b - \frac{N+1}{2}\right)}}{\ch (\sigma_{0,a} - \tilde{\sigma_1})} \times  e^{\ii\pi \sum_{v=1}^{k}\left[ \kappa_v \tilde{\sigma}_{v}^2 +2\zeta^{v} \tilde{\sigma}_{v} \right]} \prod_{v=2}^{k}\frac{1}{\ch (\tilde{\sigma}_{v-1} - \tilde{\sigma}_{v})} \right] .
	\end{aligned}
	\end{equation}
	Using \eqref{eq:Vdmdet}, writing the integrand in the first line as a determinant, and applying \eqref{eq:Andreief} one last time, gives \eqref{eq:ZCSlinear}.
\end{proof}

\begin{proof}[Proof of Lemma \ref{lem:ZNAlin}]
	Substituting 
	\begin{equation}
	\label{eq:CSk00k}
		\kappa_0=\kappa, \qquad \kappa_{1 \le v \le k-1}=0, \qquad  \kappa_k=-\kappa ,
	\end{equation}
	into \eqref{eq:ZCSlinear} gives the determinant of an $N \times N$ matrix, in which each entry is almost identical to the partition function studied in \S\ref{app:Abeliancalc}. The calculation in \S\ref{app:Abeliancalc} shows that 
	\begin{equation}
		\mz_{\eqref{eq:exNAlin}} = \det_{1 \le a , b \le N} \left[  \int_{\R} \dd \tilde{\sigma} e^{- \ii 2 \pi \zeta_+ \tilde{\sigma}} e^{2 \pi \tilde{\sigma} \left(N+1-a-b \right)} \prod_{i=1}^{k} \frac{1}{\ch \left( \kappa \tilde{\sigma} +\tilde{m}_i \right) } \right] ,
	\end{equation}
	where $\tilde{m}_i := \sum_{v=i}^{k-1} \zeta^v$, $\zeta_+ := \sum_{v=0}^{k} \zeta^v$, and the redundancy of the FI parameters has been used to check that, for \eqref{eq:exNAlin}, it is possible to set $\zeta^{k}=\frac{\zeta_+}{2}$. Other choices of conventions for the mass parameters would simply differ by an overall integral power of $e^{\ii \frac{\pi}{\kappa} \left(2 \zeta^{k} \zeta_+ - \zeta_+^2\right)}$.\par
	Finally, renaming $\zeta_+ \mapsto - \zeta, \tilde{m}_i \mapsto - m_i$, and applying \eqref{eq:Andreief} in the reverse direction, followed by \eqref{eq:Vdmdet}, proves \eqref{eq:ZNAlin}.
\end{proof}

\subsubsection{Explicit calculations: Non-Abelian three-node quiver}
\label{app:121}

This appendix contains the proof of \eqref{eq:Z3node121b}.\par
The starting point is \eqref{eq:Z3node121a}. Shifting variables $\sigma_1=\sigma_1^{\text{\tiny new}} -m$ and $\sigma_3=\sigma_3^{\text{\tiny new}} -m$, the $m$-dependence is removed from the denominator at the expense of introducing $e^{2 \pi \ii \kappa m( -\sigma_1 + \sigma_3)}$ in the integrand. The next step consists of rewriting 
\begin{subequations}
\begin{align}
	\frac{ \sh \left( \sigma_{2,1} - \sigma_{2,2}\right)^2}{\prod_{a=1}^{2} \ch (\sigma_1 - \sigma_{2,a}) \ch (\sigma_3 - \sigma_{2,a})} & = \sum_{a,b=1}^{2} (-1)^{a+b} f_{ab} (\sigma_1,\sigma_{2,1},\sigma_{2,2},\sigma_3 ) , \\
	f_{ab} (\sigma_1,\sigma_{2,1},\sigma_{2,2},\sigma_3 ) &:= e^{\pi (\sigma_1 + \sigma_3)} \frac{ e^{-\pi \left( \sigma_{2,a} +\sigma_{2,b} \right)}}{\ch (\sigma_1 - \sigma_{2,a}) \ch (\sigma_3 - \sigma_{2,b})} .
\end{align}
\end{subequations}
Then, 
\begin{subequations}
\begin{align}
	\mz_{\text{\eqref{eq:ex3node121}}} &= \frac{1}{2}\sum_{a,b=1}^{2} (-1)^{a+b} \mz_{ab} , \\
	\mz_{ab}&:=  \int_{\R^4} \dd \sigma_1 \dd \sigma_{2,1}\dd \sigma_{2,2} \dd \sigma_3  e^{\ii \pi \kappa \left( \sigma_1^2 - \sigma_3^2 \right) + 2 \pi \ii \left[  \zeta  \left( \sigma_{2,1} +\sigma_{2,2}\right) -\kappa m \sigma_1 + \kappa m \sigma_3 \right] } e^{\pi (\sigma_1 + \sigma_3 - \sigma_{2,a} - \sigma_{2,b})} \notag \\
		& \times \int_{\R} \frac{\dd \tau^{1} }{\ch (\tau^{1})}\int_{\R} \frac{\dd \tau^{2} }{\ch (\tau^{2})} e^{2 \pi \ii \left[ \tau^1 (\sigma_1 - \sigma_{2,a}) - \tau^2 (\sigma_3 - \sigma_{2,b})\right]} \label{eq:Zbadintermediate}
\end{align}
\end{subequations}
where \eqref{eq:FTcosh} was applied to $f_{ab}$. The integration over $\sigma_{2,1},\sigma_{2,2}$ can now be performed, obtaining
\begin{equation}
\begin{aligned}
	\mz_{ab} & =  \int_{\R^2} \dd \sigma_1\dd \sigma_3 e^{\ii \pi \kappa \left( \sigma_1^2 - \sigma_3^2 \right) }\int_{\R^2} \frac{\dd \tau^{1} }{\ch (\tau^{1})}\frac{\dd \tau^{2} }{\ch (\tau^{2})}  e^{ 2 \pi \ii \left[ \sigma_1 \left( \tau^{1} -\kappa m- \frac{\ii}{2} \right) + \sigma_3 \left( -\tau^{2} +\kappa m- \frac{\ii}{2} \right) \right] } \\
	& \times \delta \left( \tau^{2} \delta_{1,b}-\tau^{1} \delta_{1,a} + \zeta + \frac{\ii}{2} \left(  \delta_{1,a} +  \delta_{1,b} \right)  \right) \delta \left( \tau^{2} \delta_{2,b}-\tau^{1} \delta_{2,a} + \zeta + \frac{\ii}{2} \left(  \delta_{2,a} +  \delta_{2,b} \right)  \right) .
\end{aligned}
\end{equation}
At this stage, if $a=b$, the integral contains $\delta (\zeta)$, and vanishes for generic $\zeta \ne 0$. Moreover, it is straightforward to check that $\mz_{12}=\mz_{21}$, which is a direct consequence of the Weyl invariance of \eqref{eq:Z3node121a}. Therefore, 
\begin{subequations}
\begin{align}
	\mz_{\text{\eqref{eq:ex3node121}}} = -\mz_{12} &= \frac{1}{\sh \left(\zeta \right)^2} \int_{\R} \dd \sigma_1 e^{\ii \pi \kappa\sigma_1^2 +\ii 2\pi \sigma_1 \left( \zeta-\kappa m\right) } \int_{\R}\dd \sigma_3 e^{-\ii \pi \kappa \sigma_3^2 +\ii 2\pi \sigma_3 \left( \zeta + \kappa m\right) } \\
	&=\frac{1}{\kappa \sh \left(\zeta \right)^2} e^{4 \pi \ii \zeta m } .
\end{align}
\end{subequations}
This proves \eqref{eq:Z3node121b}.

\subsection{Superconformal index}
\label{app:sci}

The superconformal index is a trace over the BPS Hilbert space $\mathscr{H}_{\text{BPS}}$ of the theory quantized on a sphere, refined by characters of the isometries acting on the moduli space. Explicitly: 
\begin{equation}
\mI_{\Th_{\underline{\kappa}}} \left( q, \tilde{q};x,y\right) = \tr_{\mathscr{H}_{\text{BPS}}} \left(  (-1)^{2J_3} q^{J_3 + R_A} \tilde{q}^{J_3 + R_B} x^{J_C} y^{J_H}\right) ,
\end{equation}
where $R_A$ (respectively $R_B$) is the generator of the $\C^{\ast}$-action on the A-branch (respectively B-branch) by scaling. $\mI_{\Th_{\underline{\kappa}}}$ is equivalently presented as a contour integral, see \cite[App.A]{Li:2023ffx} for a review. The contour integral approach is used for the explicit calculations.\par
The index recovers the Hilbert series of the A- and B-branch rings: 
\begin{equation}
	\lim_{\tilde{q} \to 0} \mI_{\Th_{\underline{\kappa}}} \left( q, \tilde{q};x,y\right)  = \hs_{\C \left[ M_A \right]} (q;x) , \qquad \lim_{q \to 0} \mI_{\Th_{\underline{\kappa}}} \left( q, \tilde{q};x,y\right)  = \hs_{\C \left[ M_B \right]} (\tilde{q};y) .
\end{equation}\par
For the theories with non-minimal charges discussed in \S\ref{sec:highercharge}, the superconformal index can be computed in two ways. A direct approach is the residue integral including hypermultiplet contributions in a representation of non-minimal highest weight. Alternatively, it is obtained using a discrete Molien--Weyl projection formula, see in particular \cite[App.B]{Nawata:2023rdx}.\par

\end{appendices}
{\footnotesize 
\bibliography{CBandHB}
}
\end{document}